\begin{document}
\title{Equational Reasoning Modulo Commutativity in Languages with Binders}
%
%
\author{Ali K. Caires-Santos\inst{1} \orcidID{0009-0004-3183-3686} \and
Maribel Fern\'andez\inst{2} \orcidID{0000-0001-8325-5815}\and
Daniele Nantes-Sobrinho\inst{1,3} \orcidID{0000-0002-1959-8730}}
\authorrunning{A.K Caires-Santos et al.}
%
\institute{University of Bras\'ilia, Brazil \and
King's College London, UK\\
\and
Imperial College London, UK\\
}
\maketitle              
\begin{abstract}
Many formal languages include binders as well as operators that satisfy equational axioms, such as commutativity. Here we consider the nominal language, a general formal framework which provides support for the representation of binders, freshness conditions and $\alpha$-renaming. 
Rather than relying on the usual freshness constraints, we introduce a nominal algebra which employs  permutation fixed-point constraints in $\alpha$-equivalence judgements, seamlessly integrating commutativity into the reasoning process.
We establish its proof-theoretical properties and provide a sound and complete semantics in the setting of nominal sets.
Additionally, we propose a novel algorithm for nominal unification modulo commutativity, which we prove terminating and correct.  By leveraging fixed-point constraints, our approach ensures a finitary unification theory, unlike standard methods relying on freshness constraints.
This framework offers a robust foundation for structural induction and recursion over syntax with binders and commutative operators, enabling reasoning in settings such as first-order logic and the $\pi$-calculus.

\keywords{Languages with binders \and Unification \and Equational theories.}
\end{abstract}
\section{Introduction}

In computer science and logic, formal languages abstract away concrete syntax details, focusing on abstract syntax to capture structural properties. However, conventional abstract syntax falls short for languages with variable-binding constructs.
%
%
The formalisation  of languages with binding is challenging and requires specialised tools: one of the aims of the POPLMark challenge~\cite{DBLP:conf/tphol/AydemirBFFPSVWWZ05} was to explore and compare methods for the formalisation of binding operators. Many different techniques were proposed to address the challenge, for example, the approach based on  higher-order abstract syntax (HOAS) is described  in the survey~\cite{DBLP:journals/jar/FeltyMP15} by Felty et al.
Here we follow the nominal approach, introduced by Gabbay and Pitts~\cite{DBLP:journals/fac/GabbayP02}: Binders are specified using an abstraction construct of the form $[a]t$, where $a$ is a name and $t$ an arbitrary expression (read ``abstract $a$ in $t$'').

Nominal syntax~\cite{DBLP:journals/fac/GabbayP02} makes a distinction between \emph{names} $a,b,...$ (also called atoms), which can be abstracted but cannot be substituted, and \emph{(meta-)variables} $X,Y,...$ (also called \emph{unkowns}), which can be substituted but not abstracted.
The $\alpha$-equivalence relation $\approx$ is defined using the notion of support, or equivalently, the notion of freshness, a concept that generalises the idea of an atom not being free in a term: $a$ is \emph{fresh} for  $t$, written $a \fresh t$, if $a$ is not in the support of $t$.

More generally, in languages where operators are subject to equational axioms $E$, such as associativity ($\tf{A}$) and commutativity ($\C$), the equivalence relation of interest is generated by $\alpha$ and  $E$.
A well-known example is first-order logic with $\forall$ and $\exists$ quantifiers and conjunction and disjunction connectives; similarly, in Milner's $\pi$-calculus expressions are considered modulo a structural congruence. 
Effective reasoning techniques for such languages are essential, particularly when proving properties about them in a proof assistant. A common subproblem in this context is solving unification problems—finding the appropriate {\em substitution} that equates two expressions—when applying lemmas in automated proofs.


Although unification of first-order terms modulo commutativity is decidable and finitary,   and nominal unification is unitary (every solvable problem has a unique most general unifier), unification modulo $\alpha$ and $\C$ is not finitary if solutions are expressed using substitutions and freshness contexts~\cite{DBLP:journals/mscs/Ayala-RinconSFS21}. This motivated the study of a version of nominal syntax where the $\alpha$-equivalence relation is specified using permutation fixed-point constraints $\pi \fix{} t$ (read ``$\pi$ fixes $t$)'' instead of freshness constraints~\cite{DBLP:journals/lmcs/Ayala-RinconFN19}.
However,  the derivation relation defined in~\cite{DBLP:journals/lmcs/Ayala-RinconFN19} is sound for strong nominal algebras~\cite{10.1561/2500000017,DBLP:journals/corr/abs-1006-3027} only (that is, the carriers should be strong\footnote{A nominal set is strong iff its elements are strongly supported by finite sets, i.e., whenever a permutation fixes an element $x$ of the set then it has to fix the atoms occurring in $x$ pointwise.} nominal sets). Soundness with respect to the standard nominal set models was proved only for a restricted calculus in~\cite{entics:14777}.

\paragraph*{Contributions.} The system presented in this paper is a generalisation of the latter in two ways: judgements here can include arbitrary fixed point constraints in contexts, and equational axioms are taken into account, making the system more expressive while maintaining soundness for standard nominal set models. Devising a proof system for general fixed point judgements that is sound and complete  is challenging; the difficulty lies in the variable rule, which needs to be sufficiently powerful to derive all the valid judgements, but only those that are valid in all nominal set models.  A key idea to obtain soundness and completeness with respect to standard nominal set semantics is to analyse the permutation cycles in fixed-point assumptions.


We also study  satisfiability of equality constraints: we  show that nominal unification is decidable and unitary if $E = \emptyset$, and it is finitary if $\C$ommutativity axioms are included. More precisely, we provide a unification algorithm specified as a set of simplification rules on unification problems, which we show terminating
and correct. 
The normal forms of the simplification system consist of a finite set ($\Upsilon$) of  fixed-point constraints and a substitution. The algorithm outputs a most general solution of the form $(\Upsilon,\sigma)$ if $E = \emptyset$ and  a finite complete set of solutions of the form $(\Upsilon_i,\sigma_i)$ if $E = \C$. Having a finitary theory is important in practice, for example to design  programming languages and theorem provers that support binders and equational axioms, to specify completion procedures for rewriting modulo $\alpha$ and $E$, etc.
A notable feature of our unification algorithm is its treatment of computed {\em substitutions}: it ensures that meta-variables are instantiated while maintaining stability under renaming of bound names—preserving consistency when fresh names are introduced during computation (see Remark~\ref{rmk:substitution}).

Proofs of selected results can be found in the Appendix. An extended version with all the proofs is available in~\cite{arxiv/cairessantos2025}.

\section{Preliminaries}\label{sec:preliminaries}

In this section, we introduce some basic notions and notations used throughout the paper; for more details we refer the reader to~\cite{book/Pitts, DBLP:journals/fac/GabbayP02, DBLP:journals/logcom/Gabbay09}.

\subsubsection*{Nominal Syntax.}
Fix disjoint countably infinite sets $\mathbb{A} = \{a,b,c,\ldots\}$ of {\em atoms} and $\V = \{X,Y,Z,\ldots\}$ of {\em variables}. A {\em signature} $\Sigma$ is a set of {\em term-formers} ($\tf{f},\tf{g},\ldots$) $-$ disjoint from $\A$ and $\V$ $-$ where each $\tf{f}\in \Sigma$ has a fixed \emph{arity} $n\geq 0$, denoted $\tf{f}:n$.
%
A {\em finite} permutation $\pi$ of atoms is a bijection $\A \to \A$ such that the set $\dom{\pi} := \{a \in\A \mid \pi(a) \neq a\}$ is finite. Write $\id$ for the {\em identity permutation}, $\pi\circ \pi'$ for the {\em composition} of  $\pi$ and $\pi'$, and $\pi^{-1}$ for the {\em inverse} of $\pi$.
We use {\em swappings} of atoms of the form $(a\ b)$ to implement the renaming of atoms, for the purpose of $\alpha$-conversion.   Given two permutations $\pi$ and $\rho$, the permutation $\pi^\rho = \rho\circ\pi\circ\rho^{-1}$ denotes the {\em conjugate} of $\pi$ with respect to $\rho$.

 \emph{Nominal terms} are defined inductively by the following grammar:
\[ s,t,u::= a \mid \pi\act X \mid \tf{f^E}(t_1,\ldots, t_n) \mid [a]t \]
where $a$ is an {\em atom}; $\pi\act X$ is a {\em suspension}, where $\pi$ is an atom permutation. Intuitively, $\pi\act X$ denotes that $X$ will get substituted for a term and then $\pi$ will permute the atoms in that term accordingly (i.e., the permutation $\pi$ is suspended). Suspensions of the form $\id\act X$ will be represented simply by $X$; {\em a function symbol} $\tf{f} : n$ of arity $n$ may be equipped with an equational theory $\tf{E}$, hence $\tf{f^E}(t_1,\ldots, t_n)$ denotes the \emph{application} of $\tf{f^E}$ to a tuple $(t_1,\ldots, t_n)$. We write $\tf{f^\emptyset}$, or simply $\tf{f}$, to emphasise that no equational theory is assumed for $\tf{f}$; $[a]t$ denotes the \emph{abstraction} of the atom $a$ over the term $t$. The set of all nominal terms formed from a signature $\Sigma$ will be denoted by $\F(\Sigma,\V)$.

Write $\var{-}$ and $\atm{-}$ to denote the sets of all variables and all atoms occurring in the argument, respectively. A nominal term $t$ is {\em ground} when it does not mention variables, i.e., when $\var{t} = \emptyset$. The set of all ground terms will be denoted by $\F(\Sigma)$. The set of \emph{free atoms} of a ground term $g$ is denoted as $\tf{fn}(t)$ and consists of the atoms, say $a$, that do not occur under the scope of an abstraction $[a]$, that is, $\tf{fn}(a) = \{a\}, \tf{fn}(\tf{f^E}(g_1,\ldots,g_n)) = \bigcup_{i=1}^n\tf{fn}(g_i),$ and $\tf{fn}([a]g) = \tf{fn}(g)\setminus\{a\}$. Write $t \equiv u$ for {\em syntactic identity} of terms.

The {\em object-level permutation action on terms}, denoted as $\pi\act t$, is defined  via the homomorphic application of a permutation on the structure of the term: $ \pi \act a \equiv  \pi(a) ,~\pi \act (\pi' \act X) \equiv (\pi \circ \pi') \act X, \pi \act ([a]t)\equiv [\pi(a)](\pi \act t)$ and $
\pi \act \tf{f^E}(t_1, \ldots, t_n) \equiv \tf{f^E}(\pi \act t_1, \ldots, \pi \act t_n)$.
\emph{Substitutions}, usually denoted by $\sigma,\delta,\ldots$, are finite mappings from variables to terms, which extend homomorphically over terms. We write $Id$ to denote the identity substitution, $\dom{\sigma}$ 
to denote the domain
of $\sigma$. Substitutions and permutations commute: $\pi\act (s\sigma) \equiv (\pi \act s)\sigma$.

\begin{example}Abstractions will be used to represent binding operators (e.g. the lambda operator from the $\lambda$-calculus). Considering the signature $\Sigma = \{\tf{lam}: 1, \tf{app} : 2\}$ and treating $\lambda$-variables as atoms, $\lambda$-terms can be inductively generated by the grammar:
\(
    e::= a\mid \tf{app}(e,e)\mid \tf{lam}([a]e)
\).
Note that $(a \ b)\cdot \tf{lam}([a]a)\equiv \tf{lam}([b]b)$ implements the renaming of $a$ for $b$ using the swapping $\swap{a}{b}$. 
\end{example}

\paragraph*{Groups and Permutations.}
For completeness, we present some elementary notions, notations and properties of groups and permutations that will be used throughout the paper. This subsection can be skipped on a first reading and consulted later as a reference.
A {\em permutation group}
is a group\footnote{A {\em group} is a set $G$ equipped with an operation $\circ:G\times G\to G$ that satisfies four properties: closure, associativity, identity, and invertibility.} $G$ whose elements are permutations on a set and the group operation is the composition $\circ$ of permutations.
 If $S$ is a non-empty subset of a permutation group $G$, then we denote by $\pair{S}$  the subgroup of all elements of 
$G$ that can be expressed as the finite composition of elements in  $S$ and their inverses.
$\pair{S}$ is called the {\em subgroup generated by $S$}.
We will work with the group of {\em finite} permutations of atoms $\A$, denoted by $\Perm{\A}$. Moreover, given $B\subseteq \A$ write $\Perm{B} = \{\pi\in\Perm{\A}{} \mid \dom{\pi}\subseteq B\}$.
Recall that finite permutations can be expressed in two equivalent ways: either as compositions of swapping (a.k.a. transpositions), or as compositions of {\em disjoint permutation cycles}. E.g., $(a \ d)(a \ c)(a \ b)(e \ f)$ and $(a \ b \ c \ d)(e \ f)$ represent the same permutation.
If $H$ and $K$ are subsets of a permutation group $G$ whose operation is $\alert{\circ}$ (in particular, if $H$ and $K$ are subgroups of $G$), the set $\{h\alert{\circ}k \mid h \in H, k \in K\}$ will be denoted by $H\alert{\circ} K$~\footnote{The usual notation is just by juxtaposition $HK$, we will distinguish the operator $H\alert{\circ}K$ to improve readability later on.}.
In general, $H\alert{\circ}K$ is not a subgroup of $G$, even when $H$ and $K$ are subgroups of $G$. However, $H\alert{\circ}K$ is a subgroup of $G$ iff $H\alert{\circ}K=K\alert{\circ}H$.

\subsubsection*{Nominal Sets.}
A {\em $\Perm{\A}$-set}, denoted by $\nom{X}$, is a pair $(|\nom{X}|,\act)$ consisting of an {\em underlying set} $|\nom{X}|$ and a {\em permutation action} $\cdot $, which is a group action on $|\nom{X}|$, i.e., an operation
$\act:\Perm{\A}\times |\nom{X}|\to |\nom{X}|$ such that $\id\act x = x$ and $\pi\act (\pi'\act x) = (\pi\circ\pi')\act x$, for every $x\in |\nom{X}|$ and $\pi,\pi'\in \Perm{\A}$.  We will write $\pi\act_{\nom{X}} x$, when we want to make $\nom{X}$ clear.
For $B \subseteq \A$, we denote $\Fix{B}$ as the set of permutations that fix pointwise the elements of $B$, i.e.
  $
  \Fix{B} = \{\pi\in\Perm{\A}\mid \forall a\in B.~\pi(a) = a\}.$
A set $B\subseteq \A$ {\em supports} an element $x \in |\nom{X}|$ when for all permutations $\pi\in\Perm{\A}$:
$\pi\in\Fix{B} \implies \pi\act x = x $.
Additionally,  $B$ {\em strongly supports} $x\in|\nom{X}|$ if the reverse implication holds.
A {\em nominal set} is a $\Perm{\A}$-set $\nom{X}$ all of whose elements are finitely supported.
$\nom{X,Y,Z}$ will range over nominal sets.

The {\em  support} of an element $x\in|\nom{X}|$ of a nominal set $\nom{X}$ is defined as
$\supp{}{x} = \bigcap\{B\mid \text{$B$ is finite and supports $x$}\}.$
This implies that $\supp{}{x}$ is the {\em least} finite support of $x$.
The support provides a semantics for freshness constraints in the class of nominal sets~\cite{DBLP:journals/logcom/Gabbay09,DBLP:journals/logcom/GabbayM09}:  \text{When } $ a\notin\supp{}{x}$, \text{ we write } $a\fresh_{\tt sem} x$, and read this as ``$a$ is fresh for $x$''.

One of the main properties of the relation $a\fresh_{\tt sem} x$ is the so-called \emph{Choose-a-Fresh-Name Principle}, which says that if $x \in|\nom{X}|$ has a finite support, then it is always possible to choose a new atom $a$ such that $a\fresh_{\tt sem} x$. Therefore, the notion of freshness provides a generative aspect of fresh atoms.

\begin{example}[Some simple nominal sets]\label{ex:nominal-and-strong-sets}
 The $\Perm{\A}$-set $(\A,\act)$ with the action $\pi\act_\A a = \pi(a)$ is a nominal set and $\supp{}{a} = \{a\}$. The group $\Perm{\A}$ equipped with the action by conjugation $\rho\act_{\Perm{\A}}\pi = \pi^{\rho}$ forms a nominal set and $\supp{}{\pi} = \dom{\pi}$. The set of ground terms $\F(\Sigma)$ with the usual permutation action on terms is a nominal set and $\supp{}{g} = \atm{g}$. 
 Another example is the set $\pow{\tt fin}{\A} = \{B\subset\A \mid B \text{ is finite}\}$. Then the $\Perm{\A}$-set $(\pow{\tt fin}{\A},\act)$ with the action $\pi\act_{\pow{\tt fin}{\A}} B = \{\pi\act_\A a\mid a \in B\}$ is a nominal set and $\supp{}{B} = B$. 
\end{example}



\subsubsection*{The new \texorpdfstring{$\new$}{new}-quantifier.} The $\new$-quantifier~\cite{DBLP:journals/fac/GabbayP02} models the quantification used when we informally write ``rename $x$ in $\lambda x.t$ to a new name''. Thus, if $\phi$ is the description of some
property of atomic names $a$, then $\new \atnew{a}.\phi(\atnew{a})$ holds when $\phi(\atnew{a})$ holds for all but finitely many atoms $\atnew{a}$. Usually, $\phi$ is a predicate in higher-order logic~\cite{book/Pitts} or in ZFA or FM set theory~\cite{DBLP:journals/fac/GabbayP02,DBLP:journals/bsl/Gabbay11}. The $\new$-quantifier is defined as
\begin{equation}\label{eq:new_cofin}
    \new \atnew{a}.\phi(\atnew{a}) \text{ holds iff } \{\atnew{a}\in\A \mid \phi(\atnew{a}) \text{ holds}\} \text{ is cofinite}\footnote{A {\em cofinite} subset of a set $M$ is a subset $B$ whose complement in $M$ is a finite set. }.
\end{equation}

Therefore, $\new \atnew{a}.(\ldots)$ means ``for a cofinite number of atoms $\atnew{a}, (\ldots)$'' . The quantifier $\new$ allows to quantify the \emph{generative} aspect of names/atoms. This aligns with the Choose-a-Fresh-Name Principle of the relation $\fresh_{\tt sem} $. The connection between these two notions is established by \emph{Pitts' equivalence}~\cite{DBLP:journals/fac/GabbayP02,book/Pitts}, which defines the freshness of an atom $a$ for an element $x$ in a nominal set using the new-quantifier $\new$ and a fixed-point identity as follows:
\begin{equation}\label{eq:pitts-eq-1}
    a \fresh_{\tt sem} x \iff \new \atnew{c}. (a \ \atnew{c})\cdot x=x.
\end{equation}

The following equivalences will be heavily used throughout the text:
\begin{equation}\label{eq:fresh_fix}
   \begin{aligned}
   a\notin \supp{}{x}&\iff  a \fresh_{\tt sem} x \\
   &\iff \new \atnew{c}. (a \ \atnew{c})\cdot x=x \text{ holds}\\
   & \iff \{\atnew{c}\in\A \mid \newswap{a}{c}\act x = x\} \text{ is cofinite.}
   \end{aligned}
\end{equation}

Let $\catnew{c} = \atnew{c_1}, \ldots, \atnew{c_n}$ be list of atoms. We will write $\newc{c}{}. \phi(\catnew{c})$ as shorthand for $\new \atnew{c_1}. \new \atnew{c_2}. \ldots \new \atnew{c_n}. \phi(\catnew{c})$. Since $\new \atnew{a_1}. \new \atnew{a_2}. \phi$ holds if and only if $\new \atnew{a_2}. \new \atnew{a_1}. \phi$ holds, then the set $\catnew{c}$ can be interchangeably treated as either an unordered list or a finite set of atoms. We will refer to this set as the set of {\em $\new$-quantified names}.

\section{Generalised Permutation Fixed-Points in Nominal Sets}\label{sec:permutation-fix-nominal-sets}

In this section, we  consider {\em generalised fixed-point identities}: a version of fixed-point identities with general \new-quantified atom permutations $\pi$.
That is,
\begin{equation*}
    \newc{c}{}.\pi \act x = x,
\end{equation*}
 where $\pi$ is a finite composition of disjoint cycles, and not only swappings. For example, $\new\atnew{c_1},\atnew{c_2},\atnew{c_3}. \swap{a \ \atnew{c_1}}{\atnew{c_2}}\circ \swap{b}{d}\cdot x = x $ is a generalised fixed-point identity.

In addition, we will split the permutation $\pi$ into  $\pi_{\catnew{c}}$, consisting of the cycles that contain at least one  $\new$-quantified atom,  and  $\pi_{\neg \catnew{c}}$, consisting of cycles whose atoms are not quantified. 


\begin{example}
    Consider the  generalised fixed-point identity $\new{\catnew{c}}.\pi\cdot x=x$ with $\catnew{c} = \atnew{c_1}, \atnew{c_2}, \atnew{c_3},\atnew{c_4}$ and 
    \(
    \pi = (a_1 \ \atnew{c_1} \ \atnew{c_2}) \circ (a_2 \ \atnew{c_3})\circ \highlight{(a_3 \ a_4 \ a_5)}\circ (a_6 \ \atnew{c_4})\circ \highlight{(a_7 \ a_8)}
\). Note that $\pi$
is equivalent to the permutation consisting of the composition of $\pi_{\catnew{c}}$ and  $\pi_{ \neg \catnew{c}}$:
 \(
    \pi = \underbrace{(a_1 \ \atnew{c_1} \ \atnew{c_2}) \circ (a_2 \ \atnew{c_3}) \circ (a_6 \ \atnew{c_4})}_{\pi_{\catnew{c}}}\circ  \underbrace{\highlight{(a_3 \ a_4 \ a_5)}\circ \highlight{(a_7 \ a_8)}}_{\pi_{\neg \catnew{c}}}.
\)
Interesting permutations with different behaviours are: (i) $\pi_1 = (a \ \atnew{c_1}) \circ (b \ \atnew{c_2} \ \atnew{c_3} \ d) \circ (f \ \atnew{c_4})$, with $(\pi_1)_{\catnew{c}}=\pi_1$ and $(\pi_1)_{\neg \catnew{c}}=\id$; and (ii) $\pi_2=(a_1 \ b_1\ d_1)\circ (a_2 \ b_2)$ with  $(\pi_2)_{\neg \catnew{c}}=\pi_2$ and $(\pi_2)_{\catnew{c}}=\id$.
\end{example}

We formalise this decomposition of permutations in the following definition.
\begin{definition}[Classifying Permutation Cycles]\label{def:splitting}
   Let $\pi$ be a permutation and $\catnew{c}$ a finite set of \new-quantified atoms. Define $\pi_{\catnew{c}}$ as the product of disjoint cycles of $\pi$ that mention atoms in $\catnew{c}$, and  $\pi_{\neg\catnew{c}}$ as the product of disjoint cycles of $\pi$ that do not mention atoms in $\catnew{c}$. Define $\pi_{\catnew{c}} := \id$ if $\catnew{c}\cap\dom{\pi} = \emptyset$.  Similarly, $\pi_{\neg\catnew{c}} := \id$ if every disjoint cycle of $\pi$ mentions at least one atom from $\catnew{c}$.
\end{definition}

\begin{example} The following illustrate some properties of splitting permutations:
    \begin{enumerate}
        \item In general, it is not true that $\pnew{\pi_1\circ\pi_2}{\catnew{c}} = \pnew{\pi_1}{\catnew{c}}\circ\pnew{\pi_2}{\catnew{c}}$ for all permutations $\pi_1,\pi_2$. Take $\catnew{c}=\atnew{c_1}$ and the permutations
  $ \pi_1 = (a \ \atnew{c_1})$ \text{ and }
        $\pi_2 = (a \ b \ d)$. Then
     \(
        \pi_1\circ\pi_2 = (a \ b \ d \ \atnew{c_1})
         \neq (a \ \atnew{c_1}) = (\pi_1)_{\atnew{c_1}} \circ (\pi_2)_{\atnew{c_1}}.
    \)

    \item In general, it is not true that $\pnew{\pi^\rho}{\catnew{c}} = (\pi_{\catnew{c}})^\rho$,  for all  permutations $\pi$ and $\rho$. Similarly,  $\pnew{\pi^\rho}{\neg \catnew{c}} = (\pnew{\pi}{\neg \catnew{c}})^\rho$ does not hold in general. Take $\catnew{c}=\atnew{c_1}$ and  consider  
    $    \pi = (a \ b \ \atnew{c_1})\circ (d \ e)$ and $
    \rho = (\atnew{c_1} \ d)$.
    After computing the conjugate and reorganising the cycles we obtain
 $\pnew{\pi^{\rho}}{\atnew{c_1}} = (\atnew{c_1} \ e)\neq (a \ b \ d) =  (\pi_{\atnew{c_1}})^{\rho}$. We can similarly check that $(\pi_{\neg\atnew{c_1}})^{\rho}\neq \npnew{\pi^{\rho}}{\atnew{c_1}}$.

\end{enumerate}
\end{example}

Assuming permutations are written as  compositions of disjoint cycles, the next lemma states that if $\pi\act x = x$ and  $\eta$ is a cycle in $\pi$, then if   one atom of $\eta$ is not in $\supp{}{x}$,  none of the atoms in $\eta$ are in $\supp{}{x}$. Moreover, it also states that all the permutations in the group generated by permutations that fix an element $x$ of a nominal set, also fix $x$.

\begin{restatable}{lemma}{generated}
\label{lemma:generated-group}
 Let $x$ be an element of a nominal set and $\pi, \eta_1,\ldots, \eta_n$ be permutations written as compositions of disjoint cycles.
    \begin{enumerate}
    \item If $\pi\act x=x$,  $\eta$ is a cycle of $\pi$ and $\dom{\eta} - \supp{}{x} \neq \emptyset$
    then $\dom{\eta}\cap\supp{}{x} = \emptyset$.
    \item If $\eta_i\act x = x$, for all $i=1,\ldots,n$, and  $\rho\in \pair{\eta_1,\ldots,\eta_n}$, then $\rho\act x = x$.
    \end{enumerate}
\end{restatable}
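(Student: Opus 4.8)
The plan is to treat the two parts separately, proving Part~2 first since it is essentially immediate and will serve as a tool inside Part~1.

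For Part~2, I would observe that the stabiliser $\mathrm{Stab}(x) = \{\rho\in\Perm{\A} \mid \rho\act x = x\}$ is a subgroup of $\Perm{\A}$: it contains $\id$ (as $\id\act x = x$), it is closed under composition (if $\rho\act x = x$ and $\rho'\act x = x$ then $(\rho\circ\rho')\act x = \rho\act(\rho'\act x) = x$), and it is closed under inverses (from $\rho\act x = x$, applying $\rho^{-1}$ gives $\rho^{-1}\act x = x$). Since each $\eta_i$ lies in $\mathrm{Stab}(x)$, the subgroup $\pair{\eta_1,\ldots,\eta_n}$ generated by them is contained in $\mathrm{Stab}(x)$, so every $\rho\in\pair{\eta_1,\ldots,\eta_n}$ fixes $x$.

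For Part~1, write $\eta = (a_1\ \cdots\ a_k)$; if $k\le 1$ then $\dom{\eta} = \emptyset$ and the hypothesis $\dom{\eta}-\supp{}{x}\neq\emptyset$ fails, so assume $k\ge 2$, and after cyclically rotating the cycle notation assume $a_1\notin\supp{}{x}$. Write $\pi = \eta\circ\pi'$, where $\pi'$ collects the remaining disjoint cycles of $\pi$, so that $\dom{\eta}\cap\dom{\pi'} = \emptyset$. The key sub-claim is: \emph{if $a_j\notin\supp{}{x}$ then its successor $\eta(a_j)=a_{j+1}$ in the cycle is also not in $\supp{}{x}$}; iterating this around the cycle starting from $a_1$ then yields $\dom{\eta}\cap\supp{}{x}=\emptyset$. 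To prove the sub-claim, I would fix any atom $c$ outside the finite set $\supp{}{x}\cup\dom{\pi}$; since $a_j,c\notin\supp{}{x}$ the swapping $(a_j\ c)$ lies in $\Fix{\supp{}{x}}$, hence $(a_j\ c)\act x = x$. Applying Part~2 to $\pi$ and $(a_j\ c)$ shows that $\pi^{(a_j\ c)}\circ\pi^{-1}$ fixes $x$. Because $c\notin\dom{\pi}$ and $a_j\notin\dom{\pi'}$, the swapping $(a_j\ c)$ commutes with $\pi'$, so $\pi^{(a_j\ c)}\circ\pi^{-1}$ reduces to $\eta^{(a_j\ c)}\circ\eta^{-1}$, which a routine cycle computation identifies as the $3$-cycle $(a_j\ c\ a_{j+1})$. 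Composing once more with $(a_j\ c)\in\mathrm{Stab}(x)$ gives $(a_{j+1}\ c)\act x = x$. As this holds for cofinitely many $c$, the equivalence~\eqref{eq:fresh_fix} yields $a_{j+1}\notin\supp{}{x}$, completing the sub-claim and hence Part~1.

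The only mildly delicate step is the bookkeeping with permutation cycles in the sub-claim: checking that conjugating $\pi$ by a swapping involving a fresh atom $c\notin\dom{\pi}$ leaves the other cycles untouched, and that $\eta^{(a_j\ c)}\circ\eta^{-1}$ is exactly $(a_j\ c\ a_{j+1})$. This is elementary once $c$ is chosen outside $\dom{\pi}$. Everything else reduces to the group structure of $\mathrm{Stab}(x)$ and the fresh-fix equivalence~\eqref{eq:fresh_fix}.
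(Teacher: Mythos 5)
Your proof is correct, and Part 2 coincides with the paper's argument (the stabiliser of $x$ is a subgroup of $\Perm{\A}$, hence contains the subgroup generated by the $\eta_i$). For Part 1, however, you take a genuinely different route. The paper argues by contradiction and propagates \emph{membership} in the support forward along the cycle: assuming some $b\in\dom{\eta}\cap\supp{}{x}$, it threads $b$ through the disjoint cycles of $\pi$ using equivariance of the support function ($b\in\supp{}{x}$ implies $\pi(b)\in\supp{}{\pi\act x}=\supp{}{x}$) and iterates until it reaches the atom assumed to lie outside the support. You instead propagate \emph{non-membership}: from $a_j\notin\supp{}{x}$ you pick $c$ outside $\supp{}{x}\cup\dom{\pi}$, note $(a_j\ c)\act x=x$, apply Part 2 to the group generated by $\pi$ and $(a_j\ c)$ to see that $\pi^{(a_j\ c)}\circ\pi^{-1}=(a_j\ c\ a_{j+1})$ fixes $x$, compose with $(a_j\ c)$ to obtain $(a_{j+1}\ c)\act x=x$ for cofinitely many $c$, and conclude $a_{j+1}\notin\supp{}{x}$ by the equivalence (\ref{eq:fresh_fix}). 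The cycle computations check out: choosing $c\notin\dom{\pi}$ guarantees the conjugation leaves the other cycles of $\pi$ untouched, $\eta^{(a_j\ c)}\circ\eta^{-1}$ is indeed $(a_j\ c\ a_{j+1})$, and $(a_j\ c)\circ(a_j\ c\ a_{j+1})=(a_{j+1}\ c)$. Your route is somewhat heavier on permutation bookkeeping but buys a pleasant economy: Part 1 becomes a corollary of Part 2 together with the cofinite-swap characterisation of freshness, which fits the paper's fixed-point theme, whereas the paper's proof instead relies on equivariance of $\supp{}{-}$ as an extra ingredient. Both are sound.
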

For e.g., if $\swap{a_1 \ a_2}{a_3}\circ \swap{a_4}{a_5}\cdot x = x $ and $a_2\notin \supp{}{x}$, then $a_3,a_1\notin \supp{}{x}$.

The following result generalises Pitts' equivalence (\ref{eq:pitts-eq-1}) to any permutation $\pi$.

\begin{restatable}[Generalised Pitts' equivalence]{lemma}{pitts}
\label{lemma:pitts-eq-generalized}\hfill
    \begin{enumerate}
        \item $\newc{c}{}.~\pi\act x =x$ iff $\newc{c}{}.\pi_{\catnew{c}}\act x = x$ and $\newc{c}{}.\pi_{\neg\catnew{c}}\act x = x$.

        \item $\newc{c}{}. ~\pi_{\catnew{c}}\act x = x$ iff  for all $\catnew{c}$, $\catnew{c}\cap\supp{}{x} = \emptyset$ implies $\dom{\pi_{\catnew{c}}}\cap \supp{}{x} = \emptyset$.

        \item $\newc{c}{}.~\pi_{\neg\catnew{c}}\act x = x$ iff for all $\catnew{c}$,  $\catnew{c}\cap\supp{}{x} = \emptyset$ implies $\pi_{\neg\catnew{c}}\act x = x$.

    \end{enumerate}
\end{restatable}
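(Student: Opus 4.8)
The proof proceeds by reducing the three parts to the equivalences already collected in (\ref{eq:fresh_fix}), together with Lemma \ref{lemma:generated-group}. For part (1), the plan is to observe that since $\pi$ decomposes as the composition of disjoint cycles $\pi = \pi_{\catnew{c}} \circ \pi_{\neg\catnew{c}}$ and these two factors have disjoint domains, they commute; hence $\pi \act x = (\pi_{\catnew{c}} \circ \pi_{\neg\catnew{c}}) \act x = \pi_{\catnew{c}} \act (\pi_{\neg\catnew{c}} \act x)$. The ``if'' direction is then immediate: if both $\pi_{\catnew{c}}\act x = x$ and $\pi_{\neg\catnew{c}}\act x = x$ hold for cofinitely many $\catnew{c}$, the intersection of two cofinite sets is cofinite, so $\pi\act x = x$ for cofinitely many $\catnew{c}$. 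For the ``only if'' direction I would argue that the atoms of $\pi_{\catnew{c}}$ that lie in $\catnew{c}$ never appear in $\pi_{\neg\catnew{c}}$, so for fresh choices of $\catnew{c}$ (i.e. $\catnew{c}\cap\supp{}{x}=\emptyset$) one can use Lemma \ref{lemma:generated-group}(1): each cycle $\eta$ of $\pi$ containing some $\atnew{c_i}\notin\supp{}{x}$ must have $\dom{\eta}\cap\supp{}{x}=\emptyset$, so $\pi_{\catnew{c}}\in\Fix{\supp{}{x}}$ and therefore $\pi_{\catnew{c}}\act x = x$; cancelling, $\pi_{\neg\catnew{c}}\act x = x$ as well.

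For part (2), the plan is a direct unfolding. The ``only if'' direction: suppose $\newc{c}{}.\,\pi_{\catnew{c}}\act x = x$ and fix any $\catnew{c}$ with $\catnew{c}\cap\supp{}{x}=\emptyset$; by the Choose-a-Fresh-Name Principle we may pick a further fresh tuple $\catnew{d}$, disjoint from $\catnew{c}\cup\supp{}{x}\cup\dom{\pi}$, for which $\pi_{\catnew{d}}\act x = x$ — but crucially $\pi_{\catnew{d}}$ collects cycles of $\pi$ touching $\catnew{d}$, which are none, so one instead argues through the cycle structure: each disjoint cycle of $\pi$ mentioning some atom of $\catnew{c}$ also mentions that atom which is fresh for $x$, so by Lemma \ref{lemma:generated-group}(1) the whole cycle avoids $\supp{}{x}$, giving $\dom{\pi_{\catnew{c}}}\cap\supp{}{x}=\emptyset$. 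The ``if'' direction is immediate: if $\dom{\pi_{\catnew{c}}}\cap\supp{}{x}=\emptyset$ then $\pi_{\catnew{c}}\in\Fix{\supp{}{x}}$, so $\pi_{\catnew{c}}\act x = x$, and this holds for every fresh $\catnew{c}$, in particular for cofinitely many.

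For part (3), I would again unfold the definition of $\new$. Here $\pi_{\neg\catnew{c}}$ consists of cycles none of whose atoms are $\new$-quantified, so the statement $\pi_{\neg\catnew{c}}\act x = x$ does not itself involve the $\catnew{c}$ except through which cycles get classified into $\pi_{\neg\catnew{c}}$ versus $\pi_{\catnew{c}}$; the equivalence is essentially definitional once one notes that ``$\new\catnew{c}.\,P$'' means ``$P$ holds for cofinitely many $\catnew{c}$'' and that ``$\catnew{c}\cap\supp{}{x}=\emptyset$'' cuts out a cofinite set. I would spell out that for a cofinite set of $\catnew{c}$ we have $\catnew{c}\cap(\supp{}{x}\cup\dom{\pi})=\emptyset$, so the classification of cycles stabilises and $\pi_{\neg\catnew{c}}$ becomes a fixed permutation; the claim follows.

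The main obstacle I anticipate is bookkeeping the interaction between the quantifier $\new\catnew{c}$ and the fact that $\pi_{\catnew{c}}$ itself \emph{depends on} $\catnew{c}$ — so that ``$\newc{c}{}.\,\pi_{\catnew{c}}\act x = x$'' is not a statement of the form ``$\new\catnew{c}.\,\phi(\catnew{c})$'' for a $\phi$ with $\catnew{c}$ ranging freely, but has $\catnew{c}$ occurring both as the quantified tuple and inside the predicate. Handling this cleanly requires restricting attention to tuples $\catnew{c}$ disjoint from $\dom{\pi}$ (a cofinite condition, harmless under $\new$), on which $\pi_{\catnew{c}}$ and $\pi_{\neg\catnew{c}}$ reduce to the fixed decomposition from Definition \ref{def:splitting} relative to $\dom{\pi}\cap\catnew{c}$; I would make this reduction explicit at the start and then all three parts follow from (\ref{eq:fresh_fix}) and Lemma \ref{lemma:generated-group} as sketched.
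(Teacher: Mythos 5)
Your plan follows essentially the same route as the paper's proof: part (1) via intersecting the cofinite sets for $\pi\act x = x$ and for freshness of $\catnew{c}$, then using Lemma~\ref{lemma:generated-group}(1) to kill $\dom{\pi_{\catnew{c}}}\cap\supp{}{x}$ and cancelling the disjoint (hence commuting) factor $\pi_{\neg\catnew{c}}$; part (2) via the same cycle-propagation argument (the paper phrases it as a contradiction, you phrase it directly, but the substance is identical); and part (3) via a secondary tuple disjoint from $\supp{}{x}\cup\dom{\pi}$ on which the cycle classification is trivial. The bookkeeping issue you flag at the end (that $\pi_{\catnew{c}}$ itself depends on the quantified tuple) is handled in the paper in exactly the way you propose, so no substantive divergence.
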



\section{A proof system}\label{sec:proof-system}

In this section, we introduce notions of fixed-point constraint, equality constraint and $\alpha,\C$-equivalence  judgement, together with derivation rules  (Figure~\ref{fig:fixed-rules_new}). Our judgements and rules are more general than the ones in~\cite{DBLP:journals/logcom/GabbayM09, entics:14777,DBLP:journals/lmcs/Ayala-RinconFN19} since we consider {\em general} $\new$-quantified permutations subsuming both freshness and algebraic (here, commutative) conditions (more discussion about expressivity can be found below); as a consequence, we need only one predicate (equality, denoted $\aeq{C}$) and only one set of rules to handle notions of term equality and name freshness. Before discussing the rules, we define generalised fixed-point contexts and judgements, as well as a normalisation process to refine permutations in contexts.

 
\begin{definition}[Generalised fixed-point constraint and context] \hfill 
\begin{enumerate}
\item An {\em $\alpha,\C$-equality constraint} (or simply an {\em equality constraint}) is an expression of the form $t \aeq{C} u$, where $t$ and $u$ are nominal terms. 

       \item A (generalised) {\em fixed-point constraint} is an $\alpha,\C$-equality constraint of the form $\pi \act t \aeq{C} t$, where $\pi$ is a permutation. A fixed-point constraint of the form $\pi \act X \aeq{C} X$ is called {\em primitive}.  
       \item    A (generalised) {\em fixed-point context} is an expression of the form $\newc{c}{}. \Upsilon$, where $\catnew{c} = \atnew{c_1}, \ldots, \atnew{c_n}$ and $\Upsilon$ is a finite set containing only primitive fixed-point constraints. For simplicity, we write $\Upsilon_{\catnew{c}}$ to denote  $\newc{c}{}. \Upsilon$. Symbols such as $\Upsilon,\Psi,\Upsilon',\Psi'$, are used to represent fixed-point contexts.
       \end{enumerate}
\end{definition}
We will adopt the existing notation for fixed-point constraints~\cite{DBLP:journals/lmcs/Ayala-RinconFN19} and abbreviate $\pi \act t \aeq{C} t$ as $\pi \fix{C} t$, writing primitive constraints as $\pi \fix{C} X$.

\begin{definition}[Judgement]
 An {\em $\alpha,\C$-equivalence judgment} is an expression of the form  $\new\catnew{c}.(\Upsilon \vdash t\aeq{C} u)$
  consisting of a fixed-point context $\Upsilon_{\atnew{\pvec{c}'}}$, with $\atnew{\pvec{c}'}\subseteq \catnew{c}$, and an equality constraint $t\aeq{C} u$, which may contain atoms from $\catnew{c}$. We simplify the notation to  $\Upsilon_{\catnew{c}} \vdash t\aeq{C} u$.
\end{definition}
%
%
We  write $\Upsilon_{\catnew{c}}|_X$ to refer to the fixed-point constraints on the variable $X$, that is,  $\Upsilon_{\catnew{c}}|_X := \newc{c}{}. \{\pi\fix{C} X\mid \pi\fix{C} X\in \Upsilon\}$. We extend the notation $\atm{-}$ to contexts $\Upsilon_{\catnew{c}}$. For example, $\atm{\new \atnew{c}.\{(a \ b \ \atnew{c})\fix{C} X\}} = \{a, b, \atnew{c}\}$. The set also includes vacuous $\new$-quantified names, as in $\atm{\new \atnew{c,c_1}.\{(a \ \atnew{c})\fix{C} X\}} = \{a, b, \atnew{c,c_1}\}$. We may slightly abuse notation when convenient and write $\pi\fix{C} X \in \Upsilon_{\catnew{c}}$ instead of $\pi\fix{C} X \in \Upsilon$.

\subsubsection{Expressivity.}\label{sec:normalised}
In the standard nominal approach using freshness constraints, contexts store only information about the freshness of atoms in (meta-)variables — they impose conditions on which atoms cannot occur free in any interpretation of the variables. Generalised fixed-point contexts are more expressive. 
Consider the context $\new \atnew{c}.\Psi_1 = \new \atnew{c}.\{(a \ \atnew{c} \ b)\circ(d \ e)\fix{C} Z\}$. By the generalized Pitts' equivalence (see Lemma~\ref{lemma:pitts-eq-generalized}), this  context implies two things:

\begin{enumerate}
\item The atoms $a, b$ (and also $\atnew{c}$) cannot occur freely in an interpretation of $Z$.

\item $(d \ e)\fix{C} Z$ holds, which does not force the atoms $d$ and $e$ to be fresh in interpretations of $Z$, since neither $d$ nor $e$ are $\new$-quantified or occur in a cycle with a $\new$-quantified name. For example,  $Z$ could be mapped to the term $d+e$ where $+$ is a commutative operator. In this case, $a, b$, and $\atnew{c}$ do not occur freely in $d+e$, while $d$ and $e$ occur freely in $d+e$, but $(a \ \atnew{c} \ b)\act (d+e) \aeq{C} d+e$ and $(d \ e)\act(d+e) \aeq{C} d+e$.
\end{enumerate}

The example above shows that generalised fixed-point contexts are a form of ``hybrid context'': they store more than just freshness information; they also encode relevant fixed-point constraints that cannot be reduced to freshness.

\subsection{Normalisation}\label{sec:normalization}
We can explore the properties of permutations to improve our fixed-point contexts. The idea is to use the properties in  Lemma~\ref{lemma:generated-group} to transform contexts into equivalent, simpler ones,  using first the rule  {\bf (R1)} exhaustively and then  {\bf (R2)}: 
{\small 
\[
\begin{array}{rl}
     \textbf{(R1)}& \new\catnew{c}. (\{\pi_{\catnew{c}}\circ\pi_{\neg \catnew{c}} \fix{C} X\}\cup \Upsilon') \implies     \new\catnew{c}. (\{\pi_{\catnew{c}}\fix{C} X, \pi_{\neg \catnew{c}} \fix{C} X\}\cup \Upsilon') \\[1mm]
     \textbf{(R2)}& \new\catnew{c}. (\{\pi_{\catnew{c}}\fix{C} X, (\rho_{\neg \catnew{c}} \circ \rho'_{\neg \catnew{c}})\fix{C}X \}\cup \Upsilon') \implies     \new\catnew{c}. (\{(\pi_{\catnew{c}}\circ \rho_{\neg \catnew{c}})\fix{C} X,   \rho'_{\neg \catnew{c}} \fix{C} X\}\cup \Upsilon') \\
     & \text{if } \dom{\pi_{\catnew{c}}}\cap \dom{\rho_{\neg\catnew{c}}}\neq \emptyset \text{, where $\rho$ is a cycle disjoint from $\rho'$.}
\end{array}
\]
}
We justify our rules as follows: For rule {\bf (R1)}: From Definition~\ref{def:splitting}, it follows that given $\catnew{c}$, every permutation $\pi$ that is written as a product of disjoint cycles can be rewritten such that $\pi= \pi_{\catnew{c}}\circ \pi_{\neg \catnew{c}}$.  By Lemma~\ref{lemma:pitts-eq-generalized}, the constraint $\pi_{\catnew{c}}\circ \pi_{\neg \catnew{c}} \fix{C} X$ is equivalent to  $\pi_{\catnew{c}} \fix{C} X \wedge \pi_{\neg \catnew{c}} \fix{C} X$. For rule {\bf (R2)} the constraints on the lhs are normal forms w.r.t. ({\bf R1}): From the constraint $\pi_{\catnew{c}}\fix{C} X$ it follows that the atoms in $\pi_{\catnew{c}}$ cannot occur free in the instances of $X$ (they are not in the support of the instances of $X$). Since $\rho$ is a cycle that contains atoms in $\pi_{\catnew{c}}$, by  Lemma~\ref{lemma:generated-group}, all the atoms in the cycle must not be in the support of $X$. From the equivalences in (\ref{eq:fresh_fix}), not being in the support
is equivalent to being the fixed-point with atoms in $\catnew{c}$. The composition $\pi_{\catnew{c}}\circ \rho_{\neg \catnew{c}} $ generates a new permutation $\pnew{\pi'}{\catnew{c}}$ with the atoms of $\pi_{\catnew{c}}$ and $\rho_{\neg \catnew{c}}$ organised as a composition of disjoint cycles.

\begin{example}\label{exa:normalisation}
To illustrate ({\bf R1}), consider  the  context $\Psi_{\atnew{c}}=\new \atnew{c}.\{(\newswap{a}{c}\circ \swap{d}{e}\fix{C} Z\}$ which is equivalent to the context   $\new \atnew{c}.\{\newswap{a}{c}\fix{C} Z, \swap{d}{e}\fix{C} Z\}$ with simpler constraints. The left constraint gives information about freshness whereas the  right constraint is about terms that are fixed by permutation $\swap{d}{e}$. To illustrate ({\bf R2}), consider the context $\new \atnew{c}.\Psi_2 = \new \atnew{c}.\{\newswap{a}{c}\fix{C} X, (a \ b)\circ(d \ e)\fix{C} X\}$. By generalised Pitts' equivalence (Lemma~\ref{lemma:pitts-eq-generalized}), $\newswap{a}{c}\fix{C} X$ alone implies that $a$ cannot occur freely in any interpretation of $X$, as $\atnew{c}$ is $\new$-quantified. In contrast, when considering the constraint $(a \ b)\circ (d \ e)\fix{C} X$ on its own, the names $a,b,d,e$ may occur freely in an interpretation of $X$. However, when the two constraints are considered together, as in $\new \atnew{c}.\Psi_2$, the scenario for the name $b$ changes: it follows from Lemma~\ref{lemma:generated-group} that $b$ also cannot be free in any interpretation of $X$.
Rule ({\bf R2}) composes both permutations as in $\newswap{a}{c}\circ \swap{a}{b}$ obtaining the equivalent context $\new \atnew{c}.\{\newswap{a \ b}{c}\fix{C} X, (d \ e)\fix{C} X\}$ where it is explicit that $a$ and $b$ must be fresh in all the instances of $X$.
\end{example}

\paragraph{\bf Important Assumption.} From now on, unless explicitly stated otherwise, we will assume that {\em every fixed-point context, say $\Upsilon_{\catnew{c}}$, is in normal form with respect to the exhaustive application of rules ({\bf R1}) and ({\bf R2})}. 

Thus, just as we can split a permutation $\pi$ into its disjoint parts $\pi_{\catnew{c}}$ and $\pi_{\neg\catnew{c}}$, we will split a generalised fixed-point context $\Upsilon_{\catnew{c}}$ into its ``freshness''  part (denoted $(\Upsilon_{\catnew{c}})_{\fresh}$) and ``pure fixed-point'' (but non-freshness) part (denoted $(\Upsilon_{\catnew{c}})_{\fix{C}}$). That is, $(\Upsilon_{\catnew{c}})$ will be divided into the following disjoint sets:
$(\Upsilon_{\catnew{c}})_{\fresh} := \newc{c}{}.\{\pi_{\catnew{c}}\fix{C} X \mid \pi\fix{C} X\in \Upsilon\}$ and $(\Upsilon_{\catnew{c}})_{\fix{C}} := \newc{c}{}.\{\pi_{\neg\catnew{c}}\fix{C} X \mid \pi\fix{C} X\in \Upsilon\}$. 
For example, the context $\new\catnew{c}.\Psi_2$ from Example~\ref{exa:normalisation} will be split into $(\new \atnew{c}.\Psi_2)_{\fresh} = \new \atnew{c}.\{(a \ b \ \atnew{c})\fix{C} X\}$ and $(\new \atnew{c}.\Psi_2)_{\fix{C}} = \new \atnew{c}.\{(d \ e)\fix{C} X\}$.

In the following  we use the notation
$\perm{}{(\Upsilon_{\catnew{c}})_{\fix{C}}} := \{\pi_{\neg\catnew{c}} \mid \pi\fix{C} X\in \Upsilon\}$

\subsection{Derivation rules for \texorpdfstring{$\aeq{C}$}{falphaeqC}}\label{sec:derivation-rules}


\begin{figure}[!t]
\small
\begin{mdframed}
\begin{mathpar}
   \inferrule*[right=$(\frule{\faeq{C}}{a})$]
    {\qquad }{\Upsilon_{\catnew{c}} \vdash a\aeq{C} a}
   \and
    \inferrule*[right=$(\frule{\faeq{C}}{var})$]
    {\pi'^{-1}\circ\pi\in\PN{}{\Upsilon_{\catnew{c}}|_X}}
        {\Upsilon_{\catnew{c}} \vdash \pi\act X \aeq{C} \pi'\act X}
   \and
   \inferrule*[right=$\tf{f}\neq \tf{f^C} \ (\frule{\faeq{C}}{\tf{f}})$]
   {\Upsilon_{\catnew{c}} \vdash t_1 \aeq{C} t_1'\\ \ldots \\
        \Upsilon_{\catnew{c}} \vdash t_n\aeq{C} t_n'}
        {\Upsilon_{\catnew{c}} \vdash {\tt f}(t_1,\ldots,t_n) \aeq{C} {\tt f}(t_1',\ldots,t_n')}
    \and
    \inferrule*[right=$(\frule{\faeq{C}}{\tf{f^C}})$]
    {\Upsilon_{\catnew{c}} \vdash s_0\aeq{C}t_i\\
       \Upsilon_{\catnew{c}} \vdash s_1\aeq{C}t_{1-i} \quad  i=0,1}
    {\Upsilon_{\catnew{c}} \vdash \tf{f^C}(s_0,s_1)\aeq{C} \tf{f^C}(t_0,t_1)}
    \and
    \inferrule*[right=$(\frule{\faeq{C}}{[a]})$]
    {\Upsilon_{\catnew{c}} \vdash  t \aeq{C} t'}
    {\Upsilon_{\catnew{c}} \vdash [a]t \aeq{C} [a]t'}
    \and
    \inferrule*[right=$(\frule{\faeq{C}}{ab})$]
    {\Upsilon_{\catnew{c},\atnew{c_1}} \vdash \newswap{a}{c_1}\act t \aeq{C} \newswap{b}{c_1}\act s}
    {\Upsilon_{\catnew{c}} \vdash [a]t \aeq{C} [b]s}
\end{mathpar}
\end{mdframed}
\caption{Derivation rules for $\aeq{C}$.}
\label{fig:fixed-rules_new}
\vspace*{4pt}
\end{figure}


We are now ready to present our derivation rules (Figure~\ref{fig:fixed-rules_new}). The rule  $(\frule{\faeq{C}}{ab})$ for abstraction relies on Pitts' definition of $\alpha$-equivalence on nominal sets ($=_\alpha$) using the $\new$-quantifier~\cite{book/Pitts}:
\begin{equation}\label{eq:pitts-eq-2}
 (a_1,x_1) =_\alpha (a_2,x_2) \iff \new \atnew{c_1}.~\newswap{a_1}{ c_1}\act x_1 = \newswap{a_2}{c_1}\act x_2.
\end{equation}
This means that two abstractions $[a_1]t$ and $[a_2]s$ are equal if  $\new \atnew{c_1}. \newswap{a_1}{c_1}\act t = \newswap{a_2}{c_1}\act s$. As expected, the rule for commutativity $(\frule{\faeq{C}}{\tf{f^C}})$  ensures  that to prove $\Upsilon_{\catnew{c}}\vdash {\tf f}^{\C}(s_0,s_1)\aeq{C} {\tf f}^{\C}(t_0,t_1)$ there must exist derivations of $\Upsilon_{\catnew{c}}\vdash s_0 \aeq{C} t_0 $ and $\Upsilon_{\catnew{c}} \vdash s_1\aeq{C} t_1$; or derivations of $\Upsilon_{\catnew{c}}\vdash s_0 \aeq{C} t_1 $ and $\Upsilon_{\catnew{c}} \vdash s_1\aeq{C} t_0$.

The rule $(\frule{\faeq{C}}{var})$ for handling variables is more involved. It is inspired by  Lemma~\ref{lemma:generated-group} and Lemma~\ref{lemma:pitts-eq-generalized}.
Applying $\pi'^{-1}$ to both sides of $\pi\act X \aeq{C} \pi'\act X$ yields $\pi'^{-1} \act (\pi \act X) \aeq{C} X$, which simplifies to $(\pi'^{-1}\circ\pi) \cdot X \aeq{C} X$. At this point, it is essential to verify whether $\pi'^{-1}\circ\pi$ belongs to the {\em group generated by permutations derived from the context $\Upsilon_{\catnew{c}}|_X$}, denoted as $\PN{}{\Upsilon_{\catnew{c}}|_X}$, whose permutations are known to fix $X$. Here
\begin{equation}\label{permutation-group-X}
    \PN{}{\Upsilon_{\catnew{c}}|_X} = \Perm{\atm{(\Upsilon_{\catnew{c}}|_X)_{\fresh}}}\alert{\circ}\pair{\perm{}{(\Upsilon_{\catnew{c}}|_X)_{\fix{C}}}},
\end{equation}
which is a subgroup of $\Perm{\A}$ because $\rho_1\circ\rho_2 = \rho_2\circ\rho_1$ holds for all $\rho_1 \in \Perm{\atm{(\Upsilon_{\catnew{c}}|_X)_{\fresh}}}$ and all $\rho_2\in\pair{\perm{}{(\Upsilon_{\catnew{c}}|_X)_{\fix{C}}}}$ since $\rho_1$ and $\rho_2$ are disjoint.
The remaining rules are standard and rely on the decomposition of the constraint over the structure of its terms.
\begin{example} \label{examples-judgements} 
The judgement $\new \atnew{c}.\{\newswap{d \ e}{c}\fix{C} X, (a \ b)\fix{C} Y\}\vdash \tf{f^C}([d]X,((a \ b)\act Y)\aeq{C} \tf{f^C}(Y,[e]X)$ is derivable. For $\Upsilon_{\atnew{c}}=\new \atnew{c}.\{\newswap{d \ e}{c}\fix{C} X, (a \ b)\fix{C} Y\}$, we have the following derivation:
\vspace{-4mm}
{\small 
          \begin{prooftree}
            \AxiomC{$(a \ b)\in \Perm{\{\atnew{c}\}}\alert{\circ}\pair{(a \ b)}$}
            \dashedLine

            \UnaryInfC{$(a \ b)\in \PN{}{\Upsilon_{\atnew{c}}|_Y}$}
            \RightLabel{$(\frule{\faeq{C}}{var})$}
            
            \UnaryInfC{$\Upsilon_{\atnew{c}} \vdash (a \ b)\act Y \aeq{C} Y$}
            
            \AxiomC{$(d \ e \ \atnew{c_1})\in \Perm{\{d,e, \atnew{c},\atnew{c_1}\}}\alert{\circ}\pair{\emptyset}$}
            \dashedLine
            \UnaryInfC{$(d \ e \ \atnew{c_1})\in \PN{}{\Upsilon_{\atnew{c},\atnew{c_1}}|_X}$}
            \RightLabel{$(\frule{\faeq{C}}{var})$}
        \UnaryInfC{$\Upsilon_{\atnew{c},\atnew{c_1}} \vdash \newswap{d}{c_1}\act X \aeq{C} \newswap{e}{c_1}\act X$}
        \RightLabel{$(\frule{\faeq{C}}{ab})$}
        \UnaryInfC{$\Upsilon_{\atnew{c},\atnew{c_1}} \vdash [d]X \aeq{C} [e]X$}
          \RightLabel{$(\frule{\faeq{C}}{\tf{f^C}})$}
             \BinaryInfC{$\Upsilon_{\atnew{c}} \vdash \tf{f^C}([d]X,(a \ b)\act Y)\aeq{C} \tf{f^C}(Y,[e]X)$}
         \end{prooftree}
         }
   While the freshness approach~\cite{DBLP:journals/tcs/UrbanPG04} requires $a \fresh Y$ and $b \fresh Y$ to derive $(a \ b)\act Y \aeq{C} Y$, the fixed-point calculus removes this limitation. Here, $(a \ b)\act Y \aeq{C} Y$ is derivable without relying on any freshness assumption. Yet, if desired, it could be derived using freshness; for instance, it could be derived using the context $\new\atnew{c}. \{(a \ b \ \atnew{c})\fix{C}Y \}$. This shows that the fixed-point calculus generalises and extends the freshness calculus, establishing it as proper subcalculus.
\end{example}




\subsection{Properties}\label{sec:properties}
Our calculus satisfies essential structural properties, as summarised in Theorem~\ref{lemma:pitts-eq-generalized}. It upholds {\em equivariance}, ensuring that derivations remain invariant under the action of permutations, thereby preserving consistency across transformations. It also satisfies {\em inversion}, a useful property for inductive proofs on derivations, as it facilitates the bottom-up application of derivation rules. More significantly, and in contrast with previous fixed-point approaches~\cite{DBLP:journals/lmcs/Ayala-RinconFN19,entics:14777}, our framework enables the definition of a sound and complete semantics for commutative nominal algebra with $\new$-quantified fixed-point constraints in nominal sets, as established in Theorem~\ref{thm:soundness-completeness-fix} below. The key insight to achieve this result lies in the $(\frule{\faeq{C}}{var})$ rule, more precisely in the construction of the group  $\PN{}{\Upsilon_{\catnew{c}}|_X}$ (see~(\ref{permutation-group-X})), which carefully distinguishes freshness information from fixed-point behaviour.

\begin{restatable}[Miscellaneous]{theorem}{miscellaneous}\label{thm:miscellaneous}
The following hold:
\begin{enumerate}
  \item (Inversion) \label{thm:inversion} The derivation rules from Figure~\ref{fig:fixed-rules_new} are invertible.
  \item (Equivariance) \label{thm:object-equivariance} $\Upsilon_{\catnew{c}} \vdash  s \aeq{C} t$ iff $\Upsilon_{\catnew{c}} \vdash  \rho\act s\aeq{C} \rho\act t$,
where $\rho$ is an arbitrary permutation (it could have atoms of $\catnew{c}$ in its domain).
  \item (Equivalence) \label{thm:alpha-equivalence} $\Upsilon_{\catnew{c}}\vdash - \aeq{C} -$ is an equivalence relation.
  \item (Strengthening) \label{thm:strengthening}  If $(\Upsilon\uplus\{\pi\fix{C} X\})_{\catnew{c}}\vdash s\aeq{C} t$ and $(\dom{\pi}\setminus\catnew{c})\cap\atm{s,t} = \emptyset$, then $\Upsilon_{\catnew{c}}\vdash s\aeq{C} t$.
 \end{enumerate}
\end{restatable}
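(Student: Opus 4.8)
The plan is to prove the four items in the natural order, since each one feeds the next. For \emph{Inversion} (item~1), I would argue rule by rule: each rule in Figure~\ref{fig:fixed-rules_new} has a conclusion whose outermost term structure is uniquely determined (an atom, a suspension, a non-commutative function symbol, a commutative function symbol, or an abstraction), so the last rule applied in any derivation of a given judgement is forced, and the premises are then exactly the ones displayed. The only rule requiring care is $(\frule{\faeq{C}}{ab})$ versus $(\frule{\faeq{C}}{[a]})$: when the two abstracted atoms coincide, $[a]t\aeq{C}[a]t'$ could in principle come from either rule, so I would show the two premises are interderivable in that case (using equivariance for $(\frule{\faeq{C}}{ab})$ to rename the fresh atom $\atnew{c_1}$ back), making the choice immaterial. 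For the variable rule, invertibility is immediate because the side condition $\pi'^{-1}\circ\pi\in\PN{}{\Upsilon_{\catnew{c}}|_X}$ is precisely the content of the rule.

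For \emph{Equivariance} (item~2), I would prove both directions by induction on the derivation; since the statement is an ``iff'' and $\rho$ is invertible, it suffices to prove one implication ($\Upsilon_{\catnew{c}}\vdash s\aeq{C}t \Rightarrow \Upsilon_{\catnew{c}}\vdash \rho\act s\aeq{C}\rho\act t$), the converse following by applying it with $\rho^{-1}$. The atom, function-symbol, commutative and non-commutative cases are routine: $\rho$ distributes over the term constructors and the induction hypothesis applies to each premise. The variable case uses that $\pi'^{-1}\circ\pi$ and $(\rho\circ\pi')^{-1}\circ(\rho\circ\pi)=(\rho\circ\pi')^{-1}\circ\rho\circ\pi$ differ by conjugation... actually $(\rho\circ\pi')^{-1}\circ(\rho\circ\pi)=\pi'^{-1}\circ\pi$, so the side condition is literally unchanged. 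The genuinely delicate case is $(\frule{\faeq{C}}{ab})$: when $\rho$ mentions atoms of $\catnew{c}$, applying $\rho$ to $[a]t\aeq{C}[b]s$ gives $[\rho(a)]\rho\act t\aeq{C}[\rho(b)]\rho\act s$, and I must produce a derivation whose premise involves a genuinely fresh $\atnew{c_1}$. I expect to handle this by first $\alpha$-renaming the bound-name witness in the premise to an atom outside $\dom{\rho}\cup\catnew{c}$ (legitimate by the induction hypothesis applied to a swapping), then pushing $\rho$ through. \textbf{This is the step I expect to be the main obstacle}, precisely because the whole point of the paper's framework is that $\catnew{c}$ may appear in arbitrary cycles, so the interaction of $\rho$ with the $\new$-quantifier in $(\frule{\faeq{C}}{ab})$ is where bookkeeping can go wrong.

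\emph{Equivalence} (item~3) then follows in the standard way. Reflexivity is a straightforward induction on term structure, using $(\frule{\faeq{C}}{a})$ at atoms, $(\frule{\faeq{C}}{var})$ with $\pi=\pi'$ (so $\pi'^{-1}\circ\pi=\id\in\PN{}{\Upsilon_{\catnew{c}}|_X}$ since the identity lies in any generated group), and the congruence rules otherwise. Symmetry is an induction on derivations: the only nontrivial case is $(\frule{\faeq{C}}{var})$, where from $\pi'^{-1}\circ\pi\in\PN{}{\Upsilon_{\catnew{c}}|_X}$ I need $\pi^{-1}\circ\pi'\in\PN{}{\Upsilon_{\catnew{c}}|_X}$ — this holds because $\PN{}{\Upsilon_{\catnew{c}}|_X}$ is a \emph{group} (noted after~(\ref{permutation-group-X})) and hence closed under inverses, and $(\pi'^{-1}\circ\pi)^{-1}=\pi^{-1}\circ\pi'$. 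Transitivity is again an induction, simultaneously on both derivations, matched up via Inversion (item~1): matching last rules, the congruence cases close by the induction hypothesis, and the variable case uses that $\PN{}{\Upsilon_{\catnew{c}}|_X}$ is closed under composition. The commutative case $(\frule{\faeq{C}}{\tf{f^C}})$ requires a small case analysis on which of the two pairings each derivation used, but all four combinations close.

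Finally, \emph{Strengthening} (item~4) is an induction on the derivation of $(\Upsilon\uplus\{\pi\fix{C}X\})_{\catnew{c}}\vdash s\aeq{C}t$, with hypothesis $(\dom{\pi}\setminus\catnew{c})\cap\atm{s,t}=\emptyset$. All congruence rules propagate: a subterm of $s$ or $t$ has atoms contained in $\atm{s,t}$, so the hypothesis is inherited (for $(\frule{\faeq{C}}{ab})$ one must check the freshly introduced $\atnew{c_1}$ does not spoil things, which is fine since we may choose $\atnew{c_1}\notin\dom{\pi}$ and it is added to $\catnew{c}$ anyway). The crux is the variable case: if the last rule is $(\frule{\faeq{C}}{var})$ on a variable $Y$, then $s\equiv\pi_1\act Y$, $t\equiv\pi_2\act Y$, and the hypothesis gives $\dom{\pi}\setminus\catnew{c}$ disjoint from $\atm{\pi_1}\cup\atm{\pi_2}\cup\{Y\}$ — so in particular either $Y\neq X$, in which case $\pi\fix{C}X$ contributes nothing to $\PN{}{(\Upsilon\uplus\{\pi\fix{C}X\})_{\catnew{c}}|_Y}=\PN{}{\Upsilon_{\catnew{c}}|_Y}$ and we are done, or $Y=X$ but then $\dom{\pi}\subseteq\catnew{c}$, so $\pi_{\neg\catnew{c}}=\id$ and $\pi_{\catnew{c}}=\pi$, meaning the contribution of $\pi\fix{C}X$ to the group is a permutation whose domain lies in $\catnew{c}$, hence already subsumed by $\Perm{\atm{(\Upsilon_{\catnew{c}}|_X)_{\fresh}}}$ once we observe $\catnew{c}$-atoms in $\pi$ are $\new$-quantified. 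I would make this last observation precise using the splitting of $\Upsilon_{\catnew{c}}$ into its $\fresh$ and $\fix{C}$ parts and the definition~(\ref{permutation-group-X}), concluding that $\PN{}{(\Upsilon\uplus\{\pi\fix{C}X\})_{\catnew{c}}|_X}$ and $\PN{}{\Upsilon_{\catnew{c}}|_X}$ agree on all permutations whose domains avoid $\dom{\pi}\setminus\catnew{c}$, which covers $\pi_2^{-1}\circ\pi_1$.
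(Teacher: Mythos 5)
Items 1--3 of your proposal follow essentially the same route as the paper: Inversion by uniqueness of the last rule (your extra care about the overlap between $(\frule{\faeq{C}}{[a]})$ and $(\frule{\faeq{C}}{ab})$ when the abstracted atoms coincide is a point the paper glosses over, and your resolution is sound); Equivariance in one direction by induction on the last rule, with the variable case closed by $(\rho\circ\pi')^{-1}\circ(\rho\circ\pi)=\pi'^{-1}\circ\pi$ and the abstraction case by choosing the fresh witness outside $\dom{\rho}$ and conjugating the swappings; Equivalence from the fact that $\PN{}{\Upsilon_{\catnew{c}}|_X}$ is a group. This is exactly what the paper does.

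The variable case of Strengthening, however, contains a genuine error. From the hypothesis $(\dom{\pi}\setminus\catnew{c})\cap\atm{s,t}=\emptyset$ with $s\equiv\pi_1\act X$ and $t\equiv\pi_2\act X$ you infer the dichotomy ``either $Y\neq X$, or $Y=X$ and $\dom{\pi}\subseteq\catnew{c}$''. The second disjunct does not follow: $X$ is a variable, not an atom, so its occurrence in $s,t$ places no constraint on $\dom{\pi}$; the hypothesis only says that $\dom{\pi}\setminus\catnew{c}$ is disjoint from $\dom{\pi_1}\cup\dom{\pi_2}$. A concrete counterexample to your dichotomy: $\Upsilon=\{(c\ d)\fix{C} X\}$, $\pi=(a\ b)$ with $a,b,c,d$ pairwise distinct and none in $\catnew{c}$, $s\equiv(c\ d)\act X$, $t\equiv X$. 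Here $Y=X$, the hypothesis holds, the premise $(\Upsilon\uplus\{\pi\fix{C} X\})_{\catnew{c}}\vdash s\aeq{C} t$ is derivable, yet $\dom{\pi}\not\subseteq\catnew{c}$ and $\pi_{\neg\catnew{c}}=\pi\neq\id$. So the branch of your argument that relies on $\pi_{\neg\catnew{c}}=\id$ and on absorbing $\pi$ into $\Perm{\atm{(\Upsilon_{\catnew{c}}|_X)_{\fresh}}}$ never applies to the case it is meant to cover, and the genuinely hard situation --- where discarding $\pi\fix{C} X$ removes a nontrivial generator $\pi_{\neg\catnew{c}}$ from the $\fix{C}$ part of the group --- is skipped. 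The paper's argument instead derives $\dom{\pi_2^{-1}\circ\pi_1}\cap(\dom{\pi}\setminus\catnew{c})=\emptyset$ directly from the hypothesis and concludes $\pi_2^{-1}\circ\pi_1\in\PN{}{\Upsilon_{\catnew{c}}|_X}$ from that disjointness. Your closing sentence states the right lemma (the two groups agree on permutations whose domains avoid $\dom{\pi}\setminus\catnew{c}$), but in your write-up it appears as a consequence of the broken case split rather than as a claim with its own proof; it is precisely this group-theoretic fact that needs to be established, using the decomposition of $\PN{}{\cdot}$ into its disjoint $\fresh$ and $\fix{C}$ factors.
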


\subsubsection{Soundness and Completeness.}\label{sec:soundness-completeness}
We first define the semantics of judgements. Below, we introduce the fundamental notions required to present our results, while the full construction of the semantic model is provided in the appendix.

    Given a signature $\Sigma$, which may contain commutative function symbols, a {\em commutative nominal $\Sigma$-algebra} $\nalg{A}$ (or just $\Sigma$-algebra) consists of:
    \begin{enumerate}
        \item A nominal set $\nom{A} = (|\nom{A}|,\act)$ $-$ the domain.

        \item An injective equivariant\footnote{Given $\Perm{\A}$-sets $\nom{X,Y}$, call a map $f: |\nom{X}|\to |\nom{Y}|$ {\em equivariant} when $\pi\act f(x) = f(\pi\act x)$ for all $\pi\in\Perm{\A}$ and $x\in |\nom{X}|$.} map $\atom^{\nalg{A}}:\A\to |\nom{A}|$ to interpret atoms; we write the interpretation $\atom(a)$ as $a^{\nalg{A}}\in |\nom{A}|$.

        \item An equivariant map $\abs^{\nalg{A}}:\A\times |\nom{A}|\to |\nom{A}|$ such that $\new\atnew{c}. \newswap{a}{c}\act \abs^{\nalg{A}}(a,x) = \abs^{\nalg{A}}(a,x)$, for all $a\in \A$ and $x\in|\nom{A}|$; we use this to interpret abstraction.

        \item An equivariant map $f^{\nalg{A}}:|\nom{A}|^n \to |\nom{A}|$, for each $\tf{f}:n$ in $\Sigma$.

        \item For each $\tf{f^C}$ in $\Sigma$, the associated equivariant map $f^{\C,\nalg{A}}$ satisfies $f^{\C,\nalg{A}}(x_1, x_2) = f^{\C,\nalg{A}}(x_2, x_1)$ for all $x_1, x_2 \in |\nom{A}|$. Sometimes, $\nalg{A}$ will be called a {\em nominal model of $\C$ (over $\Sigma$)}.
    \end{enumerate}
    We use $\nalg{A,B,C}$ to denote $\Sigma$-algebras.

A {\em valuation} $\varsigma$ in a commutative $\Sigma$-algebra $\nalg{A}$ maps unknowns $X\in\V$ to elements
$\varsigma(X) \in |\nom{A}|$. Below we define an equivariant function $\Int{\act}{\nalg{A}}{\varsigma}: \F(\Sigma,\V)\to |\nom{A}|$ to interpret terms w.r.t. a valuation $\varsigma$.

\begin{definition}[Interpretation]\label{def:interpretation-of-terms}
    Let $\nalg{A}$ be a commutative $\Sigma$-algebra. Suppose that $t \in \F(\Sigma, \V)$ and consider a valuation $\varsigma$ in $\nalg{A}$. The {\em interpretation} $\Int{t}{\nalg{A}}{\varsigma}$ is defined inductively as:
    \[
    \begin{array}{l@{\hspace{1cm}}r}
    \begin{array}{rcl}
            \Int{a}{\nalg{A}}{\varsigma} &=&  a^{\nalg{A}}\\
            \Int{[a]t}{\nalg{A}}{\varsigma} &=&\abs^{\nalg{A}}(a,\Int{t}{\nalg{A}}{\varsigma})\\
            \Int{{\tt f^{\C}}(t_1,t_2)}{\nalg{A}}{\varsigma} &=& f^{\C,\nalg{A}}(\Int{t_1}{\nalg{A}}{\varsigma}, \Int{t_2}{\nalg{A}}{\varsigma})\\
        \end{array} &
        \begin{array}{rcl}
            \Int{ \pi\act X}{\nalg{A}}{\varsigma} & =& \pi\act \varsigma(X)\\
            \Int{{\tt f}(t_1,\ldots,t_n)}{\nalg{A}}{\varsigma} & =& f^{\nalg{A}}(\Int{t_1}{\nalg{A}}{\varsigma},\ldots, \Int{t_n}{\nalg{A}}{\varsigma}) \\
        \end{array}
        \end{array}
\]
\end{definition}
%
The interpretation map $\Int{\cdot}{\nalg{A}}{\varsigma}$ is  equivariant. The next definition gives a semantics to our derivations within nominal sets.
%
    For any commutative $\Sigma$-algebra $\nalg{A}$:
        \begin{itemize}
           \item $\Int{\Upsilon_{\catnew{c}}}{\nalg{A}}{\varsigma}$ is {\it valid} if, and only if,  $\newc{c}{}.\pi\act\Int{X}{\nalg{A}}{\varsigma} = \Int{X}{\nalg{A}}{\varsigma}$, for each $\pi\fix{C} X\in\Upsilon$;
          \item  $\Int{\Upsilon_{\catnew{c}}\vdash t \aeq{C} u}{\nalg{A}}{\varsigma}$  is {\it valid}  if, and only if, $\Int{\Upsilon_{\catnew{c}}}{\nalg{A}}{\varsigma}$ (valid) implies $\Int{t}{\nalg{A}}{\varsigma} = \Int{u}{\nalg{A}}{\varsigma}$.
        \end{itemize}
  More generally, $\Int{\Upsilon_{\catnew{c}}\vdash t \aeq{C} u}{\nalg{A}}{}$ is {\em valid} iff  $\Int{\Upsilon_{\catnew{c}}\vdash t \aeq{C} u}{\nalg{A}}{\varsigma}$ is valid for all valuation $\varsigma$ .
 The semantics of equality constraints is defined for all commutative $\Sigma$-algebras: $\Upsilon_{\catnew{c}} \vDash s\aeq{C} t$ when $\Int{\Upsilon_{\catnew{c}} \vdash s\aeq{C} t}{\nalg{A}}{}$ is valid for all $\nalg{A}$ model of $\C$.


\begin{example}
    An example of an algebra involves the set $\pow{\tf{fin}}{\A}$. When equipped with the maps $\atm{a} = \{a\},\abs(S) = S\setminus\{a\}$, and $f(S_1,\ldots,S_n) = \bigcap_i S_i$ for every $\tf{f}:n$ in $\Sigma$, it forms a model of $\C$.
\end{example}

The proof system is sound and complete with respect to the nominal set semantics. 
In previous work, the $(\frule{\faeq{C}}{var})$ rule was simpler: it checked only the domains of the permutations involved, yielding a restricted notion of soundness. The construction of the  subgroup $\PN{}{\Upsilon_{\catnew{c}}|_X}$  in $(\frule{\faeq{C}}{var})$  is essential to obtain:

\begin{restatable}{theorem}{soundcomplete}\label{thm:soundness-completeness-fix}
     For any $\Upsilon_{\catnew{c}},t,u$, the following hold:
    \begin{enumerate}
        \item (Soundness) \label{thm:soundness-fix} If $\Upsilon_{\catnew{c}} \vdash t\aeq{C} u$ then $\Upsilon_{\catnew{c}} \vDash t\aeq{C} u$.

        \item (Completeness) \label{thm:completeness-fix} If $\Upsilon_{\catnew{c}} \vDash t\aeq{C} u$ then $\Upsilon_{\catnew{c}} \vdash t\aeq{C} u$.
    \end{enumerate}
\end{restatable}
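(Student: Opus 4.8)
The plan is to prove soundness and completeness separately, both by induction, with the main work concentrated in the variable rule $(\frule{\faeq{C}}{var})$ and the abstraction rule $(\frule{\faeq{C}}{ab})$.

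\textbf{Soundness.} First I would prove that $\Upsilon_{\catnew{c}} \vdash t \aeq{C} u$ implies $\Upsilon_{\catnew{c}} \vDash t \aeq{C} u$ by induction on the derivation. Fix a commutative $\Sigma$-algebra $\nalg{A}$ and a valuation $\varsigma$ making $\Int{\Upsilon_{\catnew{c}}}{\nalg{A}}{\varsigma}$ valid; I must show $\Int{t}{\nalg{A}}{\varsigma} = \Int{u}{\nalg{A}}{\varsigma}$. The cases $(\frule{\faeq{C}}{a})$, $(\frule{\faeq{C}}{\tf{f}})$, $(\frule{\faeq{C}}{[a]})$ follow immediately from the inductive hypothesis and the fact that the interpretation maps are homomorphic; $(\frule{\faeq{C}}{\tf{f^C}})$ additionally uses the semantic commutativity axiom $f^{\C,\nalg{A}}(x_1,x_2) = f^{\C,\nalg{A}}(x_2,x_1)$. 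The rule $(\frule{\faeq{C}}{ab})$ needs Pitts' equivalence~(\ref{eq:pitts-eq-2}): from the inductive hypothesis applied under the extended context $\Upsilon_{\catnew{c},\atnew{c_1}}$ one gets $\new \atnew{c_1}.\, \newswap{a}{c_1}\act\Int{t}{\nalg{A}}{\varsigma} = \newswap{b}{c_1}\act\Int{s}{\nalg{A}}{\varsigma}$, which by~(\ref{eq:pitts-eq-2}) and the definition of $\abs^{\nalg{A}}$ gives $\abs^{\nalg{A}}(a,\Int{t}{\nalg{A}}{\varsigma}) = \abs^{\nalg{A}}(b,\Int{s}{\nalg{A}}{\varsigma})$, i.e. $\Int{[a]t}{\nalg{A}}{\varsigma} = \Int{[b]s}{\nalg{A}}{\varsigma}$; here one must check that validity of $\Upsilon_{\catnew{c}}$ under $\varsigma$ together with the Choose-a-Fresh-Name Principle yields validity of $\Upsilon_{\catnew{c},\atnew{c_1}}$ for cofinitely many choices of $\atnew{c_1}$, so the inductive hypothesis applies. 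The crucial case is $(\frule{\faeq{C}}{var})$: the side condition says $\pi'^{-1}\circ\pi \in \PN{}{\Upsilon_{\catnew{c}}|_X}$, i.e. $\pi'^{-1}\circ\pi = \rho_1 \circ \rho_2$ with $\rho_1 \in \Perm{\atm{(\Upsilon_{\catnew{c}}|_X)_{\fresh}}}$ and $\rho_2 \in \langle \perm{}{(\Upsilon_{\catnew{c}}|_X)_{\fix{C}}}\rangle$. From validity of the freshness part of the context and generalised Pitts' equivalence (Lemma~\ref{lemma:pitts-eq-generalized}(2)) every atom mentioned in $(\Upsilon_{\catnew{c}}|_X)_{\fresh}$ is outside $\supp{}{\varsigma(X)}$, so $\rho_1$ fixes $\varsigma(X)$; from validity of the pure fixed-point part and Lemma~\ref{lemma:generated-group}(2) the generated permutation $\rho_2$ fixes $\varsigma(X)$; hence $(\pi'^{-1}\circ\pi)\act\varsigma(X) = \varsigma(X)$, which after applying $\pi'$ is exactly $\pi\act\varsigma(X) = \pi'\act\varsigma(X)$, i.e. $\Int{\pi\act X}{\nalg{A}}{\varsigma} = \Int{\pi'\act X}{\nalg{A}}{\varsigma}$.

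\textbf{Completeness.} For the converse, the standard strategy is to build a \emph{term model} (a syntactic/free algebra) and transport the failure of derivability into the failure of validity in that model. Concretely, given $\Upsilon_{\catnew{c}}$ I would consider the nominal set whose carrier is $\F(\Sigma,\V)$ (or ground terms over an enlarged signature of atoms) quotiented by the relation $\aeq{C}$-under-$\Upsilon_{\catnew{c}}$, with the obvious interpretations of atoms, abstraction, the plain $\tf{f}$'s, and the commutative $\tf{f^C}$'s — using Theorem~\ref{thm:miscellaneous}(3) (equivalence) to see the quotient is well-defined and Theorem~\ref{thm:miscellaneous}(2) (equivariance) to see the permutation action descends; one also checks the abstraction axiom $\new\atnew{c}.\newswap{a}{c}\act\abs^{\nalg{A}}(a,x) = \abs^{\nalg{A}}(a,x)$ holds in this model, which follows from $(\frule{\faeq{C}}{ab})$/$(\frule{\faeq{C}}{[a]})$ reflexivity. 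Take the valuation $\varsigma_0(X) = [X]$, the equivalence class of $X$. The key lemma is that $\Int{t}{\nalg{A}}{\varsigma_0} = \Int{u}{\nalg{A}}{\varsigma_0}$ in this model iff $\Upsilon_{\catnew{c}} \vdash t\aeq{C} u$ — the forward direction being the content of completeness, so this must be argued by a separate induction on term structure using the inversion property Theorem~\ref{thm:miscellaneous}(1). The final subtlety is that the context $\Upsilon_{\catnew{c}}$ must itself be valid in this model under $\varsigma_0$, i.e. for each $\pi\fix{C} X \in \Upsilon$ one needs $\new\atnew{c}.\pi\act[X] = [X]$, equivalently $\new\atnew{c}.\, \Upsilon_{\catnew{c}} \vdash \pi\act X\aeq{C} X$ for cofinitely many instantiations of $\catnew{c}$ — this is exactly where normalisation (rules $(\textbf{R1})$, $(\textbf{R2})$) and the construction of $\PN{}{\Upsilon_{\catnew{c}}|_X}$ pay off, since $\pi_{\neg\catnew{c}}\in\perm{}{(\Upsilon_{\catnew{c}}|_X)_{\fix{C}}}$ and $\pi_{\catnew{c}}$ has all its non-$\catnew{c}$ atoms in $\atm{(\Upsilon_{\catnew{c}}|_X)_{\fresh}}$, so the side condition of $(\frule{\faeq{C}}{var})$ is met after choosing the $\new$-quantified names afresh.

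\textbf{Main obstacle.} I expect the hard part to be the completeness direction, specifically verifying that the term model is genuinely a commutative nominal $\Sigma$-algebra in which $\Upsilon_{\catnew{c}}$ is valid, and that semantic equality in it coincides with derivability. The delicate point is the interaction between the $\new$-quantifier in the semantics of fixed-point contexts and the syntactic side condition of $(\frule{\faeq{C}}{var})$: one must show that the "hybrid" group $\PN{}{\Upsilon_{\catnew{c}}|_X} = \Perm{\atm{(\Upsilon_{\catnew{c}}|_X)_{\fresh}}}\alert{\circ}\langle\perm{}{(\Upsilon_{\catnew{c}}|_X)_{\fix{C}}}\rangle$ captures \emph{exactly} the permutations that are forced to fix $\varsigma(X)$ in \emph{every} model — not too few (else soundness of $(\frule{\faeq{C}}{var})$ fails) and not too many (else completeness fails, because the term model is one where $\supp{}{[X]}$ is as large as possible given the constraints). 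Getting this tight characterisation is precisely the technical heart flagged in the introduction ("the difficulty lies in the variable rule"), and it relies essentially on Lemma~\ref{lemma:generated-group} and the three clauses of Lemma~\ref{lemma:pitts-eq-generalized} together with the disjointness that makes $\Perm{B}\alert{\circ}\langle\cdots\rangle$ a subgroup.
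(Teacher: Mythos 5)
Your soundness argument follows the paper's proof essentially step for step: induction on the last rule, with the $(\frule{\faeq{C}}{var})$ case discharged by splitting $\pi'^{-1}\circ\pi$ into its freshness and pure fixed-point factors and invoking Lemmas~\ref{lemma:pitts-eq-generalized} and~\ref{lemma:generated-group}. That half is fine.

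The completeness direction has a genuine gap in the model construction. You propose to take the carrier $\F(\Sigma,\V)$ quotiented by derivable equality under $\Upsilon_{\catnew{c}}$ with valuation $\varsigma_0(X)=[X]$. That quotient is not a nominal set: for any finite set $B$ of atoms, pick $d,e\notin B\cup\atm{\Upsilon_{\catnew{c}}|_X}$; then $(d\ e)$ fixes $B$ pointwise but $(d\ e)\notin\PN{}{\Upsilon_{\catnew{c}}|_X}$ (its domain lies outside $\atm{(\Upsilon_{\catnew{c}}|_X)_{\fresh}}$ and it is not generated by the $\fix{C}$-part), so $(d\ e)\act[X]\neq[X]$ and $[X]$ has no finite support. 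This is precisely why the paper rejects $\F(\Sigma,\V)$ as a carrier and instead builds a \emph{ground} free model over an enlarged signature: each $X$ is sent by a substitution $\sigma$ to $\tf{d}_X(a_{X_1},\ldots,a_{X_{k_X}})$, where $\tf{d}_X$ is a fresh term-former and the $a_{X_i}$ are exactly the atoms \emph{not} forced fresh by $(\Upsilon_{\catnew{c}}|_X)_{\fresh}$; the quotient is then taken by derivable equality in the \emph{empty} context extended with an extra equivariant relation $\mathcal{R}$ that imposes the pure fixed-point constraints on the terms $\tf{d}_X(\vec{a})$ (without $\mathcal{R}$ the $\fix{C}$-part of the context would fail in the model, since $\tf{d}_X$ is not commutative). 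Validity of $\Upsilon_{\catnew{c}}\vDash t\aeq{C} u$ in this model yields a derivation of $\vdash t\sigma\aeq{C} u\sigma$ in the empty context, and a separately defined \emph{inverse mapping} $(-)^{-1}$ on ground terms is then used to transport that derivation back to one of $\Upsilon_{\catnew{c}}\vdash t\aeq{C} u$. None of this machinery (the $\tf{d}_X$ trick to pin down $\supp{}{\varsigma^*(X)}$, the relation $\mathcal{R}$, the inverse mapping) appears in your sketch, and your proposed "separate induction using inversion" does not substitute for it.

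You do correctly identify the crux: that $\PN{}{\Upsilon_{\catnew{c}}|_X}$ must capture \emph{exactly} the permutations forced to fix $\varsigma(X)$ in every model. But you do not indicate how to prove the hard inclusion (if $\Upsilon_{\catnew{c}}\vDash\pi\act X\aeq{C} X$ then $\pi\in\PN{}{\Upsilon_{\catnew{c}}|_X}$). The paper establishes this via a bespoke countermodel: the orbit $\mathcal{O}$ of the tuple of relevant atoms under $\pair{\perm{}{(\Upsilon_{\catnew{c}}|_X)_{\fix{C}}}}$ inside $\pow{\tf{fs}}{\A^n}$, chosen so that $\pi_{\neg\catnew{c}}\act\mathcal{O}=\mathcal{O}$ forces $\pi_{\neg\catnew{c}}$ into the generated subgroup. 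Without this (or an equivalent construction) the completeness argument does not close.
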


\section{Nominal \texorpdfstring{$\C$}{C}-Unification}\label{sec:nominal-c-unification}

In this section, 
we present a rule-based algorithm for solving $\C$-unification problems using fixed-point constraints, proving its termination and correctness.



 \begin{definition}[Nominal $\C$-Unification Problem] A \emph{(nominal) $\C$-unification problem} $\probc$ has the form $\newc{c}{}.Pr$,  where $Pr$ consists of a finite set of (unification) constraints of the form $s\aeq{C}^?t$.
\end{definition}

Solutions for nominal $\C$-unification problems will be expressed using pairs  $\npair{\Psi, \sigma}{\atnew{\pvec{c}'}}$ consisting of a generalised fixed-point context $\Psi_{\atnew{\pvec{c}'}}$ and a substitution $\sigma$, where occurrences of the atoms $\atnew{\pvec{c}'}$ in the image of $\sigma$ are $\new$-quantified.
We call such $\sigma$ a $\new$-substitution, denoted as $\newc{c}{}.\sigma$ or $\sigma_{\atnew{\pvec{c}}}$.  We will omit the subscript and write only $\sigma$ to avoid clutter. 
We will come back to this later in \S\ref{ssec:unif-rules}.

 \begin{definition}[$\C$-Solution/Unifier]
 A  $\C$-unifier ($\C$-{\em solution}) for a  nominal $\C$-unification problem $\probc = \newc{c}{}.Pr$ is a pair $\npair{\Psi, \sigma}{\atnew{\pvec{c}'}}$, such that  $\catnew{c} \subseteq \atnew{\pvec{c}'}$ and
 $\Psi_{\atnew{\pvec{c}'}} \vdash s\sigma \aeq{C}t\sigma$, for each $s\aeq{C}^? t\in Pr$.
 If no $\C$-unifier $\npair{\Psi, \sigma}{\atnew{\pvec{c}'}}$ exists for $\probc$, the problem $\probc$ is {\em unsolvable}. We denote the set of all $\C$-unifiers of $\probc$ by $\mathcal{U}(\probc)$. 
\end{definition}
For simplicity, we will omit $\C$ and primarily refer to $\C$-unifiers and $\C$-solutions of $\C$-unification problems as unifiers and solutions, respectively.
The solutions to a problem $\probc$, elements of the set $\mathcal{U}(\probc)$, can be compared using the instantiation order $\ins{}$ (quasi-order) defined as follows:

\begin{definition}[Instantiation order]\label{def:ordering}
    Given  $\npair{\Psi_1,\sigma_1}{\atnew{\vec{c}_1}}$ and $\npair{\Psi_2,\sigma_2}{\atnew{\vec{c}_2}}$,
    we write $\npair{\Psi_1,\sigma_1}{\atnew{\vec{c}_1}} \ins{} \npair{\Psi_2,\sigma_2}{\atnew{\vec{c}_2}}$ when $\atnew{\vec{c}_1}\subseteq \atnew{\vec{c}_2}$ and there exist a substitution $\delta$ such that  $(\Psi_2)_{\atnew{\vec{c}_2}}\vdash \Psi_1\delta$ and $(\Psi_2)_{\atnew{\vec{c}_2}}\vdash X\sigma_2\aeq{C} X\sigma_1\delta$, for all $X\in \V$.
\end{definition}
%
%
For example, for $\sigma = [Y\mapsto a, X\mapsto Z]$, $\sigma' = [Y\mapsto a, X\mapsto \tf{f}(Y,d), Z\mapsto \tf{f}(Y,d)]$,  we can verify that
\(
    \npair{\{\newswap{a}{c} \fix{C} Z\}, \sigma}{\atnew{c}}\ins{}\npair{\{\newswap{a}{c}\fix{C} Z, \newswap{a}{c_1}\fix{C} Y\}, \sigma'}{\atnew{c},\atnew{c_1}}
\).
In fact, for $\delta=[Z\mapsto \tf{f}(Y,d)]$:
    \begin{enumerate}
        \item $\new \atnew{c},\atnew{c_1}.\{\newswap{a}{c}\fix{C}Z, \newswap{a}{c_1}\fix{C} Y\}\vdash W\sigma' \aeq{c} W\sigma\delta$, for all $W\in \V$.
        \item $\new \atnew{c},\atnew{c_1}.\{\newswap{a}{c}\fix{C} Z, \newswap{a}{c_1}\fix{C} Y\}\vdash \newswap{a}{c}\fix{C} Z\delta$.
    \end{enumerate}
Unlike nominal unification\cite{DBLP:journals/tcs/UrbanPG04} (without equational theories), solvable nominal $\C$-unification problems need not have just one most general unifier. Thus, the role of the most general unifier  is taken on by a complete set of $\C$-unifiers:
\begin{definition}[Complete Set of $\C$-unifiers]
Let $\probc$ be a $\C$-unification problem. A {\em complete set of $\C$-unifiers of $\probc$} is a set of pairs $S$ that satisfies
\begin{enumerate}
    \item $S\subseteq \U{\probc}$;
     \item  For all $\npair{\Psi,\sigma}{\atnew{\pvec{c}'}} \in \U{\probc}$ there exists $\npair{\Phi,\sigma'}{\atnew{\pvec{c}''}}\in S$ s.t.\ $\npair{\Phi,\sigma'}{\atnew{\pvec{c}''}}\ins{} \npair{\Psi,\sigma}{\atnew{\pvec{c}'}}$.
\end{enumerate}
\end{definition}




\subsection{$\texttt{Unify}_{\C}$: a rule-based algorithm for $\C$-unification}\label{ssec:unif-rules}

\begin{figure}[!t]
    \centering
    \begin{tabular}{@{}l@{ }l@{ \hspace{-0.4cm}}c@{ }l@{}}
        \hline
        &\\[-0.3cm]

        $(del)$ & $\newc{c}{}.Pr\uplus\{t\aeq{C}^? t\}$ & $\Longrightarrow$ & $\newc{c}{}.Pr$ \\

        $(\tf{f})$ & $\newc{c}{}.Pr\uplus\{\tf{f}(\tilde{t})_n \aeq{C}^{?} \tf{f}(\tilde{t'})_n\}$ & $\Longrightarrow$ & $\newc{c}{}.Pr\cup\{t_1\aeq{C}^{?} t_1',\ldots,t_n\aeq{C}^{?} t_n'\}$ \\

        $(\tf{f^C})$ & $\newc{c}{}.Pr\uplus\{\tf{f^C}(t_0,t_1) \aeq{C}^{?} \tf{f^C}(s_0,s_1)\}$ & $\Longrightarrow$ & $
        \left\{\begin{array}{@{}l@{}}
            \newc{c}{}.Pr\cup\{t_0\aeq{C}^{?} s_0,t_1\aeq{C}^{?} s_1\},\\[0.05cm]
            \newc{c}{}.Pr\cup\{t_0\aeq{C}^{?} s_1,t_1\aeq{C}^{?} s_0\}
        \end{array}
        \right.$\\[0.5cm]

        $([a])$ & $\newc{c}{}.Pr\uplus\{[a]t \aeq{C}^{?} [a]t'\}$ & $\Longrightarrow$ & $\newc{c}{}.Pr\cup\{t\aeq{C}^{?} t'\}$\\

        $(ab)$ & $\newc{c}{}.Pr\uplus\{[a]t \aeq{C}^{?} [b]t'\}$ & $\Longrightarrow$ & $\new\catnew{c},\atnew{c_1}.Pr\cup\{\newswap{a}{c_1}\act t\aeq{C}^{?} \newswap{b}{c_1}\act t'\}$ \\

        $(var)$ & $\newc{c}{}.Pr\uplus\{\pi\act X \aeq{C}^{?} \pi'\act X\}$ & $\Longrightarrow$ & $\newc{c}{}.Pr\cup\{(\pi'^{-1}\circ\pi)\act X\aeq{C}^? X\}$, \\

        & & & if $\pi'\neq \id$\\[0.1cm]

       $(inst_1)$ & $\newc{c}{}.Pr\uplus\{\pi\act X \aeq{C}^{?} t\}$ & $\overset{[X\mapsto \pi^{-1}\act t]}{\Longrightarrow}$ & $\newc{c}{}.Pr\{X\mapsto \pi^{-1}\act t\}$, if $X\notin\var{t}$\\[0.1cm]

       $(inst_2)$& $\newc{c}{}.Pr\uplus\{t\aeq{C}^{?} \pi\act X \}$ & $\overset{[X\mapsto \pi^{-1}\act t]}{\Longrightarrow}$ & $\newc{c}{}.Pr\{X\mapsto \pi^{-1}\act t\}$, if $X\notin\var{t}$\\
        &\\[-0.3cm]
        \hline
    \end{tabular}
    \caption{Simplification rules for $\aeq{C}$. The notation $(\tilde{t})_n$ abbreviates $(t_1,\ldots,t_n)$.}
    \label{fig:c-simp-rules}
\end{figure}

In this subsection, we introduce a set of \emph{simplification rules} (Figure~\ref{fig:c-simp-rules}) for ``solving'' nominal $\C$-unification problems. These rules simplify problems by transforming constraints into ``simpler'' ones, by running the derivation rules from Figure~\ref{fig:fixed-rules_new} in reverse. The rule $(\frule{\faeq{C}}{\tf{f^C}})$ splits a $\C$-unification problem into two, with the original problem solvable if at least one of the new problems is solvable. This requires working with finite sets of $\C$-unification problems, called \emph{extended $\C$-unification problems}, denoted as $\mathcal{M}$.

The algorithm ${\tt Unif}_{\C}$ starts with $\exprob=\{\probc\}$ and exhaustively applies simplification rules. 
We denote by $\nf{\exprob}$ the normal forms of $\exprob$ by the  reduction relation $\Longrightarrow$ generated by the simplification rules. 

\begin{restatable}[Termination]{theorem}{termination}\label{thm:termination}
The relation $\Longrightarrow$ is terminating.
\end{restatable}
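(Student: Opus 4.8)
The plan is to prove termination by exhibiting a well-founded measure on extended $\C$-unification problems that strictly decreases with every application of a simplification rule from Figure~\ref{fig:c-simp-rules}. Since $\Longrightarrow$ acts on a finite set $\mathcal{M}$ of unification problems, and the rules $(\tf{f^C})$ and $(inst_1)/(inst_2)$ (which duplicate or split) only enlarge $\mathcal{M}$ by a bounded amount, the natural strategy is a lexicographic measure: first the number of distinct variables occurring in the problem, then (a multiset of) the sizes of the constraints, then an auxiliary component for the ``variable-only'' rules $(var)$, $(ab)$, $(del)$ that do not remove a variable but still make progress. I would define a function $\Phi(\exprob)$ of this lexicographic shape and argue case by case that each rule decreases $\Phi$.

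First I would set $\Phi(\exprob) = \langle \Phi_1, \Phi_2, \Phi_3\rangle$ ordered lexicographically, where $\Phi_1(\exprob) = |\bigcup_{P\in\exprob}\var{P}|$ is the number of distinct variables appearing anywhere in the extended problem; $\Phi_2(\exprob)$ is the multiset of term-sizes of all constraints $s\aeq{C}^? t$ across all $P\in\exprob$, ordered by the multiset extension of the natural order; and $\Phi_3(\exprob)$ is a measure counting the potential work left for $(var)$, $(del)$, $([a])$, $(\tf{f})$, $(\tf{f^C})$, and $(ab)$ — for instance the multiset of sizes of constraints whose outermost structure still permits one of those rules, or more simply just $\Phi_2$ itself counted differently; one must take care that $(ab)$, which replaces $[a]t\aeq{C}^? [b]t'$ by $\swap{a}{c_1}t\aeq{C}^?\swap{b}{c_1}t'$, strictly decreases a size component because the two abstraction constructors disappear (the permutation prefix does not add to term size since permutations act without increasing size, as $\pi\act a$ is an atom).

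Then I would go through the rules. The instantiation rules $(inst_1)$ and $(inst_2)$ strictly decrease $\Phi_1$: the side condition $X\notin\var{t}$ guarantees that applying the substitution $[X\mapsto\pi^{-1}\act t]$ eliminates $X$ from the whole problem, and it introduces no new variables (only those already in $t$). So $\Phi_1$ drops by at least one, dominating any increase in $\Phi_2$ or $\Phi_3$. The rules $(del)$, $(\tf{f})$, $([a])$, $(\tf{f^C})$, and $(ab)$ do not change the variable set, so I must show they decrease $\Phi_2$ (for $(\tf{f})$, $([a])$, $(\tf{f^C})$ this is the standard observation that one $\aeq{C}^?$-constraint is replaced by finitely many strictly smaller ones; $(\tf{f^C})$ produces two copies of $\exprob$ but in each copy the relevant constraint is replaced by strictly smaller ones, so the multiset measure, taken over the whole $\mathcal{M}$, still decreases because we compare the union; here one needs the multiset order to be robust under this split, which it is since removing a large element and adding several smaller ones decreases, done independently per problem, and the disjoint union of decreases is a decrease). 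The rule $(del)$ removes a constraint outright. The rule $(var)$, $\pi\act X\aeq{C}^? \pi'\act X \Longrightarrow (\pi'^{-1}\circ\pi)\act X\aeq{C}^? X$ with side condition $\pi'\neq\id$, needs its own component: $\Phi_1$ and $\Phi_2$ are unchanged (both sides still mention only $X$ and the term size is the same constant for a suspension), so I would let $\Phi_3$ count, say, the number of constraints of the form $\pi\act X\aeq{C}^? \pi'\act X$ with $\pi'\neq\id$, and observe this strictly decreases while the earlier rules don't increase it. Alternatively, and more cleanly, I would interleave: argue $(var)$ can only be applied finitely often because after it fires once on a given constraint that constraint has $\pi'=\id$ and the rule no longer applies to it, and no other rule reintroduces such a constraint except $(ab)$ and $(\tf{f^C})$, which are themselves controlled by $\Phi_2$.

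The main obstacle I expect is the bookkeeping around the two branching/growth rules. For $(inst_1)/(inst_2)$: one must be careful that substitution into $Pr$ does not increase the variable count — it cannot, but it can blow up term sizes, which is precisely why $\Phi_1$ must sit strictly above $\Phi_2$ in the lexicographic order. For $(\tf{f^C})$: the extended problem $\mathcal{M}$ doubles a component, so the multiset measure must be taken as the multiset sum over all problems in $\mathcal{M}$, and I must check that replacing one problem $P$ by two problems $P_1,P_2$, in each of which a constraint of size $m$ is replaced by constraints of total size $< m$ (indeed sizes summing to less than $m$ since the two $\tf{f^C}$ constructors vanish), yields a strict decrease in the multiset sum: we remove two copies of a ``large'' subconfiguration and add strictly-smaller-sized pieces. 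Formalizing this cleanly — perhaps by first showing each individual problem's $\Phi_2$ strictly decreases and then lifting to $\mathcal{M}$ via the fact that the multiset union of strict decreases is strict — is the technical heart. Once the measure is in place and shown to decrease strictly under every rule, well-foundedness of the lexicographic product of $\mathbb{N}$ and the multiset orders on $\mathbb{N}$ gives termination of $\Longrightarrow$.
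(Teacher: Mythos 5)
Your per-problem measure is essentially the paper's: the paper also uses a lexicographic pair (number of distinct variables, multiset of constraint sizes), the only cosmetic difference being that it handles $(var)$ not with a third component but by excluding constraints of the reduced form $\pi\act X\aeq{C}^? X$ from the multiset, so that $(var)$ simply deletes an element from it. Your third component is a workable alternative. The genuine gap is in how you lift the measure from a single problem to the extended problem $\mathcal{M}$.

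Your $\Phi_2$ is the \emph{flattened} multiset of constraint sizes across all problems in $\mathcal{M}$, and your justification for $(\tf{f^C})$ (``the disjoint union of decreases is a decrease'') does not apply, because one problem is replaced by the union of \emph{two}, each of which duplicates every constraint of the original other than the one being decomposed. Concretely, if $P$ has constraints of sizes $\{10,100\}$ and $(\tf{f^C})$ acts on the size-$10$ constraint producing pieces of sizes $4,4$ in each branch, the flattened multiset goes from $\{10,100\}$ to $\{4,4,4,4,100,100\}$; the extra copy of $100$ is not smaller than any removed element, so this is \emph{not} a decrease in $>_{mul}$. The global $\Phi_1$ has the same defect for $(inst_1)/(inst_2)$: instantiating $X$ inside one problem of $\mathcal{M}$ does not remove $X$ from sibling problems created by earlier $(\tf{f^C})$ splits, so the global variable count can stay constant while substitution inflates $\Phi_2$. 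The fix---which is what the paper implicitly does by defining its measure on a single $\probc$---is to prove the strict decrease \emph{per problem} (each rule replaces one problem by one or two problems, each with strictly smaller measure) and then lift to $\mathcal{M}$ either by the multiset extension of the per-problem order (replacing one element of a multiset by finitely many strictly smaller elements is a strict $>_{mul}$-decrease, and here the smaller elements are whole problem-measures, not flattened constraint sizes) or by finite branching plus K\"onig's lemma. With that correction your argument goes through; the case analysis you sketch for the individual rules is sound.
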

Some constraints in $\probc$ cannot be simplified, we say that they are {\em reduced}. An equality constraint $s \aeq{C}^? t$ is \emph{reduced} if it satisfies one of the following:
\begin{itemize}
    \item $s$ is a suspension on a variable and $t$ is that variable (e.g., $\pi\act X \aeq{C}^? X$).
    \item $s$ and $t$ are distinct atoms (e.g. $a\aeq{C}^? b$).
    \item $s$ and $t$ have different head symbols (e.g., $\tf{f}(u) \aeq{C}^? \tf{g}(v)$).
    \item $s$ and $t$ have different root constructors (e.g., $[a]s' \aeq{C}^? \tf{f}(t)$).
    \item $s$ is a suspension on some variable, $t$ is not a suspension, but $t$ contains occurrences of that variable (e.g. $X \aeq{C}^? \tf{f}(X,a)$).
\end{itemize}

The first form, which is a fixed-point equation, is \emph{consistent}; the others are \emph{inconsistent}.  A problem $\probc = \newc{c}{}.Pr$ is \emph{reduced} if all constraints in $Pr$ are reduced.  
A problem $\probc$  is called {\em solvable} if its normal form $ \nf{\probc} $ contains a reduced problem where all the constraints are consistent.
In this case, we compute $\sol{\probc}$ as the set of pairs $\npair{\Upsilon, \sigma}{\atnew{\pvec{c}'}}$ 
where 
$\atnew{\pvec{c}'}$ includes the set of names generated in the simplification process (i.e. $\atnew{\pvec{c}}\subseteq \atnew{\pvec{c}'}$), 
$\Upsilon_{\atnew{\pvec{c}'}} = \new \atnew{\pvec{c}'}.\{\pi\fix{C} X \mid \pi\act X \aeq{C}^? X \in {\probc'} \text{~solvable~and~}  {\probc'}\in \nf{\probc} \}$, and $\sigma$ is either $Id$ (if no instantiation rules were used) or the composition of substitutions used throughout the simplification of a branch starting in $\probc$ and ending in $\probc'$.

 \begin{example}
    Suppose $\Sigma = \{\wedge:2\}$. Let $\probc = \{[a](X\wedge Y) \aeq{C}^? [b](Y\wedge X)\}$. Then
    \[
        \probc \overset{*}{\Longrightarrow} \{ \new \atnew{c_1}.\{(a \ \atnew{c_1} \ b)\act X\aeq{C}^?  X\},\new \atnew{c_1}.\{(a \ b \ \atnew{c_1})\act X\aeq{C}^? X, (a \ b \ \atnew{c_1})\act Y \aeq{C}^? Y\}\}.
    \]
    The algorithm computes two solutions:
  $\npair{\{(a \ \atnew{c_1} \ b)\fix{C}  X\},[Y\mapsto (a \ \atnew{c_1} \ b)\act X]}{\atnew{c_1}}$ and 
 $ \npair{\{(a \ b \ \atnew{c_1})\fix{C} X, (a \ b \ \atnew{c_1})\fix{C} Y\},Id}{\atnew{c_1}}$ (a complete set of $\C$-solutions of $\probc$). Note that, by taking $\delta = [Y\mapsto X]$, the first  is an instance of the second.
\end{example}

\begin{remark}\label{rmk:substitution}
Substitutions feature $\new$-quantified names, as in $\sigma = [Y \mapsto (a \ \atnew{c_1} \ b) \act X]$. This is a desirable property that shows that solutions are stable under the choice of new names for renaming that can be done throughout computation.  This is achieved using the concept of a $\new$-substitution,  
which provides a systematic way to track freshness information.
\end{remark}

The next result shows that if a $\C$-unification problem $\probc$ is solvable, our algorithm ${\tt Unif}_{\C}$ will output  $\sol{\probc}$ which is a complete set of solutions of $\probc$. 
\begin{restatable}[Correctness]{theorem}{correctness}\label{thm:correctness-simp-rules}
    Let $\probc$ be a problem such that $\probc\overset{*}{\Longrightarrow} \nf{\probc}$. If $\probc$ is solvable, then the following hold: 
    \begin{enumerate}
        \item \textnormal{(Soundness)} $\sol{\probc}\subseteq \U{\probc}$.
       
        \item \textnormal{(Completeness)} If $\npair{\Phi,\tau}{\atnew{\pvec{c}''}}\in \U{\probc}$ then there exists $\npair{\Psi,\sigma}{\atnew{\pvec{c}'}}\in\sol{\probc}$ s.t. $\npair{\Psi,\sigma}{\atnew{\pvec{c}'}}\ins{} \npair{\Phi,\tau}{\atnew{\pvec{c}''}}$. The set $\sol{\probc}$ is a complete set of $\C$-unifiers of $\probc$.
    \end{enumerate}
\end{restatable}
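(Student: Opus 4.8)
## Proof Plan for Theorem~\ref{thm:correctness-simp-rules}

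The plan is to prove soundness and completeness by induction on the length of the reduction sequence $\probc \overset{*}{\Longrightarrow} \nf{\probc}$, using as the key technical device an invariance lemma that relates the solution set of a problem to the solution sets of its one-step reducts. Concretely, the first step is to establish, for each simplification rule $\probc \Longrightarrow \mathcal{M}'$ (possibly tagged by a substitution), a \emph{solution-preservation} statement: every solution of $\probc$ extends/restricts to a solution of some problem in $\mathcal{M}'$, and conversely every solution of a problem in $\mathcal{M}'$ lifts to a solution of $\probc$. For the non-branching, non-instantiating rules $(del),(\tf{f}),([a]),(var)$ this is essentially the inversion property of the derivation rules (Theorem~\ref{thm:miscellaneous}\eqref{thm:inversion}), reading Figure~\ref{fig:fixed-rules_new} bottom-up. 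For $(ab)$ we additionally invoke the equivalence~\eqref{eq:pitts-eq-2} together with the fact that the fresh name $\atnew{c_1}$ is $\new$-quantified, so that $\Upsilon_{\catnew{c}}\vdash[a]t\sigma\aeq{C}[b]s\sigma$ iff $\Upsilon_{\catnew{c},\atnew{c_1}}\vdash \newswap{a}{c_1}\act t\sigma\aeq{C}\newswap{b}{c_1}\act s\sigma$; here Strengthening (Theorem~\ref{thm:miscellaneous}\eqref{thm:strengthening}) handles the bookkeeping when $\atnew{c_1}$ does not actually appear. For $(\tf{f^C})$ the branching is matched by the two premises of rule $(\frule{\faeq{C}}{\tf{f^C})}$: a solution of the $\C$-equation picks one of the two orientations, hence solves one of the two reducts, and conversely. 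For $(inst_1),(inst_2)$ the crucial point is that applying $[X\mapsto\pi^{-1}\act t]$ is sound because substitutions and permutations commute and because $\npair{\Psi,\sigma}{}$ solving $\probc\{X\mapsto\pi^{-1}\act t\}$ yields $\npair{\Psi,\ \{X\mapsto\pi^{-1}\act t\}\circ\sigma}{}$ solving the original; the occurs-check $X\notin\var{t}$ guarantees idempotency and that no solution is lost.

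With the one-step lemma in hand, soundness (part 1) follows by a straightforward induction: the normal-form problems $\probc'$ in $\nf{\probc}$ that are solvable are exactly reduced problems whose only constraints are consistent fixed-point equations $\pi\act X\aeq{C}^? X$; for such a $\probc'$ the pair $\npair{\Upsilon,Id}{\atnew{\pvec{c}'}}$ with $\Upsilon=\{\pi\fix{C}X\mid \pi\act X\aeq{C}^? X\in\probc'\}$ is trivially a solution of $\probc'$ by rule $(\frule{\faeq{C}}{var})$ (note $\id\in\PN{}{\Upsilon_{\catnew{c}}|_X}$ and the defining permutations of $\Upsilon|_X$ lie in the generating set). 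Composing backwards along the reduction branch with the accumulated substitution $\sigma$, and repeatedly applying the ``lifting'' direction of the one-step lemma, shows $\npair{\Upsilon,\sigma}{\atnew{\pvec{c}'}}\in\U{\probc}$. For completeness (part 2), given an arbitrary $\npair{\Phi,\tau}{\atnew{\pvec{c}''}}\in\U{\probc}$, apply the ``restriction'' direction of the one-step lemma repeatedly to trace $\tau$ down a branch of the reduction tree, arriving at a solvable normal-form problem $\probc'$ and a residual solution of $\probc'$; the output pair $\npair{\Psi,\sigma}{\atnew{\pvec{c}'}}$ associated to that branch is then shown to satisfy $\npair{\Psi,\sigma}{\atnew{\pvec{c}'}}\ins{}\npair{\Phi,\tau}{\atnew{\pvec{c}''}}$ by exhibiting the witnessing substitution $\delta$: along the $(inst)$ steps one reads off how $\tau$ factors through $\sigma$, and the residual fixed-point constraints of $\probc'$ are entailed by $\Phi_{\atnew{\pvec{c}''}}$ precisely because $\tau$ solves them. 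That $\sol{\probc}$ is a complete set of $\C$-unifiers is then immediate from parts 1 and 2 and the definition.

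I expect the main obstacle to be the completeness direction, and within it the verification that the output context $\Psi$ really satisfies the instantiation order relative to an \emph{arbitrary} competing solution $\npair{\Phi,\tau}{}$ — in particular checking $(\Phi)_{\atnew{\pvec{c}''}}\vdash\Psi\delta$. The subtlety is that $\Psi$ is built from \emph{primitive} constraints only and is in (R1),(R2)-normal form, whereas $\Phi$ may package freshness and genuine fixed-point information together in a single cycle; reconciling the two requires the generalised Pitts equivalence (Lemma~\ref{lemma:pitts-eq-generalized}) and the cycle-analysis of Lemma~\ref{lemma:generated-group}, exactly as in the soundness/completeness argument for the proof system (Theorem~\ref{thm:soundness-completeness-fix}), applied now under the substitution $\delta$. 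A secondary complication is the careful handling of the $\new$-quantified names: one must ensure the chain $\catnew{c}\subseteq\atnew{\pvec{c}'}\subseteq\atnew{\pvec{c}''}$ can always be arranged (choosing the fresh names introduced by $(ab)$ apart from everything in $\Phi$ and $\tau$ via the Choose-a-Fresh-Name Principle), so that the $\ins{}$ comparison is well-typed. Termination (Theorem~\ref{thm:termination}) guarantees the induction is well-founded, so no separate measure argument is needed here.
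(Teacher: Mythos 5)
Your plan matches the paper's proof essentially step for step: the paper proves a one-step solution-preservation lemma for the non-instantiating rules (with the branching $(\tf{f^C})$ case and inconsistent reduced constraints handled as you describe), then argues by induction on the length of the reduction, taking $\npair{\Psi,Id}{\catnew{c}}$ with $\Psi=\{\pi_i\fix{C}X_i\}$ in the base case and, for $(inst)$ steps, factoring an arbitrary competing unifier $\tau$ through $\theta=[X\mapsto\pi^{-1}\act t]$ via a modified $\tau'$ with $X\tau'\equiv X$, finishing with transitivity of $\ins{}$. The only cosmetic difference is that the paper treats the instantiation rules inline in the main induction rather than folding them into the one-step lemma, and in the base case the witnessing $\delta$ is simply $\tau$ itself (since $\Phi_{\atnew{\pvec{c}''}}\vdash(Pr)\tau$ is literally $\Phi_{\atnew{\pvec{c}''}}\vdash\Psi\tau$), so the context-reconciliation you flag as the main obstacle does not actually require a fresh appeal to the Pitts-equivalence machinery.
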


A direct consequence of termination and correctness is that the complete set of $\C$-unifiers of a $\probc$ is always finite.
\begin{corollary}
   $\C$-unification using generalised fixed-point constraints is finitary.
\end{corollary}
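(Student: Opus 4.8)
The plan is to derive finitariness directly from the two main theorems about the algorithm $\texttt{Unif}_{\C}$, namely Termination (Theorem~\ref{thm:termination}) and Correctness (Theorem~\ref{thm:correctness-simp-rules}). First I would observe that, given any nominal $\C$-unification problem $\probc$, running $\texttt{Unif}_{\C}$ starting from $\exprob = \{\probc\}$ produces, by Theorem~\ref{thm:termination}, a terminating reduction sequence; since each simplification rule transforms a finite extended problem into a finite extended problem (the rule $(\frule{\faeq{C}}{\tf{f^C}})$ being the only branching rule, and it produces exactly two successor problems), the normal form $\nf{\probc}$ is a \emph{finite} set of reduced problems. Hence the set $\sol{\probc}$, which is assembled by collecting one pair $\npair{\Upsilon,\sigma}{\atnew{\pvec{c}'}}$ from each solvable reduced problem in $\nf{\probc}$, is finite as well.

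Next I would invoke Correctness: if $\probc$ is solvable, Theorem~\ref{thm:correctness-simp-rules} tells us that $\sol{\probc} \subseteq \U{\probc}$ and that $\sol{\probc}$ is a complete set of $\C$-unifiers of $\probc$. Combining this with the finiteness of $\sol{\probc}$ from the previous paragraph, we obtain a \emph{finite} complete set of $\C$-unifiers. If instead $\probc$ is unsolvable, then $\U{\probc} = \emptyset$, and the empty set is trivially a finite complete set of unifiers. In either case $\probc$ admits a finite complete set of $\C$-unifiers, which is precisely the statement that $\C$-unification using generalised fixed-point constraints is finitary.

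The only point that needs a little care — and the closest thing to an obstacle here — is confirming that $\nf{\probc}$ is genuinely finite rather than merely that each reduction branch terminates: one must note that the reduction tree generated by $\Longrightarrow$ is finitely branching (only $(\frule{\faeq{C}}{\tf{f^C}})$ branches, and binarily) and, by Theorem~\ref{thm:termination}, has no infinite path, so by König's Lemma it is a finite tree, whence it has finitely many leaves. This is routine given the results already established, so the corollary follows immediately.
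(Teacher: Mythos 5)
Your proposal is correct and follows exactly the route the paper intends: the corollary is stated there as a direct consequence of Termination (Theorem~\ref{thm:termination}) and Correctness (Theorem~\ref{thm:correctness-simp-rules}), and your added observation that the reduction tree is finitely branching (only $(\tf{f^C})$ branches, binarily), hence finite by K\"onig's Lemma, is precisely the point that makes $\sol{\probc}$ finite.
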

\section{Related Work.}
Gabbay and Mathijssen~\cite{DBLP:journals/logcom/GabbayM09} define a sound and complete proof system for nominal algebra, where derivation rules are subject to freshness conditions,
and provide a semantics in the class of nominal sets. 
Nominal sets also give an absolute denotation for first-order logic~\cite{DBLP:journals/jacm/Gabbay16}, and more recently for predicate logic~\cite{dowek2023nominalsemanticspredicatelogic}.

 Applications of nominal techniques in languages where operators satisfy equational axioms were studied in various contexts, using freshness constraints in the definition of $\alpha$-equivalence: e.g., to specify syntactic reasoning algorithms to  show properties of program transformations on higher-order expressions in call-by-need functional languages, taking into account e.g. {\tt letrec} constructs~\cite{DBLP:journals/fuin/Schmidt-Schauss22};  in theorem-proving, due to nominal logic’s equivariance property,  a different form of unification,  called equivariant unification, was investigated in~\cite{DBLP:journals/jar/Cheney10};
 applications in theorem-proving and rewrite-based programming languages also require the use e.g. of {\tt A} and/or {\tt C} operators~\cite{AYALARINCON20193}.  Nominal unification algorithms modulo commutativity are available, which generate infinite complete sets of solutions expressed using freshness constraints and substitutions~\cite{DBLP:conf/frocos/Ayala-RinconSFN17,   DBLP:journals/mscs/Ayala-RinconSFS21}. 
The finitary  algorithm  developed here opens the path for  extensions of programming languages such as  $\alpha$-prolog~\cite{DBLP:conf/iclp/CheneyU04}, building commutativity into the unification algorithm.

Alternative approaches to representing binders, beyond the already-mentioned HOAS, include the $\lambda\sigma$-calculus\cite{DBLP:journals/jfp/AbadiCCL91} and the locally nameless  representation~\cite{DBLP:journals/jar/Chargueraud12}. The $\lambda\sigma$-calculus is based on the definition of  explicit capture-avoiding substitutions. However, a significant drawback is that it does not support reasoning about open terms, as it lacks confluence when applied to terms with meta-variables.
Handling meta-variables effectively is essential for reasoning about binders in  in proof assistants, such as AutoSubst2~\cite{DBLP:conf/cpp/StarkSK19}, which provides support for formalising metatheory in Coq. It is based on the  $\lambda\sigma$-calculus and could benefit from an alternative approach to computing substitutions for meta-variables—one that systematically accounts for the required renamings.


\section{Conclusion}

We introduced a novel framework for equational reasoning modulo commutativity in languages with binders, using nominal techniques with ($\new$-quantified) permutation fixed-point constraints. This approach integrates commutativity into $\alpha$-equivalence reasoning without relying on freshness constraints, ensuring soundness and completeness in nominal set semantics.
%
Additionally, we proposed a terminating and correct $\C$-unification algorithm that provides finite complete set of solutions, addressing a key limitation of previous methods that produced infinite solutions when using freshness constraints. Our results establish a solid foundation for reasoning about syntax with binding and equational properties, with potential applications in theorem proving, programming language semantics, and automated reasoning systems.

Future work includes extending our approach to broader equational theories, e.g. integrating associativity along commutativity, and exploring practical implementations in verification tools.

%
%
%
 \bibliographystyle{splncs04}
 \bibliography{references}
%


\appendix

\section{Proofs of Section~\ref{sec:permutation-fix-nominal-sets}}\label{app:permutation-fix-nominal-sets}

\generated*

  \begin{proof}
    \begin{enumerate}
        \item 
   
     If $\pi$ is a product of disjoint cycles, then write $\pi = \mu_1\circ\ldots\circ\mu_n$. Note that $\eta$ is one of these $\mu_{i}$'s, say $\eta = \mu_{i_0}$. Then $a\in \dom{\mu_{i_0}} - \supp{}{x}$. Suppose, by contradiction, that there is an $b\in \dom{\mu_{i_0}}\cap \supp{}{x}$. Then  $b\in\supp{}{x}$ implies $\mu_{n}(b)\in\supp{}{\mu_n\act x}$ and since $\mu_n(b) = b$ this leads to $b\in\supp{}{\mu_n\act x}$, which in turn yields $\mu_{n-1}(b)\in\supp{}{(\mu_{n-1}\mu_n)\act x}$, that is equivalent to $b\in\supp{}{(\mu_{n-1}\mu_n)\act x}$ because $\mu_{n-1}(b) = b$. Proceeding this way, eventually we will reach $b\in\supp{}{(\mu_{i_0+1}\ldots\mu_{n-1}\mu_n)\act x}$. Applying $\mu_i$ on both sides yields
     \(
          \mu_{i_0}(b)\in\supp{}{(\mu_{i_0}\mu_{i_0+1}\ldots\mu_{n-1}\mu_n)\act x}.
     \)
     
     Since $\mu_{i_0-1}$ and $\mu_{i_0}$ are disjoint, we have $\mu_{i_0-1}(\mu_{i_0}(b)) = \mu_{i_0}(b)$ and so we get
     \(
          \mu_{i_0}(b)\in\supp{}{(\mu_{i_0-1}\mu_{i_0}\mu_{i_0+1}\ldots\mu_n)\act x}.
     \)
     Therefore, applying the rest of the cycles, we obtain $\mu_{i_0}(b)\in\supp{}{\pi\act x}$ and, consequently, $\mu_{i_0}(b)\in\supp{}{x}$ because $\pi\act x = x$. Since there is a $l>0$ such that $\mu_{i_0}^l(b) = a$, by repeating this argument as many times as necessary, we obtain $a = \mu_{i_0}^l(b) \in \supp{}{x}$, reaching a contradiction. 

     \item First, note that $\eta_i\act x = x$ iff $\eta_i^{-1}\act x = x$. Since $\rho\in \pair{\eta_1,\ldots,\eta_n}$, there is a $r\in \mathbb{N}$ such that $\rho = \zeta_1\zeta_2\ldots\zeta_r$ where $\zeta_i\in \{\eta_1,\ldots,\eta_n\}$ or $\zeta_i\in \{\eta_1^{-1},\ldots,\eta_n^{-1}\}$. Thus, $\zeta_i\act x = x$ for all $i=1,\ldots,r$. Therefore,
     \[
        (\zeta_1\zeta_2\ldots\zeta_r)\act x = (\zeta_1\zeta_2\ldots\zeta_{r-1})\act(\zeta_r\act x) =  (\zeta_1\zeta_2\ldots\zeta_{r-1})\act x = \ldots =  \zeta_1\act x = x.
     \]
    \end{enumerate}
 \end{proof}

\pitts*

\begin{proof}
    \begin{enumerate}
        \item Suppose $\newc{c}{}.\pi\act x = x$. By definition, $D_1=\{\catnew{c} \mid \pi\act x  = x\}$ is cofinite. Since the set $D_2 = \{\catnew{c} \mid \catnew{c}\cap\supp{}{x} = \emptyset\}$ is cofinite, it follows that $D_1\cap D_2$ is still cofinite. Take $\catnew{c}\in D_1\cap D_2$. Then $\pi\act x = x$ and $\catnew{c}\cap \supp{}{x} = \emptyset$ hold simultaneously. Since $\pi\act x = x$ and each cycle of $\pi_{\catnew{c}}$ mention atoms from $\catnew{c}$, it follows by Lemma~\ref{lemma:generated-group} that $\dom{\pi_{\catnew{c}}}\cap \supp{}{x} = \emptyset$ and, by the definition of support, $\pi_{\catnew{c}}\act  x = x$.

        Now, it remains to prove that $\pi_{\neg\catnew{c}}\act x = x$. Using the fact that $\pi_{\catnew{c}}$ and $\pi_{\catnew{c}}$ commute because they are disjoint, we get
        \begin{align*}
            \pi\act x = x &\Longrightarrow (\pi_{\catnew{c}}\circ\pi_{\neg\catnew{c}})\act x = x\\
            &\Longrightarrow (\pi_{\neg\catnew{c}}\circ \pi_{\catnew{c}})\act x = x\\
            &\Longrightarrow \pi_{\neg\catnew{c}}\act(\pi_{\catnew{c}}\act x) = x\\
            &\Longrightarrow \pi_{\neg\catnew{c}}\act x = x.
        \end{align*}
        Thus, the sets $\{\catnew{c} \mid \pi_{\catnew{c}}\act x = x\}$ and $\{\catnew{c} \mid \pi_{\neg\catnew{c}}\act  x = x\}$ are cofinite and the result follows.

        Conversely, suppose $\newc{c}{}.\pi_{\catnew{c}}\act x = x$ and $\newc{c}{}.\pi_{\neg\catnew{c}}\act x = x$. This implies that the sets $D_1 = \{\catnew{c} \mid \pi_{\catnew{c}}\act  x = x\}$ and $D_2 = \{\catnew{c} \mid \pi_{\neg\catnew{c}}\act  x = x\}$ are cofinite. By assumption, both $D_1$ and $D_2$ are cofinite. Consequently, their intersection $D_1 \cap D_2$ is also cofinite. For all $\catnew{c} \in D_1 \cap D_2$, $\pi_{\catnew{c}}\act x = x$ and $\pi_{\neg\catnew{c}}\act x = x$ hold simultaneously. Applying $\pi_{\neg\catnew{c}}$ to both sides of $\pi_{\catnew{c}}\act x = x$, we obtain $\pi\act x = x$. Therefore, $\{\catnew{c} \mid \pi\act  x = x\}$ is cofinite and so $\newc{c}{}.\pi\act x = x$.

        \item Suppose, by contradiction, that there is a $\catnew{c}$ satisfying $\catnew{c}\cap \supp{}{x} = \emptyset$ such that $\dom{\pi_{\catnew{c}}}\cap \supp{}{x} \neq \emptyset$. Then there is an atom $a$ (that cannot be in $\catnew{c}$) such that $a\in \dom{\pi_{\catnew{c}}}$ and $a\in \supp{}{x}$. Thus, for some $l\geq 1$, we can write
    \[
        \pi_{\catnew{c}} = (a\  \pi_{\catnew{c}}(a) \ \ldots \  \pi_{\catnew{c}}^l(a))\circ\rho
    \]
    where $\rho$ is some permutation such that $\dom{\rho}\cap \{a, \pi_{\catnew{c}}(a),\ldots,\pi_{\catnew{c}}^l(a)\} = \emptyset$. Since each cycle of $\pi_{\catnew{c}}$ mention at least one atom from $\catnew{c}$, it follows that there is some $1\leq k \leq l$ such that $\pi_{\catnew{c}}^k(a) \in \catnew{c}$. As a consequence, $\pi_{\catnew{c}}^k(a)\notin\supp{}{x}$. By Lemma~\ref{lemma:generated-group}, it follows that $a\notin \supp{}{x}$, a contradiction.

     Conversely, suppose that $\dom{\pi_{\catnew{c}}}\cap \supp{}{x} = \emptyset$ for all $\catnew{c}\cap\supp{}{x} = \emptyset$. By the definition of support, this is equivalent to say that $\pi_{\catnew{c}}\act x = x$ for all $\catnew{c}\cap\supp{}{x} = \emptyset$. In other words, $\{\catnew{c} \mid \catnew{c}\cap \supp{}{x} = \emptyset\} \subseteq \{\catnew{c} \mid \pi_{\catnew{c}}\act  x = x\}$. Therefore, $\{\catnew{c} \mid \pi_{\catnew{c}}\act  x = x\}$ is cofinite and so $\newc{c}{}.  \pi_{\catnew{c}}\act  x = x$.

     \item Assume $\newc{c}{}.\pi_{\neg\catnew{c}}\act x = x$. By definition, this is equivalent to say that $\{\catnew{c} \mid \pi_{\neg\catnew{c}}\act x \neq x\}$ is finite. Suppose, by contradiction, that $\pi_{\neg \atnew{\pvec{c}'}}\act x \neq x$ for some $\atnew{\pvec{c}'} = \atnew{c_1'},\ldots,\atnew{c_n'}$ such that $\atnew{\pvec{c}'}\cap\supp{}{x} = \emptyset$. Take another list $\atnew{\pvec{c}''} = \atnew{c_1''},\ldots,\atnew{c_n''}$ such that $\atnew{\pvec{c}''}\cap(\supp{}{x}\cup\dom{\pi}\cup\atnew{\pvec{c}'}) = \emptyset$ (there are a cofinite amount of them). Then $\pi_{\neg\atnew{\pvec{c}'}} = \pi_{\neg\atnew{\pvec{c}''}}$ and so $\pi_{\neg\atnew{\pvec{c}''}}\act x \neq x$, proving that the set $\{\catnew{c} \mid \pi_{\neg\catnew{c}}\act x\neq x\}$ is cofinite, a contradiction. The converse follows by the definition of $\new$.

    \end{enumerate}
\end{proof}

\section{Proofs of Section~\ref{sec:properties} - Properties}\label{app:properties}

\miscellaneous*

\begin{proof}
\begin{enumerate}
    \item We want to prove that the conclusion of each of these rules implies its respective premise. But this follows directly from the fact that each rule corresponds to a unique class of terms. For example, if $\Upsilon_{\catnew{c}} \vdash  \tf{f^C}(t_0,t_1) \faeq{C} \tf{f^C}(t_1,t_0)$, then there is a proof $\Pi$ of this judgment. However, the only possible rule applicable as a last step is $(\frule{\faeq{C}}{\tf{f^C}})$, so we either have $\Upsilon_{\catnew{c}}\vdash t_0 \aeq{C} t_0$ and $\Upsilon_{\catnew{c}} \vdash t_1 \aeq{C} t_1$, or $\Upsilon_{\catnew{c}} \vdash t_0 \aeq{C} t_1$ and $\Upsilon_{\catnew{c}} \vdash  t_1 \aeq{\C} t_0$. The same reasoning applies to all the other rules.

    \item It's sufficient to prove only the left-to-right case. The proof is by induction on the last rule applied.
        \begin{itemize}
            \item The last rule is $(\frule{\faeq{C}}{a})$. In this case, $\Upsilon_{\catnew{c}} \vdash  a \aeq{C}a$. By $(\frule{\faeq{C}}{a})$, we have $\Upsilon_{\catnew{c}} \vdash  \rho\act a \aeq{C}\rho\act a$.

            \item The last rule is $(\frule{\faeq{C}}{var})$. In this case, $\Upsilon_{\catnew{c}} \vdash \pi_1\act X \aeq{C} \pi_2\act X$. By Inversion (Theorem~\ref{thm:miscellaneous}(\ref{thm:inversion})), $\pi_2^{-1}\circ\pi_1\in \PN{}{\Upsilon_{\catnew{c}}|_X}$. Since $(\rho\circ\pi_2)^{-1}\circ(\rho\circ\pi_1) = \pi_2^{-1}\circ\pi_1$, the result follows by an application of rule $(\frule{\faeq{C}}{var})$.

            \item The rules $(\frule{\faeq{C}}{\tf{f}}),(\frule{\faeq{C}}{\tf{f^C}}), (\frule{\faeq{C}}{[a]})$ and $(\frule{\faeq{C}}{ab})$ follow by induction. Here we prove only the case of the rule $(\frule{\faeq{C}}{ab})$. In this case, we have $\Upsilon_{\catnew{c}} \vdash [a]t \aeq{C} [b]s$. By Inversion (Theorem~\ref{thm:miscellaneous}(\ref{thm:inversion})), we obtain $\Upsilon_{\catnew{c},\atnew{c_1}} \vdash  \newswap{a}{c_1}\act t \aeq{C} \newswap{b}{c_1}\act s$ where $\atnew{c_1}$ is taken such that it doesn't occur in $\Upsilon_{\catnew{c}},a,b,t,s,\rho$. By induction, we have $\Upsilon_{\catnew{c},\atnew{c_1}} \vdash  \rho\act(\newswap{a}{c_1}\act t) \aeq{C} \rho\act(\newswap{b}{c_1}\act s)$, which is equivalent to $\Upsilon_{\catnew{c},\atnew{c_1}}  \vdash \newswap{\rho(a)}{c_1}\act (\rho\act t) \aeq{C} \newswap{\rho(b)}{c_1}\act (\rho\act s)$. Applying the $(\frule{\faeq{C}}{ab})$ rule, we obtain $\Upsilon_{\catnew{c}}  \vdash [\rho(a)]\rho\act t \aeq{C} [\rho(b)]\rho\act s$, which is equivalent to $\Upsilon_{\catnew{c}}  \vdash \rho\act[a]t \aeq{C} \rho\act[b]s$.

            \item  We will prove only the interesting cases.
    \begin{itemize}
        \item {\it Reflexivity.} The variable case is trivial because $\pi^{-1}\circ\pi = \id$ and $\id \in \PN{}{\Upsilon_{\catnew{c}}|_X}$ always holds.

        \item {\it Symmetry.} Suppose $\Upsilon_{\catnew{c}} \vdash \pi_1\act X \aeq{C} \pi_2\act X$. By Inversion (Theorem~\ref{thm:miscellaneous}(\ref{thm:inversion})), we obtain $\pi_2^{-1}\circ\pi_1 \in \PN{}{\Upsilon_{\catnew{c}}|_X}$. Hence $\pi_1^{-1}\circ \pi_2 = (\pi_2^{-1}\circ\pi_1)^{-1} \in  \PN{}{\Upsilon_{\catnew{c}}|_X}$. Therefore, the result follows by rule $(\frule{\faeq{C}}{var})$.

        \item {\it Transitivity.}

        Given nominal terms $s,t,u$ and derivations $\Upsilon_{\catnew{c}} \vdash s \aeq{C} u$ and $\Upsilon_{\atnew{\pvec{c}'}}\vdash u \aeq{C} t$, we will show that $\Upsilon_{\catnew{c}\cup\atnew{\pvec{c}'}} \vdash s \aeq{C} t$. We establish the result by induction on $s$, whose structure influences the other terms. We present only the suspension case here, as it is the most interesting. For $s \equiv \pi_1\act X$, we have $\Upsilon_{\catnew{c}} \vdash \pi_1\act X \aeq{C} u$. This forces $u \equiv \pi_2\act X$, which, in turn, forces $t\equiv \pi_3\act X$. Consequently, $\Upsilon_{\catnew{c}} \vdash \pi_1\act X \aeq{C} \pi_2\act X$ and $\Upsilon_{\catnew{c}} \vdash \pi_2\act X \aeq{C} \pi_3\act X$. By Inversion (Theorem~\ref{thm:miscellaneous}(\ref{thm:inversion})), $\pi_2^{-1}\circ\pi_1 \in \PN{}{\Upsilon_{\catnew{c}}|_X}$ and $\pi_3^{-1}\circ\pi_2 \in \PN{}{\Upsilon_{\atnew{\pvec{c}'}}|_X}$. Since both $\PN{}{\Upsilon_{\catnew{c}}|_X}$ and $\PN{}{\Upsilon_{\atnew{\pvec{c}'}}|_X}$ are subsets of $\PN{}{\Upsilon_{\catnew{c}\cup\atnew{\pvec{c}'}}|_X}$ we have that $\pi_2^{-1}\circ\pi_1$ and
            $\pi_3^{-1}\circ\pi_2$ are in $\PN{}{\Upsilon_{\catnew{c}\cup\atnew{\pvec{c}'}}|_X}$.
            Therefore, $\pi_3^{-1}\circ\pi_1 = (\pi_3^{-1}\circ\pi_2)\circ(\pi_2^{-1}\circ\pi_1)$ is also in $\PN{}{\Upsilon_{\catnew{c}\cup\atnew{\pvec{c}'}}|_X}$
            and thus the result follows by rule $(\frule{\faeq{C}}{var})$.
    \end{itemize}
        \end{itemize}

        \item Direct consequence Equivariance and Equivalence.

        \item The proof is by induction on the structure of $s$. Here we show only the interesting cases.
       \begin{itemize}

        \item $s \equiv \pi_1 \act Y$. This forces $t \equiv \pi_2 \act Y$. Then we have $(\Upsilon\uplus\{\pi\fix{C} X\})_{\catnew{c}} \vdash \pi_1 \act Y \aeq{C} \pi_2 \act Y$. By Inversion (Theorem~\ref{thm:miscellaneous}(\ref{thm:inversion})), it follows that $\pi_2^{-1}\circ\pi_1\in \PN{}{(\Upsilon\uplus\{\pi\fix{C} X\})_{\catnew{c}}|_Y}$.

        \begin{itemize}
            \item $Y\not\equiv X$.

            Then $\PN{}{(\Upsilon\uplus\{\pi\fix{C} X\})_{\catnew{c}}|_Y} = \PN{}{\Upsilon_{\catnew{c}}|_Y}$ and so the result follows by rule $(\frule{\faeq{C}}{var})$.

            \item $Y\equiv X$.

            In this case, the condition $(\dom{\pi}\setminus\catnew{c})\cap\atm{s,t} = \emptyset$ is the same as
            \[
                (\dom{\pi}\setminus\catnew{c})\cap(\dom{\pi_1}\cup\dom{\pi_2}) = \emptyset.
            \]
            This implies $\dom{\pi_2^{-1}\circ\pi_1}\cap (\dom{\pi}\setminus\catnew{c}) = \emptyset$. Consequently, we have that $\pi_2^{-1}\circ\pi_1\in\PN{}{\Upsilon_{\catnew{c}}|_X}$ and the result follows by an application of rule $(\frule{\faeq{C}}{var})$.
        \end{itemize}

        \item $s \equiv [a]s'$. We consider two cases for $t$:

            \begin{itemize}
                \item Case 1: $t \equiv [a]t'$.

                In this situation, we have $(\Upsilon\uplus\{\pi\fix{C} X\})_{\catnew{c}} \vdash [a]s' \aeq{C} [a]t'$. By Inversion (Theorem~\ref{thm:miscellaneous}(\ref{thm:inversion})), we obtain $(\Upsilon\uplus\{\pi\fix{C} X\})_{\catnew{c}} \vdash s' \aeq{C} t'$ and since $(\dom{\pi}\setminus\catnew{c})\cap \atm{s,t} = \emptyset$ implies $(\dom{\pi}\setminus\catnew{c})\cap \atm{s',t'} = \emptyset$, the inductive hypothesis gives us $\Upsilon_{\catnew{c}} \vdash s'\aeq{C} t'$, and the result follows by applying rule $(\frule{\faeq{C}}{ab})$.

                \item Case 2: $t \equiv [b]t'$.

                    Here, we have $(\Upsilon\uplus\{\pi\fix{C} X\})_{\catnew{c}} \vdash [a]s' \aeq{C} [b]t'$. By Inversion (Theorem~\ref{thm:miscellaneous}(\ref{thm:inversion})), we obtain $(\Upsilon\uplus\{\pi\fix{C} X\})_{\catnew{c},\atnew{c_1}} \vdash \newswap{a}{c_1} \act s' \aeq{C} \newswap{b}{c_1} \act t'$ for some $\atnew{c_1}\notin \atm{\Upsilon_{\catnew{c}},a,b,s',t',\pi}$.  By the choice of $\atnew{c_1}$, we have that $\dom{\pi}\setminus\catnew{c} = \dom{\pi}\setminus\catnew{c},\atnew{c_1}$ and so $(\dom{\pi}\setminus\catnew{c})\cap\atm{s,t} = \emptyset$ implies $(\dom{\pi}\setminus\catnew{c},\atnew{c_1})\cap\atm{s,t} = \emptyset$. We claim that
                    \[
                        (\dom{\pi}\setminus\catnew{c},\atnew{c_1})\cap\atm{\newswap{a}{c_1}\act s',\newswap{b}{c_1}\act t'}) = \emptyset.
                    \]
                    Indeed, suppose, by contradiction, that there is an atom, say $d$, such that $d\in \dom{\pi}\setminus\catnew{c},\atnew{c_1}$ and $d\in \atm{\newswap{a}{c_1}\act s',\newswap{b}{c_1}\act t'})$. Then $\newswap{a}{c_1}(d)\in \atm{s'}$. Note that $d$ cannot be $a$ or $\atnew{c_1}$. Thus $\newswap{a}{c_1}(d) = d$ and hence $d\in \atm{s'}$, contradicting $(\dom{\pi}\setminus\catnew{c},\atnew{c_1})\cap\atm{s,t} = \emptyset$. 

                    Now, by induction, we have $\Upsilon_{\catnew{c},\atnew{c_1}} \vdash \newswap{a}{c_1} \act s' \aeq{C} \newswap{b}{c_1} \act t'$ and the result follows by applying the rule $(\frule{\faeq{C}}{ab})$. 
            \end{itemize}
        \end{itemize}
\end{enumerate}

\end{proof}


\section{Proofs of Section~\ref{sec:soundness-completeness} - Soundness and Completeness}\label{app:soundness-completeness}

\soundcomplete*

Since the proof of Completeness is much more elaborate, we will separate the proofs in different sections.

\subsection{Soundness}\label{ssec:soundness}

\begin{proof}[of Soundness]
     Let $\nalg{A}$ be a model of $\C$. We must show for any valuation $\varsigma$ that if $\Upsilon_{\catnew{c}} \vdash t\aeq{C} u$ then $\Int{\Upsilon_{\catnew{c}} \vdash t\aeq{C} u}{\nalg{A}}{\varsigma}$ is valid.

     The proof proceeds by induction on the last rule applied in the derivation of the judgment. To illustrate, we show the cases where the last rule applied is...
     \begin{itemize}
         \item $(\frule{\faeq{C}}{var})$. In this case, we have
            \begin{prooftree}
                \AxiomC{$\rho^{-1}\circ \pi\in \PN{}{\Upsilon_{\catnew{c}}|_X}$}
                \RightLabel{$(\frule{\faeq{C}}{var})$}
                \UnaryInfC{$\Upsilon_{\catnew{c}} \vdash \pi\act X \aeq{C} \rho\act X$}
            \end{prooftree}
            Suppose $\Int{\Upsilon_{\catnew{c}}}{\nalg{A}}{\varsigma}$ is valid. Then:
            \(
                \newc{c}{}. \eta\act \Int{Y}{\nalg{A}}{\varsigma} =  \Int{Y}{\nalg{A}}{\varsigma} \text{ for all $\eta\fix{C} Y\in \Upsilon$.}
            \)
            In particular,
            \[
                \newc{c}{}. \eta\act \Int{X}{\nalg{A}}{\varsigma} =  \Int{X}{\nalg{A}}{\varsigma} \text{ for all $\eta\fix{C} X\in \Upsilon|_X$.}
            \]
            By Pitts' equivalence (Lemma~\ref{lemma:pitts-eq-generalized}) it follows, for any $\eta\fix{C} Y\in \Upsilon$, that for all $\catnew{c}\cap \supp{}{\Int{X}{\nalg{A}}{\varsigma}} = \emptyset$,
            (i) $\dom{\eta_{\catnew{c}}}\cap\supp{}{\Int{X}{\nalg{A}}{\varsigma}}=\emptyset$,
                and
                (ii) $\eta_{\neg\catnew{c}}\act \Int{X}{\nalg{A}}{\varsigma} =  \Int{X}{\nalg{A}}{\varsigma}$. From the condition $\rho^{-1}\circ\pi \in \PN{}{\Upsilon_{\catnew{c}}|_X}$ and Lemma~\ref{lemma:generated-group}, it follows that $(\rho^{-1}\circ\pi) \act \Int{X}{\nalg{A}}{\varsigma} =  \Int{X}{\nalg{A}}{\varsigma}$ holds.

            \item $(\frule{\faeq{C}}{ab})$. In this case, we have
           \begin{prooftree}
               \AxiomC{$\Upsilon_{\catnew{c},\atnew{c_1}} \vdash \newswap{a}{c_1}\act t' \aeq{C} \newswap{b}{c_1}\act u'$}
          \RightLabel{$(\frule{\faeq{C}}{ab})$}
          \UnaryInfC{$\Upsilon_{\catnew{c}} \vdash  [a]t' \aeq{C} [b]u'$}
           \end{prooftree}

           Suppose $\Int{\Upsilon_{\catnew{c},\atnew{c_1}}}{\nalg{A}}{\varsigma}$ is valid. We aim to show that $\Int{[a]t'}{\nalg{A}}{\varsigma} = \Int{[b]u'}{\nalg{A}}{\varsigma}$, which is equivalent to:
            \[
                \abs^{\nalg{A}}(a,\Int{t'}{\nalg{A}}{\varsigma}) = \abs^{\nalg{A}}(b,\Int{u'}{\nalg{A}}{\varsigma}).
            \]
            From the validity of $\Int{\Upsilon_{\catnew{c},\atnew{c_1}}}{\nalg{A}}{\varsigma}$, it follows that:
            \[
                \newc{c}{},\atnew{c_1}.\pi\act \Int{X}{\nalg{A}}{\varsigma} = \Int{X}{\nalg{A}}{\varsigma} \text{ for all } \pi\fix{C} X\in\Upsilon.
            \]
            Since $\atnew{c_1}$ is taken such that $\atnew{c_1}\notin\atm{\Upsilon_{\catnew{c}}}$ we have that $\atnew{c_1}\notin \dom{\pi}$ for all $\pi\fix{C} X\in \Upsilon$. Furthermore, we can assume w.l.o.g. that $\atnew{c_1}$ is taken fresh for $\Int{t'}{\nalg{A}}{\varsigma}$, $\Int{u'}{\nalg{A}}{\varsigma}$  and $\Int{X}{\nalg{A}}{\varsigma}$ for all $X\in \var{\Upsilon_{\catnew{c}}}$. Then,
           \[
                \newc{c}{}.\pi\act \Int{X}{\nalg{A}}{\varsigma} = \Int{X}{\nalg{A}}{\varsigma} \text{ holds for all } \pi\fix{C} X\in\Upsilon.
            \]
            Thus, $\Int{\Upsilon_{\catnew{c}}}{\nalg{A}}{\varsigma}$ is valid. By induction, we get
            \[
                \Int{\newswap{a}{c_1}\act t'}{\nalg{A}}{\varsigma} = \Int{\newswap{b}{c_1}\act u'}{\nalg{A}}{\varsigma}.
            \]

            \begin{claim}[1]
                We claim that $a\notin\supp{}{\Int{u'}{\nalg{A}}{\varsigma}}$ and $b\notin\supp{}{\Int{t'}{\nalg{A}}{\varsigma}}$.  We will prove only $a\notin\supp{}{\Int{u'}{\nalg{A}}{\varsigma}}$, because the proof for $b \notin\supp{}{\Int{t'}{\nalg{A}}{\varsigma}}$ follows by a similar argument. Suppose, by contradiction, that $a \in \supp{}{\Int{u'}{\nalg{A}}{\varsigma}}$. Under this assumption, we have $a \in \supp{}{\newswap{b}{c_1} \act \Int{u'}{\nalg{A}}{\varsigma}}$. Consequently, it follows that $a \in \supp{}{\newswap{a}{c_1} \act \Int{t'}{\nalg{A}}{\varsigma}}$. This implication leads to $\atnew{c_1} \in \supp{}{\Int{t'}{\nalg{A}}{\varsigma}}$. However, this result contradicts the assumption that $\atnew{c_1}$ is fresh for $\Int{t'}{\nalg{A}}{\varsigma}$. 
            \end{claim}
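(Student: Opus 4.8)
The plan is to establish the claim by contradiction, using three ingredients already available: the induction hypothesis $\Int{\newswap{a}{c_1}\act t'}{\nalg{A}}{\varsigma} = \Int{\newswap{b}{c_1}\act u'}{\nalg{A}}{\varsigma}$ obtained just above, the freshness of the chosen atom $\atnew{c_1}$ for $\Int{t'}{\nalg{A}}{\varsigma}$ and $\Int{u'}{\nalg{A}}{\varsigma}$, and the standard fact that support is equivariant in any nominal set, i.e.\ $\supp{}{\pi\act x} = \pi(\supp{}{x})$ (the pointwise image of $\supp{}{x}$ under the bijection $\pi$). I also use that the interpretation map is equivariant, so that $\Int{\pi\act s}{\nalg{A}}{\varsigma} = \pi\act \Int{s}{\nalg{A}}{\varsigma}$ for every term $s$. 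Since the two halves are symmetric, I would prove only $a\notin\supp{}{\Int{u'}{\nalg{A}}{\varsigma}}$; the statement $b\notin\supp{}{\Int{t'}{\nalg{A}}{\varsigma}}$ then follows verbatim after exchanging $(a,t')$ with $(b,u')$ and $\newswap{a}{c_1}$ with $\newswap{b}{c_1}$.

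I would first note that we are in the interesting case $a\neq b$: when $a=b$ the rule collapses to $(\frule{\faeq{C}}{[a]})$ and the claim is not needed, since the induction hypothesis already forces $\Int{t'}{\nalg{A}}{\varsigma}=\Int{u'}{\nalg{A}}{\varsigma}$ after cancelling $\newswap{a}{c_1}$. Assuming $a\neq b$, suppose for contradiction that $a\in\supp{}{\Int{u'}{\nalg{A}}{\varsigma}}$. Because $\atnew{c_1}$ is chosen new we have $a\neq\atnew{c_1}$, and together with $a\neq b$ this gives $\newswap{b}{c_1}(a)=a$; by equivariance of support I then obtain $a=\newswap{b}{c_1}(a)\in\newswap{b}{c_1}(\supp{}{\Int{u'}{\nalg{A}}{\varsigma}})=\supp{}{\newswap{b}{c_1}\act\Int{u'}{\nalg{A}}{\varsigma}}$.

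Next I would rewrite this support along the induction hypothesis and equivariance of the interpretation: $\newswap{b}{c_1}\act\Int{u'}{\nalg{A}}{\varsigma}=\Int{\newswap{b}{c_1}\act u'}{\nalg{A}}{\varsigma}=\Int{\newswap{a}{c_1}\act t'}{\nalg{A}}{\varsigma}=\newswap{a}{c_1}\act\Int{t'}{\nalg{A}}{\varsigma}$. Hence $a\in\supp{}{\newswap{a}{c_1}\act\Int{t'}{\nalg{A}}{\varsigma}}=\newswap{a}{c_1}(\supp{}{\Int{t'}{\nalg{A}}{\varsigma}})$, and applying the self-inverse swapping $\newswap{a}{c_1}$ to this membership yields $\atnew{c_1}=\newswap{a}{c_1}(a)\in\supp{}{\Int{t'}{\nalg{A}}{\varsigma}}$, contradicting the freshness of $\atnew{c_1}$ for $\Int{t'}{\nalg{A}}{\varsigma}$.

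The derivation is short, and the only place requiring attention is the atom bookkeeping: one must track where each swapping sends $a$ and $\atnew{c_1}$, and in particular invoke both $a\neq b$ and the freshness $a\neq\atnew{c_1}$ to justify $\newswap{b}{c_1}(a)=a$. I expect no genuine obstacle beyond this, since the whole argument reduces to two applications of equivariance of support bridged by the induction hypothesis.
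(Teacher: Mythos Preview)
Your proof is correct and follows essentially the same contradiction argument as the paper: assume $a\in\supp{}{\Int{u'}{\nalg{A}}{\varsigma}}$, push $a$ through $\newswap{b}{c_1}$ unchanged, transfer across the induction hypothesis to $\supp{}{\newswap{a}{c_1}\act\Int{t'}{\nalg{A}}{\varsigma}}$, and pull back to get $\atnew{c_1}\in\supp{}{\Int{t'}{\nalg{A}}{\varsigma}}$. You merely spell out what the paper leaves implicit (the case $a\neq b$, the equivariance of support and of the interpretation map, and why $\newswap{b}{c_1}(a)=a$), which is helpful but not a different approach.
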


            Now, let's use this information to proceed with the proof. Using that $a\notin\supp{}{\Int{u'}{\nalg{A}}{\varsigma}}$, we infer $a\notin\supp{}{\abs^{\nalg{A}}(b, \Int{u'}{\nalg{A}}{\varsigma})}$. Moreover, by definition, we have $\new \atnew{c}. \newswap{b}{c}\act \abs^{\nalg{A}}(b, \Int{u'}{\nalg{A}}{\varsigma}) = \abs^{\nalg{A}}(b, \Int{u'}{\nalg{A}}{\varsigma})$, which by Pitts' equivalence means that $b \notin\supp{}{\abs^{\nalg{A}}(b, \Int{u'}{\nalg{A}}{\varsigma})}$. Therefore, we obtain $(a \ b) \act \abs^{\nalg{A}}(b, \Int{u'}{\nalg{A}}{\varsigma}) = \abs^{\nalg{A}}(b, \Int{u'}{\nalg{A}}{\varsigma})$. Hence, we can rewrite it  as follows:
            \begin{align*}
                \abs^{\nalg{A}}(b,\Int{u'}{\nalg{A}}{\varsigma}) &= (a \ b)\act\abs^{\nalg{A}}(b,\Int{u'}{\nalg{A}}{\varsigma}) \\
                &= \abs^{\nalg{A}}(a,(a \ b)\act \Int{u'}{\nalg{A}}{\varsigma}).
            \end{align*}
            To complete the proof, note that all we need to do is to show that $(a \ b) \act \Int{u'}{\nalg{A}}{\varsigma} = \Int{t'}{\nalg{A}}{\varsigma}$. Then let's prove this final step.

            \begin{claim}[2]
                We claim that $(a \ b)\act \Int{u'}{\nalg{A}}{\varsigma} = \Int{t'}{\nalg{A}}{\varsigma}$. On one hand, by Claim (1), we know that $b\notin\supp{}{\Int{t'}{\nalg{A}}{\varsigma}}$ and since $\atnew{c_1}\notin\supp{}{\Int{t'}{\nalg{A}}{\varsigma}}$, this implies $\newswap{b}{c_1}\act \Int{t'}{\nalg{A}}{\varsigma} = \Int{t}{\nalg{A}}{\varsigma}$. On the other hand, we already know that $\newswap{a}{c_1}\act\Int{t'}{\nalg{A}}{\varsigma} = \newswap{b}{c_1}\act\Int{u'}{\nalg{A}}{\varsigma}$ holds by induction. Thus,
                \begin{align*}
                    & \newswap{a}{c_1}\act\Int{t'}{\nalg{A}}{\varsigma} =  \newswap{b}{c_1}\act\Int{u'}{\nalg{A}}{\varsigma} \\
                    \Longleftrightarrow{} & \newswap{b}{c_1}\act(\newswap{a}{c_1}\act\Int{t'}{\nalg{A}}{\varsigma}) =  \Int{u'}{\nalg{A}}{\varsigma}\\
                    \Longleftrightarrow{}& ((a \ b)\circ\newswap{b}{c_1}\circ\newswap{a}{c_1})\act\Int{t'}{\nalg{A}}{\varsigma} =  (a \ b)\act\Int{u'}{\nalg{A}}{\varsigma}\\
                    \Longleftrightarrow{}& \newswap{b}{c_1}\act\Int{t'}{\nalg{A}}{\varsigma} =  (a \ b)\act\Int{u'}{\nalg{A}}{\varsigma}\\
                    \Longleftrightarrow{}& \Int{t'}{\nalg{A}}{\varsigma} =  (a \ b)\act\Int{u'}{\nalg{A}}{\varsigma}.
                \end{align*}
            \end{claim}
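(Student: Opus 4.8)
The plan is to derive this claim by a single direct computation in the nominal set $\nom{A}$, feeding in the induction hypothesis together with the freshness facts already in hand. The available ingredients are: (i) the induction hypothesis, which by equivariance of $\Int{\cdot}{\nalg{A}}{\varsigma}$ reads $\newswap{a}{c_1}\act\Int{t'}{\nalg{A}}{\varsigma} = \newswap{b}{c_1}\act\Int{u'}{\nalg{A}}{\varsigma}$; (ii) Claim~(1), which supplies $a\notin\supp{}{\Int{u'}{\nalg{A}}{\varsigma}}$; and (iii) the choice of $\atnew{c_1}$ fresh, giving $\atnew{c_1}\notin\supp{}{\Int{u'}{\nalg{A}}{\varsigma}}$.

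First I would record the elementary nominal-set fact that a transposition of two atoms both lying outside the support acts trivially: if $p,q\notin\supp{}{x}$ then $\swap{p}{q}\in\Fix{\supp{}{x}}$, and since $\supp{}{x}$ supports $x$ this forces $\swap{p}{q}\act x = x$. Instantiating this with $p := a$, $q := \atnew{c_1}$ and $x := \Int{u'}{\nalg{A}}{\varsigma}$, and using (ii) and (iii), gives $\newswap{a}{c_1}\act\Int{u'}{\nalg{A}}{\varsigma} = \Int{u'}{\nalg{A}}{\varsigma}$.

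Next I would apply $\newswap{a}{c_1}$ to both sides of the induction hypothesis and simplify using the group-action axioms. On the left $\newswap{a}{c_1}$ cancels, leaving $\Int{t'}{\nalg{A}}{\varsigma}$; on the right the permutations compose as $(a\ \atnew{c_1})\circ(b\ \atnew{c_1}) = (a\ \atnew{c_1}\ b)$, and this $3$-cycle factors as $(a\ b)\circ(a\ \atnew{c_1})$. Hence
\[
\Int{t'}{\nalg{A}}{\varsigma} = \bigl(\newswap{a}{c_1}\circ\newswap{b}{c_1}\bigr)\act\Int{u'}{\nalg{A}}{\varsigma} = (a\ b)\act\bigl(\newswap{a}{c_1}\act\Int{u'}{\nalg{A}}{\varsigma}\bigr).
\]
Substituting the identity $\newswap{a}{c_1}\act\Int{u'}{\nalg{A}}{\varsigma} = \Int{u'}{\nalg{A}}{\varsigma}$ from the previous step collapses the inner swap, yielding $\Int{t'}{\nalg{A}}{\varsigma} = (a\ b)\act\Int{u'}{\nalg{A}}{\varsigma}$, which is exactly the claim.

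The argument needs no induction and no appeal to the derivation rules; it is purely a computation with the permutation action on $\nom{A}$. The only point requiring care is the permutation bookkeeping — getting the composition $(a\ \atnew{c_1})\circ(b\ \atnew{c_1})$ and its factorisation correct — and, more importantly, making sure the support side-conditions invoked are precisely those already secured: the collapse of $\newswap{a}{c_1}$ on $\Int{u'}{\nalg{A}}{\varsigma}$ depends on $a\notin\supp{}{\Int{u'}{\nalg{A}}{\varsigma}}$, so the whole step is contingent on Claim~(1) having been established first. That dependence on Claim~(1) is the main obstacle to keep in view.
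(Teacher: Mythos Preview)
Your argument is correct and follows essentially the same pattern as the paper's proof: both apply a swap to the induction hypothesis, factor the resulting composition of transpositions, and collapse one swap using a freshness fact from Claim~(1). The only difference is symmetric: the paper uses the half $b\notin\supp{}{\Int{t'}{\nalg{A}}{\varsigma}}$ to collapse $\newswap{b}{c_1}$ on the $t'$ side, whereas you use the other half $a\notin\supp{}{\Int{u'}{\nalg{A}}{\varsigma}}$ to collapse $\newswap{a}{c_1}$ on the $u'$ side --- and your permutation bookkeeping $(a\ \atnew{c_1})\circ(b\ \atnew{c_1}) = (a\ b)\circ(a\ \atnew{c_1})$ is correct.
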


             This proves completes the proof that $\abs^{\nalg{A}}(b,\Int{u'}{\nalg{A}}{\varsigma}) =  \abs^{\nalg{A}}(a,\Int{t'}{\nalg{A}}{\varsigma})$.

                \end{itemize}
\end{proof}

\subsection{Completeness}\label{app:completeness}

The objective of this section is to show that the calculus is complete w.r.t. the nominal set semantics. The construction is long and complex, therefore, we will just present the roadmap of the constructions and results needed to prove Completeness, and only give the complete proof of the most challenging result (Lemma\ref{alemma:valid-judge-pi-generated}). We refer the reader interested in the details to the extended version of this paper~\cite{arxiv/cairessantos2025}.




\subsubsection*{Free Term Models.}\label{app:free-models}
The set of nominal terms $\F(\Sigma, \V)$ cannot form a model of commutative ($\C$) because it is not a nominal set under the usual permutation action. Specifically, the support of a suspension, such as $\rho \act X$, cannot be precisely determined, as $X$ represents an unknown term. The standard approach is to work with the {\em free term models}, consisting of ground terms $\F(\Sigma\cup \D)$, over a signature $\Sigma$ and a set of term-formers $\D$ disjoint from $\Sigma$.  Here we assume that $\Sigma$ contains commutative function symbols, but $\D$ doesn't. The set of {\em free terms} is defined by the following grammar:
\[ g,g' ::=  a \mid [a]g \mid {\tf f} (g_1,\ldots, g_n) \mid {\tf d}(a_1,\ldots, a_k)\]
Here $\tf{f}:n$ ranges over $\Sigma$ and $\tf{d}:k$ range over elements of $\D$.

As expected, to obtain nominal models for a theory $\C$ we will work with the set of free terms quotient by provable equality modulo $\C$. 
%
This quotient set, defined next, effectively incorporates both $\alpha$-equivalence and commutative behaviour, enabling us to construct a model that satisfies the desired properties in terms of nominal set semantics.

\begin{definition}[Free terms up to $\C$]
Write $\lin{g}$ for the set of free terms $g'$ such that $\emptyset\vdash g \aeq{C} g'$.
The {\em set of free terms $\F(\Sigma\cup \D)$ up to $\C$}, denoted as  $\lin{\F}(\Sigma\cup \D)$, is the set of equivalence classes (modulo $\C$) of free terms, i.e.,  $\lin{\F}(\Sigma\cup \D)=\{\lin{g} \mid g \text{ is a free term}\}$.
\end{definition}

In the following, we will abbreviate $\lin{\F}(\Sigma\cup \D)$ as  $\cF$. The next lemma presents a collection of diverse properties of free terms modulo $\C$. Notably, item \ref{alemma:ground-algebra-nominal-set} establishes that $\lin{\F}_{\C}$ is a nominal set, while item \ref{athm:supp-free-names} defines the support of the equivalence class of a free term modulo $\C$.

We will use the following elementary result from nominal set theory. 
\begin{lemma}[Pitts~\cite{book/Pitts}]\label{alemma:quotient-nominal}
    If $\nom{X}$ is a nominal set and $\sim$ is an equivariant equivalence relation on $\nom{X}$ and $x \in \nom{X}/_{\sim}$, then $\supp{}{x} = \bigcap\{\supp{}{g}\mid g\in x\}$.
\end{lemma}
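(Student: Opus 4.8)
The plan is to establish the two inclusions $\supp{}{x} \subseteq \bigcap\{\supp{}{g}\mid g\in x\}$ and $\bigcap\{\supp{}{g}\mid g\in x\} \subseteq \supp{}{x}$ separately. First I would observe that $\nom{X}/_{\sim}$ is itself a nominal set: since $\sim$ is equivariant, $\pi\act x := \{\pi\act g \mid g \in x\}$ is a well-defined group action on $\sim$-classes, and for any representative $g \in x$ the finite set $\supp{}{g}$ supports $x$ (if $\pi \in \Fix{\supp{}{g}}$ then $\pi\act g = g$, hence $\pi\act x = x$). Consequently $\supp{}{x}$, being the least support of $x$, satisfies $\supp{}{x} \subseteq \supp{}{g}$ for every $g \in x$, and intersecting over all representatives gives the easy inclusion $\supp{}{x} \subseteq \bigcap\{\supp{}{g}\mid g\in x\}$.

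For the reverse inclusion I would argue by contraposition: assuming $a \notin \supp{}{x}$, I would exhibit a representative $g' \in x$ with $a \notin \supp{}{g'}$, which shows $a \notin \bigcap\{\supp{}{g}\mid g\in x\}$. Fix any $g \in x$; by the Choose-a-Fresh-Name Principle pick an atom $c \neq a$ with $c \notin \supp{}{g} \cup \supp{}{x}$ (possible as both sets are finite). Since $a, c \notin \supp{}{x}$, the swapping $(a\ c)$ lies in $\Fix{\supp{}{x}}$, so $(a\ c)\act x = x$; by the definition of the quotient action this means $(a\ c)\act g \in x$, so $g' := (a\ c)\act g$ is a representative of $x$. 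Finally, equivariance of the support map gives $\supp{}{g'} = (a\ c)\act\supp{}{g}$, and if we had $a \in (a\ c)\act\supp{}{g}$ then $c = (a\ c)(a) \in \supp{}{g}$, contradicting the choice of $c$; hence $a \notin \supp{}{g'}$, as desired. Combining the two inclusions yields $\supp{}{x} = \bigcap\{\supp{}{g}\mid g\in x\}$.

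This argument has no genuine obstacle: it is the standard computation of supports in a quotient nominal set. The only points needing care are (i) verifying that the quotient action is well-defined and that each class is finitely supported, so that $\supp{}{x}$ is meaningful and is the least support; (ii) choosing the fresh atom $c$ so that it simultaneously avoids the two finite sets $\supp{}{g}$ and $\supp{}{x}$ (and is distinct from $a$), which is exactly what the Choose-a-Fresh-Name Principle provides; and (iii) transporting the identity $(a\ c)\act x = x$ from the quotient to a statement about representatives via the definition of $\pi\act x$ on classes.
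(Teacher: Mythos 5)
Your proof is correct. Note that the paper itself gives no proof of this lemma: it is imported verbatim from Pitts' book as a known fact about quotient nominal sets, so there is no in-paper argument to compare against. Your two-inclusion argument is the standard one: the easy direction uses that any representative's support supports the class, and the reverse direction uses equivariance of $\supp{}{-}$ together with a fresh swap $(a\ c)$ to produce a representative avoiding $a$. The only implicit ingredients are the ones you already flag --- that the quotient action is well defined and finitely supported, and that $\supp{}{x}$, being the least finite support, is itself a support of $x$ (so that $(a\ c)\in\Fix{\supp{}{x}}$ really forces $(a\ c)\act x = x$); both are standard facts available in the setting of the paper.
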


\begin{lemma}[Properties of $\cF$]\label{alemma:ground-algebra-properties} Let $g,g'\in \F(\Sigma\cup \D)$. The following hold: \hfill

   \begin{enumerate}
    \item \label{alemma:ground-algebra-nominal-set} The set $\cF$ with the action $\pi\act\lin{g} = \lin{\pi\act g}$ is a nominal set and $\supp{}{\lin{g}} = \bigcap\{\supp{}{g'}\mid g'\in \lin{g}\}$.
    \item  \label{alemma:free-names-equivariant} The map $\tf{fn}(-):\F(\Sigma\cup \D) \to \pow{\tf{fin}}{\A}$ is equivariant.
    \item \label{alemma:free-names-preserve} If $g' \in \lin{g}$  then $\tf{fn}(g) = \tf{fn}(g')$.
    \item  \label{athm:supp-free-names} $\supp{}{\lin{g}} = \tf{fn}(g)$.
    \end{enumerate}
\end{lemma}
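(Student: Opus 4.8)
The plan is to treat the four items in turn, relying on the structural properties of $\aeq{C}$ (Theorem~\ref{thm:miscellaneous}), the Example showing ground terms form a nominal set, and Pitts' quotient lemma (Lemma~\ref{alemma:quotient-nominal}). For item~\ref{alemma:ground-algebra-nominal-set}: since $\F(\Sigma\cup\D)$ is a set of ground terms closed under the permutation action, the argument of the Example on simple nominal sets shows it is a nominal set with $\supp{}{g}=\atm{g}$. The relation $g\sim g'$ defined by $\emptyset\vdash g\aeq{C} g'$ is an equivalence relation by Theorem~\ref{thm:miscellaneous}(\ref{thm:alpha-equivalence}) and equivariant by Theorem~\ref{thm:miscellaneous}(\ref{thm:object-equivariance}) (taking $\Upsilon=\emptyset$, empty $\new$-context, and $\rho$ the given permutation); hence $\pi\act\lin g:=\lin{\pi\act g}$ is well defined, the group-action axioms hold, and each $\lin g$ is supported by the finite set $\atm g$ (if $\pi\in\Fix{\atm g}$ then $\pi\act g\equiv g$, so $\pi\act\lin g=\lin g$). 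Thus $\cF$ is a nominal set, and the support identity is precisely Pitts' quotient lemma applied to the equivariant equivalence $\sim$.

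Item~\ref{alemma:free-names-equivariant} is a routine structural induction on $g$: in the abstraction case one uses that $\pi$ is a bijection, so $\pi\act(\tf{fn}(g')\setminus\{a\})=(\pi\act\tf{fn}(g'))\setminus\{\pi(a)\}$, and in the remaining cases that $\pi\act(-)$ on finite atom-sets distributes over unions. For item~\ref{alemma:free-names-preserve} I would induct on the derivation of $\emptyset\vdash g\aeq{C} g'$: the rules $(\frule{\faeq{C}}{a})$, $(\frule{\faeq{C}}{\tf f})$, $(\frule{\faeq{C}}{\tf{f^C}})$ and $(\frule{\faeq{C}}{[a]})$ follow immediately from the inductive hypotheses (for the commutative symbol, both values of $i$ yield $\tf{fn}(s_0)\cup\tf{fn}(s_1)=\tf{fn}(t_0)\cup\tf{fn}(t_1)$). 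The only subtle case is $(\frule{\faeq{C}}{ab})$: from the premise $\new\atnew{c_1}.\,\emptyset\vdash\newswap{a}{c_1}\act t\aeq{C}\newswap{b}{c_1}\act s$, the inductive hypothesis together with equivariance of $\tf{fn}$ (item~\ref{alemma:free-names-equivariant}) gives $\newswap{a}{c_1}\act\tf{fn}(t)=\newswap{b}{c_1}\act\tf{fn}(s)$ with $\atnew{c_1}\notin\tf{fn}(t)\cup\tf{fn}(s)$; a short case split on whether $a\in\tf{fn}(t)$ (equivalently $b\in\tf{fn}(s)$) then delivers $\tf{fn}(t)\setminus\{a\}=\tf{fn}(s)\setminus\{b\}$, that is $\tf{fn}([a]t)=\tf{fn}([b]s)$.

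For item~\ref{athm:supp-free-names}, item~\ref{alemma:ground-algebra-nominal-set} already gives $\supp{}{\lin g}=\bigcap\{\atm{g'}\mid g'\in\lin g\}$. The inclusion $\tf{fn}(g)\subseteq\supp{}{\lin g}$ holds because item~\ref{alemma:free-names-preserve} yields $\tf{fn}(g)=\tf{fn}(g')\subseteq\atm{g'}$ for every $g'\in\lin g$. For the reverse inclusion it suffices to prove that $\tf{fn}(g)$ supports $\lin g$ — equivalently that $\pi\in\Fix{\tf{fn}(g)}$ implies $\emptyset\vdash g\aeq{C}\pi\act g$ — since $\supp{}{\lin g}$ is the least finite support. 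This I would establish by induction on $g$: the atom and (commutative or plain) function-symbol cases are immediate from the inductive hypotheses, and in the abstraction case $g=[b]h$ one applies $(\frule{\faeq{C}}{ab})$ with a fresh $\atnew{c_1}$, reducing the goal to $\emptyset\vdash h\aeq{C}\mu\act h$ for $\mu=\newswap{b}{c_1}\circ\newswap{\pi(b)}{c_1}\circ\pi$, then checks that $\mu$ fixes $\tf{fn}(h)$ pointwise so that the inductive hypothesis and equivariance (Theorem~\ref{thm:miscellaneous}(\ref{thm:object-equivariance})) close the case.

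The main obstacle is item~\ref{athm:supp-free-names}, specifically the $\alpha$-renaming fact that $\tf{fn}(g)$ supports $\lin g$, together with the analogous abstraction case of item~\ref{alemma:free-names-preserve}: both require carefully tracking which atoms the auxiliary swappings move and verifying that the fresh atom $\atnew{c_1}$ never leaks into the relevant free-atom or atom sets. Everything else is either bookkeeping or a direct appeal to Theorem~\ref{thm:miscellaneous} and Lemma~\ref{alemma:quotient-nominal}.
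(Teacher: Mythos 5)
Your proposal is correct and follows the same overall route as the paper: item~\ref{alemma:ground-algebra-nominal-set} via Lemma~\ref{alemma:quotient-nominal}, item~\ref{alemma:free-names-equivariant} by structural induction, item~\ref{alemma:free-names-preserve} by induction on the derivation with $(\frule{\faeq{C}}{ab})$ as the only delicate case, and item~\ref{athm:supp-free-names} by showing $\tf{fn}(g)$ supports $\lin{g}$ and then pinning down the reverse inclusion. The two places where you genuinely diverge are both in item~\ref{athm:supp-free-names}, and both in your favour. First, for $\tf{fn}(g)\subseteq\supp{}{\lin{g}}$ the paper argues by contradiction, picking a fresh $b$ and deriving $a\notin\tf{fn}((a\ b)\act g)$ from $(a\ b)\act\lin{g}=\lin{g}$; you instead read the inclusion directly off the quotient formula of item~\ref{alemma:ground-algebra-nominal-set}, since item~\ref{alemma:free-names-preserve} gives $\tf{fn}(g)=\tf{fn}(g')\subseteq\atm{g'}=\supp{}{g'}$ for every representative $g'\in\lin{g}$ — shorter and arguably more transparent. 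Second, in the abstraction case of the supporting step the paper performs a four-way case analysis on $a\in\tf{fn}(g_1)$ versus not and $\pi(a)=a$ versus not; you collapse all of these into the single check that $\mu=\newswap{b}{c_1}\circ\newswap{\pi(b)}{c_1}\circ\pi$ lies in $\Fix{\tf{fn}(h)}$, which indeed holds uniformly (for $d\in\tf{fn}(h)$ with $d\neq b$ one has $\pi(d)=d$, $d\neq\pi(b)$ by injectivity and $d\neq\atnew{c_1}$ by freshness, so both swappings are inert; and if $b\in\tf{fn}(h)$ then $\mu(b)=\newswap{b}{c_1}(\atnew{c_1})=b$), after which equivariance and the inductive hypothesis close the case exactly as you describe. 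What the paper's longer case analysis buys is only that it avoids the conjugation/equivariance step; your version is the cleaner write-up of the same induction.
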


\begin{proof}
\begin{enumerate}
    \item Direct consequence of Lemma~\ref{alemma:quotient-nominal}.

    \item For all $g\in \lin{\F}(\Sigma\cup\D)$ and all permutation $\pi$, we must prove that $\tf{fn}(\pi\act g) = \pi\act\tf{fn}(g)$. This is done by induction on the ground term $g$. As an illustrative case, let $g\equiv [a]g'$. By induction $\tf{fn}(\pi\act g') = \pi\act \tf{fn}(g')$. Thus,
    \begin{align*}
         \tf{fn}(\pi\act[a]g') = \tf{fn}([\pi(a)]\pi\act g') = \tf{fn}(\pi\act g') \setminus \{\pi(a)\}
         &= \pi\act \tf{fn}(g') \setminus \pi\act \{a\} \\
         &= \pi\act(\tf{fn}(g')\setminus\{a\})\\
         &= \pi\act \tf{fn}([a]g').
    \end{align*}

    \item Suppose $g \sim g'$. Then by definition there is a derivation $~\vdash g \aeq{C} g'$. The proof follows by induction on the last rule applied to obtain  $~\vdash g \aeq{C} g'$. The only non-trivial case is when the last rule applied is $(\frule{\faeq{C}}{ab})$. In this case, $g \equiv [a]g_1$ and $g' \equiv [b]g_1'$ and $~\vdash [a]g_1 \aeq{C} [b]g_1'$. Then by Inversion (Theorem~\ref{thm:miscellaneous}(\ref{thm:inversion})), we have that $~\vdash \newswap{a}{c_1}\act g_1 \aeq{C} \newswap{b}{c_1}\act g_1'$ where $\atnew{c_1}\notin \atm{\catnew{c},a,b,g_1,g_1'}$. Thus $\newswap{a}{c_1}\act g_1 \sim \newswap{b}{c_1}\act g_1'$. By induction, we have $\tf{fn}(\newswap{a}{c_1}\act g_1) = \tf{fn}(\newswap{b}{c_1}\act g_1')$.
    
    \begin{claim}[1]
         We claim that $a\notin \tf{fn}(g_1')$. Suppose, by contradiction, that $a\in \tf{fn}(g_1')$. Then
        \begin{equation*}
            \begin{tabular}{@{}l@{ }ll@{}}
               & $a \in\tf{fn}(g_1')$ &  \\
              $\Longrightarrow$ & $a \in\tf{fn}(\newswap{b}{c_1} \act g_1')$ & (item~\ref{alemma:free-names-equivariant})\\
              $\Longrightarrow$ & $a \in\tf{fn}(\newswap{a}{c_1} \act g_1)$ & \\
           $\Longrightarrow$ & $\atnew{c_1} \in\tf{fn}(g_1)$ & (item~\ref{alemma:free-names-equivariant})
            \end{tabular}
        \end{equation*}
        This leads to a contradiction, because we took $\atnew{c_1}$ such that $\atnew{c_1}\notin \atm{g_1}$ and $\tf{fn}(g_1) \subseteq \atm{g_1}$. This proves the claim.
    \end{claim}

    \begin{claim}[2]
         We also claim that $\tf{fn}(g_1) = (a \ b)\act \tf{fn}(g_1')$. Indeed, from $a,\atnew{c_1} \notin \tf{fn}(g_1')$ we conclude $\newswap{a}{c_1}\act \tf{fn}(g_1') = \tf{fn}(g_1')$. Thus,
        \begin{equation*}
            \begin{tabular}{@{}l@{ }l@{}l@{}}
               & $\tf{fn}(\newswap{a}{c_1}\act g_1) = \tf{fn}(\newswap{b}{c_1}\act g_1')$ \\
               $\Longrightarrow$ & $\newswap{a}{c_1}\act \tf{fn}(g_1) = \newswap{b}{c_1}\act\tf{fn}(g_1')$ & (item~\ref{alemma:free-names-equivariant})\\
               $\Longrightarrow$ &  $\tf{fn}(g_1) = ((\atnew{c_1} \ a)\circ\newswap{b}{c_1})\act\tf{fn}(g_1')$ & \\
               $\Longrightarrow$ &  $\tf{fn}(g_1) = ((b \ a)\circ(\atnew{c_1} \ a))\act\tf{fn}(g_1')$ & \\
               $\Longrightarrow$ &   $\tf{fn}(g_1) = (a \ b)\act(\newswap{a}{c_1}\act\tf{fn}(g_1'))$ & \\
               $\Longrightarrow$ &  $\tf{fn}(g_1) = (a \ b)\act \tf{fn}(g_1')$ &
            \end{tabular}
        \end{equation*}
        This finish the prove of the second claim.
    \end{claim}
    
   Now, let's prove $\tf{fn}(g_1) \setminus\{a\} = \tf{fn}(g_1')\setminus\{b\}$. To show this, first observe that from $a\notin \tf{fn}(g_1')$, we get $a\notin \tf{fn}(g_1')\setminus\{b\}$ and, since $b\notin \tf{fn}(g_1')\setminus\{b\}$, this yields $(a \ b)\act(\tf{fn}(g_1')\setminus\{b\}) = \tf{fn}(g_1')\setminus\{b\}$. Therefore,
        \begin{equation*}
            \begin{tabular}{@{}l@{ }lll@{}}
                $\tf{fn}(g_1)\setminus\{a\}$ & $=$ &  $((a \ b)\act\tf{fn}(g_1'))\setminus((a \ b)\act\{b\})$ & by Claim (2) \\
                 & $=$ & $(a \ b)\act(\tf{fn}(g_1')\setminus\{b\})$ \\
                 & $=$ & $\tf{fn}(g_1')\setminus\{b\}$ 
            \end{tabular}
        \end{equation*}
        This completes the proof that $\tf{fn}(g) = \tf{fn}(g')$.

    \item Take an arbitrary but fixed $g\in\F(\Sigma\cup\D)$.

    \paragraph*{Step 1: Show that $\tf{fn}(g)$ supports $[g]$.}  We will do this by showing, using induction on the structure of $g$, that for any permutation $\pi$, the following holds:
    \[
        \pi\in\Fix{\tf{fn}(g)} \implies \pi\act \lin{g} = \lin{g}.
    \]
    The only interesting case is when $g \equiv [a]g_1$. This implies $\tf{fn}([a]g_1) = \tf{fn}(g_1)\setminus\{a\}$. Then $\pi\in\Fix{\tf{fn}([a]g_1)}$ implies $\pi(b) = b$ for all $b \in  \tf{fn}(g_1)\setminus\{a\}$.

            \begin{itemize}
                \item If $a\notin \tf{fn}(g_1)$, then $ \tf{fn}(g_1) \setminus \{a\} =  \tf{fn}(g_1)$ and then $\pi\in\Fix{\tf{fn}(g_1)}$ which, by induction, implies that $\pi\act \lin{g_1} = \lin{g_1}$. This means that $~\vdash \pi\act g_1 \aeq{C} g_1$, that is, $\pi\act g_1 \sim g_1$.

                \begin{enumerate}
                    \item $\pi(a) = a$: In this case,
                    \begin{prooftree}
                        \AxiomC{$\vdash \pi\act g_1 \aeq{C} g_1$}
                        \UnaryInfC{$\vdash [\pi(a)]\pi\act g_1\aeq{C} [a]g_1$}
                    \end{prooftree}
                    which implies that $\pi\act \lin{[a]g_1} = \lin{[a]g_1}$.

                    \item $\pi(a) \neq a$: In this case, the condition $a\notin\tf{fn}(g_1)$ implies $\pi(a)\notin \tf{fn}(g_1)$ by item~\ref{alemma:free-names-preserve}. Take $\atnew{c_1}\notin \atm{g_1}\cup\{a,\pi(a)\}\cup\dom{\pi}$. Then, $\newswap{a}{c_1}$ and $\newswap{\pi(a)}{c_1}$ are in $\Fix{\tf{fn}(g_1)}$ which, by induction, implies $\newswap{a}{c_1}\act \lin{g_1} = \lin{g_1}$ and $\newswap{\pi(a)}{c_1}\act \lin{g_1} = \lin{g_1}$, that is, $\newswap{\pi(a)}{c_1}\act g_1 \sim g_1$ and $g_1 \sim \newswap{a}{c_1}\act g_1$. Hence
                    \begin{equation*}
                        \begin{tabular}{@{}r@{ }lll@{}}
                             & $g_1 \sim \newswap{a}{c_1}\act g_1$ &  & \\
                             $\Longrightarrow$ & $\pi\act g_1 \sim \pi\act (\newswap{a}{c_1}\act g_1)$ &\\
                            $\Longrightarrow$ & $g_1 \sim \newswap{\pi(a)}{c_1}\act (\pi\act g_1)$ & \\
                            $\Longrightarrow$ & $\newswap{a}{c_1}\act g_1 \sim \newswap{\pi(a)}{c_1}\act (\pi\act g_1)$ & \\
                            $\Longrightarrow$ & $\newswap{\pi(a)}{c_1}\act (\pi\act g_1) \sim \newswap{a}{c_1}\act g_1$ &
                        \end{tabular}
                    \end{equation*}
                    This implies $~\vdash \newswap{\pi(a)}{c_1}\act (\pi\act g_1)  \aeq{C} \newswap{a}{c_1}\act g_1$. By an application of rule $(\frule{\faeq{C}}{ab})$, we obtain $\vdash [\pi(a)]\pi\act g_1 \aeq{C} [a]g_1$ and so $\pi\act \lin{[a]g_1} = \lin{[a]g_1}$.
                \end{enumerate}

                \item If $a\in \tf{fn}(g_1)$, then
                \begin{enumerate}
                    \item $\pi(a) = a$: In this case, $\pi\in\Fix{\tf{fn}(g_1)}$. By induction, $\pi\act\lin{g_1} = \lin{g_1}$, i.e., $~\vdash \pi\act g_1 \aeq{C} g_1.$ The result follows by an application of rule $(\frule{\faeq{C}}{[a]})$.

                    \item $\pi(a) \neq a$: In this case, we claim $a\notin\tf{fn}(\pi\act g_1)$.  Suppose, by contradiction, that $a\in\tf{fn}(\pi\act g_1)$. Then $a\in \tf{fn}(\pi\act g_1)\setminus\{\pi(a)\}$ because $\pi(a)\neq a$. This is equivalent to $a\in \pi\act(\tf{fn}(g_1)\setminus\{a\})$, which in turn is equivalent to $\pi^{-1}(a) \in \tf{fn}(g_1)\setminus\{a\}$. From the hypothesis that $\pi\in\Fix{\tf{fn}([a]g_1)} = \Fix{\tf{fn}(g_1)\setminus\{a\}}$, we get that $\pi(\pi^{-1}(a)) = \pi^{-1}(a)$, i.e., $\pi^{-1}(a) = a$. Applying $\pi$ on both sides yields $\pi(a) = a$, a contradiction.

                    Moreover, we also claim $\pi\act g_1 \sim (\pi(a) \ a)\act g_1$.  By the induction hypothesis, it is sufficient to prove that $(a \ \pi(a))\circ\pi\in\Fix{\tf{fn}(g_1)}$. Since $a\in \tf{fn}(g_1)$, let's see how $(a \ \pi(a))\circ\pi$ interacts with $a$ before analysing other atoms in $\tf{fn}(g_1)$. Then
                            \[
                                ((a \ \pi(a))\circ\pi)(a) = (a \ \pi(a))(\pi(a)) = a
                            \]
                            Now, take $b\in\tf{fn}(g_1)$. Then
                            \begin{align*}
                                ((a \ \pi(a))\circ\pi)(b) &= (a \ \pi(a))(\pi(b)) \\
                                &= \pi(b)\\
                                &= b
                            \end{align*}
                        because $\pi\in\Fix{\tf{fn}(g_1)\setminus\{a\}}$. Now, take $c_1\notin \atm{g_1}\cup\{a,\pi(a)\}\cup\dom{\pi}$. Then, $\newswap{a}{c_1}\in \Fix{\tf{fn}(\pi\act g_1)}$ because $a\notin \tf{fn}(\pi\act g_1)$ and $\atnew{c_1}\notin \tf{fn}(\pi\act g_1)$, the latter being a consequence of the fact the $\tf{fn}(\pi\act g_1)\subseteq \atm{\pi\act g_1}$. By induction, we have $\newswap{a}{c_1}\act\lin{\pi\act g_1} = \lin{\pi\act g_1}$, that is, $\pi\act g_1 \sim \newswap{a}{c_1}\act (\pi\act g_1)$. Hence,
                   \begin{equation*}
                        \begin{tabular}{@{}l@{ }l@{}l@{}}
                            & $\pi\act g_1 \sim (\pi(a) \ a)\act g_1$ &  \\
                            $\Longrightarrow$ & $\pi\act g_1 \sim \newswap{\pi(a)}{c_1}^{(\atnew{c_1} \ a)}\act g_1$ & \\
                            $\Longrightarrow$ & $\newswap{a}{c_1}\act (\pi\act g_1) \sim \newswap{\pi(a)}{c_1}\act (\newswap{a}{c_1}\act g_1)$ & \\
                            $\Longrightarrow$ & $\pi\act g_1 \sim \newswap{\pi(a)}{c_1}\act (\newswap{a}{c_1}\act g_1)$ & \\
                             $\Longrightarrow$ & $(\atnew{c_1} \ \pi(a))\act (\pi\act g_1) \sim \newswap{a}{c_1}\act g_1$ & \\
                            $\Longrightarrow$ & $\newswap{\pi(a)}{c_1}\act (\pi\act g_1) \sim \newswap{a}{c_1}\act g_1$ &
                        \end{tabular}
                    \end{equation*}
                    This implies $\vdash \newswap{\pi(a)}{c_1}\act (\pi\act g_1)  \aeq{C} \newswap{a}{c_1}\act g_1$. By an application of rule $(\frule{\faeq{C}}{ab})$, we obtain $\vdash [\pi(a)]\pi\act g_1 \aeq{C} [a]g_1$ and so $\pi\act \lin{[a]g_1} = \lin{[a]g_1}$.
                \end{enumerate}
            \end{itemize}

        \paragraph*{Step 2: Show that $\supp{}{g} = \tf{fn}(g)$.} Since $\tf{fn}(g)$ supports $[g]$. By the minimality of the support, we obtain $\supp{}{g} \subseteq \tf{fn}(g)$. It remains to prove the other inclusion: $\tf{fn}(g) \subseteq \supp{}{\lin{g}}$. Suppose, by contraction, that there exists $a\in \tf{fn}(g)$ such that $a\notin \supp{}{\lin{g}}$. Choose $b\notin \tf{fn}(g)$. Then $b\notin \supp{}{\lin{g}}$ because we proved that $\supp{}{\lin{g}} \subseteq \tf{fn}(g)$. Since $a,b \notin\supp{}{\lin{g}}$, we have $(a \ b)\act \lin{g} = \lin{g}$. Thus, $(a \ b)\act g \sim g$. By item~\ref{alemma:free-names-preserve}, it follows that $\tf{fn}((a \ b)\act g) = \tf{fn}(g)$. Since $b\notin \tf{fn}(g)$, it follows by item~\ref{alemma:free-names-equivariant} that $a\notin \tf{fn}((a \ b)\act g) = \tf{fn}(g)$, a contradiction. Therefore, this proves the claim that $\supp{}{[g]} = \tf{fn}(g)$.

        \paragraph*{Step 3: Completing the proof.}

        As a direct consequence of item~\ref{alemma:free-names-preserve}, we conclude that $\supp{}{\lin{g}}$ does not  depend on the representative of the class $\lin{g}$, and this completes the proof.
\end{enumerate}
\end{proof}

Now, we construct a nominal commutative $\Sigma$-algebra  $\nalg{F}$ as follows:

\begin{itemize}
    \item The nominal set $\nom{F}=(|\nom{F}|, \cdot )$ with domain $|\nom{F}|=\cF$.
     \item Define the injective equivariant map $\atom^{\nalg{F}}(a) = \lin{a}$, for all $a\in \A$;
        \item Define the equivariant map $\abs^{\nalg{F}}(a,\lin{g}) = \lin{[a]g}$ for all $a\in\A$ and all $g\in \F(\Sigma\cup \D)$;
        \item Define the equivariant map
        $f^{\nalg{F}}(\lin{g_1},\ldots,\lin{g_n}) = \lin{\tf{f}(g_1,\ldots,g_n)}$,  for $g_1,\ldots,g_n \in \F(\Sigma\cup \D)$ and each $\tf{f}:n$ in $\Sigma$.
\end{itemize}

\begin{theorem}\label{alemma:ground-algebra-model}
    $\nalg{F}$ is a nominal model of $\C$.
\end{theorem}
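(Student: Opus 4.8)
The plan is to verify, one clause at a time, the five requirements in the definition of a commutative nominal $\Sigma$-algebra. Clause~1, that $\nom{F}=(\cF,\act)$ is a nominal set, is exactly Lemma~\ref{alemma:ground-algebra-properties}(\ref{alemma:ground-algebra-nominal-set}). For the remaining clauses — the interpretations $\atom^{\nalg{F}}$, $\abs^{\nalg{F}}$, $f^{\nalg{F}}$, and, for each commutative symbol, $f^{\C,\nalg{F}}$ — the obligations are of three kinds: that each map is \emph{well defined} (independent of the representative chosen for a class $\lin{g}$), that it is \emph{equivariant}, and, for abstraction and for each $\tf{f^C}$, that the associated algebraic identity holds. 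I would carry these out in that order, and all of them reduce to routine congruence/inversion arguments over the derivation rules of Figure~\ref{fig:fixed-rules_new} together with the already-established properties of $\cF$.

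\emph{Well-definedness, equivariance, injectivity.} Recall $\lin{g}=\lin{g'}$ iff $g\sim g'$ iff $\emptyset\vdash g\aeq{C}g'$. The map $\atom^{\nalg{F}}(a)=\lin{a}$ takes an atom, not a class, so there is nothing to check for well-definedness; it is equivariant since $\pi\act\atom^{\nalg{F}}(a)=\pi\act\lin{a}=\lin{\pi(a)}=\atom^{\nalg{F}}(\pi(a))$, and injective because $\lin{a}=\lin{b}$ gives $\emptyset\vdash a\aeq{C}b$, whence by Inversion (Theorem~\ref{thm:miscellaneous}(\ref{thm:inversion})) the only rule that can conclude this judgement is $(\frule{\faeq{C}}{a})$, forcing $a\equiv b$. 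For $\abs^{\nalg{F}}$: if $g\sim g'$ then rule $(\frule{\faeq{C}}{[a]})$ gives $\emptyset\vdash[a]g\aeq{C}[a]g'$, so $\lin{[a]g}=\lin{[a]g'}$; equivariance unfolds as $\pi\act\abs^{\nalg{F}}(a,\lin{g})=\lin{\pi\act([a]g)}=\lin{[\pi(a)](\pi\act g)}=\abs^{\nalg{F}}(\pi(a),\pi\act\lin{g})$. For $f^{\nalg{F}}$ with $\tf{f}\neq\tf{f^C}$: if $g_i\sim g_i'$ for all $i$, then applying rule $(\frule{\faeq{C}}{\tf{f}})$ to these derivations (which exist by reflexivity, Theorem~\ref{thm:miscellaneous}(\ref{thm:alpha-equivalence})) gives $\emptyset\vdash\tf{f}(g_1,\ldots,g_n)\aeq{C}\tf{f}(g_1',\ldots,g_n')$, so $\lin{\tf{f}(g_1,\ldots,g_n)}=\lin{\tf{f}(g_1',\ldots,g_n')}$; for a commutative symbol the same conclusion follows from $(\frule{\faeq{C}}{\tf{f^C}})$ with $i=0$. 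Equivariance of $f^{\nalg{F}}$ is again a direct unfolding using the homomorphic object-level action and the quotient action of Lemma~\ref{alemma:ground-algebra-properties}(\ref{alemma:ground-algebra-nominal-set}).

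\emph{Algebraic identities.} For abstraction I must show $\new\atnew{c}.\,\newswap{a}{c}\act\abs^{\nalg{F}}(a,\lin{g})=\abs^{\nalg{F}}(a,\lin{g})$, i.e.\ $\new\atnew{c}.\,\newswap{a}{c}\act\lin{[a]g}=\lin{[a]g}$. By Lemma~\ref{alemma:ground-algebra-properties}(\ref{athm:supp-free-names}), $\supp{}{\lin{[a]g}}=\tf{fn}([a]g)=\tf{fn}(g)\setminus\{a\}$, so $a\notin\supp{}{\lin{[a]g}}$, that is $a\fresh_{\tt sem}\lin{[a]g}$; by Pitts' equivalence~(\ref{eq:pitts-eq-1}) this is exactly the identity required. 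For commutativity, for each $\tf{f^C}$ in $\Sigma$ and all free terms $g_1,g_2$ we have $f^{\C,\nalg{F}}(\lin{g_1},\lin{g_2})=\lin{\tf{f^C}(g_1,g_2)}=\lin{\tf{f^C}(g_2,g_1)}=f^{\C,\nalg{F}}(\lin{g_2},\lin{g_1})$, where the middle equality holds because $\emptyset\vdash\tf{f^C}(g_1,g_2)\aeq{C}\tf{f^C}(g_2,g_1)$ is derivable by rule $(\frule{\faeq{C}}{\tf{f^C}})$ (instance $i=1$) from reflexive premises. Most of the argument is bookkeeping over the derivation rules; the one place I would be careful is the abstraction identity, where one should appeal to the support computation of Lemma~\ref{alemma:ground-algebra-properties}(\ref{athm:supp-free-names}) and Pitts' equivalence rather than attempting an explicit $\alpha$-renaming by hand — indeed this is the step that genuinely relies on the prior lemmas rather than on plain inversion.
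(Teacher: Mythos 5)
Your proposal is correct and follows essentially the same route as the paper: both reduce everything to the quotient construction of Lemma~\ref{alemma:ground-algebra-properties}, verify equivariance via $\pi\act\lin{g}=\lin{\pi\act g}$, derive the abstraction identity from $\supp{}{\lin{[a]g}}=\tf{fn}([a]g)$, and obtain commutativity from rule $(\frule{\faeq{C}}{\tf{f^C}})$ with reflexive premises. The only differences are cosmetic: you additionally make the well-definedness checks explicit (the paper leaves them implicit), and you conclude the abstraction identity directly from Pitts' equivalence where the paper spells out the same fact as a cofiniteness argument by contradiction.
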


\begin{proof} The proof is standard and can be found in~\cite{arxiv/cairessantos2025}. 
     \begin{itemize}
         \item $\lin{\F}_\C$ is a nominal set by Lemma~\ref{alemma:ground-algebra-properties}(\ref{alemma:ground-algebra-nominal-set}).

         \item The equivariance of each map follows by the fact that $\pi\act \lin{g} = \lin{\pi\act g}$.

         \item It's easy to see that $\atom^{\nalg{F}}$ is injective because $\lin{a} = \{a\}$ for all $a\in\A$.

         \item The following condition is satisfied: $\new\atnew{c}. \newswap{a}{c} \act\abs^{\nalg{F}}(a,\lin{g}) = \abs^{\nalg{F}}(a,\lin{g})$ for all $a\in \A$ and all $\lin{g}\in\lin{\F}(\Sigma\cup\D)$. Indeed, suppose, by contradiction, that $\new \atnew{c}.\newswap{a}{c} \act \abs^{\nalg{F}}(a,\lin{g}) = \abs^{\nalg{F}}(a,\lin{g})$ does not  hold, that is, the set $D_1 = \{\atnew{c} \mid \newswap{a}{c} \act \abs^{\nalg{F}}(a,\lin{g}) =\abs^{\nalg{F}}(a,\lin{g})\}$ is finite. By Theorem~\ref{athm:supp-free-names}, we have $a\notin \supp{}{\abs^{\nalg{F}}(a,\lin{g})}$. Define $D_2 = \{a\}\cup\supp{}{\abs^{\nalg{F}}(a,\lin{g})}$.  Then, for any atom $\atnew{c_1}\notin D_2$, it follows that $\newswap{a}{c_1}\act \abs^{\nalg{F}}(a,\lin{g}) = \abs^{\nalg{F}}(a,\lin{g})$. Therefore, $\A\setminus D_2 \subseteq D_1$, which is a contradiction with the fact that $D_1$ is finite.

         \item  Given $g_1,g_2\in \F(\Sigma\cup\D)$. For each $\tf{f^C}$ in $\Sigma$, note that we have
         \begin{prooftree}
             \AxiomC{ }
             \RightLabel{(refl)}
             \UnaryInfC{$~\vdash g_1 \aeq{C} g_1$}
             \AxiomC{ }
             \RightLabel{(refl)}
             \UnaryInfC{$~\vdash g_2 \aeq{C} g_2$}
             \RightLabel{$(\frule{\faeq{C}}{\tf{f^C}})$}
             \BinaryInfC{$~\vdash \tf{f^C}(g_1,g_2) \aeq{C} \tf{f^C}(g_2,g_1)$}
         \end{prooftree}
         Thus, $\tf{f^C}(g_1,g_2) \sim \tf{f^C}(g_2,g_1)$ and hence $\lin{\tf{f^C}(g_1,g_2)} = \lin{\tf{f^C}(g_2,g_1)}$. Consequently, $f^{\C,\nalg{F}}(\lin{g_1},\lin{g_2}) = f^{\C,\nalg{F}}(\lin{g_2},\lin{g_1})$, proving the result.
    \end{itemize}
\end{proof}

$\nalg{F}$ will be called the {\em free-term model of $\C$}.

\subsubsection*{Useful auxiliary results.}
Below, we present some results from nominal set theory that will be useful:

\begin{itemize}
    \item If $\nom{X}$ is a nominal set, then the powerset $\pow{}{\nom{X}}$ with the action $\pi\act S = \{\pi\act x \mid x \in S\}$, is not necessarily a nominal set. However, the restriction of $\pow{}{\nom{X}}$,
    \[
         \pow{\tf{fs}}{\nom{X}} = \{S\subseteq \nom{X} \mid \text{$S$ is finitely supported}\},
    \]
    equipped with the same action, is a nominal set.

    \item Suppose $S$ is a set, all of whose elements have finite support. If $\bigcup \{\supp{}{x} \mid x\in S\}$ is finite then $\supp{}{S}$ exists and
    \(
        \supp{}{S} = \bigcup \{\supp{}{x} \mid x\in S\}.
    \)
    The proof can be found in~\cite{DBLP:journals/bsl/Gabbay11}.
\end{itemize}

\begin{lemma}\label{lemma:power-set-algebra}
    For all natural $n\geq 1$, the nominal set $\pow{\tf{fs}}{\A^n}$ is a model of $\C$ when equipped with the following structure:
\begin{enumerate}
    \item $\atom(a) = \{(a,\ldots,a)\}$ for all $a\in\A$

    \item $ \abs(a,S) = S\setminus\{x\in S\mid a\in\supp{}{x}\}$ for all $a\in\A$ and all $S\in \pow{\tf{fs}}{\A^n}$

    \item Every $\tf{f}:n$ is mapped to $f(S_1,\ldots,S_n) = S_1\cup\ldots\cup S_n$ for all $S_i\in  \pow{\tf{fs}}{\A^n}$.
\end{enumerate}
\end{lemma}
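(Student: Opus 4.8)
We must show that for every $n \geq 1$, the nominal set $\pow{\tf{fs}}{\A^n}$, equipped with the interpretation of atoms $\atom(a) = \{(a,\ldots,a)\}$, the abstraction map $\abs(a,S) = S \setminus \{x \in S \mid a \in \supp{}{x}\}$, and each $\tf{f}:n$ interpreted as the union $f(S_1,\ldots,S_n) = S_1 \cup \cdots \cup S_n$, is a model of $\C$.

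\textbf{Plan.} The plan is to verify, one by one, the five clauses in the definition of a commutative nominal $\Sigma$-algebra. First I would recall (using the auxiliary result cited just above) that $\pow{\tf{fs}}{\A^n}$ is a nominal set under the pointwise action $\pi \act S = \{\pi \act x \mid x \in S\}$, where $\A^n$ carries the diagonal permutation action $\pi \act (a_1,\ldots,a_n) = (\pi(a_1),\ldots,\pi(a_n))$; note $\supp{}{(a_1,\ldots,a_n)} = \{a_1,\ldots,a_n\}$, so $\supp{}{S} = \bigcup\{\{a_1,\ldots,a_n\} : (a_1,\ldots,a_n) \in S\}$ whenever this union is finite, which holds exactly for the finitely supported $S$. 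Next I would check equivariance of each interpretation map: for $\atom$, $\pi \act \{(a,\ldots,a)\} = \{(\pi(a),\ldots,\pi(a))\} = \atom(\pi(a))$; for $f$, equivariance of union is immediate since $\pi \act (S_1 \cup \cdots \cup S_n) = (\pi \act S_1) \cup \cdots \cup (\pi \act S_n)$; injectivity of $\atom$ is clear since distinct atoms give distinct singletons. The commutativity clause (item 5 of the algebra definition) holds trivially because set union is commutative: $f(S_1,S_2) = S_1 \cup S_2 = S_2 \cup S_1 = f(S_2,S_1)$ — so here every $\tf{f}:n$ (including any commutative ones) is already interpreted by a symmetric operation, making the $n=2$ case a model of $\C$.

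\textbf{The two substantive points} are the well-definedness/equivariance of $\abs$ and the fixed-point condition $\new \atnew{c}.\, \newswap{a}{c} \act \abs(a,S) = \abs(a,S)$. For well-definedness I would check $\abs(a,S) \in \pow{\tf{fs}}{\A^n}$: writing $\abs(a,S) = \{x \in S \mid a \notin \supp{}{x}\}$, its support is contained in $\supp{}{S}$ (since it is a subset of $S$ that is definable from $S$ and $a$), hence finite, so it lies in $\pow{\tf{fs}}{\A^n}$. For equivariance of $\abs$, I would compute $\pi \act \abs(a,S) = \{\pi \act x \mid x \in S,\ a \notin \supp{}{x}\}$ and, using that $\supp{}{\pi \act x} = \pi \act \supp{}{x}$ so that $a \notin \supp{}{x} \iff \pi(a) \notin \supp{}{\pi \act x}$, rewrite this as $\{y \in \pi \act S \mid \pi(a) \notin \supp{}{y}\} = \abs(\pi(a), \pi \act S)$, as required. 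Finally, for the fixed-point identity: $a \notin \supp{}{\abs(a,S)}$ because $\abs(a,S) \subseteq \{x \mid a \notin \supp{}{x}\}$, so no element of $\abs(a,S)$ has $a$ in its support, giving $\supp{}{\abs(a,S)} \subseteq \supp{}{S}\setminus\{a\}$; then for any $\atnew{c}$ with $\atnew{c} \notin \supp{}{\abs(a,S)}$ (cofinitely many), the swapping $\newswap{a}{c}$ lies in $\Fix{\supp{}{\abs(a,S)}}$ and hence fixes $\abs(a,S)$, which is exactly $\new \atnew{c}.\, \newswap{a}{c} \act \abs(a,S) = \abs(a,S)$ by the equivalences in~(\ref{eq:fresh_fix}).

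\textbf{Main obstacle.} None of the steps is deep; the only point requiring slight care is bookkeeping with supports of subsets — specifically checking that $\abs(a,S)$ is genuinely finitely supported and that $a \notin \supp{}{\abs(a,S)}$. Here it is cleanest to argue that the map $(a,S) \mapsto \abs(a,S)$ is equivariant (shown above) and that $\abs(a,S)$ is fixed by every $\pi \in \Fix{\supp{}{S} \cup \{a\}}$ — indeed such $\pi$ fixes $a$ and fixes $S$ setwise while preserving supports, so it permutes $\{x \in S : a \notin \supp{}{x}\}$ onto itself — whence $\supp{}{\abs(a,S)} \subseteq \supp{}{S} \cup \{a\}$ is finite; combined with the observation that $a$ belongs to no element's support in $\abs(a,S)$, and that removing $a$ from a finite supporting set still supports the set (as no element mentions $a$), we get $a \notin \supp{}{\abs(a,S)}$. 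With these support facts in hand, every clause is routine.
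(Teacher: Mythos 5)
Your overall route is the same as the paper's: a clause-by-clause verification of the algebra axioms, with equivariance of the interpretation maps, injectivity of $\atom$, commutativity of union, and the abstraction identity $\new\atnew{c}.\newswap{a}{c}\act\abs(a,S)=\abs(a,S)$ as the only non-trivial point, which both you and the paper reduce to the claim $a\notin\supp{}{\abs(a,S)}$. The routine clauses (equivariance, injectivity, well-definedness of $\abs$, symmetry of $\cup$) are handled correctly and in more detail than the paper, which simply asserts them.

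However, your justification of the key claim $a\notin\supp{}{\abs(a,S)}$ has a genuine gap. You infer ``$a$ lies in no element's support, hence $a\notin\supp{}{\abs(a,S)}$'', and in your final paragraph ``removing $a$ from a finite supporting set still supports the set, as no element mentions $a$''. Neither inference is valid for infinite finitely supported sets: the support of a set is \emph{not} the union of the supports of its elements unless that union is finite --- the paper's own auxiliary result gives $\supp{}{S}=\bigcup\{\supp{}{x}\mid x\in S\}$ only under that finiteness hypothesis, as a one-way implication. Concretely, take $n=1$ and $S=\A\in\pow{\tf{fs}}{\A}$, which is equivariant, so $\supp{}{S}=\emptyset$. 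Then $\abs(a,S)=\A\setminus\{a\}$, whose support is $\{a\}$, and $(a\ c)\act(\A\setminus\{a\})=\A\setminus\{c\}\neq\A\setminus\{a\}$ for every $c\neq a$, so the abstraction identity fails for this $S$. (Your side remark that the finitely supported subsets are exactly those with $\bigcup_{x\in S}\supp{}{x}$ finite fails for the same reason.) The argument does go through for \emph{uniformly supported} $S$, i.e.\ those for which $\bigcup_{x\in S}\supp{}{x}$ is finite --- which covers the finite orbits that the completeness proof actually feeds into this model, but not all of $\pow{\tf{fs}}{\A^n}$. You are in good company here: the paper's own proof disposes of this point with ``$a\notin\supp{}{\abs(a,S)}$ by construction'' and shares the gap; but a correct write-up must either restrict the carrier (e.g.\ to uniformly supported subsets) or supply a genuinely different argument for the abstraction clause.
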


\begin{proof}
    It's not difficult to see that:
   \begin{itemize}
       \item all mappings are equivariant.
       \item $\atom$ is injective.
       \item $\abs$ satisfies $\new\atnew{c}.\newswap{a}{c}\act\abs(a,S) = \abs(a,S)$ for all $a\in\A$ and $S\in  \pow{\tf{fs}}{\A^n}$. This is a consequence of Pitts' equivalence and $a\notin \supp{}{\abs(a,S)}$ by construction.
       \item Each $\tf{f^C}:n$ in the signature is mapped to $f(S_1,S_2) = S_1\cup S_2$ and $f(S_1,S_2) = f(S_2,S_1)$  for all $S_1,S_2\in  \pow{\tf{fs}}{\A^n}$. Thus, it is indeed a model of $\C$.
   \end{itemize}
\end{proof}

\begin{lemma} \label{alemma:valid-fix-point-judgment}
The following hold:
\begin{enumerate}
     \item \label{alemma:domain-preservation} If $\Upsilon_{\catnew{c}} \vDash \pi\act X \aeq{C} X$, then $\dom{\pi}\subseteq \atm{\Upsilon_{\catnew{c}}|_X}$.
     \item \label{alemma:valid-pi-c-in-generated} If $\Upsilon_{\catnew{c}} \vDash \pi_{\catnew{c}}\act X \aeq{C} X$, then $\pi_{\catnew{c}}\in \Perm{\atm{(\Upsilon_{\catnew{c}}|_X)_{\fresh}}}$.
\end{enumerate}
\end{lemma}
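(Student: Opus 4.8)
The plan is to prove both items by contraposition, in each case producing, from a hypothetical ``offending'' atom, a model of $\C$ together with a valuation that makes the context $\Upsilon_{\catnew{c}}$ valid but refutes the fixed-point equality in the conclusion, contradicting $\vDash$. The model I use throughout is the powerset algebra $\pow{\tf{fs}}{\A}$ (the case $n=1$ of Lemma~\ref{lemma:power-set-algebra}); recall that a \emph{finite} $S\subseteq\A$ satisfies $\supp{}{S}=S$ and $\pi\act S=\{\pi(a)\mid a\in S\}$. The valuation will send every variable other than $X$ to $\emptyset$, so that all primitive constraints on those variables hold vacuously, and the whole argument reduces to choosing $\varsigma(X)$.

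\emph{Item~\ref{alemma:domain-preservation}.} Suppose $a\in\dom{\pi}$ but $a\notin\atm{\Upsilon_{\catnew{c}}|_X}$; by the convention on $\atm{-}$ this means $a\notin\catnew{c}$ and $a\notin\dom{\rho}$ for every $\rho\fix{C}X\in\Upsilon$. Put $\varsigma(X)=\{a\}$. For each $\rho\fix{C}X\in\Upsilon$ and each instantiation of $\catnew{c}$ by atoms avoiding $a$ (a cofinite family of choices) the renamed permutation still fixes $a$, so $\newc{c}{}.\rho\act\{a\}=\{a\}$; hence $\Int{\Upsilon_{\catnew{c}}}{\pow{\tf{fs}}{\A}}{\varsigma}$ is valid. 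But $a\in\dom{\pi}$ gives $\pi(a)\neq a$, so $\pi\act\{a\}=\{\pi(a)\}\neq\{a\}$, contradicting $\Upsilon_{\catnew{c}}\vDash\pi\act X\aeq{C}X$. Hence $\dom{\pi}\subseteq\atm{\Upsilon_{\catnew{c}}|_X}$.

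\emph{Item~\ref{alemma:valid-pi-c-in-generated}.} Suppose $a\in\dom{\pi_{\catnew{c}}}$ but $a\notin\atm{(\Upsilon_{\catnew{c}}|_X)_{\fresh}}$. Let $R=\{\rho_{\neg\catnew{c}}\mid\rho\fix{C}X\in\Upsilon\}$, $G=\pair{R}$ (a finite group), and set $\varsigma(X)=O$, the $G$-orbit of $a$. Then $O$ is finite, $G$-invariant (so in particular every $\rho_{\neg\catnew{c}}$ fixes $O$), and $O\subseteq\{a\}\cup\bigcup\{\dom{\rho_{\neg\catnew{c}}}\mid\rho\fix{C}X\in\Upsilon\}$. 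Here the normal-form assumption is the crucial ingredient: since $\Upsilon_{\catnew{c}}$ is in normal form with respect to (R1) and (R2), each cycle of any $\rho_{\neg\catnew{c}}$ is disjoint from $\dom{\rho'_{\catnew{c}}}$ for every $\rho'\fix{C}X\in\Upsilon$, and it is disjoint from $\catnew{c}$ by construction; combined with $a\notin\atm{(\Upsilon_{\catnew{c}}|_X)_{\fresh}}$ this yields $O\cap\atm{(\Upsilon_{\catnew{c}}|_X)_{\fresh}}=\emptyset$, so $\supp{}{O}=O$ is disjoint from $\catnew{c}$ and from the domains of all the ``freshness'' cycles of $\Upsilon_{\catnew{c}}|_X$. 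Using Lemma~\ref{lemma:pitts-eq-generalized} one then checks that $\Int{\Upsilon_{\catnew{c}}}{\pow{\tf{fs}}{\A}}{\varsigma}$ is valid: each $\rho\fix{C}X\in\Upsilon$ is, by (R1)-normality, either $\rho_{\neg\catnew{c}}$ (which fixes $O$ because $O$ is a $G$-orbit) or $\rho_{\catnew{c}}$ (which, once $\catnew{c}$ is instantiated by fresh atoms, has domain disjoint from $O$). Validity of $\Upsilon_{\catnew{c}}\vDash\pi_{\catnew{c}}\act X\aeq{C}X$ then forces $\pi_{\catnew{c}}\act O=O$. But by the definition of $\pi_{\catnew{c}}$ the cycle $\eta$ of $\pi_{\catnew{c}}$ through $a$ contains some $\atnew{c_i}\in\catnew{c}$; since $\atnew{c_i}\notin\supp{}{O}$ while $a\in\supp{}{O}$, Lemma~\ref{lemma:generated-group}(1) gives $\dom{\eta}\cap\supp{}{O}=\emptyset$, contradicting $a\in\dom{\eta}\cap O$. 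Hence $\dom{\pi_{\catnew{c}}}\subseteq\atm{(\Upsilon_{\catnew{c}}|_X)_{\fresh}}$, i.e.\ $\pi_{\catnew{c}}\in\Perm{\atm{(\Upsilon_{\catnew{c}}|_X)_{\fresh}}}$.

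The main obstacle is the choice of $\varsigma(X)$ in item~\ref{alemma:valid-pi-c-in-generated}. The obvious candidate $\{a\}$ fails because $a$ may occur inside pure fixed-point cycles of $\Upsilon_{\catnew{c}}|_X$, which the context forces to fix $\varsigma(X)$; passing to the $G$-orbit of $a$ repairs this, but then one must show the orbit never reaches the ``freshness'' atoms of the context, which is exactly where the (R1)/(R2) normal-form assumption is used, while simultaneously keeping $\catnew{c}$ fresh for $\varsigma(X)$ so that Lemma~\ref{lemma:pitts-eq-generalized} applies when verifying context validity. Getting this interaction right, and threading the generative reading of $\new$ through it, is the delicate part; item~\ref{alemma:domain-preservation} is comparatively routine.
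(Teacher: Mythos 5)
Your proof is correct, and it follows the same overall strategy as the paper's: argue by contraposition, build a countermodel in a finite-powerset nominal algebra, and use the normal-form assumption on contexts to keep the ``freshness'' atoms separated from the ``pure fixed-point'' atoms. Item~\ref{alemma:domain-preservation} is essentially identical to the paper's (the paper sends the other variables to fresh singletons rather than $\emptyset$, which changes nothing). The one genuine difference is the witness for item~\ref{alemma:valid-pi-c-in-generated}: you interpret $X$ as the orbit of the offending atom $a$ under $\pair{\{\rho_{\neg\catnew{c}}\mid \rho\fix{C}X\in\Upsilon\}}$, whereas the paper simply takes $\varsigma^*(X)=\atm{(\Upsilon_{\catnew{c}}|_X)_{\fix{C}}}\setminus\catnew{c}$, the \emph{entire} set of pure fixed-point atoms. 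The paper's choice makes invariance under each $\rho_{\neg\catnew{c}}$ immediate (a permutation maps any superset of its domain to itself) and needs no orbit machinery, at the cost of a slightly larger witness; your orbit is the minimal invariant set containing $a$, but you then need exactly the same (R1)/(R2) disjointness facts to show it avoids $\atm{(\Upsilon_{\catnew{c}}|_X)_{\fresh}}$, so nothing is gained or lost. The paper also factors the final contradiction through an intermediate claim ($\dom{\pi_{\catnew{c}}}\cap\supp{}{\varsigma(X)}=\emptyset$ whenever the context is valid, via Lemma~\ref{lemma:pitts-eq-generalized} and Lemma~\ref{lemma:generated-group}), where you apply Lemma~\ref{lemma:generated-group}(1) directly to the cycle of $\pi_{\catnew{c}}$ through $a$; these are the same argument in different packaging.
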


\begin{proof}
     \begin{enumerate}
         \item If $\pi = \id$, then the result follows trivially. Assume that $\pi\neq \id$. Suppose, by contradiction, that there is an atom $a\in \dom{\pi}$ such that $a\notin\atm{\Upsilon_{\catnew{c}}|_X}$. Consider the algebra $\nalg{A} = \pow{\tf{fin}}{\A}$ and the valuation $\varsigma$ given by $\varsigma(X) = \{a\}$ and $\varsigma(Y) = \{b\}$ for all $Y\not\equiv X$, where $b$ is fresh. Then, it is not difficult to see that $\Int{\Upsilon_{\catnew{c}}}{\nalg{A}}{\varsigma}$ is valid. By hypothesis, this means that $\pi\act \Int{X}{\nalg{A}}{\varsigma} = \Int{X}{\nalg{A}}{\varsigma}$, i.e., $\pi(a) = a$, which in turn implies that $a\notin \dom{\pi}$, a contradiction.

        \item Let's start proving the following claims

        \begin{claim}[1]
            We claim that $\dom{\pi_{\catnew{c}}}\cap \supp{}{\Int{X}{\nalg{A}}{\varsigma}} = \emptyset$ for all $\catnew{c}\cap \supp{}{\Int{X}{\nalg{A}}{\varsigma}} = \emptyset$. In fact, $\Upsilon_{\catnew{c}} \vDash \pi_{\catnew{c}}\act X \aeq{C} X$ means that for all models $\nalg{A}$ and all valuations $\varsigma$,
        \begin{equation}\label{eq:pi-c-in-generated}
            \text{If $\Int{\Upsilon_{\catnew{c}}}{\nalg{A}}{\varsigma}$ is valid, then $\pi_{\catnew{c}}\act \Int{X}{\nalg{A}}{\varsigma} = \Int{X}{\nalg{A}}{\varsigma}$.}
        \end{equation}
        $\Int{\Upsilon_{\catnew{c}}}{\nalg{A}}{\varsigma}$ being valid, by definition, means that $\newc{c}{}. \pi'\act \Int{Y}{\nalg{A}}{\varsigma} = \Int{Y}{\nalg{A}}{\varsigma}$ holds for all $\pi'\fix{C} Y\in (\Upsilon_{\catnew{c}})_{\fresh}$. By generalized Pitts's equivalence (Lemma~\ref{lemma:pitts-eq-generalized}) we have, for all $\pi'\fix{C} Y\in (\Upsilon_{\catnew{c}})_{\fresh}$, that
        \begin{itemize}
            \item $\dom{\pnew{\pi'}{\catnew{c}}}\cap \supp{}{\Int{Y}{\nalg{A}}{\varsigma}} = \emptyset$ for all $\catnew{c}$ such that $\catnew{c}\cap(\bigcup\supp{}{\Int{Y}{\nalg{A}}{\varsigma}}) = \emptyset$. (note that $\pi' = \pnew{\pi'}{\catnew{c}}$ for all $\pi'\fix{C} Y\in (\Upsilon_{\catnew{c}})_{\fresh}$.)
        \end{itemize}
        By (\ref{eq:pi-c-in-generated}), we have $\pi_{\catnew{c}}\act \Int{X}{\nalg{A}}{\varsigma} = \Int{X}{\nalg{A}}{\varsigma}$ holds for all $\catnew{c}\cap\supp{}{\Int{X}{\nalg{A}}{\varsigma}} = \emptyset$. As a consequence of Lemma~\ref{lemma:generated-group}, we have $\dom{\pi_{\catnew{c}}}\cap \supp{}{\Int{X}{\nalg{A}}{\varsigma}} = \emptyset$.
        \end{claim}

        \begin{claim}[2]
            We claim that  $\dom{\pi_{\catnew{c}}}\subseteq \atm{(\Upsilon_{\catnew{c}}|_X)_{\fresh}}$.  Suppose, by contradiction, that there is an atom $a\in \dom{\pi_{\catnew{c}}}$ such that $a\notin \atm{(\Upsilon_{\catnew{c}}|_X)_{\fresh}}$. Then $a\in \atm{(\Upsilon_{\catnew{c}}|_X)_{\fix{C}}}\setminus{\catnew{c}}$ by item~\ref{alemma:domain-preservation}. Consider the valuation $\varsigma^*$ given by $\varsigma^*(X) = \atm{(\Upsilon_{\catnew{c}}|_X)_{\fix{C}}}\setminus{\catnew{c}}$ and $\varsigma^*(Y) = \{b\}$ for all $Y\not\equiv X$, where $b$ is fresh. Thus, $\supp{}{\Int{X}{\nalg{A}}{\varsigma}} = \atm{(\Upsilon_{\catnew{c}}|_X)_{\fix{C}}}\setminus{\catnew{c}}$ and $\supp{}{\Int{Y}{\nalg{A}}{\varsigma}} = \{b\}$ for all $Y\not\equiv X$. Moreover, it's not hard to see that $\Int{\Upsilon_{\catnew{c}}}{\pow{\tf{fin}}{\A}}{\varsigma^*}$ is valid. Therefore, $a \in \dom{\pi_{\catnew{c}}}\cap \supp{}{\Int{X}{\nalg{A}}{\varsigma}}$, which is a contradiction with Claim (1).
        \end{claim}

        As a consequence of Claim (2), we have that $\pi_{\catnew{c}}\in \Perm{\atm{(\Upsilon_{\catnew{c}}|_X)_{\fresh}}}$.
     \end{enumerate}
\end{proof}

The next lemma is the most challenging as it establishes a characterisation of the permutations on semantic judgements of the form $\Upsilon_{\catnew{c}} \vDash \pi\act X\aeq{C} X$, which by definition, are the judgements that are valid in every model and under every valuation: It must be the case that the permutation is generated by the permutations in the context.

\begin{lemma}[Characterisation of valid fixed-points]\label{alemma:valid-judge-pi-generated} \hfill

    If $\Upsilon_{\catnew{c}} \vDash \pi\act X\aeq{C} X$, then $\pi\in \PN{}{\Upsilon_{\catnew{c}}|_X}$.
\end{lemma}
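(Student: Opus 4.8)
By the Important Assumption $\Upsilon_{\catnew{c}}$, and hence its restriction $\Upsilon_{\catnew{c}}|_X$, is normalised, so $\Upsilon_{\catnew{c}}|_X = (\Upsilon_{\catnew{c}}|_X)_{\fresh}\uplus(\Upsilon_{\catnew{c}}|_X)_{\fix{C}}$. Write $B_f := \atm{(\Upsilon_{\catnew{c}}|_X)_{\fresh}}$, $B_c := \atm{(\Upsilon_{\catnew{c}}|_X)_{\fix{C}}}$ and $H := \pair{\perm{}{(\Upsilon_{\catnew{c}}|_X)_{\fix{C}}}}$, so that $\PN{}{\Upsilon_{\catnew{c}}|_X} = \Perm{B_f}\alert{\circ}H$ by~(\ref{permutation-group-X}). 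Two bookkeeping remarks, both consequences of exhaustive application of rule \textbf{(R2)} together with the fact that the constraints in $(\Upsilon_{\catnew{c}}|_X)_{\fix{C}}$ mention no $\new$-quantified atom: $B_f\cap B_c = \emptyset$ and $B_c\cap\catnew{c} = \emptyset$. Moreover $\dom{\rho}\subseteq B_c$ for every $\rho\in H$, so $\Perm{B_f}$ and $H$ commute elementwise and $\Perm{B_f}\alert{\circ}H$ is a subgroup of $\Perm{\A}$. \emph{Step 1 (splitting the hypothesis).} Decompose $\pi = \pi_{\catnew{c}}\circ\pi_{\neg\catnew{c}}$ as in Definition~\ref{def:splitting}; the two factors commute and $\dom{\pi_{\neg\catnew{c}}}\cap\catnew{c} = \emptyset$. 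By the semantics of judgements and generalised Pitts' equivalence (Lemma~\ref{lemma:pitts-eq-generalized}), reasoning exactly as in the justification of rule \textbf{(R1)}, the hypothesis $\Upsilon_{\catnew{c}}\vDash\pi\act X\aeq{C} X$ is equivalent to the conjunction of $\Upsilon_{\catnew{c}}\vDash\pi_{\catnew{c}}\act X\aeq{C} X$ and $\Upsilon_{\catnew{c}}\vDash\pi_{\neg\catnew{c}}\act X\aeq{C} X$; also $\dom{\pi}\subseteq\atm{\Upsilon_{\catnew{c}}|_X} = B_f\cup B_c$ by Lemma~\ref{alemma:valid-fix-point-judgment}(\ref{alemma:domain-preservation}). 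The first conjunct and Lemma~\ref{alemma:valid-fix-point-judgment}(\ref{alemma:valid-pi-c-in-generated}) give $\pi_{\catnew{c}}\in\Perm{B_f}$. It therefore suffices to show $\pi_{\neg\catnew{c}}\in\Perm{B_f}\alert{\circ}H$, for then $\pi\in\Perm{B_f}\alert{\circ}(\Perm{B_f}\alert{\circ}H) = \Perm{B_f}\alert{\circ}H = \PN{}{\Upsilon_{\catnew{c}}|_X}$.

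\emph{Step 2 (a separating model --- the crux).} Put $n := |B_c|$ and choose a tuple $\vec{e} = (e_1,\dots,e_n)$ of pairwise distinct atoms listing $B_c$. Consider the model $\nalg{A} := \pow{\tf{fs}}{\A^n}$ of $\C$ (Lemma~\ref{lemma:power-set-algebra}) and the valuation $\varsigma$ with $\varsigma(X) := \{\rho\act\vec{e}\mid\rho\in H\}$ and, for $Y\not\equiv X$, $\varsigma(Y) := \{(d_Y,\dots,d_Y)\}$, the $d_Y$ being pairwise distinct atoms chosen outside $\bigcup\{\dom{\rho}\mid\rho\fix{C} Y\in\Upsilon\}$. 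Since $H$ is finite with $\dom{\rho}\subseteq B_c$ for all $\rho\in H$, the set $\varsigma(X)$ is finite with $\supp{}{\varsigma(X)} = B_c$, and $\supp{}{\varsigma(Y)} = \{d_Y\}$. Then $\Int{\Upsilon_{\catnew{c}}}{\nalg{A}}{\varsigma}$ is valid: for each $\rho\fix{C} X\in\Upsilon$ its $\fix{C}$-projection $\rho_{\neg\catnew{c}}\in H$ stabilises $\varsigma(X)$, while its $\fresh$-projection $\rho_{\catnew{c}}$ has $\dom{\rho_{\catnew{c}}}\subseteq B_f$ disjoint from $\supp{}{\varsigma(X)} = B_c$, so by Lemma~\ref{lemma:pitts-eq-generalized} the constraint $\new\catnew{c}.\rho\act\varsigma(X)=\varsigma(X)$ holds; the constraints on the other variables hold by the choice of the $d_Y$. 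Hence the second conjunct of Step 1 yields $\pi_{\neg\catnew{c}}\act\varsigma(X) = \varsigma(X)$.

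\emph{Step 3 (reading off the decomposition).} From $\pi_{\neg\catnew{c}}\act\varsigma(X) = \varsigma(X)$ no cycle of $\pi_{\neg\catnew{c}}$ meets both $B_f$ and $B_c$: if a cycle $\eta$ of $\pi_{\neg\catnew{c}}$ contained an atom of $B_f$, that atom would lie outside $\supp{}{\varsigma(X)} = B_c$, so Lemma~\ref{lemma:generated-group}(1) forces $\dom{\eta}\cap B_c = \emptyset$. Together with $\dom{\pi_{\neg\catnew{c}}}\subseteq B_f\cup B_c$ this shows every cycle of $\pi_{\neg\catnew{c}}$ lies wholly in $B_f$ or wholly in $B_c$; collecting them, $\pi_{\neg\catnew{c}} = \sigma\circ\mu$ with $\sigma\in\Perm{B_f}$ and $\dom{\mu}\subseteq B_c$. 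As $\sigma$ fixes every tuple $\rho\act\vec{e}$ (all coordinates in $B_c$), we get $\mu\act\varsigma(X) = \pi_{\neg\catnew{c}}\act\varsigma(X) = \varsigma(X)$, so $\mu\act\vec{e}\in\varsigma(X)$, i.e.\ $\mu\act\vec{e} = \rho_0\act\vec{e}$ for some $\rho_0\in H$. Since $\vec{e}$ lists $B_c$ with distinct coordinates and $\dom{\mu},\dom{\rho_0}\subseteq B_c$, the permutations $\mu$ and $\rho_0$ agree on $B_c$, hence everywhere, so $\mu = \rho_0\in H$. Thus $\pi_{\neg\catnew{c}} = \sigma\circ\mu\in\Perm{B_f}\alert{\circ}H$, which by Step 1 completes the argument.

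\emph{Main obstacle.} The difficulty is concentrated in Step 2: one must choose the powerset arity $n$ and the tuple $\vec{e}$ so that the $H$-orbit set $\varsigma(X)$ has support exactly $B_c$ and \emph{simultaneously} validates the freshness part and the pure fixed-point part of $\Upsilon_{\catnew{c}}|_X$, and then, in Step 3, convert the set-membership $\mu\act\vec{e}\in H\act\vec{e}$ back into the group-theoretic identity $\mu\in H$. The supporting bookkeeping --- that $B_f$, $B_c$ and $\catnew{c}$ are pairwise disjoint in the required ways, which is exactly where normalisation via \textbf{(R1)}--\textbf{(R2)} is needed --- also requires care. Everything else is a routine application of Lemmas~\ref{lemma:generated-group}, \ref{lemma:pitts-eq-generalized} and \ref{alemma:valid-fix-point-judgment}.
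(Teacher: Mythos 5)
Your proof is correct and takes essentially the same route as the paper's: split $\pi$ into $\pi_{\catnew{c}}$ and $\pi_{\neg\catnew{c}}$ via generalised Pitts' equivalence, dispatch $\pi_{\catnew{c}}$ with Lemma~\ref{alemma:valid-fix-point-judgment}, and then pin down $\pi_{\neg\catnew{c}}$ by instantiating $X$ in $\pow{\tf{fs}}{\A^n}$ with the $H$-orbit of a tuple enumerating the atoms of $(\Upsilon_{\catnew{c}}|_X)_{\fix{C}}$, converting orbit-stabilisation back into membership in $H$. Your Step 3 is marginally more careful than the paper's (you explicitly factor out a possible $\Perm{B_f}$-component of $\pi_{\neg\catnew{c}}$ rather than asserting $\dom{\pi_{\neg\catnew{c}}}\subseteq B_c$ outright), but the argument is the same.
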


\begin{proof}
If $\pi = \id$, then the result follows trivially. Assume that $\pi\neq \id$. Here are some useful information:

\begin{itemize}
    \item By generalized Pitts' equivalence (Lemma~\ref{lemma:pitts-eq-generalized}), from $\Upsilon_{\catnew{c}} \vDash \pi\act X\aeq{C} X$, we have $ \Upsilon_{\catnew{c}} \vDash \pi_{\catnew{c}}\act X\aeq{C} X$ and $\Upsilon_{\catnew{c}} \vDash \pi_{\neg\catnew{c}}\act X\aeq{C} X$.

    \item By Lemma~\ref{alemma:valid-fix-point-judgment}(\ref{alemma:valid-pi-c-in-generated}), we have $\pi_{\catnew{c}}\in \Perm{\atm{(\Upsilon_{\catnew{c}}|_X)_{\fresh}}}$.

    \item By Lemma~\ref{alemma:valid-fix-point-judgment}(\ref{alemma:domain-preservation}),  $\dom{\pi_{\neg\catnew{c}}} \subseteq\atm{(\Upsilon_{\catnew{c}}|_X)_{\fix{C}}}\setminus{\catnew{c}}$.

    \item To simplify the proof we abbreviate $\PN{}{\Upsilon_{\catnew{c}}|_X} = \Perm{(\Upsilon_{\catnew{c}}|_X)_{\fresh}}\pair{(\Upsilon_{\catnew{c}}|_X)_{\fix{C}}}$.
\end{itemize}

Then, it is sufficient to prove that $\pi_{\neg\catnew{c}} \in \pair{(\Upsilon_{\catnew{c}}|_X)_{\fix{C}}}$. Let $(a_1,\ldots,a_n)$ be a list enumerating all atoms being mentioned in all permutations of $(\Upsilon_{\catnew{c}}|_X)_{\fix{C}}$. Then $(a_1,\ldots,a_n) \in \A^n$ where $\A^n$ is the $n$-ary Cartesian power of the nominal set $\A$. The set $\A^n$ is a nominal set when equipped with the usual pointwise action. The orbit of $(a_1,\ldots,a_n)$ over $\pair{(\Upsilon_{\catnew{c}}|_X)_{\fix{C}}}$, is the set ${\cal O} = \{\pi\act (a_1,\ldots,a_n) \mid \pi\in \pair{(\Upsilon_{\catnew{c}}|_X)_{\fix{C}}}\}$ .
Since every element of ${\cal O}$ has finite support and $\bigcup_{x\in {\cal O}} \supp{}{x}  = \{a_1,\ldots,a_n\}$, it follows that ${\cal O}$ itself has finite support and $\supp{}{{\cal O}} = \bigcup_{x\in {\cal O}} \supp{}{x}$. This proves that ${\cal O}\in \pow{\tf{fs}}{\A^n}$. Take the model $\nalg{A}$ as $\pow{\tf{fs}}{\A^n}$ and the valuation
\[
    \varsigma(Y) = \left\{\begin{array}{lc}
        {\cal O}, & Y\equiv X \\
        \{(d_Y,\ldots,d_Y)\} & Y\not\equiv X
    \end{array}\right.
\]
where $d_Y$ is a fresh atom for $\Upsilon_{\catnew{c}}|_Y$.

\begin{itemize}
    \item For every $\rho \fix{C} Y \in \Upsilon_{\catnew{c}}$ such that $Y\not\equiv X$, it holds that $\dom{\rho} \cap \supp{}{\varsigma(Y)} = \emptyset$. Consequently, by the definition of support, we have $\rho \act \varsigma(Y) = \varsigma(Y)$ and so $\newc{c}{}. \rho \act \varsigma(Y) = \varsigma(Y).$

    For  those $\rho\fix{C} Y\in (\Upsilon_{\catnew{c}})_{\fresh}$, we have that $\rho_{\catnew{c}} = \rho$, and given that $\rho \act \varsigma(Y) = \varsigma(Y)$ is true for all $\catnew{c}$ satisfying $\catnew{c} \cap \bigcup_{Y} \supp{}{\varsigma(Y)} = \emptyset$, it follows, by the generalized Pitts' equivalence (Lemma~\ref{lemma:pitts-eq-generalized}), that $\newc{c}{}.\rho \act \varsigma(Y) = \varsigma(Y)$.

    \item Now, for those $\rho\fix{C} X\in \Upsilon_{\fresh}$, we have that $\dom{\rho}\cap \supp{}{\cal O} = \emptyset$ and hence by the same reason, we have $\newc{c}{}.\rho\act \varsigma(X) = \varsigma(X)$.

    \item By the definition of ${\cal O}$, we have that $\rho \act {\cal O} = {\cal O}$ holds for all $(\Upsilon_{\catnew{c}}|_X)_{\fix{C}}$. In particular, $\rho \act {\cal O} = {\cal O}$ for all permutation $\rho$ such that $\rho\fix{C} X\in (\Upsilon_{\catnew{c}}|_X)_{\fix{C}}$. So $\newc{c}{}.\rho\act \varsigma(X) = \varsigma(X)$.
\end{itemize}

Consequently, $\Int{\Upsilon_{\catnew{c}}}{\nalg{A}}{\varsigma}$ is valid.

From $\Upsilon \vDash \pi_{\neg\catnew{c}}\act X \aeq{C} X$, we have $\pi_{\neg\catnew{c}}\act {\cal O} = {\cal O}$. Then for all element $x\in{\cal O}$ there is an element $y\in{\cal O}$ such that $\pi_{\neg\catnew{c}}\act x = y$. By the definition of ${\cal O}$, $x = \pi_x\act (a_1,\ldots,a_n)$ and $y = \pi_y\act (a_1,\ldots,a_n)$ where $\pi_x,\pi_y\in \pair{(\Upsilon_{\catnew{c}}|_X)_{\fix{C}}}$. Hence $ \pi_{\neg\catnew{c}}\act x = y$ implies $\pi_{\neg\catnew{c}}\act (\pi_x\act (a_1,\ldots,a_n)) = \pi_y\act (a_1,\ldots,a_n)$, leading to
\[
    (\pi_y^{-1}\circ\pi_{\neg\catnew{c}}\circ\pi_x)\act (a_1,\ldots,a_n) = (a_1,\ldots,a_n).
\]
This implies $(\pi_y^{-1}\circ\pi_{\neg\catnew{c}}\circ\pi_x)(a_i) = a_i$ for all $i=1,\ldots,n$ and thus $\dom{\pi_y^{-1}\circ\pi_{\neg\catnew{c}}\circ\pi_x}\cap\{a_1,\ldots,a_n\} = \emptyset.$

As observed at the beginning, we have $\dom{\pi_{\neg\catnew{c}}}\subseteq \atm{(\Upsilon_{\catnew{c}}|_X)_{\fix{C}}}\setminus\catnew{c}$. Then the only possibility is $\pi_y^{-1}\circ\pi_{\neg\catnew{c}}\circ\pi_x = \id\in \pair{(\Upsilon_{\catnew{c}}|_X)_{\fix{C}}}$. Therefore,
\[
    \pi_{\neg\catnew{c}} = \pi_y\circ(\pi_y^{-1}\circ\pi_{\neg\catnew{c}}\circ\pi_x)\circ \pi_x^{-1} \in \pair{(\Upsilon_{\catnew{c}}|_X)_{\fix{C}}},
\]
Since $\pi = \pi_{\catnew{c}}\circ\pi_{\neg\catnew{c}}$, we obtain $\pi\in \Perm{(\Upsilon_{\catnew{c}}|_X)_{\fresh}}\pair{(\Upsilon_{\catnew{c}}|_X)_{\fix{C}}}$.
\end{proof}

\begin{lemma}\label{alemma:var-judge-valid}
    If $\Upsilon_{\catnew{c}} \vDash t \aeq{C} u$ and $t$ is a suspension, i.e., $t\equiv \pi_1\act X$, then $u$ must be a suspension, i.e.,  $u\equiv \pi_2\act X$, for some $\pi_2$.
\end{lemma}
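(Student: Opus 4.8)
The plan is to argue semantically through the free-term model $\nalg{F}$, exploiting the fact that in $\nalg{F}$ the top-level shape of a term is reflected in its interpretation. Recall that $\Upsilon_{\catnew{c}} \vDash t \aeq{C} u$ means $\Int{t}{\nalg{A}}{\varsigma} = \Int{u}{\nalg{A}}{\varsigma}$ in every model $\nalg{A}$ of $\C$ and every valuation $\varsigma$ making $\Int{\Upsilon_{\catnew{c}}}{\nalg{A}}{\varsigma}$ valid. So it suffices to exhibit one valuation $\varsigma$ in $\nalg{F}$ that validates $\Upsilon_{\catnew{c}}$ and under which $\Int{\pi_1 \act X}{\nalg{F}}{\varsigma}$ has a shape incompatible with every $u$ that is not a suspension on $X$.

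Concretely, for each variable $Z$ I would fix a distinct nullary term-former $\tf{d}_Z \in \D$ (it suffices to do this for the finitely many variables occurring in $\Upsilon_{\catnew{c}}, t, u$) and set $\varsigma(Z) = \lin{\tf{d}_Z()}$. Since $\atm{\tf{d}_Z()} = \emptyset$, Lemma~\ref{alemma:ground-algebra-properties}(\ref{athm:supp-free-names}) gives $\supp{}{\lin{\tf{d}_Z()}} = \emptyset$, hence $\rho \act \lin{\tf{d}_Z()} = \lin{\tf{d}_Z()}$ for every permutation $\rho$; in particular every primitive constraint $\pi \fix{C} Z \in \Upsilon$ (even $\new$-quantified) holds vacuously, so $\Int{\Upsilon_{\catnew{c}}}{\nalg{F}}{\varsigma}$ is valid. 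Moreover $\Int{\pi_1 \act X}{\nalg{F}}{\varsigma} = \pi_1 \act \varsigma(X) = \lin{\pi_1 \act \tf{d}_X()} = \lin{\tf{d}_X()}$, and by invertibility of the derivation rules (Theorem~\ref{thm:miscellaneous}(\ref{thm:inversion})) — $\tf{d}_X$ being a nullary, non-commutative symbol outside $\Sigma$ — any free term $g$ with $\emptyset \vdash \tf{d}_X() \aeq{C} g$ satisfies $g \equiv \tf{d}_X()$, so $\lin{\tf{d}_X()} = \{\tf{d}_X()\}$.

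I would then split on the shape of $u$ and, whenever $u$ is not of the form $\pi_2 \act X$, derive a contradiction with validity. If $u \equiv a$, then $\Int{u}{\nalg{F}}{\varsigma} = \lin{a} = \{a\} \neq \{\tf{d}_X()\}$. If $u \equiv [a]u'$, $u \equiv \tf{f}(u_1,\dots,u_n)$, or $u \equiv \tf{f^C}(u_1,u_2)$, then by Definition~\ref{def:interpretation-of-terms} $\Int{u}{\nalg{F}}{\varsigma}$ is the $\C$-class of, respectively, $[a]h$, $\tf{f}(h_1,\dots,h_n)$, or $\tf{f^C}(h_1,h_2)$ for suitable free terms $h, h_i$; by invertibility and $\Sigma \cap \D = \emptyset$, no such class can contain $\tf{d}_X()$ (the root constructor or head symbol differs), so $\Int{u}{\nalg{F}}{\varsigma} \neq \lin{\tf{d}_X()}$. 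Finally, if $u \equiv \pi_2 \act Y$ with $Y \not\equiv X$, then $\Int{u}{\nalg{F}}{\varsigma} = \pi_2 \act \varsigma(Y) = \lin{\tf{d}_Y()} = \{\tf{d}_Y()\} \neq \{\tf{d}_X()\}$ since $\tf{d}_Y \neq \tf{d}_X$. In every case this contradicts $\Int{t}{\nalg{F}}{\varsigma} = \Int{u}{\nalg{F}}{\varsigma}$, so $u \equiv \pi_2 \act X$ for some $\pi_2$.

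The only point that calls for care — not a real obstacle — is ensuring the chosen valuation validates $\Upsilon_{\catnew{c}}$: interpreting every variable by a $\D$-term of empty support makes all (including $\new$-quantified) fixed-point constraints trivially true, which sidesteps any interaction with the permutation cycles and $\new$-quantified names appearing in $\Upsilon$. The remaining steps are routine appeals to invertibility and to Lemma~\ref{alemma:ground-algebra-properties}.
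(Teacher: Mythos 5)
Your proof is correct, and it follows the same overall strategy as the paper: exhibit a concrete commutative $\Sigma$-algebra and a valuation that validates $\Upsilon_{\catnew{c}}$ while separating the interpretations of $t$ and $u$. The difference is in the choice of countermodel. The paper uses the powerset algebra $\pow{\tf{fin}}{\A}$ with valuation $\varsigma(X)=\{a_1,a_2\}$, $\varsigma(Y)=\{a_1,a_3\}$, and only works out the case $u\equiv\pi_2\act Y$ explicitly, implicitly allowing a different algebra or valuation for each remaining shape of $u$ (indeed the powerset model alone would not discriminate every case, e.g.\ $u\equiv[b]\,\pi\act X$ with $b\notin\{a_1,a_2\}$ gets the same interpretation $\{a_1,a_2\}$ under the stated $\abs$). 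Your free-term model $\nalg{F}$ with $\varsigma(Z)=\lin{\tf{d}_Z()}$ handles all shapes of $u$ uniformly with a single valuation, because empty support makes every ($\new$-quantified) fixed-point constraint hold trivially and the top-level constructor of a term is preserved in its $\C$-equivalence class; the case analysis then reduces to observing that no derivation rule equates terms with different root constructors or head symbols. The only cosmetic remark is that the step ``$\lin{\tf{d}_X()}=\{\tf{d}_X()\}$'' rests on inspecting which rules could possibly conclude $\vdash \tf{d}_X()\aeq{C} g$ (only $(\frule{\faeq{C}}{\tf{f}})$ with the same head symbol, since $\D$ contains no commutative symbols) rather than on invertibility in the strict sense, but this is exactly the kind of argument the paper's Inversion discussion licenses, so the proof goes through.
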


\begin{proof}
        Just observe that for each $u \not\equiv \pi_2\act X$, it is possible to find an algebra $\nalg{A}$ and a valuation $\varsigma$ that validate the context, but $\Int{\pi_1\act X}{\nalg{A}}{\varsigma} \neq \Int{u}{\nalg{A}}{\varsigma}$. Here, we will analyse only the case where $u \equiv \pi_2\act Y$.

        Take $a_1,a_2,a_3\notin \atm{\Upsilon_{\catnew{c}},\pi_1,\pi_2}$. Consider the algebra $\nalg{A} = \pow{\tf{fin}}{\A}$ with the valuation $\varsigma$ defined by $\varsigma(X) = \{a_1,a_2\}, \varsigma(Y) = \{a_1,a_3\}$, and $\varsigma(Z) = \emptyset$ for all $Z\not\equiv  X,Y$. Then $\Int{\Upsilon_{\catnew{c}}}{\nalg{A}}{\varsigma}$ is clearly valid but
        \[
           \pi_1\act \Int{X}{\nalg{A}}{\varsigma} = \{a_1,a_2\} \neq \{a_1,a_3\} =  \pi_2\act \Int{Y}{\nalg{A}}{\varsigma}.
       \]
\end{proof}

\subsubsection{Proof of Completeness.}

The rest of the section will be based on the assumption that $\Upsilon_{\catnew{c}}\vDash t \aeq{C} u$.

\begin{definition}
    Let $\mathcal{A}$ be the atoms in $\atm{\Upsilon_{\catnew{c}},t,u}$. Let $\mathcal{X}$ be the variables mentioned anywhere in $\Upsilon_{\catnew{c}}, t$ or $u$.
    \begin{enumerate}
        \item For each $X\in\mathcal{X}$ pick the following data:
    \begin{itemize}
        \item let ${\cal A}_X$ be the set of atoms $a\in{\cal A}\setminus\catnew{c}$ such that there is no $\pi\fix{C} X\in(\Upsilon_{\catnew{c}}|_X)_{\fresh}$ with $a\in \dom{\pi}$.
        \item let $a_{X_1},\ldots,a_{X_{k_X}}$ be the atoms in ${\cal A}_X$ in some arbitrary but fixed order.
        \item let $\tf{d}_X:k_X$ be a fresh term-former.
    \end{itemize}

    \item For each $X \notin \mathcal{X}$, let $\tf{d}_X : 0$ be a term-former. Let $\sf{D}$ be the set of all $\tf{d}$'s (so for each $X\in\V$ we have a $\tf{d}_X\in \sf{D}$).
    \end{enumerate}
\end{definition}

\begin{definition}
    Let $\sigma$ be the following substitution:
    \[
        X\sigma \equiv \left\{\begin{array}{ll}
            \tf{d}_X(a_{X_1},\ldots,a_{X_{k_X}}) &  (X\in\mathcal{X})\\
            \tf{d}_X() & (X\notin\mathcal{X})
        \end{array}\right.
    \]
\end{definition}

Thus, $\sigma$ maps all variables mentioned in $t$ or $u$ to an appropriate ground term such that the support we know.

Let $\Sigma^+ = \Sigma\cup\sf{D}$ and consider $\F(\Sigma^+)$ the set of free (nominal) terms. For each $X\in{\cal X}$, define $R_X$ as the set of all identities of the following form:
\[
    \tf{d}_X(a_{X_1},\ldots,a_{X_{k_X}}) \approx \pi\act \tf{d}_X(a_{X_1},\ldots,a_{X_{k_X}}),
\]
for all $\pi\fix{C} X\in (\Upsilon_{\catnew{c}}|_X)_{\fix{C}}$. Let $R := \bigcup_{X\in{\cal X}} R_X$. Denote by ${\cal R}$ the equivariant equivalence closure of $R$. Define the relation ${\cal E}$ by
\[
    (g_1,g_2)\in {\cal E} \text{ iff } \vdash g_1\aeq{C} g_2 \text{ or } (g_1,g_2)\in {\cal R}.
\]

\begin{lemma}
    ${\cal E}$ is an equivariant equivalence relation on $\lin{\F}(\Sigma^+)$.
\end{lemma}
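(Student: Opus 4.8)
The plan is to check, one by one, the defining properties of an equivalence relation for $\mathcal{E}$ together with equivariance, and then to observe that, since $\mathcal{E}$ contains provable $\C$-equality on free terms, it descends to the quotient $\lin{\F}(\Sigma^+)$. \emph{Reflexivity} is immediate from $\vdash g \aeq{C} g$, which holds by Theorem~\ref{thm:miscellaneous}(\ref{thm:alpha-equivalence}) (or already because $\mathcal{R}$, being an equivalence closure, is reflexive). \emph{Symmetry} follows by case split on the definition of $\mathcal{E}$: if $\vdash g_1 \aeq{C} g_2$ then $\vdash g_2 \aeq{C} g_1$ by the same theorem, and if $(g_1,g_2)\in\mathcal{R}$ then $(g_2,g_1)\in\mathcal{R}$ since $\mathcal{R}$ is symmetric. \emph{Equivariance} of $\mathcal{E}$ holds because $\mathcal{R}$ is equivariant by construction and the relation of provable $\C$-equality on free terms is equivariant by Theorem~\ref{thm:miscellaneous}(\ref{thm:object-equivariance}) with the empty context; a union of equivariant relations is equivariant.

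The one genuinely delicate point is \emph{transitivity}, and the observation that makes it work is a \textbf{rigidity} property of the terms that $\mathcal{R}$ can touch. I will first establish: for every $\tf{d}\in\sf{D}$ and atoms $c_1,\dots,c_k$, the free term $\tf{d}(c_1,\dots,c_k)$ is \emph{rigid}, i.e.\ $\vdash g \aeq{C} \tf{d}(c_1,\dots,c_k)$ implies $g \equiv \tf{d}(c_1,\dots,c_k)$. This is a direct consequence of Inversion (Theorem~\ref{thm:miscellaneous}(\ref{thm:inversion})): since $\tf{d}\notin\Sigma$ is non-commutative, the only derivation rule whose conclusion can match is $(\frule{\faeq{C}}{\tf{f}})$ instantiated at $\tf{d}$, forcing $g\equiv\tf{d}(g_1,\dots,g_k)$ with $\vdash g_i\aeq{C} c_i$, and a second appeal to Inversion at each $\vdash g_i\aeq{C} c_i$ (whose right-hand side is an atom) gives $g_i\equiv c_i$. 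Now every pair in the generating set $R$ has both components of the shape ``$\tf{d}_X$ applied to a list of atoms'' (the pairs are $\tf{d}_X(a_{X_1},\dots,a_{X_{k_X}}) \approx \pi\act\tf{d}_X(a_{X_1},\dots,a_{X_{k_X}})$, and $\pi\act\tf{d}_X(a_{X_1},\dots,a_{X_{k_X}}) \equiv \tf{d}_X(\pi(a_{X_1}),\dots,\pi(a_{X_{k_X}}))$); the same holds for the equivariant closure of $R$, since $\rho\act\tf{d}_X(a_{X_1},\dots,a_{X_{k_X}}) \equiv \tf{d}_X(\rho(a_{X_1}),\dots,\rho(a_{X_{k_X}}))$. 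Hence any \emph{non-reflexive} pair of $\mathcal{R}$ has both endpoints of this rigid shape: such a pair is the two ends of a non-empty $\mathcal{R}$-chain, and the first edge of the chain is an equivariant image of an $R$-axiom, so each term occurring in a non-reflexive $\mathcal{R}$-pair is rigid.

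With rigidity available, transitivity is a short case analysis. Take $(g_1,g_2),(g_2,g_3)\in\mathcal{E}$. If both pairs come from provable $\C$-equality, then $\vdash g_1\aeq{C} g_3$ by transitivity of $\aeq{C}$ (Theorem~\ref{thm:miscellaneous}(\ref{thm:alpha-equivalence})); if both come from $\mathcal{R}$, then $(g_1,g_3)\in\mathcal{R}$ by transitivity of $\mathcal{R}$. In a mixed case, say $\vdash g_1\aeq{C} g_2$ and $(g_2,g_3)\in\mathcal{R}$: if $g_2\equiv g_3$ this reduces to the first case; otherwise $(g_2,g_3)$ is non-reflexive in $\mathcal{R}$, so $g_2$ is rigid, hence $\vdash g_1\aeq{C} g_2$ forces $g_1\equiv g_2$, and therefore $(g_1,g_3)=(g_2,g_3)\in\mathcal{R}\subseteq\mathcal{E}$; the other mixed case is symmetric. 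Having shown $\mathcal{E}$ is an equivariant equivalence relation on $\F(\Sigma^+)$ containing provable $\C$-equality, it follows that $\aeq{C}$-equivalent free terms are $\mathcal{E}$-related and that $\mathcal{E}$ is saturated on both sides under $\aeq{C}$, so $\mathcal{E}$ induces a well-defined equivariant equivalence relation on $\lin{\F}(\Sigma^+)$, as claimed. I expect the main obstacle to be precisely the rigidity observation: recognising that provable $\C$-equality cannot bridge distinct $\tf{d}_X$-terms is what stops the two constituent relations from creating anything new when composed, and hence what rescues transitivity.
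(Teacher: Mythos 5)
Your proof is correct and follows essentially the same route as the paper's: the same case split for transitivity, with the mixed cases resolved by observing that an endpoint of a (non-trivial) $\mathcal{R}$-pair must be a $\tf{d}_X$-term applied to atoms, so that provable $\C$-equality forces syntactic identity. Your explicit ``rigidity'' lemma via Inversion, and your handling of the reflexive pairs that the equivalence closure $\mathcal{R}$ necessarily contains, simply spell out steps the paper leaves implicit.
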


\begin{proof}
    \begin{enumerate}
        \item {\em Reflexivity}, {\em Symmetry} and {\em Equivariance} follows because both relations $\aeq{C}$ and ${\cal R}$ are reflexive, symmetric and equivariant.

        \item {\em Transitivity}:  Suppose
        $(g_1,g_2)\in {\cal E}$ and $(g_2,g_3)\in {\cal E}$.
        \begin{enumerate}
            \item If$~\vdash g_1\aeq{C} g_2$, then we have two cases:

            \begin{itemize}
                \item Case$~\vdash g_2\aeq{C} g_3$. In this case,$~\vdash g_1\aeq{C} g_3$ follows by the fact that $\aeq{C}$ is transitive and so $(g_1,g_3)\in {\cal E}$.

                \item Case $(g_2,g_3)\in {\cal R}$. In this case, $g_2$ is of the form
                \[
                    g_2 \equiv \tf{d}_X(a_{X_1}',\ldots,a_{X_{k_X}}')
                \]
                which, by$~\vdash g_1\aeq{C} g_2$ forces $g_1 \equiv g_2$. Thus, $(g_1,g_3)\in {\cal E}$.
            \end{itemize}

            \item  If $(g_1,g_2)\in {\cal R}$, then we have two cases:\begin{itemize}
                \item Case$~\vdash g_2\aeq{C} g_3$. This case is similar to the second bullet of the previous case (a).

                \item Case $(g_2,g_3)\in {\cal R}$. In this case, $(g_1,g_3)\in{\cal R}$ follows by the fact that ${\cal R}$ is transitive and hence $(g_1,g_3)\in {\cal E}$.
            \end{itemize}
        \end{enumerate}
    \end{enumerate}
\end{proof}

As a consequence, the set $\F(\Sigma^+)/{{\cal E}}$ forms a nominal set, which we denote by $\lin{\F}_{\cal E}$. This set satisfies properties analogous to those stated in Lemma~\ref{alemma:ground-algebra-properties}. The proof follows a very similar structure, with the only difference lying in item~\ref{alemma:free-names-preserve}.

\begin{lemma}
    If $g'\in \lin{g}$, then $\tf{fn}(g) = \tf{fn}(g')$.
\end{lemma}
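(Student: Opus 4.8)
The plan is to split along the very definition of ${\cal E}$. Since $\lin{g}$ now denotes the ${\cal E}$-equivalence class of $g$, the hypothesis $g'\in\lin{g}$ says exactly that $(g,g')\in{\cal E}$, i.e.\ that either $\vdash g\aeq{C} g'$ or $(g,g')\in{\cal R}$. In the first case the claim is precisely Lemma~\ref{alemma:ground-algebra-properties}(\ref{alemma:free-names-preserve}); that proof uses nothing about $\D$ beyond its being disjoint from $\Sigma$ and free of commutative symbols, and $\sf{D}$ enjoys the same two properties, so it carries over verbatim over $\Sigma^+$. Hence it remains to treat the case $(g,g')\in{\cal R}$.

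For the ${\cal R}$-case I would show that the relation $S:=\{(h_1,h_2)\in\F(\Sigma^+)\times\F(\Sigma^+)\mid \tf{fn}(h_1)=\tf{fn}(h_2)\}$ contains ${\cal R}$. Since ${\cal R}$ is, by construction, the least equivariant equivalence relation containing $R$, it is enough to check that $S$ is an equivariant equivalence relation with $R\subseteq S$. That $S$ is reflexive, symmetric and transitive is immediate from the corresponding properties of equality, and $S$ is equivariant by Lemma~\ref{alemma:ground-algebra-properties}(\ref{alemma:free-names-equivariant}): if $\tf{fn}(h_1)=\tf{fn}(h_2)$ then $\tf{fn}(\rho\act h_1)=\rho\act\tf{fn}(h_1)=\rho\act\tf{fn}(h_2)=\tf{fn}(\rho\act h_2)$. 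The substantive point is $R\subseteq S$: for every defining identity $\tf{d}_X(a_{X_1},\ldots,a_{X_{k_X}})\approx\pi\act\tf{d}_X(a_{X_1},\ldots,a_{X_{k_X}})$ of $R$, with $\pi\fix{C} X\in(\Upsilon_{\catnew{c}}|_X)_{\fix{C}}$, one must show the two sides have the same free names, i.e.\ $\{a_{X_1},\ldots,a_{X_{k_X}}\}=\pi\act\{a_{X_1},\ldots,a_{X_{k_X}}\}$; equivalently, that $\pi$ restricts to a bijection of ${\cal A}_X=\{a_{X_1},\ldots,a_{X_{k_X}}\}$.

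This last claim is the heart of the argument. From $\pi\fix{C} X\in(\Upsilon_{\catnew{c}}|_X)_{\fix{C}}$ we know $\pi$ is the $\neg\catnew{c}$-part of a primitive constraint on $X$ in $\Upsilon$, so $\dom{\pi}\subseteq\atm{(\Upsilon_{\catnew{c}}|_X)_{\fix{C}}}\subseteq{\cal A}\setminus\catnew{c}$. By the Important Assumption $\Upsilon_{\catnew{c}}$ is in normal form with respect to \textbf{(R1)} and \textbf{(R2)}; in particular no cycle occurring in $(\Upsilon_{\catnew{c}}|_X)_{\fix{C}}$ can share an atom with any permutation of $(\Upsilon_{\catnew{c}}|_X)_{\fresh}$ --- otherwise rule \textbf{(R2)} would still be applicable --- so $\atm{(\Upsilon_{\catnew{c}}|_X)_{\fix{C}}}$ is disjoint from $\atm{(\Upsilon_{\catnew{c}}|_X)_{\fresh}}$, and therefore $\dom{\pi}\subseteq{\cal A}_X$ by the definition of ${\cal A}_X$. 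Since a permutation maps its domain bijectively onto itself and fixes every atom outside it --- in particular every atom of ${\cal A}_X\setminus\dom{\pi}$ --- the permutation $\pi$ restricts to a bijection of ${\cal A}_X$, as required. This yields $R\subseteq S$, hence ${\cal R}\subseteq S$, and so $(g,g')\in{\cal R}$ gives $\tf{fn}(g)=\tf{fn}(g')$, completing the proof.

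The step I expect to be the main obstacle is this normalisation argument: one has to invoke that $\Upsilon_{\catnew{c}}$ is in \textbf{(R1)}/\textbf{(R2)}-normal form precisely to guarantee that the atoms $a_{X_1},\ldots,a_{X_{k_X}}$ on which $\tf{d}_X$ is built are exactly the ``pure fixed-point'' atoms of $X$ and that the permutations labelling the rules in $R_X$ act within that set; once this is secured, the remaining closure reasoning is routine.
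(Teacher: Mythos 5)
Your proof is correct and follows the same overall decomposition as the paper's: split on the two disjuncts defining ${\cal E}$, reuse Lemma~\ref{alemma:ground-algebra-properties}(\ref{alemma:free-names-preserve}) for the $\aeq{C}$ case, and reduce the ${\cal R}$ case to the observation that each $\pi$ with $\pi\fix{C} X\in(\Upsilon_{\catnew{c}}|_X)_{\fix{C}}$ permutes the set ${\cal A}_X = \{a_{X_1},\ldots,a_{X_{k_X}}\}$. Where you differ is in how the ${\cal R}$ case is discharged: the paper explicitly enumerates the shapes of the pairs in the equivariant equivalence closure (either the canonical tuple $\tf{d}_X(a_{X_1},\ldots,a_{X_{k_X}})$ related to $\pi\act\tf{d}_X(a_{X_1},\ldots,a_{X_{k_X}})$, or a $\mu$-renamed instance of such a pair) and computes $\tf{fn}$ directly in each case, whereas you invoke minimality of the closure: the relation ``same free names'' is an equivariant equivalence relation containing $R$, hence contains ${\cal R}$. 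Your route avoids having to characterise ${\cal R}$ concretely, which is a small but genuine simplification. You also supply a justification that the paper leaves implicit: the paper merely asserts that $\pi$ ``merely rearranges some of the atoms in $a_{X_1},\ldots,a_{X_{k_X}}$'', while you derive $\dom{\pi}\subseteq{\cal A}_X$ from the ({\bf R1})/({\bf R2}) normal-form assumption, which guarantees that the pure fixed-point part of $\Upsilon_{\catnew{c}}|_X$ is atom-disjoint from its freshness part and from $\catnew{c}$. Both arguments are sound; yours is the more self-contained of the two.
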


\begin{proof}
    Suppose  $g'\in \lin{g}$. Then $(g,g')\in{\cal E}$, so$~\vdash g\aeq{C} g'$ or $(g,g')\in {\cal R}$. If$~\vdash g\aeq{C} g'$, then the proof is the same as the one in Lemma~\ref{alemma:ground-algebra-properties}(\ref{alemma:free-names-preserve}). If $(g,g')\in {\cal R}$, then we have two possibilities:

    \begin{itemize}
        \item $g \equiv  \tf{d}_X(a_{X_1},\ldots,a_{X_{k_X}})$.

        In this case, $g' \equiv g$ or $g' \equiv \pi \act \tf{d}_X(a_{X_1}, \ldots, a_{X_{k_X}})$ for some permutation $\pi$ such that $\pi\fix{C} X \in (\Upsilon_{\catnew{c}}|_X)_{\fix{C}}$. In both scenarios, $\tf{fn}(g) = \tf{fn}(g')$, as $\pi$ merely rearranges some of the atoms in $a_{X_1}, \ldots, a_{X_{k_X}}$.

        \item  $g \equiv \tf{d}_X(a_{X_1}',\ldots,a_{X_{k_X}}')$,

        In this case,
        \[
            g \equiv \mu\act \tf{d}_X(a_{X_1},\ldots,a_{X_{k_X}})
        \]
        where $\mu = (a_{X_1} \ a_{X_1}')\ldots(a_{X_{k_X}} \ a_{X_{k_X}}')$. Then $g' \equiv g$ or 
        \[
            g'\equiv \mu\act(\pi\act \tf{d}_X(a_{X_1},\ldots,a_{X_{k_X}}))
        \]
        for some permutation $\pi$ such that $\pi\fix{C} X \in (\Upsilon_{\catnew{c}}|_X)_{\fix{C}}$. Then
        \begin{align*}
            \tf{fn}(g) &= \tf{fn}(\mu\act \tf{d}_X(a_{X_1},\ldots,a_{X_{k_X}}))\\
            &= \mu\act\tf{fn}( \tf{d}_X(a_{X_1},\ldots,a_{X_{k_X}})\\
            &= \mu\act\tf{fn}(\pi\act \tf{d}_X(a_{X_1},\ldots,a_{X_{k_X}}))\\
            &= \tf{fn}(\mu\act(\pi\act \tf{d}_X(a_{X_1},\ldots,a_{X_{k_X}})))\\
            &= \tf{fn}(g').
        \end{align*}
    \end{itemize}
\end{proof}

Additionally, the set $\lin{\F}_{\cal E}$, equipped with the same equivariant maps defined in Subsection~\ref{app:free-models}, forms an algebra that is a model of $\C$. We denote this algebra by $\nalg{F}_{\cal E}$.

Define $\varsigma^*$ on $\nalg{F}_{\cal E}$ by $\varsigma^*(X) =\lin{X\sigma}$ for every $X\in\V$.

\begin{lemma}\label{alemma:interpretation}
The following hold:
\begin{enumerate}
\item \label{alemma:substitution-interpretation}
    $\Int{t}{\nalg{F}_{\cal E}}{\varsigma^*} = \lin{t\sigma}$ and $\Int{u}{\nalg{F}_{\cal E}}{\varsigma^*} = \lin{u\sigma}$.
\item \label{alemma:context-valid}
    $\Int{\Upsilon_{\catnew{c}}}{\nalg{F}_{\cal E}}{\varsigma^*}$ is valid.
\end{enumerate}
\end{lemma}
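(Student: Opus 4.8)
The plan is to establish the two items in order, since item~\ref{alemma:substitution-interpretation} is a routine structural induction that item~\ref{alemma:context-valid} then exploits. For item~\ref{alemma:substitution-interpretation} I would prove $\Int{t}{\nalg{F}_{\cal E}}{\varsigma^*} = \lin{t\sigma}$ (and symmetrically for $u$) by induction on the structure of $t$. The atom case is immediate, since $\atom^{\nalg{F}_{\cal E}}(a) = \lin{a} = \lin{a\sigma}$. For a suspension $\pi\act X$, unfolding Definition~\ref{def:interpretation-of-terms} gives $\pi\act\varsigma^*(X) = \pi\act\lin{X\sigma}$, which equals $\lin{\pi\act(X\sigma)} = \lin{(\pi\act X)\sigma}$ because the action on ${\cal E}$-equivalence classes is $\pi\act\lin{g} = \lin{\pi\act g}$ and substitution commutes with permutation. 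The constructor cases (abstraction, $\tf{f}$, $\tf{f^C}$) follow by applying the inductive hypothesis inside the corresponding equivariant map of $\nalg{F}_{\cal E}$ — which, by construction, merely pushes the constructor inside the chosen representative — and reading off the result as $\lin{(\cdot)\sigma}$. Note this item does not use $X\in{\cal X}$.

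For item~\ref{alemma:context-valid}, fix a primitive constraint $\pi\fix{C} X\in\Upsilon$; since $X$ occurs in $\Upsilon_{\catnew{c}}$ we have $X\in{\cal X}$, so $X\sigma\equiv\tf{d}_X(a_{X_1},\ldots,a_{X_{k_X}})$ and, by item~\ref{alemma:substitution-interpretation}, $\Int{X}{\nalg{F}_{\cal E}}{\varsigma^*} = \lin{X\sigma}$; thus I must show $\newc{c}{}.\,\pi\act\lin{X\sigma} = \lin{X\sigma}$. Splitting $\pi = \pi_{\catnew{c}}\circ\pi_{\neg\catnew{c}}$ and applying the generalised Pitts' equivalence (Lemma~\ref{lemma:pitts-eq-generalized}(1)), it suffices to treat $\pi_{\catnew{c}}$ and $\pi_{\neg\catnew{c}}$ separately, using that by definition $\pi_{\catnew{c}}\fix{C} X\in(\Upsilon_{\catnew{c}}|_X)_{\fresh}$ and $\pi_{\neg\catnew{c}}\fix{C} X\in(\Upsilon_{\catnew{c}}|_X)_{\fix{C}}$. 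For $\pi_{\catnew{c}}$: from Lemma~\ref{alemma:quotient-nominal} together with $X\sigma\in\lin{X\sigma}$ we get $\supp{}{\lin{X\sigma}}\subseteq\atm{X\sigma} = \{a_{X_1},\ldots,a_{X_{k_X}}\} = {\cal A}_X$, and since $\pi_{\catnew{c}}\fix{C} X$ is one of the freshness constraints on $X$, every non-$\new$-quantified atom of $\dom{\pi_{\catnew{c}}}$ is, by the very definition of ${\cal A}_X$, excluded from ${\cal A}_X$ and hence from $\supp{}{\lin{X\sigma}}$; Lemma~\ref{lemma:pitts-eq-generalized}(2) then yields $\newc{c}{}.\,\pi_{\catnew{c}}\act\lin{X\sigma} = \lin{X\sigma}$. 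For $\pi_{\neg\catnew{c}}$: by the definition of $R_X$ the identity $\tf{d}_X(a_{X_1},\ldots,a_{X_{k_X}})\approx\pi_{\neg\catnew{c}}\act\tf{d}_X(a_{X_1},\ldots,a_{X_{k_X}})$ lies in $R_X\subseteq R\subseteq{\cal R}\subseteq{\cal E}$, so $\pi_{\neg\catnew{c}}\act X\sigma\in\lin{X\sigma}$, giving $\pi_{\neg\catnew{c}}\act\lin{X\sigma} = \lin{\pi_{\neg\catnew{c}}\act X\sigma} = \lin{X\sigma}$; as $\dom{\pi_{\neg\catnew{c}}}$ is disjoint from $\catnew{c}$, this equality is unaffected by renamings of $\catnew{c}$, so Lemma~\ref{lemma:pitts-eq-generalized}(3) closes the case. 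Combining the two parts via Lemma~\ref{lemma:pitts-eq-generalized}(1) gives $\newc{c}{}.\,\pi\act\lin{X\sigma} = \lin{X\sigma}$ for every $\pi\fix{C} X\in\Upsilon$, i.e.\ $\Int{\Upsilon_{\catnew{c}}}{\nalg{F}_{\cal E}}{\varsigma^*}$ is valid.

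I expect the only genuine subtlety to lie in item~\ref{alemma:context-valid}: in making precise why the ``freshness'' constraints $\pi_{\catnew{c}}\fix{C} X$ contribute no atom of $\supp{}{\lin{X\sigma}}$ — this is exactly the reason ${\cal A}_X$ (and hence $X\sigma$, and hence $\supp{}{\lin{X\sigma}}$) was defined by discarding the domains of those constraints — and in keeping the $\new$-quantifier bookkeeping straight when invoking the generalised Pitts' equivalence. The standing assumption that $\Upsilon_{\catnew{c}}$ is in normal form with respect to rules (R1) and (R2) is what guarantees the clean split of each constraint into a freshness part and a pure fixed-point part and, in particular, that $\dom{\pi_{\neg\catnew{c}}}$ meets only atoms that $R_X$ is entitled to permute. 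Item~\ref{alemma:substitution-interpretation} is entirely routine.
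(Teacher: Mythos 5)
Your proposal is correct and follows essentially the same route as the paper's proof: item~1 by the same structural induction (with the identical chain of equalities in the suspension case), and item~2 by splitting each constraint into its freshness part, discharged via $\dom{\pi_{\catnew{c}}}\cap{\cal A}_X=\emptyset$ and hence disjointness from $\supp{}{\lin{X\sigma}}$, and its pure fixed-point part, discharged via the generating identities $R_X\subseteq{\cal E}$. The only (cosmetic) difference is that the paper case-splits on whether the constraint lies in $(\Upsilon_{\catnew{c}})_{\fresh}$ or $(\Upsilon_{\catnew{c}})_{\fix{C}}$ of the normalised context rather than factoring each $\pi$ as $\pi_{\catnew{c}}\circ\pi_{\neg\catnew{c}}$ and invoking Lemma~\ref{lemma:pitts-eq-generalized}(1); under the standing normalisation assumption these amount to the same thing.
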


\begin{proof}
    \begin{enumerate}
        \item Direct by induction on the structure of $t$ and $u$. To illustrate, consider the case $t \equiv \pi\act X$.
        \begin{align*}
            \Int{\pi\act X}{\nalg{F}_{\cal E}}{\varsigma^*} = \pi\act \Int{X}{\nalg{F}_{\cal E}}{\varsigma^*} = \pi\act\varsigma(X) = \pi\act \lin{X\sigma} = \lin{\pi\act(X\sigma)} = \lin{(\pi\act X)\sigma}.
        \end{align*}

        \item Let $\pi\fix{C} X\in \Upsilon_{\catnew{c}}$. Then $X\in {\cal X}$.
    \begin{itemize}
        \item Suppose $\pi\fix{C} X\in (\Upsilon_{\catnew{c}})_{\fresh}$. Let $\catnew{c}$ be such that  $\catnew{c}\cap\supp{}{ \Int{X}{\nalg{F}_{\cal E}}{\varsigma^*}} = \emptyset$. By construction $\dom{\pi}\cap{\cal A}_X = \emptyset$   so $\dom{\pi}\cap \supp{}{\varsigma^*(X)} = \emptyset$ which, by the definition of support, gives us $\pi\act \varsigma^*(X) \equiv \varsigma^*(X)$. Consequently. we have $\pi\act \lin{\varsigma^*(X)} = \lin{\varsigma^*(X)}$, that is, $\pi\act \Int{X}{\nalg{F}_{\cal E}}{\varsigma^*} = \Int{X}{\nalg{F}_{\cal E}}{\varsigma^*}$. This proves that the set $\{\catnew{c} \mid  \pi\act \Int{X}{\nalg{F}_{\cal E}}{\varsigma^*} = \Int{X}{\nalg{F}_{\cal E}}{\varsigma^*}\}$ is cofinite. Thus, $\newc{c}{}.\pi\act \Int{X}{\nalg{F}_{\cal E}}{\varsigma^*} = \Int{X}{\nalg{F}_{\cal E}}{\varsigma^*}$.

        \item Suppose $\pi\fix{C} X\in (\Upsilon_{\catnew{c}})_{\fix{C}}$.  By construction,
        \begin{align*}
             \lin{\tf{d}_X(a_{X_1},\ldots,a_{X_{k_X}})} &= \lin{\pi\act\tf{d}_X(a_{X_1},\ldots,a_{X_{k_X}})}\\
             &= \pi\act \lin{\tf{d}_X(a_{X_1},\ldots,a_{X_{k_X}})},
        \end{align*}
        so $\pi\act \Int{X}{\nalg{F}_{\cal E}}{\varsigma^*} = \Int{X}{\nalg{F}_{\cal E}}{\varsigma^*}$ holds. Since this holds for all $\catnew{c}$ such that  $\catnew{c}\cap\supp{}{ \Int{X}{\nalg{F}_{\cal E}}{\varsigma^*}} = \emptyset$, the result follows.
    \end{itemize}
    \end{enumerate}
\end{proof}

Let $\Pi$ be a proof of the judgment$~\vdash t\sigma \aeq{C} u\sigma$. By the definition of $\sigma$, the terms $t\sigma$ and $u\sigma$ belong to $\F(\Sigma^+)$, and the subterm $X\sigma$ occurs in the position where $X$ occurs in $t$ or $u$.  Now, we aim to find a method to reverse $t\sigma$ or $u\sigma$ into terms in $\Sigma$, in such a way that it allows us to recover a derivation of $\Upsilon_{\catnew{c}} \vdash t\aeq{C} u$.

\begin{definition}
    Let $\mathcal{A}^+$ be $\mathcal{A}$ extended with:
    \begin{itemize}
        \item a set $\mathcal{C}$ of atoms mentioned anywhere in $\Pi$ (that were not already in $\mathcal{A}$).

        \item a set $\mathcal{B}$ of fresh atoms, in bijection with $\mathcal{A}$ — for convenience, we fix a bijection and write $b_{X_i}$ for the atom corresponding under that bijection with $a_{X_i}$ — and

        \item one fresh atom $e$ (so $e$ does not occur in $\mathcal{A}, \mathcal{C}$ or $\mathcal{B}$).
    \end{itemize}
    \[
        {\cal A}^+ = {\cal A\cup C\cup B}\cup\{e\}.
    \]
\end{definition}

Let ${\cal B}_X$ be the set of $b_{X_i}$'s corresponding with $a_{X_i}$'s. Denote by $({\cal A}_X \ {\cal B}_X)$ the permutation $(a_{X_1} \ b_{X_1})\circ \ldots\circ (a_{X_{k_X}} \ b_{X_{k_X}})$.

\begin{definition}
    Define $\Upsilon^+_{\atnew{\pvec{c}'}}$ to be $\Upsilon_{\catnew{c}}$ extended with the following constraints:
    \begin{itemize}
        \item $\atnew{\pvec{c}'} = \catnew{c}\cup\{c^*\}$ where $\atnew{c^*}$ is a fresh name.

        \item $(\atnew{c_1^*} \ \atnew{c_2^*} \ \ldots \ \atnew{c_n^*} \ e \ \atnew{c^*})\fix{C} X$ for all $X\in{\cal X}$, where ${\cal C} = \{\atnew{c_1^*},\ldots,\atnew{c_n^*}\}$.

        \item $\pi^{({\cal A}_X \ {\cal B}_X)}\fix{C} Y$ for all $\pi\fix{C} Y\in \Upsilon_{\fix{C}}$.
    \end{itemize}
\end{definition}

\begin{definition}
    For the rest of this subsection let $g$ and $h$ range over ground terms in $\Sigma^+$ that mention only atoms from $\mathcal{A}^+\setminus\mathcal{B}\cup\{e\}$. Define an \emph{inverse mapping} $(-)^{-1}$ from such ground terms to terms in $\Sigma$ inductively as follows:
    $$
    \begin{array}{rcl}
    a^{-1}&\equiv& a \\
    \tf{d}_X()^{-1} &\equiv & e\\
    \tf{f}(g_1,\ldots,g_n)^{-1}&\equiv &\tf{f}(g_1^{-1},\ldots,g_n^{-1}) \\
    ([a]g)^{-1}&\equiv& [a]g^{-1}\\
    \tf{d}_X(a_{X_1}', \ldots, a_{X_{k_{\scalebox{.4}{$X$}}}}')^{-1} &\equiv& \pi_X(a_{X_1}', \ldots, a_{X_{k_{\scalebox{.4}{$X$}}}}')\act X
    \end{array}
    $$
    where $\pi_X(a_{X_1}', \ldots, a_{X_{k_{\scalebox{.4}{$X$}}}}') = (a_{X_1}' \ b_{X_1})\circ \ldots\circ (a_{X_{k_{\scalebox{.4}{$X$}}}}' \ b_{X_{k_{\scalebox{.4}{$X$}}}})\circ (b_{X_1} \ a_{X_1})\circ \ldots\circ (b_{X_{k_{\scalebox{.4}{$X$}}}} \ a_{X_{k_{\scalebox{.4}{$X$}}}}).$
\end{definition}

\begin{lemma}\label{alemma:d-X}
    $\tf{d}_{X}(a_{X_1}, \ldots, a_{X_{k_X}})^{-1} \equiv X$.
\end{lemma}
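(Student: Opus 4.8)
The plan is to unfold the last clause of the definition of $(-)^{-1}$ on the term $\tf{d}_X(a_{X_1}, \ldots, a_{X_{k_X}})$, instantiating $a_{X_i}' := a_{X_i}$, and then to check that the resulting permutation is the identity. Concretely, the definition gives
$$\tf{d}_X(a_{X_1}, \ldots, a_{X_{k_X}})^{-1} \equiv \bigl((a_{X_1} \ b_{X_1})\circ \cdots\circ (a_{X_{k_X}} \ b_{X_{k_X}})\circ (b_{X_1} \ a_{X_1})\circ \cdots\circ (b_{X_{k_X}} \ a_{X_{k_X}})\bigr)\act X,$$
so it suffices to show that the permutation in parentheses equals $\id$.

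First I would note that the $2k_X$ atoms $a_{X_1}, \ldots, a_{X_{k_X}}, b_{X_1}, \ldots, b_{X_{k_X}}$ are pairwise distinct: the $a_{X_i}$ are the distinct elements of ${\cal A}_X$ in a fixed order, and the $b_{X_i}$ are the distinct fresh atoms of $\mathcal{B}$ placed in bijection with them, with $\mathcal{B}$ disjoint from $\mathcal{A}\supseteq {\cal A}_X$. Hence the transpositions $(a_{X_1} \ b_{X_1}), \ldots, (a_{X_{k_X}} \ b_{X_{k_X}})$ have pairwise disjoint domains, so $\mu := (a_{X_1} \ b_{X_1})\circ \cdots\circ (a_{X_{k_X}} \ b_{X_{k_X}})$ is a product of disjoint transpositions and therefore an involution: $\mu\circ\mu = \id$. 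Moreover, since $(b_{X_i} \ a_{X_i}) = (a_{X_i} \ b_{X_i})$ and disjoint transpositions may be composed in any order, the second block $(b_{X_1} \ a_{X_1})\circ \cdots\circ (b_{X_{k_X}} \ a_{X_{k_X}})$ is again equal to $\mu$. Thus the permutation in the display is $\mu\circ\mu = \id$, whence $\tf{d}_X(a_{X_1}, \ldots, a_{X_{k_X}})^{-1} \equiv \id\act X \equiv X$.

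There is no serious obstacle: the statement is a direct computation with permutations. The single point that needs a word of care is the distinctness of the atoms $a_{X_i}$ and $b_{X_i}$ — this is precisely what forces the two blocks of transpositions to commute and cancel, and it holds by construction of $\mathcal{B}$ and of the bijection $a_{X_i}\mapsto b_{X_i}$.
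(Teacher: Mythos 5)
Your proof is correct and follows the same route as the paper, which simply observes that $\pi_X(a_{X_1},\ldots,a_{X_{k_X}}) = \id$; you have merely spelled out the underlying computation (disjointness of the transpositions and cancellation of the two identical blocks). Nothing further is needed.
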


\begin{proof}
    Consequence of $\pi_X(a_{X_1}, \ldots, a_{X_{k_X}}) = \id$.
\end{proof}

The inverse mapping is equivariant (for the terms we care about):

\begin{lemma}\label{alemma:equivariance-for-completeness}
    \sloppy{If $\pi\in \bigcap_{X\in\var{g^{-1}}}\Perm{\atm{(\Upsilon_{\atnew{\pvec{c}'}}|_X)_{\fresh}}\cup{\cal C}}\alert{\circ}\pair{\perm{}{(\Upsilon_{\atnew{\pvec{c}'}}|_X)_{\fix{C}}}}$, then  $\Upsilon^+_{\atnew{\pvec{c}'}} \vdash  (\pi\act g)^{-1} \aeq{C} \pi\act g^{-1}$.}
\end{lemma}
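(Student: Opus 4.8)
The plan is to induct on the structure of the ground term $g$, applying at each step the derivation rule of Figure~\ref{fig:fixed-rules_new} that matches the head of $g$. A preliminary observation makes the induction go through: for every immediate subterm $h$ of $g$ one has $\var{h^{-1}}\subseteq\var{g^{-1}}$, so the hypothesis that $\pi$ lies in the intersection over $X\in\var{g^{-1}}$ of the groups $\Perm{\atm{(\Upsilon_{\atnew{\pvec{c}'}}|_X)_{\fresh}}\cup\mathcal{C}}\alert{\circ}\pair{\perm{}{(\Upsilon_{\atnew{\pvec{c}'}}|_X)_{\fix{C}}}}$ is inherited by all immediate subterms, and the induction hypothesis may be invoked for each of them.

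First the easy, ``structural'' cases. If $g\equiv a$ then $(\pi\act g)^{-1}\equiv\pi(a)\equiv\pi\act g^{-1}$ and we close with $(\frule{\faeq{C}}{a})$; the nullary case $g\equiv\tf{d}_X()$ is the same, both sides translating to $e$. If $g\equiv\tf{f}(g_1,\dots,g_n)$ with $\tf{f}$ non-commutative, or $g\equiv\tf{f^C}(g_1,g_2)$, or $g\equiv[a]g'$, then both $(-)^{-1}$ and the permutation action commute with the head constructor, so I would push $\pi$ inside, apply the induction hypothesis to each $g_i$ (resp.\ to $g'$), and conclude with $(\frule{\faeq{C}}{\tf{f}})$, with $(\frule{\faeq{C}}{\tf{f^C}})$ taking the ``straight'' pairing $i=0$, or with $(\frule{\faeq{C}}{[a]})$; note in the abstraction case that both $(\pi\act[a]g')^{-1}$ and $\pi\act([a]g')^{-1}$ have head $[\pi(a)]$, so the bound atoms already agree.

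The real case is $g\equiv\tf{d}_X(a_{X_1}',\dots,a_{X_{k_X}}')$ with $X\in\mathcal{X}$. Here $(\pi\act g)^{-1}\equiv\pi_X(\pi(a_{X_1}'),\dots,\pi(a_{X_{k_X}}'))\act X$ and $\pi\act g^{-1}\equiv(\pi\circ\pi_X(a_{X_1}',\dots,a_{X_{k_X}}'))\act X$, so by $(\frule{\faeq{C}}{var})$ it suffices to prove that $\rho:=(\pi\circ\pi_X(a_{X_1}',\dots,a_{X_{k_X}}'))^{-1}\circ\pi_X(\pi(a_{X_1}'),\dots,\pi(a_{X_{k_X}}'))$ belongs to $\PN{}{\Upsilon^+_{\atnew{\pvec{c}'}}|_X}$. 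I would factor $\pi_X(a_{X_1}',\dots,a_{X_{k_X}}') = \beta\circ({\cal A}_X \ {\cal B}_X)$ with $\beta = (a_{X_1}'\ b_{X_1})\circ\cdots\circ(a_{X_{k_X}}'\ b_{X_{k_X}})$, and use that $\pi$ fixes every $b_{X_i}$ (its domain lies inside $\atm{(\Upsilon_{\catnew{c}}|_X)_{\fresh}}\cup\mathcal{C}$ together with the atoms occurring in the $\fix{C}$-permutations of $\Upsilon_{\catnew{c}}|_X$, none of which mention the fresh atoms $\mathcal{B}$), so that $\pi_X(\pi(a_{X_1}'),\dots) = (\pi\circ\beta\circ\pi^{-1})\circ({\cal A}_X \ {\cal B}_X)$. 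A short cancellation then gives $\rho = ({\cal A}_X \ {\cal B}_X)\circ\pi^{-1}\circ({\cal A}_X \ {\cal B}_X) = (\pi^{({\cal A}_X \ {\cal B}_X)})^{-1}$, and it remains to place $\pi^{({\cal A}_X \ {\cal B}_X)}$ in $\PN{}{\Upsilon^+_{\atnew{\pvec{c}'}}|_X}$. Writing $\pi = \pi_1\circ\pi_2$ with $\pi_1\in\Perm{\atm{(\Upsilon_{\catnew{c}}|_X)_{\fresh}}\cup\mathcal{C}}$ and $\pi_2\in\pair{\perm{}{(\Upsilon_{\catnew{c}}|_X)_{\fix{C}}}}$: by the definition of $\mathcal{A}_X$ and the freshness of the $b$'s, $\dom{\pi_1}$ is disjoint from $\mathcal{A}_X\cup\mathcal{B}_X = \dom{({\cal A}_X \ {\cal B}_X)}$, so $\pi_1$ commutes with $({\cal A}_X \ {\cal B}_X)$ and $\pi_1^{({\cal A}_X \ {\cal B}_X)} = \pi_1\in\Perm{\atm{(\Upsilon^+_{\atnew{\pvec{c}'}}|_X)_{\fresh}}}$; and $\pi_2^{({\cal A}_X \ {\cal B}_X)}$ is a product of conjugates $\eta^{({\cal A}_X \ {\cal B}_X)}$ of the generators $\eta\fix{C} X$ of $(\Upsilon_{\catnew{c}}|_X)_{\fix{C}}$, each of which was added to $\Upsilon^+$ as the constraint $\eta^{({\cal A}_X \ {\cal B}_X)}\fix{C} X$, so $\pi_2^{({\cal A}_X \ {\cal B}_X)}\in\pair{\perm{}{(\Upsilon^+_{\atnew{\pvec{c}'}}|_X)_{\fix{C}}}}$. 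Hence $\pi^{({\cal A}_X \ {\cal B}_X)}$, and with it $\rho$, lies in $\PN{}{\Upsilon^+_{\atnew{\pvec{c}'}}|_X}$, which closes the case.

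I expect the permutation bookkeeping of this last case to be the main obstacle: getting the cancellation $\rho = (\pi^{({\cal A}_X \ {\cal B}_X)})^{-1}$ to come out depends essentially on $\pi$ fixing the fresh atoms $\mathcal{B}$, which is precisely why the hypothesis is stated in terms of the atoms of $\Upsilon_{\catnew{c}}$ and $\mathcal{C}$ rather than of $\Upsilon^+$; and then one must match the conjugated generators $\eta^{({\cal A}_X \ {\cal B}_X)}$ against exactly the extra $\fix{C}$-constraints that were inserted into $\Upsilon^+$ by design. The structural cases and the remark that $\var{g^{-1}}$ dominates the variable sets of the immediate subterms are routine.
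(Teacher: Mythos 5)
Your proof is correct and follows essentially the same route as the paper's: induction on $g$, with the only substantive case being $g\equiv\tf{d}_X(a_{X_1}',\ldots,a_{X_{k_X}}')$, reduced via $(\frule{\faeq{C}}{var})$ to showing $\pi^{({\cal A}_X\ {\cal B}_X)}\in\PN{}{\Upsilon^+_{\atnew{\pvec{c}'}}|_X}$ by conjugating the two factors of the hypothesised group and matching the conjugated generators against the constraints deliberately added to $\Upsilon^+$ (your cancellation $\rho=(\pi^{({\cal A}_X\ {\cal B}_X)})^{-1}$ is exactly the paper's $(\pi^{-1})^{({\cal A}_X\ {\cal B}_X)^{-1}}$, just computed more explicitly). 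One pedantic slip: for $g\equiv\tf{d}_X()$ with $X\notin\mathcal{X}$ the two sides are $e$ and $\pi(e)$ rather than both $e$, which only coincide because in every actual use $\pi$ fixes the fresh atom $e$ — a degenerate case the paper likewise passes over.
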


\begin{proof}
The proof is by induction on the structure of $g$. The only non-trivial case is when $g \equiv \tf{d}_X(a_{X_1}',\ldots,a_{X_{k_X}}')$. To prove $\Upsilon^+_{\atnew{\pvec{c}'}} \vdash  (\pi\act g)^{-1} \aeq{C} \pi\act g^{-1}$, we must show that $(\pi^{-1})^{({\cal A}_X \ {\cal B}_X)^{-1}} \in \PN{}{\Upsilon^+_{\atnew{\pvec{c}'}}|_X} $. This is equivalent to prove that $\pi^{({\cal A}_X \ {\cal B}_X)} \in \PN{}{\Upsilon^+_{\atnew{\pvec{c}'}}|_X}$ because $({\cal A}_X \ {\cal B}_X)^{-1} = ({\cal A}_X \ {\cal B}_X)$. In this case, $\var{g^{-1}} = \{X\}$ and so $\pi\in \Perm{\atm{(\Upsilon_{\atnew{\pvec{c}'}}|_X)_{\fresh}}\cup{\cal C}}\alert{\circ}\pair{\perm{}{(\Upsilon_{\atnew{\pvec{c}'}}|_X)_{\fix{C}}}}$. To simply notation, let's call it just $\Perm{(\Upsilon_{\atnew{\pvec{c}'}}|_X)_{\fresh}\cup{\cal C}}\alert{\circ}\pair{(\Upsilon_{\atnew{\pvec{c}'}}|_X)_{\fix{C}}}$. Applying $({\cal A}_X \ {\cal B}_X)$ on both sides, we get $\pi^{({\cal A}_X \ {\cal B}_X)}\in  \Perm{(\Upsilon_{\atnew{\pvec{c}'}}|_X)_{\fresh}\cup{\cal C}}\alert{\circ}\pair{(\Upsilon_{\atnew{\pvec{c}'}}^{({\cal A}_X \ {\cal B}_X)}|_X)_{\fix{C}}}$. Since
\begin{itemize}
    \item $\PN{}{\Upsilon^+_{\atnew{\pvec{c}'}}|_X} =  \Perm{(\Upsilon^+_{\atnew{\pvec{c}'}}|_X)_{\fresh}\cup{\cal C}}\alert{\circ}\pair{(\Upsilon_{\atnew{\pvec{c}'}}^+|_X)_{\fix{C}}}$
    \item $\Perm{(\Upsilon_{\atnew{\pvec{c}'}}|_X)_{\fresh}\cup{\cal C}} \subseteq \Perm{(\Upsilon^+_{\atnew{\pvec{c}'}}|_X)_{\fresh}\cup{\cal C}}$
    \item $\pair{(\Upsilon_{\atnew{\pvec{c}'}}^{({\cal A}_X \ {\cal B}_X)}|_X)_{\fix{C}}} \subseteq \pair{(\Upsilon_{\atnew{\pvec{c}'}}^+|_X)_{\fix{C}}}$
\end{itemize}
it follows that $\pi^{({\cal A}_X \ {\cal B}_X)} \in  \PN{}{\Upsilon^+_{\atnew{\pvec{c}'}}|_X}$.
\end{proof}

\begin{lemma}\label{alemma:inversion-properties}
The following hold:
\begin{enumerate}
\item \label{alemma:substitution-inverse-fixed-point}
    $\Upsilon^+_{\atnew{\pvec{c}'}}\vdash (t\sigma)^{-1} \aeq{C} t$ and $\Upsilon^+_{\atnew{\pvec{c}'}}\vdash (u\sigma)^{-1} \aeq{C} u$.
\item \label{alemma:ground-derivation-implies-derivation}
    If $~\vdash t\sigma \aeq{C} u\sigma$, then $\Upsilon_{\catnew{c}} \vdash t \aeq{C} u$.
\end{enumerate}
\end{lemma}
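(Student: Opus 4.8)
The plan is to ``push the inverse mapping through derivations'': I would prove item~1 by induction on the structure of $t$ (and symmetrically $u$), item~2 by induction on the given derivation $\Pi$ of $~\vdash t\sigma\aeq{C} u\sigma$, and then glue the two together using symmetry and transitivity of $\aeq{C}$ (Theorem~\ref{thm:miscellaneous}(\ref{thm:alpha-equivalence})) and finally peel off the auxiliary constraints of $\Upsilon^+_{\atnew{\pvec{c}'}}$ via Strengthening (Theorem~\ref{thm:miscellaneous}(\ref{thm:strengthening})).

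For item~1 the induction on $t$ is routine except for one case. The base case $t\equiv a$ is immediate since $a^{-1}\equiv a$ and $\Upsilon^+_{\atnew{\pvec{c}'}}\vdash a\aeq{C} a$; the cases $t\equiv\tf{f}(t_1,\dots,t_n)$, $t\equiv\tf{f^C}(t_1,t_2)$ and $t\equiv[a]t'$ follow from the defining clauses of $(-)^{-1}$ (which commute with these constructors), the induction hypothesis, and the congruence rules $(\frule{\faeq{C}}{\tf{f}})$, $(\frule{\faeq{C}}{\tf{f^C}})$, $(\frule{\faeq{C}}{[a]})$ — here one uses that $t\sigma$ and all its subterms mention only atoms in $\mathcal{A}^+\setminus\mathcal{B}\cup\{e\}$, so $(-)^{-1}$ is everywhere defined on them. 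The heart of item~1 is the suspension case $t\equiv\pi\act X$: then $X\in\mathcal{X}$, $X\sigma\equiv\tf{d}_X(a_{X_1},\dots,a_{X_{k_X}})$, hence $t\sigma\equiv\tf{d}_X(\pi(a_{X_1}),\dots,\pi(a_{X_{k_X}}))$ and, by the $\tf{d}_X$-clause together with Lemma~\ref{alemma:d-X}, $(t\sigma)^{-1}\equiv\pi_X(\pi(a_{X_1}),\dots,\pi(a_{X_{k_X}}))\act X$. By rule $(\frule{\faeq{C}}{var})$ it then remains to establish $\pi^{-1}\circ\pi_X(\pi(a_{X_1}),\dots,\pi(a_{X_{k_X}}))\in\PN{}{\Upsilon^+_{\atnew{\pvec{c}'}}|_X}$, and this is precisely where the shape of $\pi_X$ — which shuttles the relevant atoms into and back out of the ``$\mathcal{B}$-world'' — and the extra constraints placed in $\Upsilon^+_{\atnew{\pvec{c}'}}$ (the long cycle on $\mathcal{C}\cup\{e,c^*\}$ and the $(\mathcal{A}_X\ \mathcal{B}_X)$-conjugated copies of the $\fix{C}$-constraints) do their work; Lemma~\ref{alemma:equivariance-for-completeness} and the freshness of the $\mathcal{B}$- and $\mathcal{C}$-atoms for $\Upsilon_{\catnew{c}},t,u$ are the available tools.

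For item~2, given $\Pi$, I would first prove by induction on $\Pi$ the auxiliary statement: for all ground $g_1,g_2\in\F(\Sigma^+)$ that mention only atoms in $\mathcal{A}^+\setminus\mathcal{B}\cup\{e\}$, if $~\vdash g_1\aeq{C} g_2$ then $\Upsilon^+_{\atnew{\pvec{c}'}}\vdash g_1^{-1}\aeq{C} g_2^{-1}$. Rule $(\frule{\faeq{C}}{a})$ is trivial; the congruence rules for $\tf{f}\in\Sigma$ (including $(\frule{\faeq{C}}{\tf{f^C}})$ with its two subcases) and $(\frule{\faeq{C}}{[a]})$, and the rule $(\frule{\faeq{C}}{ab})$ (reusing the same $\new$-names occurring in $\Pi$, which all lie in $\mathcal{C}$), follow from the induction hypothesis and the matching clauses of $(-)^{-1}$; the only extra subcase is a head of the form $\tf{d}_X$, but since $\tf{d}_X$ takes only atoms as arguments, any derivation $~\vdash\tf{d}_X(\vec a)\aeq{C}\tf{d}_X(\vec a')$ forces $\vec a\equiv\vec a'$, hence $g_1\equiv g_2$ and reflexivity applies. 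Instantiating $g_1\equiv t\sigma$, $g_2\equiv u\sigma$ gives $\Upsilon^+_{\atnew{\pvec{c}'}}\vdash(t\sigma)^{-1}\aeq{C}(u\sigma)^{-1}$; chaining this with item~1 and with symmetry/transitivity yields $\Upsilon^+_{\atnew{\pvec{c}'}}\vdash t\aeq{C} u$. Finally, every constraint of $\Upsilon^+_{\atnew{\pvec{c}'}}$ that is not already in $\Upsilon_{\catnew{c}}$ moves atoms only inside $\mathcal{B}\cup\mathcal{C}\cup\{e,c^*\}$ — for the conjugated $\fix{C}$-constraints this uses that, after normalisation, the domain of each $\fix{C}$-constraint lies inside the relevant $\mathcal{A}_X$ and is therefore sent into $\mathcal{B}$ by the conjugation — so its domain minus $\atnew{\pvec{c}'}$ is disjoint from $\atm{t,u}$; repeated Strengthening, together with discarding the now-vacuous $\new$-name $c^*$, delivers $\Upsilon_{\catnew{c}}\vdash t\aeq{C} u$.

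The main obstacle I anticipate is the suspension case of item~1: controlling the composite permutation $\pi^{-1}\circ\pi_X(\pi(a_{X_1}),\dots,\pi(a_{X_{k_X}}))$ and proving its membership in $\PN{}{\Upsilon^+_{\atnew{\pvec{c}'}}|_X}$ is a delicate interplay between the $\mathcal{B}$-renaming built into $\pi_X$ and the extra $\fix{C}$-constraints of $\Upsilon^+_{\atnew{\pvec{c}'}}$. A secondary, purely bookkeeping obstacle is the domain-disjointness side condition for the final iterated application of Strengthening. Everything else is routine structural or derivational induction.
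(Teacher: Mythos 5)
Your overall architecture coincides with the paper's: item~1 by induction on the structure of $t$, item~2 via the auxiliary claim that $\vdash g_1\aeq{C} g_2$ implies $\Upsilon^+_{\atnew{\pvec{c}'}}\vdash g_1^{-1}\aeq{C} g_2^{-1}$ proved by induction on the last rule applied, then symmetry/transitivity with item~1, and finally Strengthening plus removal of vacuous $\new$-names. But there is a genuine gap exactly where you flag "the main obstacle": the suspension case of item~1. You reduce it to the membership $\pi^{-1}\circ\pi_X(\pi(a_{X_1}),\dots,\pi(a_{X_{k_X}}))\in\PN{}{\Upsilon^+_{\atnew{\pvec{c}'}}|_X}$ and hope the shape of $\pi_X$ plus the extra constraints of $\Upsilon^+_{\atnew{\pvec{c}'}}$ will deliver it, but for an arbitrary $\pi$ this is simply false: nothing in $\Upsilon^+_{\atnew{\pvec{c}'}}$ constrains atoms of $\dom{\pi}$ lying outside $\atm{\Upsilon_{\catnew{c}}|_X}\cup\mathcal{C}$. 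The missing idea is that this lemma lives under the standing hypothesis $\Upsilon_{\catnew{c}}\vDash t\aeq{C} u$ fixed at the start of the completeness section. The paper first uses Lemma~\ref{alemma:var-judge-valid} to force the other side to be a suspension on the same variable and reduces w.l.o.g.\ to $t\equiv\pi\act X$, $u\equiv X$; it then applies the characterisation Lemma~\ref{alemma:valid-judge-pi-generated} to obtain $\pi\in\PN{}{\Upsilon_{\catnew{c}}|_X}$, which is contained in the group required by Lemma~\ref{alemma:equivariance-for-completeness}; that lemma (not a direct appeal to rule $(\frule{\faeq{C}}{var})$) yields $\Upsilon^+_{\atnew{\pvec{c}'}}\vdash(\pi\act(X\sigma))^{-1}\aeq{C}\pi\act(X\sigma)^{-1}$, which is the goal because $(X\sigma)^{-1}\equiv X$ by Lemma~\ref{alemma:d-X}. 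Without routing through the semantic hypothesis this case cannot be closed.

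A second, smaller under-estimate is the $(\frule{\faeq{C}}{ab})$ case of item~2. You present it as a congruence case settled by "the matching clauses of $(-)^{-1}$", but $(-)^{-1}$ does not commute with the permutation action in general --- that is exactly what Lemma~\ref{alemma:equivariance-for-completeness} is for, and it carries a side condition. The induction hypothesis gives $\Upsilon^+_{\atnew{\pvec{c}',c_1}}\vdash(\newswap{a}{c_1}\act g_1)^{-1}\aeq{C}(\newswap{b}{c_1}\act g_2)^{-1}$, and to apply $(\frule{\faeq{C}}{ab})$ one must convert this into $\Upsilon^+_{\atnew{\pvec{c}',c_1}}\vdash\newswap{a}{c_1}\act g_1^{-1}\aeq{C}\newswap{b}{c_1}\act g_2^{-1}$. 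The paper does so by first establishing $\var{g_1^{-1}}=\var{g_2^{-1}}$ and $a\notin\mathcal{A}_X$ for every $X\in\var{g_1^{-1}}$ via a free-names argument, which is what places $\newswap{a}{c_1}$ in the group demanded by Lemma~\ref{alemma:equivariance-for-completeness}; only then do two applications of that lemma and transitivity close the case. This bookkeeping is not absorbed by the congruence rules and must be carried out explicitly.
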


\begin{proof}
    \begin{enumerate}
        \item The proof is by induction on $t$. The only interesting case is when $t\equiv \pi_1\act X$. First note that

\begin{itemize}
    \item $X\sigma \in \F(\Sigma^+)$, and $X\sigma$ only involves atoms from $\mathcal{A}^+ \setminus \mathcal{B} \cup \{e\}$.

    \item By Lemma~\ref{alemma:var-judge-valid}, $t \equiv \pi_2 \act X$. Since $\Upsilon_{\catnew{c}} \vDash \pi_1 \act X \aeq{C} \pi_2 \act X$ if and only if $\Upsilon_{\catnew{c}} \vDash (\pi_2^{-1} \circ \pi_1) \act X \aeq{C} \pi_2 \act X$, we may assume w.l.o.g. that $s \equiv \pi \act X$ and $t \equiv X$.

    \item By Lemma~\ref{alemma:d-X}, $(X\sigma)^{-1} \equiv X$ and so $\var{(X\sigma)^{-1}} = \{X\}$.

    \item By Lemma~\ref{alemma:valid-judge-pi-generated}, we get
    \[
        \pi \in\PN{}{\Upsilon_{\catnew{c}}|_X} \subseteq \Perm{\atm{(\Upsilon_{\atnew{\pvec{c}'}}|_X)_{\fresh}}\cup {\cal C}}\alert{\circ}\pair{\perm{}{(\Upsilon_{\atnew{\pvec{c}'}}|_X)_{\fix{C}}}}
    \]
\end{itemize}

Applying Lemma~\ref{alemma:equivariance-for-completeness}, we conclude that $\Upsilon^+_{\atnew{\pvec{c}'}} \vdash (\pi \act (X\sigma))^{-1} \aeq{C} \pi \act (X\sigma)^{-1}$. Finally, by applying Lemma~\ref{alemma:d-X} once more, and using that $(\pi\act X)\sigma \equiv \pi\act(X\sigma)$, we derive $\Upsilon^+_{\atnew{\pvec{c}'}} \vdash ((\pi \act X)\sigma)^{-1} \aeq{C} \pi \act X$.

\item Firstly, let's prove the following claim: If $~\vdash s\sigma \aeq{C} t\sigma$ then $\Upsilon^+_{\atnew{\pvec{c}'}} \vdash (s\sigma)^{-1} \aeq{C} (t\sigma)^{-1}$.

     The proof is by induction on the last rule used to obtain the derivation $~\vdash s\sigma \aeq{C} t\sigma$.

         \begin{itemize}
             \item If the last rule applied is $(\frule{\faeq{C}}{a})$, then by Inversion (Theorem~\ref{thm:miscellaneous}(\ref{thm:inversion})), $s\sigma \equiv a \equiv t\sigma$. Consequently, $(a\sigma)^{-1} \equiv a \equiv (t\sigma)^{-1}$. Therefore, the derivation $\Upsilon^+_{\atnew{\pvec{c}'}} \vdash (a\sigma)^{-1} \aeq{C} (a\sigma)^{-1}$ follows by rule $(\frule{\faeq{C}}{a})$.

             \item If the last rule applied is $(\frule{\faeq{C}}{\tf{f}})$, then by Inversion (Theorem~\ref{thm:miscellaneous}(\ref{thm:inversion})), $s\sigma \equiv \tf{f}(g_1,\ldots,g_n)$, $t\sigma \equiv \tf{f}(g_1',\ldots,g_n')$,
              and $\vdash g_i \aeq{C} g_i'$ for all $i = 1,\ldots,n$.
              By the construction of $\sigma$, it forces $s \equiv \tf{f}(s_1,\ldots,s_n)$ and $t \equiv \tf{f}(t_1,\ldots,t_n)$ because the substitution does not  change the structure of the term $s$ and $t$. Thus, $s_i\sigma \equiv g_i$ and $t_i\sigma \equiv g_i'$ for all $i=1,\ldots,n$. Then, for each $i=1,\ldots,n$, the derivation $\vdash g_i \aeq{C} g_i'$ become $\vdash s_i\sigma \aeq{C} t_i\sigma$. By induction, we have that  $\Upsilon^+_{\atnew{\pvec{c}'}} \vdash (s_i\sigma)^{-1} \aeq{C} (t_i\sigma)^{-1}$. Then, by an application of the rule $(\frule{\faeq{C}}{\tf{f}})$, we obtain $\Upsilon^+_{\atnew{\pvec{c}'}} \vdash (\tf{f}(s_1,\ldots,s_n)\sigma)^{-1} \aeq{C} (\tf{f}(t_1,\ldots,t_n)\sigma)^{-1}$.

              \item If the last rule applied is $(\frule{\faeq{C}}{\tf{f^C}})$, then the it follows similarly to the previous case.

               In this case, by Inversion (Theorem~\ref{thm:miscellaneous}(\ref{thm:inversion})), $s\sigma \equiv \tf{f^C}(g_0,g_1)$ and $t\sigma \equiv \tf{f^C}(g_0',g_1')$, and
               \begin{itemize}
            \item  either $\vdash g_0 \aeq{C} g_0'$ and $\vdash g_1 \aeq{C} g_1'$

            This case is analogous to the case of the rule $(\frule{\faeq{C}}{\tf{f}})$.

            \item  or $\vdash g_0 \aeq{C} g_1'$ and $\vdash g_1 \aeq{C} g_0'$.

            This case is also analogous to the case of the rule $(\frule{\faeq{C}}{\tf{f}})$, but we will prove it anyway.

            By the construction of $\sigma$, we conclude that $s\equiv \tf{f^C}(s_0,s_1)$ and $t\equiv\tf{f^C}(t_0,t_1)$ and, for $i=0,1$, $s_i\sigma \equiv g_i$ and $t_i\sigma \equiv g_i'$. Thus, $\vdash g_0 \aeq{C} g_1'$ and and $\vdash g_1 \aeq{C} g_0'$ becomes, respectively, $\vdash s_0\sigma \aeq{C} t_1\sigma$ and $\vdash s_1\sigma \aeq{C} t_0\sigma$. The result follows by rule $(\frule{\faeq{C}}{\tf{f^C}})$.
         \end{itemize}

          \item If the last rule applied is $(\frule{\faeq{C}}{[a]})$, then by Inversion (Theorem~\ref{thm:miscellaneous}(\ref{thm:inversion})), $s\sigma \equiv [a]g_1, t\sigma \equiv [a]g_2$, and $~\vdash g_1 \aeq{C} g_2$. Again, the definition of $\sigma$ forces $s \equiv [a]s'$ and $t\equiv [a]t'$, which in turn yields $s'\sigma \equiv g_1$ and $t'\sigma \equiv g_2$. Thus, $~\vdash g_1 \aeq{C} g_2$ becomes $~\vdash s'\sigma \aeq{C} t'\sigma$. By induction, $\Upsilon^+_{\atnew{\pvec{c}'}}\vdash s'\aeq{C} t'$ and the result follows by rule $(\frule{\faeq{C}}{[a]})$.

         \item If the last rule applied is $(\frule{\faeq{C}}{ab})$, then by Inversion (Theorem~\ref{thm:miscellaneous}(\ref{thm:inversion})), $ s\sigma \equiv [a]g_1, t\sigma \equiv [b]g_2$, and $~\vdash \newswap{a}{c_1}\act g_1 \aeq{C} \newswap{b}{c_1}\act g_2$ where $\atnew{c_1}\notin a,b,g_1,g_2,\atnew{\pvec{c}'}$.

         The definition of $\sigma$ forces $s \equiv [a]s'$ and $t\equiv [b]t'$, which in turn yields $s'\sigma \equiv g_1$ and $t'\sigma\equiv g_2$. Thus,
              \begin{prooftree}
                 \AxiomC{$\vdash \newswap{a}{c_1}\act g_1 \aeq{C} \newswap{b}{c_1}\act g_2$}
                 \dashedLine
                  \UnaryInfC{$\vdash \newswap{a}{c_1}\act (s'\sigma) \aeq{C} \newswap{b}{c_1}\act (t'\sigma)$}
                  \dashedLine
                  \UnaryInfC{$\vdash (\newswap{a}{c_1}\act s')\sigma \aeq{C} (\newswap{b}{c_1}\act t')\sigma$}
              \end{prooftree}
              By induction, we get
             \begin{prooftree}
                 \AxiomC{$\Upsilon^+_{\atnew{\pvec{c}',c_1}}\vdash ((\newswap{a}{c_1}\act s')\sigma)^{-1} \aeq{C} (\newswap{b}{c_1}\act t')\sigma)^{-1}$}
                 \dashedLine
                 \UnaryInfC{$\Upsilon^+_{\atnew{\pvec{c}',c_1}}\vdash (\newswap{a}{c_1}\act (s'\sigma))^{-1} \aeq{C} (\newswap{b}{c_1}\act (t'\sigma))^{-1}$}
                 \dashedLine
                 \UnaryInfC{$\Upsilon^+_{\atnew{\pvec{c}',c_1}}\vdash (\newswap{a}{c_1}\act g_1)^{-1} \aeq{C} (\newswap{b}{c_1}\act g_2)^{-1}$}
             \end{prooftree}

             Observe that $\var{g_1^{-1}}= \var{g_2^{-1}}$. If  $\var{g_1^{-1}} \neq \var{g_2^{-1}}$, let's say $Z\in \var{g_1^{-1}}$ and $Z\notin\var{g_2^{-1}}$. By the definition of $\sigma$ and $-^{-1}$, this means that $\tf{d}_Z(a_{Z_1},\ldots,a_{Z_{k_Z}})$ occurs in $g_1$ at the same position that $Z$ occurs in $g_1^{-1}$. However, since $~\vdash \newswap{a}{c_1}\act g_1 \aeq{C} \newswap{b}{c_1}\act g_2$, this means that $\tf{d}_Z(a_{Z_1},\ldots,a_{Z_{k_Z}})$ must occur in $g_2$ at the same position that it occurs in $g_1$. This implies that $Z\in\var{g_2^{-1}}$, which is a contradiction.

             We claim that $a\notin {\cal A}_X$ for every $X\in\var{g_1^{-1}}$. Suppose, by contradiction, that $a\in {\cal A}_X$ for some $X\in\var{g_1^{-1}}$.

             From $~\vdash \newswap{a}{c_1}\act g_1 \aeq{C} \newswap{b}{c_1}\act g_2$, we have $\tf{fn}(\newswap{a}{c_1}\act g_1) = \tf{fn}(\newswap{b}{c_1}\act g_2)$. This implies $a\notin \tf{fn}(g_2)$.

             Since $\var{g_2^{-1}} = \var{g_1^{-1}}$, it follows that $X\in \var{g_2^{-1}}$ and hence $a\in \tf{d}_X(a_{X_1},\ldots,a_{X_{k_X}})$ and $\tf{d}_X(a_{X_1},\ldots,a_{X_{k_X}})$ occurs in $g_2$. Then $a\in \tf{fn}(g_2)$,  a contradiction.

             Since $a\notin {\cal A}_X$ for every $X\in\var{g_1^{-1}}$. By definition this means that for every $X\in\var{g_1^{-1}}$ there is some $\pi\fix{C} X\in\Upsilon_{\fresh}|_X$ such that $a\in \dom{\pi}$. Thus
              \[
                     \newswap{a}{c_1}\in \bigcap_{X\in\var{g_1^{-1}}}\Perm{\atm{\Upsilon_{\fresh}|_{X}}\cup\atnew{\pvec{c}',c_1}\cup{\cal C}}\pair{\Upsilon_{\fix{C}}|_{X}}
              \]
              By Lemma~\ref{alemma:equivariance-for-completeness},  we obtain the derivation $\Upsilon^+_{\atnew{\pvec{c}',c_1}}\vdash (\newswap{a}{c_1}\act g_1)^{-1} \aeq{C} \newswap{a}{c_1}\act g_1^{-1}$.

              Similarly, we have $b\notin {\cal A}_X$ for every $X\in\var{g_2^{-1}}$ and consequently $\Upsilon^+_{\atnew{\pvec{c}',c_1}}\vdash (\newswap{b}{c_1}\act g_2)^{-1} \aeq{C} \newswap{b}{c_1}\act g_2^{-1}$. Therefore, $\Upsilon^+_{\atnew{\pvec{c}',c_1}}\vdash (a \ \atnew{c_1})\act g_1^{-1} \aeq{C} (b \ \atnew{c_1})\act g_2^{-1}$ follows by transitivity. Therefore, the derivation $ \Upsilon^+_{\atnew{\pvec{c}'}} \vdash [a]g_1^{-1} \aeq{C} [b]g_2^{-1}$ follows by rule $(\frule{\faeq{C}}{ab}).$
         \end{itemize}

        \paragraph*{Completing the proof} Now that we proved the claim, let's proceed with the proof of the lemma: by Lemma~\ref{alemma:substitution-inverse-fixed-point}, symmetry and transitivity of $\aeq{C}$, we deduce $\Upsilon^+_{\atnew{\pvec{c}'}}\vdash s\aeq{C} t$ and by Strengthening (Theorem~\ref{thm:miscellaneous}(\ref{thm:strengthening})) and Lemma~\ref{alemma:vacuous-quantification}, we obtain $\Upsilon_{\catnew{c}}\vdash s\aeq{C} t$.
    \end{enumerate}
\end{proof}

With all of this established, we can finally prove Completeness.

\begin{proof}[of Completeness]
    Suppose $\Upsilon_{\catnew{c}} \vDash t\aeq{C} u$, so $\Int{\Upsilon_{\catnew{c}} \vdash t\aeq{C} u}{\nom{F}}{\varsigma^*}$ is valid. We want to show that there is a derivation of $\Upsilon_{\catnew{c}} \vdash t\aeq{C} u$. In fact, by Lemma~\ref{alemma:interpretation}(\ref{alemma:context-valid}) we have $\Int{t}{\nalg{F}}{\varsigma^*} = \Int{u}{\nalg{F}}{\varsigma^*}$. By Lemma~\ref{alemma:inversion-properties}(\ref{alemma:substitution-interpretation}), $\Int{t}{\nalg{F}}{\varsigma^*} = \lin{t\sigma}$ and $\Int{u}{\nalg{F}}{\varsigma^*} = \lin{u\sigma}$. Therefore $\lin{t\sigma} = \lin{u\sigma}$, that is, $~\vdash t\sigma \aeq{C} u\sigma$. It follows by Lemma~\ref{alemma:inversion-properties}(\ref{alemma:ground-derivation-implies-derivation}) that $\Upsilon_{\catnew{c}} \vdash t\aeq{C} u$, as desired.
\end{proof}

\section{Proof of Preservation by substitution}\label{app:preservation-substitutiton}

The next results aim to prove the property of Preservation by substitution. We begin establishing by establishing a lemma stating that fix-point constraints are closed under composition and inverses.

\begin{lemma}\label{alemma:fix-point-composition-inverse}
    \begin{enumerate}
        \item \label{alemma:fix-point-composition} If $\Upsilon_{\catnew{c}} \vdash \pi_1 \fix{C} t$  and $\Upsilon_{\catnew{c}} \vdash \pi_2 \fix{C} t$, then $\Upsilon_{\catnew{c}} \vdash (\pi_1\circ \pi_2) \fix{C} t$.
        \item \label{alemma:fix-point-inverse} If $\Upsilon_{\catnew{c}} \vdash \pi\fix{C} t$, then  $\Upsilon_{\catnew{c}} \vdash \pi^{-1}\fix{C} t$.
    \end{enumerate}
\end{lemma}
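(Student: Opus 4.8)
The plan is to derive both statements directly from the structural properties of $\aeq{C}$ already established, using the fact that $\pi\fix{C}t$ is by definition the equality judgement $\pi\act t\aeq{C}t$, so that equivariance, symmetry and transitivity of $\aeq{C}$ (Theorem~\ref{thm:miscellaneous}(\ref{thm:object-equivariance}) and~(\ref{thm:alpha-equivalence})) apply. The only other ingredient is the syntactic identity $\rho\act(\rho'\act t)\equiv(\rho\circ\rho')\act t$, which holds at the level of the object-level permutation action on terms.

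For item~(\ref{alemma:fix-point-composition}): unfolding the notation, the hypotheses are $\Upsilon_{\catnew{c}}\vdash\pi_1\act t\aeq{C}t$ and $\Upsilon_{\catnew{c}}\vdash\pi_2\act t\aeq{C}t$, and the goal is $\Upsilon_{\catnew{c}}\vdash(\pi_1\circ\pi_2)\act t\aeq{C}t$. I would apply equivariance with the permutation $\pi_1$ to the second hypothesis, obtaining $\Upsilon_{\catnew{c}}\vdash\pi_1\act(\pi_2\act t)\aeq{C}\pi_1\act t$; rewriting the left-hand side as $(\pi_1\circ\pi_2)\act t$ and then composing with the first hypothesis by transitivity of $\aeq{C}$ yields $\Upsilon_{\catnew{c}}\vdash(\pi_1\circ\pi_2)\act t\aeq{C}t$, i.e.\ $\Upsilon_{\catnew{c}}\vdash(\pi_1\circ\pi_2)\fix{C}t$.

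For item~(\ref{alemma:fix-point-inverse}): from $\Upsilon_{\catnew{c}}\vdash\pi\act t\aeq{C}t$, apply equivariance with $\pi^{-1}$ to get $\Upsilon_{\catnew{c}}\vdash\pi^{-1}\act(\pi\act t)\aeq{C}\pi^{-1}\act t$; since $\pi^{-1}\circ\pi=\id$ and $\id\act t\equiv t$, the left-hand side is just $t$, so $\Upsilon_{\catnew{c}}\vdash t\aeq{C}\pi^{-1}\act t$, and by symmetry of $\aeq{C}$ we conclude $\Upsilon_{\catnew{c}}\vdash\pi^{-1}\act t\aeq{C}t$, that is $\Upsilon_{\catnew{c}}\vdash\pi^{-1}\fix{C}t$.

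I do not expect any real obstacle here: the argument is routine. The only point worth stating carefully is that $\fix{C}$ is merely a notational restriction of the general equality judgement, so the already-proved properties of $\aeq{C}$ transfer verbatim, and that the permutation action on terms satisfies $\rho\act(\rho'\act t)\equiv(\rho\circ\rho')\act t$ syntactically, which is exactly what allows the compositions $\pi_1\circ\pi_2$ and $\pi^{-1}\circ\pi$ to be formed inside the judgements.
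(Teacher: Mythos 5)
Your proof is correct and is essentially the paper's own argument: the paper dispatches this lemma as a ``direct consequence of Equivariance and Equivalence in Theorem~\ref{thm:miscellaneous}'', and your write-up simply spells out that one-line justification (equivariance by $\pi_1$ resp.\ $\pi^{-1}$, then transitivity resp.\ symmetry, using $\rho\act(\rho'\act t)\equiv(\rho\circ\rho')\act t$).
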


\begin{proof}
Direct consequence of Equivariance and Equivalence in Theorem~\ref{thm:miscellaneous}.
\end{proof}

The following lemma formalizes in the calculus the valid result in nominal set theory, which states that any swapping $(a \ b)$ of two fresh names $a$ and $b$ for an element $x$ fixes the element $x$.

\begin{lemma}\label{alemma:two-fresh-names-fix}
    Let $\atnew{c_1},\atnew{c_2}\notin \atm{\Upsilon_{\catnew{c}},t}$. Then  $\Upsilon_{\catnew{c},\atnew{c_1},\atnew{c_2}} \vdash (\atnew{c_1} \ \atnew{c_2})\act t \aeq{C} t$.
\end{lemma}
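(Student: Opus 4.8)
The plan is to prove $\Upsilon_{\catnew{c},\atnew{c_1},\atnew{c_2}} \vdash (\atnew{c_1}\ \atnew{c_2})\act t \aeq{C} t$ by structural induction on $t$. Before running the induction, I would invoke Equivariance (Theorem~\ref{thm:miscellaneous}(\ref{thm:object-equivariance})) to freely move permutations across judgements, and I would keep in mind that $\atnew{c_1},\atnew{c_2}$ are genuinely new: they do not occur in $\atm{\Upsilon_{\catnew{c}}}$ nor in $\atm{t}$, so in particular they are not in the domain of any permutation appearing in $\Upsilon$, and no atom of $t$ equals $\atnew{c_1}$ or $\atnew{c_2}$.

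\begin{proof}
We proceed by induction on the structure of $t$.

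\emph{Case $t\equiv a$.} Since $\atnew{c_1},\atnew{c_2}\notin\atm{t}=\{a\}$, we have $(\atnew{c_1}\ \atnew{c_2})(a)=a$, so the goal is $\Upsilon_{\catnew{c},\atnew{c_1},\atnew{c_2}}\vdash a\aeq{C} a$, which holds by $(\frule{\faeq{C}}{a})$.

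\emph{Case $t\equiv \pi\act X$.} Then $(\atnew{c_1}\ \atnew{c_2})\act t \equiv ((\atnew{c_1}\ \atnew{c_2})\circ\pi)\act X$, and to apply $(\frule{\faeq{C}}{var})$ it suffices to show $\pi^{-1}\circ((\atnew{c_1}\ \atnew{c_2})\circ\pi) = \pi^{-1}\circ(\atnew{c_1}\ \atnew{c_2})\circ\pi \in \PN{}{\Upsilon_{\catnew{c},\atnew{c_1},\atnew{c_2}}|_X}$. Now $\pi^{-1}\circ(\atnew{c_1}\ \atnew{c_2})\circ\pi = (\pi^{-1}(\atnew{c_1})\ \pi^{-1}(\atnew{c_2}))$; since $\atnew{c_1},\atnew{c_2}\notin\atm{\pi}$ (as $\pi$ occurs in $t$), we have $\pi^{-1}(\atnew{c_i})=\atnew{c_i}$, so this conjugate is just $(\atnew{c_1}\ \atnew{c_2})$. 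Because $\atnew{c_1},\atnew{c_2}\in\catnew{c},\atnew{c_1},\atnew{c_2}$, the swapping $(\atnew{c_1}\ \atnew{c_2})$ lies in $\Perm{\atm{(\Upsilon_{\catnew{c},\atnew{c_1},\atnew{c_2}}|_X)_{\fresh}}}\subseteq\PN{}{\Upsilon_{\catnew{c},\atnew{c_1},\atnew{c_2}}|_X}$ (the $\new$-quantified names $\atnew{c_1},\atnew{c_2}$ are included in $\atm{-}$ even when vacuous in $\Upsilon$), and we conclude by $(\frule{\faeq{C}}{var})$.

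\emph{Case $t\equiv \tf{f}(t_1,\dots,t_n)$ with $\tf{f}\neq\tf{f^C}$.} Then $(\atnew{c_1}\ \atnew{c_2})\act t\equiv\tf{f}((\atnew{c_1}\ \atnew{c_2})\act t_1,\dots,(\atnew{c_1}\ \atnew{c_2})\act t_n)$, each $\atnew{c_i}\notin\atm{t_j}$, so the induction hypothesis gives $\Upsilon_{\catnew{c},\atnew{c_1},\atnew{c_2}}\vdash(\atnew{c_1}\ \atnew{c_2})\act t_j\aeq{C} t_j$ for each $j$, and we apply $(\frule{\faeq{C}}{\tf{f}})$. The case $t\equiv\tf{f^C}(t_0,t_1)$ is analogous, using the induction hypothesis on $t_0,t_1$ and then $(\frule{\faeq{C}}{\tf{f^C}})$ with the ``matching'' choice $i=0$.

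\emph{Case $t\equiv[a]t'$.} Since $\atnew{c_1},\atnew{c_2}\notin\atm{t}$, in particular $a\neq\atnew{c_1},\atnew{c_2}$, so $(\atnew{c_1}\ \atnew{c_2})\act t\equiv[a]((\atnew{c_1}\ \atnew{c_2})\act t')$ and $\atnew{c_1},\atnew{c_2}\notin\atm{t'}$. By the induction hypothesis, $\Upsilon_{\catnew{c},\atnew{c_1},\atnew{c_2}}\vdash(\atnew{c_1}\ \atnew{c_2})\act t'\aeq{C} t'$, and we conclude by $(\frule{\faeq{C}}{[a]})$.
\end{proof}

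The step I expect to be the main obstacle is the suspension case: one must be careful that the conjugate $\pi^{-1}\circ(\atnew{c_1}\ \atnew{c_2})\circ\pi$ collapses to $(\atnew{c_1}\ \atnew{c_2})$ — which relies essentially on $\atnew{c_1},\atnew{c_2}$ being chosen outside $\atm{t}$ (hence outside $\atm{\pi}$) — and that this swapping is indeed a member of the group $\PN{}{\Upsilon_{\catnew{c},\atnew{c_1},\atnew{c_2}}|_X}$ built in rule $(\frule{\faeq{C}}{var})$; here the key point is that the $\new$-quantified names are counted in $\atm{(\Upsilon_{\catnew{c},\atnew{c_1},\atnew{c_2}}|_X)_{\fresh}}$ even if $\Upsilon$ itself mentions no constraint on $X$ involving $\atnew{c_1}$ or $\atnew{c_2}$ (the ``vacuous $\new$-quantification'' convention noted after the definition of $\atm{-}$ on contexts). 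Everything else is a routine propagation of the induction hypothesis through the congruence rules.
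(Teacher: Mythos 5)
Your proof is correct and follows essentially the same route as the paper's: structural induction on $t$, with the suspension case resolved by observing that $\pi^{-1}\circ(\atnew{c_1}\ \atnew{c_2})\circ\pi=(\atnew{c_1}\ \atnew{c_2})\in\PN{}{\Upsilon_{\catnew{c},\atnew{c_1},\atnew{c_2}}|_X}$ since $\atnew{c_1},\atnew{c_2}\notin\dom{\pi}$ and the freshly $\new$-quantified names are counted in $\atm{-}$ of the context. The paper's proof only spells out that suspension case and leaves the congruence cases implicit, so your version is just a more explicit rendering of the same argument.
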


\begin{proof}
The proof is direct by induction on the structure of $t$. Here we show only the case when $t\equiv \pi\act X$. We aim to prove $\Upsilon_{\catnew{c},\atnew{c_1},\atnew{c_2}} \vdash (\atnew{c_1} \ \atnew{c_2}) \act (\pi\act X) \aeq{C} (\pi\act X)$. Given that $\atnew{c_1},\atnew{c_2} \notin \dom{\pi}$, we have
     \[
         \pi^{-1}\circ(\atnew{c_1} \ \atnew{c_2})\circ\pi = (\atnew{c_1} \ \atnew{c_2})\in \PN{}{\Upsilon_{\catnew{c},\atnew{c_1},\atnew{c_2}}|_X}.
     \]
     Therefore, the result follows by rule $(\frule{\faeq{C}}{var})$.
\end{proof}

The next lemma provides a method for handling vacuous quantifications in a derivation.

\begin{lemma}\label{alemma:vacuous-quantification}
     Suppose $\atnew{c_1}\notin \atm{\Upsilon_{\catnew{c}}}$. If $\Upsilon_{\catnew{c}} \vdash t\aeq{C} u$ then $\Upsilon_{\catnew{c},\atnew{c_1}}\vdash t\aeq{C} u$. The converse holds if $\atnew{c_1}\notin \atm{\Upsilon_{\catnew{c}},t,u}$.
\end{lemma}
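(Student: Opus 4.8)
The plan is to prove both implications by induction on the derivation of the equality judgement, using Inversion (Theorem~\ref{thm:miscellaneous}(\ref{thm:inversion})) to analyse the last rule applied. The key observation is that among the rules of Figure~\ref{fig:fixed-rules_new} only $(\frule{\faeq{C}}{var})$ refers to the $\new$-quantified names, and it does so only through the group $\PN{}{\Upsilon_{\catnew{c}}|_X}$; so the heart of the matter is to compare $\PN{}{\Upsilon_{\catnew{c}}|_X}$ with $\PN{}{\Upsilon_{\catnew{c},\atnew{c_1}}|_X}$ under the hypothesis $\atnew{c_1}\notin\atm{\Upsilon_{\catnew{c}}}$. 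Since $\atnew{c_1}$ occurs in no permutation of $\Upsilon$, for every $\pi\fix{C}X\in\Upsilon$ we have $\pi_{\catnew{c},\atnew{c_1}}=\pi_{\catnew{c}}$ and $\pi_{\neg(\catnew{c},\atnew{c_1})}=\pi_{\neg\catnew{c}}$; hence $(\Upsilon_{\catnew{c},\atnew{c_1}}|_X)_{\fix{C}}$ and $(\Upsilon_{\catnew{c}}|_X)_{\fix{C}}$ carry the same underlying primitive constraints, so $\pair{\perm{}{(\Upsilon_{\catnew{c},\atnew{c_1}}|_X)_{\fix{C}}}}=\pair{\perm{}{(\Upsilon_{\catnew{c}}|_X)_{\fix{C}}}}$, while $\atm{(\Upsilon_{\catnew{c},\atnew{c_1}}|_X)_{\fresh}}=\atm{(\Upsilon_{\catnew{c}}|_X)_{\fresh}}\cup\{\atnew{c_1}\}$. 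Consequently $\PN{}{\Upsilon_{\catnew{c}}|_X}\subseteq\PN{}{\Upsilon_{\catnew{c},\atnew{c_1}}|_X}=\Perm{\atm{(\Upsilon_{\catnew{c}}|_X)_{\fresh}}\cup\{\atnew{c_1}\}}\alert{\circ}\pair{\perm{}{(\Upsilon_{\catnew{c}}|_X)_{\fix{C}}}}$. Note also that $\Upsilon_{\catnew{c},\atnew{c_1}}$ is still in normal form w.r.t.\ {\bf (R1)},{\bf (R2)}, since $\atnew{c_1}$ appears in no cycle.

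For the forward implication we induct on a derivation of $\Upsilon_{\catnew{c}}\vdash t\aeq{C}u$. The cases $(\frule{\faeq{C}}{a})$, $(\frule{\faeq{C}}{\tf{f}})$, $(\frule{\faeq{C}}{\tf{f^C}})$ and $(\frule{\faeq{C}}{[a]})$ are immediate from the induction hypothesis; for $(\frule{\faeq{C}}{ab})$ one re-runs the rule with the introduced fresh name chosen distinct from $\atnew{c_1}$ (admissible since cofinitely many choices work) and applies the induction hypothesis to the premise, whose context is $\Upsilon_{\catnew{c},\atnew{c_2}}$ and which still does not mention $\atnew{c_1}$. For $(\frule{\faeq{C}}{var})$ — where $t\equiv\pi\act X$, $u\equiv\pi'\act X$ and $\pi'^{-1}\circ\pi\in\PN{}{\Upsilon_{\catnew{c}}|_X}$ — the inclusion $\PN{}{\Upsilon_{\catnew{c}}|_X}\subseteq\PN{}{\Upsilon_{\catnew{c},\atnew{c_1}}|_X}$ recorded above immediately discharges the side condition.

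For the converse, assume $\atnew{c_1}\notin\atm{\Upsilon_{\catnew{c}},t,u}$ and induct on a derivation of $\Upsilon_{\catnew{c},\atnew{c_1}}\vdash t\aeq{C}u$. The structural cases follow from the induction hypothesis once one checks that removing an abstraction, descending into a subterm, or applying the swapping $\newswap{a}{c_2}$ of rule $(\frule{\faeq{C}}{ab})$ (again with $\atnew{c_2}\neq\atnew{c_1}$) keeps $\atnew{c_1}$ out of the atoms of the smaller terms and of the resulting context $\Upsilon_{\catnew{c},\atnew{c_2}}$. The crux is $(\frule{\faeq{C}}{var})$: here $t\equiv\pi\act X$, $u\equiv\pi'\act X$ with $\atnew{c_1}\notin\dom{\pi}\cup\dom{\pi'}$, so $\rho:=\pi'^{-1}\circ\pi$ fixes $\atnew{c_1}$, and by assumption $\rho\in\PN{}{\Upsilon_{\catnew{c},\atnew{c_1}}|_X}=\Perm{B\cup\{\atnew{c_1}\}}\alert{\circ}\pair{\perm{}{(\Upsilon_{\catnew{c}}|_X)_{\fix{C}}}}$, where $B:=\atm{(\Upsilon_{\catnew{c}}|_X)_{\fresh}}$. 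Write $\rho=\alpha\circ\beta$ with $\alpha\in\Perm{B\cup\{\atnew{c_1}\}}$ and $\beta\in\pair{\perm{}{(\Upsilon_{\catnew{c}}|_X)_{\fix{C}}}}$. Because the context is normalised, $B$ and $\dom{\beta}$ are disjoint, and since $\atnew{c_1}\notin\atm{\Upsilon_{\catnew{c}}}$ also $\atnew{c_1}\notin\dom{\beta}$; thus $\beta$ fixes $B\cup\{\atnew{c_1}\}$ pointwise, so $\rho$ agrees with $\alpha$ on $B\cup\{\atnew{c_1}\}$. As $\rho$ fixes $\atnew{c_1}$, so does $\alpha$, whence $\dom{\alpha}\subseteq B$ and $\rho\in\Perm{B}\alert{\circ}\pair{\perm{}{(\Upsilon_{\catnew{c}}|_X)_{\fix{C}}}}=\PN{}{\Upsilon_{\catnew{c}}|_X}$; applying $(\frule{\faeq{C}}{var})$ yields $\Upsilon_{\catnew{c}}\vdash t\aeq{C}u$.

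I expect this last step of the converse to be the main obstacle: one must exploit the normal-form disjointness of the ``freshness'' part $(\Upsilon_{\catnew{c}}|_X)_{\fresh}$ and the ``pure fixed-point'' part $(\Upsilon_{\catnew{c}}|_X)_{\fix{C}}$ of the context to propagate the information ``$\atnew{c_1}$ is fixed'' from $\rho$ down to its freshness component $\alpha$. Without the normalisation assumption, the decomposition of $\rho$ into a freshness part and a generated part need not be compatible with $\atnew{c_1}\notin\dom{\rho}$, and the argument would break; everything else is a routine rule-by-rule induction together with the elementary bookkeeping that $\atm{(\Upsilon_{\catnew{c}}|_X)_{\fresh}}\subseteq\atm{\Upsilon_{\catnew{c}}}$ and $\atm{(\Upsilon_{\catnew{c}}|_X)_{\fix{C}}}\subseteq\atm{\Upsilon_{\catnew{c}}}$.
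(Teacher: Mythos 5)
Your proof is correct and follows the same strategy as the paper's: induction on the last rule applied, with all structural cases routine and the $(\frule{\faeq{C}}{var})$ case settled by comparing $\PN{}{\Upsilon_{\catnew{c}}|_X}$ with $\PN{}{\Upsilon_{\catnew{c},\atnew{c_1}}|_X}$. The paper only spells out the forward variable case via the inclusion $\PN{}{\Upsilon_{\catnew{c}}|_X}\subseteq\PN{}{\Upsilon_{\catnew{c},\atnew{c_1}}|_X}$; your additional argument for the converse — using normalisation to get disjointness of the freshness and pure fixed-point parts, so that a permutation fixing $\atnew{c_1}$ decomposes with its freshness component supported in $\atm{(\Upsilon_{\catnew{c}}|_X)_{\fresh}}$ — is exactly the detail the paper leaves implicit, and it is sound.
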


\begin{proof}
 By induction on the last rule applied to obtain $\Upsilon_{\catnew{c}} \vdash t\aeq{C} u$. For instance, if the last rule is $(\frule{\faeq}{var})$, then we obtain $\Upsilon_{\catnew{c}} \vdash \pi_1\act X\aeq{C} \pi_2\act X$. By Inversion (Theorem~\ref{thm:miscellaneous}(\ref{thm:inversion})), it follows that $\pi_2^{-1}\circ\pi_1\in \PN{}{\Upsilon_{\catnew{c}}|_X}$. Since $\atnew{c_1}\notin \atm{\Upsilon_{\catnew{c}}}$, we  have $\PN{}{\Upsilon_{\catnew{c}}|_X}\subseteq\PN{}{\Upsilon_{\catnew{c},\atnew{c_1}}|_X} $. Then $\pi_2^{-1}\circ\pi_1\in \PN{}{\Upsilon_{\catnew{c},\atnew{c_1}}|_X}$ and so the result follows by rule $(\frule{\faeq{C}}{var})$.

\end{proof}

The following lemma formalizes a fundamental property of nominal sets, stating that if a permutation's domain contains only fresh names for an element $x$, then $x$ is fixed by that permutation.

\begin{lemma}\label{alemma:fix-point-formed-by-fresh-names}
    Let $I$ be a non-empty, finite set of indices, and let $\{\pi_i \mid i \in I\}$ be a set of permutations. Suppose $\Upsilon_{\catnew{c}, \atnew{c_1}} \vdash \newswap{a}{c_1}\fix{C} t$ for all $a \in \left(\bigcup_{i \in I} \dom{\pnew{\pi_i}{\catnew{c}}}\right) \cup \catnew{c}$, where $\atnew{c_1} \notin \atm{\Upsilon_{\catnew{c}},t, \{\pi_i \mid i \in I\}}$. Then, $\Upsilon_{\catnew{c}} \vdash \pi'\fix{C} t$ for all $\pi'$ such that $\dom{\pi'} \subseteq \left(\bigcup_{i \in I} \dom{\pnew{\pi_i}{\catnew{c}}}\right) \cup \catnew{c}$.
\end{lemma}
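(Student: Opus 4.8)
The plan is to reduce the claim to the case of a single transposition and then invoke closure of fixed-point constraints under composition. Write $B := \left(\bigcup_{i\in I}\dom{\pnew{\pi_i}{\catnew{c}}}\right)\cup\catnew{c}$ for the set of atoms appearing in the hypothesis, so that the assumption reads $\Upsilon_{\catnew{c},\atnew{c_1}}\vdash\newswap{a}{c_1}\fix{C} t$ for every $a\in B$. The first thing to record is that $\atnew{c_1}\notin B$: indeed $\catnew{c}\subseteq\atm{\Upsilon_{\catnew{c}}}$, and $\bigcup_{i\in I}\dom{\pnew{\pi_i}{\catnew{c}}}\subseteq\bigcup_{i\in I}\dom{\pi_i}=\atm{\{\pi_i\mid i\in I\}}$, and $\atnew{c_1}$ avoids both of these by hypothesis. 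Consequently, any permutation $\pi'$ with $\dom{\pi'}\subseteq B$ satisfies $\atnew{c_1}\notin\dom{\pi'}$, whence $\atnew{c_1}\notin\atm{\pi'\act t}$, using $\atm{\pi'\act t}\subseteq\atm{t}\cup\dom{\pi'}$ and $\atnew{c_1}\notin\atm{t}$.

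Next I would show $\Upsilon_{\catnew{c},\atnew{c_1}}\vdash(a \ b)\fix{C} t$ for all $a,b\in B$. If $a=b$ this is reflexivity; otherwise use the permutation identity $(a \ b)=\newswap{a}{c_1}\circ\newswap{b}{c_1}\circ\newswap{a}{c_1}$ (valid in the paper's left-acting composition convention, since the right-hand side fixes $\atnew{c_1}$ and swaps $a$ and $b$) together with Lemma~\ref{alemma:fix-point-composition-inverse}(\ref{alemma:fix-point-composition}), fed the three instances $\Upsilon_{\catnew{c},\atnew{c_1}}\vdash\newswap{a}{c_1}\fix{C} t$ and $\Upsilon_{\catnew{c},\atnew{c_1}}\vdash\newswap{b}{c_1}\fix{C} t$ of the hypothesis. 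Now take an arbitrary $\pi'$ with $\dom{\pi'}\subseteq B$. Since $\pi'$ restricted to its finite domain is a permutation of a subset of $B$, it can be written as a finite composition $\pi'=\tau_1\circ\cdots\circ\tau_m$ of transpositions $\tau_j=(a_j \ b_j)$ with $a_j,b_j\in\dom{\pi'}\subseteq B$ (the case $m=0$, i.e.\ $\pi'=\id$, being handled directly by the reflexivity rule). Repeated application of Lemma~\ref{alemma:fix-point-composition-inverse}(\ref{alemma:fix-point-composition}) then yields $\Upsilon_{\catnew{c},\atnew{c_1}}\vdash\pi'\fix{C} t$. Finally, since $\atnew{c_1}\notin\atm{\Upsilon_{\catnew{c}},t,\pi'\act t}$ by the first paragraph, the converse direction of Lemma~\ref{alemma:vacuous-quantification} (removing a vacuous $\new$-bound name) gives $\Upsilon_{\catnew{c}}\vdash\pi'\fix{C} t$, as required.

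All of these steps are short, and I do not expect a genuine obstacle: the content is in marshalling the auxiliary results in the right order. The points that need care are purely administrative, namely verifying the freshness side-conditions of Lemma~\ref{alemma:vacuous-quantification} (which is exactly why the first paragraph pins down $\atnew{c_1}\notin B$ and hence $\atnew{c_1}\notin\atm{\pi'\act t}$) and checking the elementary transposition identity in the paper's composition convention; the group-theoretic fact that a finite permutation decomposes into transpositions of atoms it moves is standard and requires no separate argument.
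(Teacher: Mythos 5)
Your proof is correct and takes essentially the same route as the paper's: both hinge on the identity $(a\ b)=\newswap{a}{c_1}\circ\newswap{b}{c_1}\circ\newswap{a}{c_1}$, closure of fixed-point judgements under composition (Lemma~\ref{alemma:fix-point-composition-inverse}), and removal of the vacuous $\atnew{c_1}$ via Lemma~\ref{alemma:vacuous-quantification}. The only difference is organisational — you decompose $\pi'$ directly into transpositions, whereas the paper first inducts on the number of disjoint cycles and then decomposes each cycle — and your version is, if anything, slightly cleaner for also explicitly verifying the freshness side-condition $\atnew{c_1}\notin\atm{\pi'\act t}$ needed for the final step.
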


\begin{proof}
    We proceed by induction on the number of disjoint cycles in the decomposition of $\pi'$.

    \paragraph*{Base case.}

    Suppose $\pi'$ is a single cycle, say $\pi' = (a_1 \ a_2 \ \ldots \ a_m)$, which can be expressed as product of swappings:
     \[
        \pi' = (a_1 \ a_m)\circ (a_1 \ a_{m-1})\circ(a_1 \ a_3)\circ(a_1 \ a_2)
    \]
    For each $k\in\{2, \ldots, m\}$: $(a_1 \ a_k) = \newswap{a_1}{c_1}\circ \newswap{a_k}{c_1} \circ \newswap{a_1}{c_1}.$ By the assumption in the statement, we have $\Upsilon_{\catnew{c},\atnew{c_1}} \vdash \newswap{a_k}{c_1}\fix{C} t$ for all $k\in \{2, \ldots, m\}$. By Lemma~\ref{alemma:fix-point-composition-inverse}(\ref{alemma:fix-point-composition}), we have
    \[
        \Upsilon_{\catnew{c},\atnew{c_1}} \vdash \newswap{a_1}{c_1}\circ \newswap{a_k}{c_1} \circ \newswap{a_1}{c_1}\fix{C} t
    \]
    which implies $\Upsilon_{\catnew{c},\atnew{c_1}} \vdash (a_1 \ a_k)\fix{C} t$. By applying Lemma~\ref{alemma:fix-point-composition-inverse}(\ref{alemma:fix-point-composition}) again, we obtain:
    \[
        \Upsilon_{\catnew{c},\atnew{c_1}} \vdash (a_1 \ a_m)\circ (a_1 \ a_{m-1})\circ(a_1 \ a_3)\circ(a_1 \ a_2)\fix{C} t.
    \]
    which is the same as $\Upsilon_{\catnew{c},\atnew{c_1}} \vdash (a_1 \ a_2 \ \ldots \ a_m)\fix{C} t$. By Lemma~\ref{alemma:vacuous-quantification}, we conclude that $\Upsilon_{\catnew{c}} \vdash \pi'\fix{C} t$.

    \paragraph*{Inductive step:} Now, assume the result holds for all permutations $\pi'$ (under the conditions of the statement) that decompose into $n-1$ disjoint cycles. Consider a permutation $\pi'$ such that $\dom{\pi'} \subseteq \left(\bigcup_{i \in I} \dom{\pnew{\pi_i}{\catnew{c}}}\right) \cup \catnew{c}$ and consisting of $n$ disjoint cycles: $\pi' = \eta_1\circ\ldots\circ\eta_n.$ Since $\dom{\eta_1 \circ \ldots \circ \eta_{n-1}} \subseteq \dom{\pi'}$, it follows:
    \[
        \dom{\eta_1\circ\ldots\circ\eta_{n-1}} \subseteq \left(\bigcup_{i\in I}\dom{\pnew{\pi_i}{\catnew{c}}}\right)\cup\catnew{c}.
    \]
    By the induction hypothesis, we have $\Upsilon_{\catnew{c},\atnew{c_1}} \vdash (\eta_1\circ\ldots\circ\eta_{n-1})\fix{C} t.$ Using a similar argument to the base case, we show that $\Upsilon_{\catnew{c},\atnew{c_1}} \vdash \eta_n\fix{C} t$. By  Lemma~\ref{alemma:fix-point-composition-inverse}(\ref{alemma:fix-point-composition}), it follows that $\Upsilon_{\catnew{c},\atnew{c_1}} \vdash \pi'\fix{C} t$ which, by Lemma~\ref{alemma:vacuous-quantification}, yields $\Upsilon_{\catnew{c}} \vdash \pi'\fix{C} t$.
\end{proof}

The following lemma describes a particular derivation case, which will be useful for establishing the next result.

\begin{lemma}\label{alemma:characterization-fix-commutative}
    Suppose $\atnew{c_1}\notin \atm{\Upsilon_{\catnew{c}},a,\tf{f^C}(t_0,t_1)}$. Then $\Upsilon_{\catnew{c},\atnew{c_1}} \vdash \newswap{a}{c_1}\fix{C} \tf{f^C}(t_0,t_1)$ iff $\Upsilon_{\catnew{c},\atnew{c_1}}\vdash \newswap{a}{c_1}\fix{C} t_0$ and $\Upsilon_{\catnew{c},\atnew{c_1}}\vdash \newswap{a}{c_1}\fix{C} t_1$.
\end{lemma}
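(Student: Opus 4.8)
The plan is to prove the biconditional by analysing the structure of the fixed-point constraint $\Upsilon_{\catnew{c},\atnew{c_1}} \vdash \newswap{a}{c_1}\fix{C} \tf{f^C}(t_0,t_1)$, i.e.\ the derivation of $\Upsilon_{\catnew{c},\atnew{c_1}} \vdash \newswap{a}{c_1}\act\tf{f^C}(t_0,t_1) \aeq{C} \tf{f^C}(t_0,t_1)$, using Inversion (Theorem~\ref{thm:miscellaneous}(\ref{thm:inversion})) for the hard direction.

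For the right-to-left direction, assume $\Upsilon_{\catnew{c},\atnew{c_1}}\vdash \newswap{a}{c_1}\fix{C} t_0$ and $\Upsilon_{\catnew{c},\atnew{c_1}}\vdash \newswap{a}{c_1}\fix{C} t_1$. Since $\newswap{a}{c_1}\act\tf{f^C}(t_0,t_1) \equiv \tf{f^C}(\newswap{a}{c_1}\act t_0, \newswap{a}{c_1}\act t_1)$, a single application of rule $(\frule{\faeq{C}}{\tf{f^C}})$ (choosing the index $i=0$, i.e.\ matching $\newswap{a}{c_1}\act t_0$ with $t_0$ and $\newswap{a}{c_1}\act t_1$ with $t_1$) gives $\Upsilon_{\catnew{c},\atnew{c_1}} \vdash \tf{f^C}(\newswap{a}{c_1}\act t_0, \newswap{a}{c_1}\act t_1) \aeq{C} \tf{f^C}(t_0,t_1)$, which is exactly $\Upsilon_{\catnew{c},\atnew{c_1}} \vdash \newswap{a}{c_1}\fix{C} \tf{f^C}(t_0,t_1)$.

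For the left-to-right direction, assume $\Upsilon_{\catnew{c},\atnew{c_1}} \vdash \newswap{a}{c_1}\fix{C} \tf{f^C}(t_0,t_1)$. By Inversion applied to rule $(\frule{\faeq{C}}{\tf{f^C}})$, there are two cases: either (i) $\Upsilon_{\catnew{c},\atnew{c_1}} \vdash \newswap{a}{c_1}\act t_0 \aeq{C} t_0$ and $\Upsilon_{\catnew{c},\atnew{c_1}} \vdash \newswap{a}{c_1}\act t_1 \aeq{C} t_1$, which immediately yields the conclusion; or (ii) the ``crossed'' case $\Upsilon_{\catnew{c},\atnew{c_1}} \vdash \newswap{a}{c_1}\act t_0 \aeq{C} t_1$ and $\Upsilon_{\catnew{c},\atnew{c_1}} \vdash \newswap{a}{c_1}\act t_1 \aeq{C} t_0$. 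The main obstacle is case (ii): here I would use the freshness of $\atnew{c_1}$. From $\Upsilon_{\catnew{c},\atnew{c_1}} \vdash \newswap{a}{c_1}\act t_0 \aeq{C} t_1$ I would apply Equivariance (Theorem~\ref{thm:miscellaneous}(\ref{thm:object-equivariance})) with the permutation $\newswap{a}{c_1}$ to obtain $\Upsilon_{\catnew{c},\atnew{c_1}} \vdash t_0 \aeq{C} \newswap{a}{c_1}\act t_1$, and composing with the second hypothesis $\Upsilon_{\catnew{c},\atnew{c_1}} \vdash \newswap{a}{c_1}\act t_1 \aeq{C} t_0$ via transitivity (Theorem~\ref{thm:miscellaneous}(\ref{thm:alpha-equivalence})) is consistent but does not directly give $\newswap{a}{c_1}\fix{C} t_i$. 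Instead, the key point is that $\atnew{c_1}\notin\atm{\tf{f^C}(t_0,t_1)}$, so $\atnew{c_1}$ does not occur in $t_0$ or $t_1$; hence I would invoke soundness (Theorem~\ref{thm:soundness-completeness-fix}(\ref{thm:soundness-fix})) to pass to the semantics, where $\newswap{a}{c_1}\act\tf{f^C}(t_0,t_1) =_{\nalg{A}} \tf{f^C}(t_0,t_1)$ together with $\atnew{c_1}$ fresh forces (by the Choose-a-Fresh-Name Principle and the fact that for all but finitely many $\atnew{c_1}$ the equality must hold) that $a$ is not in the support of either $\Int{t_0}{\nalg{A}}{\varsigma}$ or $\Int{t_1}{\nalg{A}}{\varsigma}$ — in the crossed case this uses that applying the swapping twice returns to the identity, so $a\fresh_{\tt sem}\Int{t_0}{\nalg{A}}{\varsigma}$ and $a\fresh_{\tt sem}\Int{t_1}{\nalg{A}}{\varsigma}$ — and then completeness (Theorem~\ref{thm:soundness-completeness-fix}(\ref{thm:completeness-fix})) returns the derivations $\Upsilon_{\catnew{c},\atnew{c_1}}\vdash \newswap{a}{c_1}\fix{C} t_i$. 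Alternatively, and more directly in the proof system, in case (ii) I would observe that substituting $\atnew{c_1}$ for $a$ everywhere (legitimate since $\atnew{c_1}$ is fresh) in $\Upsilon_{\catnew{c},\atnew{c_1}} \vdash \newswap{a}{c_1}\act t_0 \aeq{C} t_1$ yields $\Upsilon_{\catnew{c},\atnew{c_1}} \vdash \newswap{\atnew{c_1}}{a}\act t_0 \aeq{C} t_1 \equiv \newswap{a}{c_1}\act t_0\aeq{C} t_1$ after noting $t_1$ is unchanged, and iterating this observation with the second hypothesis collapses the crossed case to the uncrossed one; I expect working out exactly which of these two routes is cleanest — the semantic detour versus a purely syntactic argument about $\atnew{c_1}$-freshness — to be where most of the care is needed.
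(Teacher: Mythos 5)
Your right-to-left direction and the uncrossed branch of the left-to-right direction coincide with the paper's proof. The genuine gap is in your treatment of the crossed case, and neither of your two proposed escapes works as stated. For the semantic detour: the top-level fact $\Int{\newswap{a}{c_1}\act\tf{f^C}(t_0,t_1)}{\nalg{A}}{\varsigma} = \Int{\tf{f^C}(t_0,t_1)}{\nalg{A}}{\varsigma}$ forces nothing about the components in an arbitrary model (take $f^{\C,\nalg{A}}$ constant), so you must instead start from the crossed-case derivations themselves, and there your justification ``applying the swapping twice returns to the identity'' is empty: from $\newswap{a}{c_1}\act x_0 = x_1$ and its (equivalent) converse you get only $x_0 = x_0$. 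What actually closes the semantic argument is comparing the instance at $\atnew{c_1}$ with the instance at a second fresh name $\atnew{c_2}$ (legitimate by equivariance, since $\atnew{c_1},\atnew{c_2}$ occur nowhere in $\Upsilon, a, t_0, t_1$), which yields $(a \ \atnew{c_1} \ \atnew{c_2})\act x_0 = x_0$ for cofinitely many $\atnew{c_2}$ and hence, by Lemma~\ref{lemma:pitts-eq-generalized}/Lemma~\ref{lemma:generated-group}, $a,\atnew{c_1}\notin\supp{}{x_0}$; only then can Completeness be invoked (and one must also check, as is indeed the case here, that Completeness does not itself depend on this lemma). None of this is in your sketch. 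Your ``purely syntactic'' alternative is vacuous: since $\newswap{\atnew{c_1}}{a} = \newswap{a}{c_1}$, the ``substitution'' you describe returns the identical judgement, and no collapsing argument is actually given.

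The paper resolves the crossed case quite differently: it performs a case analysis on the shapes of $t_0$ and $t_1$ (both suspensions, both abstractions, \ldots) and in each case shows syntactically that the two crossed derivations force $\Upsilon_{\catnew{c},\atnew{c_1}}\vdash t_0 \aeq{C} t_1$ (or the appropriately permuted variant), which reduces the crossed case to the uncrossed one by Equivalence. For suspensions over the same variable this is a cycle analysis of $\pi'^{-1}\circ\newswap{a}{c_1}\circ\pi$ inside $\PN{}{\Upsilon_{\catnew{c},\atnew{c_1}}|_X}$, exploiting that $\atnew{c_1}$ is $\new$-quantified; for abstractions it uses Inversion, Equivariance and Lemma~\ref{alemma:two-fresh-names-fix}. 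Your attempt to handle the crossed case uniformly, without inspecting the structure of $t_0,t_1$, is precisely the point at which a purely proof-theoretic argument stalls, and the semantic shortcut you gesture at needs the two-fresh-names/cycle argument above to be made correct.
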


\begin{proof}
\begin{description}
    \item[$(\Leftarrow)$] This case follows by applying the rule $(\frule{\faeq{C}}{\tf{f^C}})$.

    \item[$(\Rightarrow)$]

    The proof follows by an analysis on every possible pair of terms $(t_0,t_1)$. Since $\Upsilon_{\catnew{c},\atnew{c_1}} \vdash \newswap{a}{c_1}\fix{C} \tf{f^C}(t_0,t_1)$ iff $\Upsilon_{\catnew{c},\atnew{c_1}} \vdash \newswap{a}{c_1}\fix{C} \tf{f^C}(t_1,t_0)$ holds, it's sufficient to analyse only half of the possibilities. The only non-trivial cases are the cases where $t_0$ and $t_1$ are either both suspensions or both abstractions.

    \begin{enumerate}
        \item $t_0 \equiv \pi\act X$ and $t_1 \equiv \pi'\act Y$.

        In this case, $\Upsilon_{\catnew{c},\atnew{c_1}} \vdash \newswap{a}{c_1}\fix{C} \tf{f^C}(\pi\act X,\pi'\act Y)$ implies $ \Upsilon_{\catnew{c},\atnew{c_1}} \vdash \tf{f^C}(\newswap{a}{c_1}\act(\pi\act X),\newswap{a}{c_1}\act(\pi'\act Y))\aeq{C} \tf{f^C}(\pi\act X,\pi'\act Y).$

        \begin{enumerate}
            \item $Y\not\equiv X$.

            By Inversion (Theorem~\ref{thm:miscellaneous}(\ref{thm:inversion})), the valid branch is: $\Upsilon_{\catnew{c},\atnew{c_1}} \vdash \newswap{a}{c_1}\act(\pi\act X) \aeq{C}\pi\act X$ and $\Upsilon_{\catnew{c},\atnew{c_1}} \vdash \newswap{a}{c_1}\act(\pi'\act Y)\aeq{C}\pi'\act Y$, which are precisely the derivations we wanted.

            \item $Y\equiv X$.

            By Inversion (Theorem~\ref{thm:miscellaneous}(\ref{thm:inversion})), there are two possible branches:
            \begin{itemize}
               \item $\Upsilon_{\catnew{c},\atnew{c_1}}\vdash \newswap{a}{c_1}\act(\pi\act X) \aeq{C}\pi\act X$ and $\Upsilon_{\catnew{c},\atnew{c_1}}\vdash \newswap{a}{c_1}\act(\pi'\act X)\aeq{C}\pi'\act X$, and the result follows trivially.

            \item $\Upsilon_{\catnew{c},\atnew{c_1}}\vdash \newswap{a}{c_1}\act(\pi\act X)\aeq{C}\pi'\act X$ and $\Upsilon_{\catnew{c},\atnew{c_1}}\vdash \newswap{a}{c_1}\act(\pi'\act X)\aeq{C}\pi\act X$.

            Note that, by the Equivariance (Theorem~\ref{thm:miscellaneous}(\ref{thm:object-equivariance})) and the symmetry of $\aeq{C}$, these two derivations are equivalent, so it's enough to focus on just one. We will work with the first one. Applying Inversion (Theorem~\ref{thm:miscellaneous}(\ref{thm:inversion})), we obtain: $\pi'^{-1} \circ\newswap{a}{c_1}\circ \pi\in \PN{}{\Upsilon_{\catnew{c},\atnew{c_1}}|_X}$. To simplify the proof, let's call $\gamma := \pi'^{-1} \circ\newswap{a}{c_1}\circ \pi$

            \paragraph*{Case  $\gamma = \id$.} In this case we have
            \[
               \pi'^{-1}(a)  = \gamma(\atnew{c_1}) = \id(\atnew{c_1}) = \atnew{c_1}.
            \]
            and
            \[
                \atnew{c_1}  = \gamma(\pi^{-1}(a)) = \id(\pi^{-1}(a)) = \pi^{-1}(a).
            \]
            Therefore, $\newswap{a}{c_1}^{\pi'^{-1}} =\newswap{\pi'^{-1}(a)}{c_1} = \id \in \PN{}{\Upsilon_{\catnew{c},\atnew{c_1}}|_X}$. Similarly, $\newswap{a}{c_1}^{\pi^{-1}} =\newswap{\pi^{-1}(a)}{c_1} = \id \in \PN{}{\Upsilon_{\catnew{c},\atnew{c_1}}|_X}$. Thus, the result follows by rule $(\frule{\fix{C}}{var})$.

            \paragraph*{Case $\gamma \neq \id$.}
            As observed $\gamma(\pi^{-1}(a)) = \atnew{c_1}$ and $\gamma(\atnew{c_1}) = \pi'^{-1}(a)$. Since $\dom{\gamma}$ is finite, there is a $m\geq 3$ such that $\gamma^m(\pi^{-1}(a)) = \pi^{-1}(a)$. Then
            \[
                \gamma = \underbrace{(\pi^{-1}(a) \ \atnew{c_1} \ \pi'^{-1}(a) \ \gamma^4(\pi^{-1}(a)) \ \ldots \ \gamma^{m-1}(\pi^{-1}(a)))}_{\gamma'}\circ \rho'
            \]
            where $\rho'$ is disjoint from $\gamma'$. Then $\gamma' \in \Perm{\atm{(\Upsilon_{\catnew{c},\atnew{c_1}}|_X)_{\fresh}}}$ and consequently $\pi^{-1}(a),\pi'^{-1}(a) \in \atm{(\Upsilon_{\catnew{c},\atnew{c_1}}|_X)_{\fresh}}$. So it follows that $\newswap{a}{c_1}^{\pi'^{-1}} = \newswap{\pi'^{-1}(a)}{c_1} \in \PN{}{\Upsilon_{\catnew{c},\atnew{c_1}}|_X}$ and  $\newswap{a}{c_1}^{\pi^{-1}} = \newswap{\pi^{-1}(a)}{c_1} \in \PN{}{\Upsilon_{\catnew{c},\atnew{c_1}}|_X}$ and the result follows by $(\frule{\faeq{C}}{var})$.
         \end{itemize}
        \end{enumerate}

        \item $t_0$ and $t_1$ are abstractions.
        \begin{enumerate}
            \item $t_0 \equiv [a]t_0'$ and $t_1 \equiv [a]t_1'$.

            Then $\Upsilon_{\catnew{c},\atnew{c_1}} \vdash \newswap{a}{c_1}\fix{C} \tf{f^C}([a]t_0',[a]t_1')$ implies $ \Upsilon_{\catnew{c},\atnew{c_1}} \vdash \tf{f^C}(\newswap{a}{c_1}\act[a]t_0',\newswap{a}{c_1}\act[a]t_1')\aeq{C} \tf{f^C}([a]t_0',[a]t_1')$. By Inversion (Theorem~\ref{thm:miscellaneous}(\ref{thm:inversion})), we have two possible branches:

        \begin{itemize}
            \item  $\Upsilon_{\catnew{c},\atnew{c_1}} \vdash \newswap{a}{c_1}\act [a]t_0'\aeq{C} [a]t_0'$ and $\Upsilon_{\catnew{c},\atnew{c_1}} \vdash \atnew{c_1}\act [a]t_1'\aeq{C} [a]t_1'$.

            Then the result follows.

            \item $\Upsilon_{\catnew{c},\atnew{c_1}} \vdash \newswap{a}{c_1}\act [a]t_0'\aeq{C} [a]t_1'$ and $\Upsilon_{\catnew{c},\atnew{c_1}} \vdash \newswap{a}{c_1}\act [a]t_1'\aeq{C} [a]t_0'$.

            Note that, by Equivariance (Theorem~\ref{thm:miscellaneous}(\ref{thm:object-equivariance})) and Equivalence (Theorem~\ref{thm:miscellaneous}(\ref{thm:alpha-equivalence})), these two derivations are equivalent, so it's sufficient with just one. We will work with the first one.

            We claim that $\Upsilon_{\catnew{c},\atnew{c_1}} \vdash [a]t_0' \aeq{C} [a]t_1'$. In fact, observe that the derivation $\Upsilon_{\catnew{c},\atnew{c_1}} \vdash \newswap{a}{c_1} \act [a]t_0' \aeq{C} [a]t_1'$ is equivalent to $\Upsilon_{\catnew{c},\atnew{c_1}} \vdash [\atnew{c_1}]\newswap{a}{c_1} \act t_0' \aeq{C} [a]t_1'$. By applying Inversion (Theorem~\ref{thm:miscellaneous}(\ref{thm:inversion})), we derive $\Upsilon_{\catnew{c},\atnew{c_1},\atnew{c_2}} \vdash ((\atnew{c_1} \ \atnew{c_2}) \circ \newswap{a}{c_1}) \act t_0' \aeq{C} \newswap{a}{c_2} \act t_1'$ where $\atnew{c_2}\notin \atm{\Upsilon_{\catnew{c},\atnew{c_1}},a,t_0',t_1'}$. By Equivariance  (Theorem~\ref{thm:miscellaneous}(\ref{thm:object-equivariance}), this results in $\Upsilon_{\catnew{c},\atnew{c_1},\atnew{c_2}} \vdash (\newswap{a}{c_2}^{-1} \circ (\atnew{c_1} \ \atnew{c_2}) \circ \newswap{a}{c_1}) \act t_0' \aeq{C} t_1'$. Since $(\newswap{a}{c_2}^{-1} \circ (\atnew{c_1} \ \atnew{c_2}) \circ \newswap{a}{c_1}) = (\atnew{c_1} \ \atnew{c_2})$, we conclude that $\Upsilon_{\catnew{c},\atnew{c_1},\atnew{c_2}} \vdash (\atnew{c_1} \ \atnew{c_2}) \act t_0' \aeq{C} t_1'$. On the other hand, because $\atnew{c_1},\atnew{c_2}\notin \atm{\Upsilon_{\catnew{c}},a,t_0',t_1'}$, Lemma~\ref{alemma:two-fresh-names-fix} gives us $\Upsilon_{\catnew{c},\atnew{c_1},\atnew{c_2}} \vdash (\atnew{c_1} \ \atnew{c_2})\act t_0' \aeq{C} t_0'$. This combined with Equivalence (Theorem~\ref{thm:miscellaneous}(\ref{thm:alpha-equivalence})) provides us $\Upsilon_{\catnew{c},\atnew{c_1},\atnew{c_2}} \vdash t_0' \aeq{C} t_1'$. Thus, by Lemma~\ref{alemma:vacuous-quantification}, we obtain $\Upsilon_{\catnew{c},\atnew{c_1}} \vdash  t_0' \aeq{C} t_1'$, which, by rule $(\frule{\faeq{C}}{[a]})$, leads to $\Upsilon_{\catnew{c},\atnew{c_1}} \vdash [a]t_0' \aeq{C} [a]t_1'$.

            Now, using the claim we just proved and Equivalence (Theorem~\ref{thm:miscellaneous}(\ref{thm:alpha-equivalence})), we obtain $\Upsilon_{\catnew{c},\atnew{c_1}} \vdash \newswap{a}{c_1}\act [a]t_0'\aeq{C} [a]t_0'$ and $\Upsilon_{\catnew{c},\atnew{c_1}} \vdash \newswap{a}{c_1}\act [a]t_1'\aeq{C} [a]t_1'$ and so the result follows.
            \end{itemize}

            \item $t_0\equiv [a]t_0'$ and $t_1 \equiv [d']t_1'$.

            Then $\Upsilon_{\catnew{c},\atnew{c_1}}\vdash \newswap{a}{c_1}\fix{C} \tf{f^C}([a]t_0',[d']t_1')$ implies $\Upsilon_{\catnew{c},\atnew{c_1}}\vdash \tf{f^C}( \newswap{a}{c_1}\act[a]t_0', \newswap{a}{c_1}\act  [d']t_1')\aeq{C} \tf{f^C}([a]t_0',[d']t_1')$. By Inversion (Theorem~\ref{thm:miscellaneous}(\ref{thm:inversion})), there are two branches to consider:
        \begin{itemize}
             \item $\Upsilon_{\catnew{c},\atnew{c_1}}\vdash  \newswap{a}{c_1}\act[a]t_0'\aeq{C} [a]t_0'$ and $\Upsilon_{\catnew{c},\atnew{c_1}}\vdash  \newswap{a}{c_1}\act[d']t_1'\aeq{C} [d']t_1'$.

             Then the result follows.

            \item $\Upsilon_{\catnew{c},\atnew{c_1}}\vdash  \newswap{a}{c_1}\act[a]t_0'\aeq{C} [d']t_1'$ and $\Upsilon_{\catnew{c},\atnew{c_1}}\vdash  \newswap{a}{c_1}\act[d']t_1'\aeq{C} [a]t_0'$.

            Note that, by Equivariance (Theorem~\ref{thm:miscellaneous}(\ref{thm:object-equivariance})) and Equivalence (Theorem~\ref{thm:miscellaneous}(\ref{thm:alpha-equivalence})), these two derivations are equivalent, so it's sufficient to work with just one. We will work with the first one.

            We claim that $\Upsilon_{\catnew{c},\atnew{c_1}} \vdash  [d']t_1' \aeq{C} [a]t_0'$. Indeed, observe that the derivation $\Upsilon_{\catnew{c},\atnew{c_1}} \vdash \newswap{a}{c_1} \act [a]t_0' \aeq{C} [d']t_1'$ can be rewritten as $\Upsilon_{\catnew{c},\atnew{c_1}} \vdash [\atnew{c_1}]\newswap{a}{c_1} \act t_0' \aeq{C} [d']t_1'$. By applying Inversion (Theorem~\ref{thm:miscellaneous}(\ref{thm:inversion})), we derive $\Upsilon_{\catnew{c},\atnew{c_1},\atnew{c_2}} \vdash ((\atnew{c_1} \ \atnew{c_2}) \circ \newswap{a}{c_1}) \act t_0' \aeq{C} \newswap{d'}{c_2} \act t_1'$ where $\atnew{c_2}\notin \atm{\Upsilon_{\catnew{c}},a,d',t_0',t_1'}$.

            By Equivariance (Theorem~\ref{thm:miscellaneous}(\ref{thm:object-equivariance})), we have $\Upsilon_{\catnew{c},\atnew{c_1},\atnew{c_2}} \vdash (\newswap{d'}{c_2}^{-1} \circ (\atnew{c_1} \ \atnew{c_2}) \circ \newswap{a}{c_1}) \act t_0' \aeq{C} t_1'$, which is the same as $\Upsilon_{\catnew{c},\atnew{c_2},\atnew{c_1}} \vdash (a \ d' \ \atnew{c_1} \ \atnew{c_2}) \act t_0' \aeq{C} t_1'$.

            Now, by applying Lemma~\ref{alemma:two-fresh-names-fix}, we have $\Upsilon_{\catnew{c},\atnew{c_1},\atnew{c_2}} \vdash (\atnew{c_1} \ \atnew{c_2}) \act t_0' \aeq{C} t_0'$. Since $((a \ d' \ \atnew{c_2})^{-1} \circ (a \ d' \ \atnew{c_2} \ \atnew{c_1})) = (\atnew{c_1} \ \atnew{c_2})$, we conclude that $\Upsilon_{\catnew{c},\atnew{c_1},\atnew{c_2}} \vdash ((a \ d' \ \atnew{c_2})^{-1} \circ (a \ d' \ \atnew{c_2} \ \atnew{c_1})) \act t_0' \aeq{C} t_0'$.

            By Equivariance (Theorem~\ref{thm:miscellaneous}(\ref{thm:object-equivariance})), we get $\Upsilon_{\catnew{c},\atnew{c_1},\atnew{c_2}} \vdash (a \ d' \ \atnew{c_2} \ \atnew{c_1}) \act t_0' \aeq{C} (a \ d' \ \atnew{c_2}) \act t_0'$. This, together with the derivation $\Upsilon_{\catnew{c},\atnew{c_1},\atnew{c_2}} \vdash (a \ d' \ \atnew{c_2} \ \atnew{c_1}) \act t_0' \aeq{C} t_1'$, and Equivalence (Theorem~\ref{thm:miscellaneous}(\ref{thm:alpha-equivalence}))), yields $\Upsilon_{\catnew{c},\atnew{c_1},\atnew{c_2}} \vdash (a \ d' \ \atnew{c_2}) \act t_0' \aeq{C} t_1'$.

            Since $(a \ d' \ \atnew{c_2}) = \newswap{d'}{c_2} \circ \newswap{a}{c_2}$, we get $\Upsilon_{\catnew{c},\atnew{c_1},\atnew{c_2}} \vdash (\newswap{d'}{c_2} \circ \newswap{a}{c_2}) \act t_0' \aeq{C} t_1'$, which, through Equivariance (Theorem~\ref{thm:miscellaneous}(\ref{thm:object-equivariance})) combined with Equivalence (Theorem~\ref{thm:miscellaneous}(\ref{thm:alpha-equivalence}))), gives us $\Upsilon_{\catnew{c},\atnew{c_1},\atnew{c_2}} \vdash \newswap{d'}{c_2} \act t_1' \aeq{C} \newswap{a}{c_2} \act t_0'$. Finally, by applying rule $(\frule{\faeq{C}}{ab})$, we conclude that $\Upsilon_{\catnew{c},\atnew{c_1}} \vdash [d']t_1' \aeq{C} [a]t_0'$.

           Using the claim we just proved combined with Equivalence (Theorem~\ref{thm:miscellaneous}(\ref{thm:alpha-equivalence})), the derivations $\Upsilon_{\catnew{c},\atnew{c_1}} \vdash\newswap{a}{c_2}\act[a]t_0'\aeq{C} [d']t_1'$ and $\Upsilon_{\catnew{c},\atnew{c_1}} \vdash \newswap{a}{c_2}\act[d']t_1'\aeq{C} [a]t_0'$ yield
            \[
                \Upsilon_{\catnew{c},\atnew{c_1}} \vdash \newswap{a}{c_2}\act[a]t_0'\aeq{C} [a]t_0' \text{ and } \Upsilon_{\catnew{c},\atnew{c_1}} \vdash \newswap{a}{c_2}\act[d']t_1'\aeq{C} [d']t_1',
            \]
            and the result follows.

            \end{itemize}

        \item For the following cases:
        \begin{itemize}
            \item $t_0\equiv [d']t_0'$ and $t_1 \equiv [a]t_1'$.
            \item $t_0 \equiv [d']t_0'$ and $t_1 \equiv [d']t_1'$
            \item $t_0 \equiv [d_1]t_0'$ and $t_1 \equiv [d_2]t_1'$
            \item $t_0 \equiv [d_2]t_0'$ and $t_1 \equiv [d_1]t_1'$
        \end{itemize}
        The arguments for these cases are very similar to the previous ones, so we will omit them.

            \end{enumerate}
        \end{enumerate}

\end{description}
\end{proof}

\begin{lemma}\label{alemma:characterization-fix-c-general}
    $\Upsilon_{\catnew{c}} \vdash \pi_{\catnew{c}} \fix{C} t$ iff $\Upsilon_{\catnew{c},\atnew{c_1}} \vdash \newswap{a}{\atnew{c_1}}\fix{C} t$ for all $a\in \dom{\pi_{\catnew{c}}}\cup\catnew{c}$ where $\atnew{c_1}\notin \atm{\Upsilon_{\catnew{c}},\pi_{\catnew{c}},t}$.
\end{lemma}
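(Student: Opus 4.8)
The goal is to prove Lemma~\ref{alemma:characterization-fix-c-general}: that $\Upsilon_{\catnew{c}} \vdash \pi_{\catnew{c}} \fix{C} t$ holds if and only if $\Upsilon_{\catnew{c},\atnew{c_1}} \vdash \newswap{a}{\atnew{c_1}}\fix{C} t$ holds for every $a \in \dom{\pi_{\catnew{c}}}\cup\catnew{c}$, where $\atnew{c_1}$ is chosen fresh for $\Upsilon_{\catnew{c}}$, $\pi_{\catnew{c}}$ and $t$. The plan is to prove the two directions separately, leveraging the auxiliary machinery already developed: Lemma~\ref{alemma:fix-point-composition-inverse} (closure of fixed-point constraints under composition and inverse), Lemma~\ref{alemma:two-fresh-names-fix}, Lemma~\ref{alemma:vacuous-quantification} (handling vacuous $\new$-quantification), and especially Lemma~\ref{alemma:fix-point-formed-by-fresh-names}, which is essentially a packaged form of exactly the ``from swappings of fresh names to an arbitrary permutation on those names'' argument we need here.

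For the right-to-left direction ($\Leftarrow$), I would argue as follows. Assume $\Upsilon_{\catnew{c},\atnew{c_1}} \vdash \newswap{a}{c_1}\fix{C} t$ for all $a \in \dom{\pi_{\catnew{c}}} \cup \catnew{c}$. Apply Lemma~\ref{alemma:fix-point-formed-by-fresh-names} with the index set $I$ a singleton and $\pi_1 := \pi_{\catnew{c}}$ (noting $\pnew{(\pi_{\catnew{c}})}{\catnew{c}} = \pi_{\catnew{c}}$ since every cycle of $\pi_{\catnew{c}}$ already mentions an atom of $\catnew{c}$). The hypothesis of that lemma is met because we have the swapping constraints for all atoms in $\dom{\pi_{\catnew{c}}} \cup \catnew{c}$, and $\atnew{c_1}$ is fresh. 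Taking $\pi' := \pi_{\catnew{c}}$, whose domain is contained in $\dom{\pi_{\catnew{c}}} \cup \catnew{c}$, the lemma yields $\Upsilon_{\catnew{c}} \vdash \pi_{\catnew{c}} \fix{C} t$ directly. This direction should be short and essentially a citation.

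For the left-to-right direction ($\Rightarrow$), assume $\Upsilon_{\catnew{c}} \vdash \pi_{\catnew{c}} \fix{C} t$; I must show $\Upsilon_{\catnew{c},\atnew{c_1}} \vdash \newswap{a}{c_1} \fix{C} t$ for each $a \in \dom{\pi_{\catnew{c}}} \cup \catnew{c}$. The natural approach is induction on the structure of $t$, treating $\pi_{\catnew{c}} \fix{C} t$ as the equality constraint $\pi_{\catnew{c}}\act t \aeq{C} t$ and applying Inversion (Theorem~\ref{thm:miscellaneous}(\ref{thm:inversion})). The cases for atoms, non-commutative function symbols, and abstractions $[b]t'$ follow by routine structural analysis together with equivariance and the induction hypothesis; for abstractions one peels off the binder via rule $(\frule{\faeq{C}}{ab})$, introduces a fresh name, and uses Lemma~\ref{alemma:two-fresh-names-fix} and Lemma~\ref{alemma:vacuous-quantification} to relocate the auxiliary names. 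The genuinely delicate case is the commutative symbol $t \equiv \tf{f^C}(t_0,t_1)$, since then $\pi_{\catnew{c}}\act \tf{f^C}(t_0,t_1) \aeq{C} \tf{f^C}(t_0,t_1)$ can be witnessed either by componentwise fixing ($\pi_{\catnew{c}}$ fixes each $t_i$) or by a ``swap'' ($\pi_{\catnew{c}}\act t_0 \aeq{C} t_1$ and $\pi_{\catnew{c}}\act t_1 \aeq{C} t_0$). Here I expect to invoke Lemma~\ref{alemma:characterization-fix-commutative} after first reducing $\pi_{\catnew{c}}$ to individual swappings $\newswap{a}{c_1}$: the subtlety is that the passage from $\pi_{\catnew{c}}$ to $\newswap{a}{c_1}$ in the $\tf{f^C}$ case interacts with the commutative-swap branch, so the cleanest route is probably a secondary induction on the number of disjoint cycles of $\pi_{\catnew{c}}$, peeling off one cycle at a time and, for a single cycle $(a_1\ \dots\ a_m)$, decomposing it into swappings through $\atnew{c_1}$ as in the base case of Lemma~\ref{alemma:fix-point-formed-by-fresh-names}, at each stage applying Lemma~\ref{alemma:characterization-fix-commutative} and Lemma~\ref{alemma:fix-point-composition-inverse}. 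The main obstacle, then, is organising the $\tf{f^C}$ case so that the reduction to single swappings and the handling of the swap-branch of commutativity do not circularly depend on the statement we are proving; I would address this by making the induction on term structure the outer induction and the cycle-count reduction an inner argument that only ever appeals to Lemma~\ref{alemma:characterization-fix-commutative} (whose statement is already about single swappings $\newswap{a}{c_1}$) and to the outer induction hypothesis on strict subterms.
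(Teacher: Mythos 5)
Your right-to-left direction is correct and is actually a cleaner route than the paper's: the paper re-proves that direction by structural induction on $t$ (using Lemma~\ref{alemma:characterization-fix-commutative} in the $\tf{f^C}$ case), whereas you observe that it is literally an instance of Lemma~\ref{alemma:fix-point-formed-by-fresh-names} with $I$ a singleton and $\pi_1=\pi'=\pi_{\catnew{c}}$ (using that $\pi_{\catnew{c}}$ is its own $\catnew{c}$-part). Since that lemma is proved independently of the present one, there is no circularity and this half is a one-line citation.

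The left-to-right direction has a genuine gap, concentrated in the $\tf{f^C}$ case. Your plan --- an inner induction peeling off one disjoint cycle of $\pi_{\catnew{c}}$ at a time and then decomposing a single cycle into swappings through $\atnew{c_1}$, invoking Lemmas~\ref{alemma:characterization-fix-commutative} and~\ref{alemma:fix-point-composition-inverse} at each stage --- runs in the wrong direction. Lemma~\ref{alemma:fix-point-composition-inverse} passes from ``each factor fixes $t$'' to ``the product fixes $t$''; there is no rule that splits a derivable fixed-point constraint for a product into constraints for its factors, and that splitting is exactly what this lemma is supposed to establish. Likewise, Lemma~\ref{alemma:characterization-fix-commutative} only applies once the permutation is already a single fresh swapping $\newswap{a}{c_1}$ acting on $\tf{f^C}(t_0,t_1)$ --- which is your goal, not your hypothesis --- so it cannot be used to descend from $\pi_{\catnew{c}}\fix{C}\tf{f^C}(t_0,t_1)$ to the subterms. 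Concretely, Inversion on $\Upsilon_{\catnew{c}}\vdash\pi_{\catnew{c}}\fix{C}\tf{f^C}(t_0,t_1)$ may land in the branch $\Upsilon_{\catnew{c}}\vdash\pi_{\catnew{c}}\act t_0\aeq{C}t_1$ and $\Upsilon_{\catnew{c}}\vdash\pi_{\catnew{c}}\act t_1\aeq{C}t_0$, where $\pi_{\catnew{c}}$ fixes neither subterm, so neither the outer structural induction hypothesis nor any cycle-peeling is available. The missing idea (which the paper uses) is: in that branch derive $\Upsilon_{\catnew{c}}\vdash\pi_{\catnew{c}}^{2}\fix{C}t_i$ by Equivariance and transitivity, run the componentwise argument on $\pi_{\catnew{c}}^{2}$ to obtain the swapping constraints for all $a\in\dom{\pi_{\catnew{c}}^{2}}\cup\catnew{c}$, and then treat the leftover atoms $a\in\dom{\pi_{\catnew{c}}}\setminus(\dom{\pi_{\catnew{c}}^{2}}\cup\catnew{c})$ separately: each such $a$ lies in a $2$-cycle $(a\ \pi_{\catnew{c}}(a))$ of $\pi_{\catnew{c}}$, its partner $\pi_{\catnew{c}}(a)$ must belong to $\catnew{c}$ (every cycle of $\pi_{\catnew{c}}$ mentions a quantified atom), and the already-established constraint for $\pi_{\catnew{c}}(a)$ is transferred to $a$ by conjugation with $\pi_{\catnew{c}}$ together with Equivariance. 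Separately, you dismiss the suspension case $t\equiv\pi'\act X$ as routine, but it needs its own argument showing that $\pi_{\catnew{c}}^{\pi'^{-1}}\in\PN{}{\Upsilon_{\catnew{c}}|_X}$ forces each $\newswap{\pi'^{-1}(a)}{c_1}$ into $\PN{}{\Upsilon_{\catnew{c},\atnew{c_1}}|_X}$, which relies on the specific product structure of that group.
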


\begin{proof}
    Proof by induction on structure of the term $t$. The non-trivial cases are presented below.
    \begin{itemize}
        \item $t \equiv \pi'\act X$.

        \begin{description}
            \item[$(\Rightarrow)$] In this case, we have $\Upsilon_{\catnew{c}} \vdash \pi_{\catnew{c}} \fix{C} \pi'\act X$. By Inversion (Theorem~\ref{thm:miscellaneous}(\ref{thm:inversion}), it follows that $\pi_{\catnew{c}}^{\pi'^{-1}} \in \PN{}{\Upsilon_{\catnew{c}}|_X}$. In particular, $\pi_{\catnew{c}}^{\pi'^{-1}} \in \PN{}{\Upsilon_{\catnew{c},\atnew{c_1}}|_X}$. Since $\pi_{\catnew{c}}^{\pi'^{-1}} = (\pi^{\pi'^{-1}})_{\pi'^{-1}(\catnew{c})}$ and $\pi'$ does not mention names from $\catnew{c}$, we have $ \pi_{\catnew{c}}^{\pi'^{-1}} = \pnew{\pi^{\pi'^{-1}}}{\catnew{c}}$, obtaining $\pnew{\pi^{\pi^{-1}}}{\catnew{c}} \in  \PN{}{\Upsilon_{\catnew{c},\atnew{c_1}}|_X}$.

            We aim to prove that $\Upsilon_{\catnew{c},\atnew{c_1}} \vdash \newswap{a}{c_1} \fix{C} \pi'\act X$ for all $a \in \dom{\pi_{\catnew{c}}} \cup \catnew{c}$. To achieve this, it suffices to show
            \[
                \newswap{\pi'^{-1}(a)}{c_1} = \newswap{a}{c_1}^{\pi'^{-1}} \in \PN{}{\Upsilon_{\catnew{c},\atnew{c_1}}|_X}.
            \]

            \begin{itemize}
                \item First, let's analyse the cases where $a$ is an atom from $\catnew{c}$. Notice that $\catnew{c},\atnew{c_1} \subseteq \atm{(\Upsilon_{\catnew{c},\atnew{c_1}}|_X)_{\fresh}}$. Thus, for every $\atnew{c'} \in \catnew{c}$, we have $(\atnew{c'} \ \atnew{c_1}) \in \PN{}{\Upsilon_{\catnew{c},\atnew{c_1}}|_X}$, implying $(\atnew{c'} \ \atnew{c_1}) \in \PN{}{\Upsilon_{\catnew{c},\atnew{c_1}}|_X}$. Since $\pi'^{-1}(\atnew{c'}) = \atnew{c'}$ for all $\atnew{c'} \in \catnew{c}$, this ensures that $\newswap{\pi'^{-1}(a)}{c_1} \in  \PN{}{\Upsilon_{\catnew{c},\atnew{c_1}}|_X}$ for all $a \in \catnew{c}$.

                \item  Now, suppose $a \in \dom{\pi_{\catnew{c}}}$ but $a \notin \catnew{c}$. Then $\pi'^{-1}(a) \in \dom{\pnew{\pi^{\pi'^{-1}}}{\catnew{c}}}$. From $\pnew{\pi^{\pi^{-1}}}{\catnew{c}} \in  \PN{}{\Upsilon_{\catnew{c},\atnew{c_1}}|_X}$, we get $\pnew{\pi^{\pi^{-1}}}{\catnew{c}} \in \Perm{\atm{(\Upsilon_{\catnew{c},\atnew{c_1}}|_X)_{\fresh}}}$. Thus $\pi'^{-1}(a) \in \atm{(\Upsilon_{\catnew{c},\atnew{c_1}}|_X)_{\fresh}}$ and so $\newswap{a}{c_1}^{\pi'^{-1}} \in \PN{}{\Upsilon_{\catnew{c},\atnew{c_1}}|_X}$. Consequently, the result follows by applying rule $(\frule{\faeq{C}}{var})$.
            \end{itemize}

            \item[$(\Leftarrow)$] Suppose $\Upsilon_{\catnew{c},\atnew{c_1}} \vdash \newswap{a}{c_1}\fix{C} \pi'\act X$ for all $a \in \dom{\pi_{\catnew{c}}} \cup \catnew{c}$, where $\atnew{c_1} \notin \atm{\Upsilon_{\catnew{c}},\pi_{\catnew{c}},\pi'}$. By Inversion (Theorem~\ref{thm:miscellaneous}(\ref{thm:inversion}), it follows that
            \[
                \newswap{\pi'^{-1}(a)}{c_1} = \newswap{a}{c_1}^{\pi'^{-1}} \in \PN{}{\Upsilon_{\catnew{c},\atnew{c_1}}|_X}.
            \]
            Since $\atnew{c_1} \notin \dom{\pi'}$, we have $\pi'^{-1}(a) \neq \atnew{c_1}$. Moreover, $\newswap{\pi'^{-1}(a)}{c_1}\in \Perm{\atm{(\Upsilon_{\catnew{c},\atnew{c_1}})_{\fresh}}}$, which implies $\pi'^{-1}(a) \in \atm{(\Upsilon_{\catnew{c},\atnew{c_1}})_{\fresh}}$.

            Equivalently, this means that $a \in \pi'\act\atm{(\Upsilon_{\catnew{c},\atnew{c_1}})_{\fresh}}$ for all $a \in \dom{\pi_{\catnew{c}}} \cup \catnew{c}$. As a result, we have $\pi_{\catnew{c}} \in \Perm{\pi'\act\atm{(\Upsilon_{\catnew{c},\atnew{c_1}})_{\fresh}}}$. Consequently,
            \[
                \pi_{\catnew{c}}^{\pi'^{-1}} \in \Perm{\atm{(\Upsilon_{\catnew{c},\atnew{c_1}})_{\fresh}}} \subseteq \PN{}{\Upsilon_{\catnew{c}}|_X}.
            \]
            Finally, the result follows by an application of rule $(\frule{\faeq{C}}{var})$.
        \end{description}

        \item $t \equiv \tf{f^C}(t_0,t_1)$.

        \begin{description}
            \item[$(\Rightarrow)$] In this case, we have $\Upsilon_{\catnew{c}} \vdash \pi_{\catnew{c}} \fix{C} \tf{f^C}(t_0,t_1)$ which is the same as $\Upsilon_{\catnew{c}} \vdash\tf{f^C}( \pi_{\catnew{c}}\act t_0, \pi_{\catnew{c}}\act t_1) \aeq{C} \tf{f^C}(t_0,t_1)$. By Inversion (Theorem~\ref{thm:miscellaneous}(\ref{thm:inversion}),

            \begin{itemize}
                \item either $\Upsilon_{\catnew{c}} \vdash \pi_{\catnew{c}}\act t_0 \aeq{C} t_0$ and $\Upsilon_{\catnew{c}} \vdash \pi_{\catnew{c}}\act t_1 \aeq{C} t_1$.

                Then by induction, it follows that $\Upsilon_{\catnew{c},\atnew{c_1}} \vdash \newswap{a}{c_1}\act t_0\aeq{C} t_0$ and $\Upsilon_{\catnew{c},\atnew{c_1}} \vdash \newswap{a}{c_1}\act t_1\aeq{C} t_1$ for all $a \in \dom{\pi_{\catnew{c}}} \cup \catnew{c}$, where $\atnew{c_1} \notin \atm{\Upsilon_{\catnew{c}},\pi_{\catnew{c}}, t}$. By rule $(\frule{\faeq{C}}{\tf{f^C}})$, we obtain $\Upsilon_{\catnew{c},\atnew{c_1}} \vdash \newswap{a}{c_1}\fix{C} \tf{f^C}(t_0,t_1) $, for all $a \in \dom{\pi_{\catnew{c}}} \cup \catnew{c}$.

                \item or $\Upsilon_{\catnew{c}} \vdash \pi_{\catnew{c}}\act t_0 \aeq{C} t_1$ and $\Upsilon_{\catnew{c}} \vdash \pi_{\catnew{c}}\act t_0 \aeq{C} t_1$.

                By Equivariance and Equivalence of Theorem~\ref{thm:miscellaneous}, we have $\Upsilon_{\catnew{c}} \vdash \pi_{\catnew{c}}^2\act t_0 \aeq{C} t_0$ and  $\Upsilon_{\catnew{c}} \vdash \pi_{\catnew{c}}^2\act t_1 \aeq{C} t_1$. Proceeding similarly to the analysis in the previous case, we conclude that $\Upsilon_{\catnew{c},\atnew{c_1}} \vdash\newswap{a}{c_1}\fix{C} \tf{f^C}(t_0,t_1)$ for all $a\in \dom{\pi_{\catnew{c}}^2}\cup\catnew{c}$.

                Since $\dom{\pi_{\catnew{c}}^2} \subseteq \dom{\pi_{\catnew{c}}}$, it remains to show that $\Upsilon_{\catnew{c},\atnew{c_1}} \vdash \newswap{a}{c_1}\fix{C} \tf{f^C}(t_0,t_1)$ for all $a \in \dom{\pi_{\catnew{c}}} \setminus (\dom{\pi_{\catnew{c}}^2} \cup \catnew{c})$. Assume, by contradiction, that there exists some $a \in \dom{\pi_{\catnew{c}}} \setminus (\dom{\pi_{\catnew{c}}^2} \cup \catnew{c})$ such that $\Upsilon_{\catnew{c},\atnew{c_1}} \vdash \newswap{a}{c_1}\act \tf{f^C}(t_0,t_1)\naeq{C} \tf{f^C}(t_0,t_1)$. Observe that, since $a \notin \dom{\pi_{\catnew{c}}^2}$, it follows that $\pi_{\catnew{c}}^2(a) = a$. Therefore, $\pi_{\catnew{c}}$ can be expressed as $(a \ \pi_{\catnew{c}}(a)) \circ \rho$, where $\rho$ is a  disjoint from $(a \ \pi_{\catnew{c}}(a))$. We are assuming $a \notin \catnew{c}$, so we analyse $\pi_{\catnew{c}}(a)$:
                \begin{itemize}
                    \item If $\pi_{\catnew{c}}(a) \notin \catnew{c}$, then $\pi_{\catnew{c}}$ would contain the disjoint cycle $(a \ \pi_{\catnew{c}}(a))$ that does not mention any names from $\catnew{c}$, which contradicts the structure of $\pi_{\catnew{c}}$.

                    \item Hence, $\pi_{\catnew{c}}(a)$ must be an atom from $\catnew{c}$, say $\pi_{\catnew{c}}(a) \in \catnew{c}$.
                \end{itemize}
                Since we proved that $\Upsilon_{\catnew{c},\atnew{c_1}} \vdash \newswap{a}{c_1}\fix{C} \tf{f^C}(t_0,t_1)$ for all $a\in \dom{\pi_{\catnew{c}}^2}\cup\catnew{c}$.  In particular it holds for all $a\in \catnew{c}$. Then $ \Upsilon_{\catnew{c},\atnew{c_1}} \vdash (\pi_{\catnew{c}}(a) \ \atnew{c_1})\fix{C}\tf{f^C}(t_0,t_1)$, which is the same as $\Upsilon_{\catnew{c},\atnew{c_1}} \vdash (a \ \atnew{c_1})^{\pi_{\catnew{c}}}\fix{C}\tf{f^C}(t_0,t_1)$. By Equivariance (Theorem~\ref{thm:miscellaneous}(\ref{thm:object-equivariance})), this implies $\Upsilon_{\catnew{c},\atnew{c_1}} \vdash (a \ \atnew{c_1})\fix{C}\pi_{\catnew{c}}\act\tf{f^C}(t_0,t_1)$ and so the result follows by Equivalence of Theorem~\ref{thm:miscellaneous}.
            \end{itemize}

            \item[$(\Leftarrow)$]  Suppose $\Upsilon_{\catnew{c},\atnew{c_1}} \vdash \newswap{a}{c_1}\fix{C} \tf{f^C}(t_0,t_1)$ for all $a\in \dom{\pi_{\catnew{c}}}\cup\catnew{c}$ where $\atnew{c_1}\notin \atm{\Upsilon_{\catnew{c}},\pi_{\catnew{c}},t}$. By Lemma~\ref{alemma:characterization-fix-commutative}, we have $\Upsilon_{\catnew{c},\atnew{c_1}} \vdash \newswap{a}{c_1}\fix{C} t_0$ and $\Upsilon_{\catnew{c},\atnew{c_1}} \vdash \newswap{a}{c_1}\fix{C} t_1$ for all $a\in \dom{\pi_{\catnew{c}}}\cup\catnew{c}$. By induction, we obtain $\Upsilon_{\catnew{c}} \vdash \pi_{\catnew{c}} \fix{C} t_0$ and $\Upsilon_{\catnew{c}} \vdash \pi_{\catnew{c}} \fix{C} t_1$ and thus by rule $(\frule{\faeq{C}}{\tf{f^C}})$ we get $\Upsilon_{\catnew{c}} \vdash \pi_{\catnew{c}}\fix{C} \tf{f^C}(t_0,t_1)$.
        \end{description}
    \end{itemize}
\end{proof}

\begin{lemma}\label{alemma:fix-point-split}
    $\Upsilon_{\catnew{c}} \vdash \pi \fix{C} t$ iff $\Upsilon_{\catnew{c}} \vdash \pi_{\catnew{c}} \fix{C} t$ and $\Upsilon_{\catnew{c}} \vdash \pi_{\neg\catnew{c}} \fix{C} t$.
\end{lemma}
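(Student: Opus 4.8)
\textbf{Proof plan for Lemma~\ref{alemma:fix-point-split}.}
The statement asserts that a generalised fixed-point constraint $\pi\fix{C} t$ decomposes along the splitting of $\pi$ into $\pi_{\catnew{c}}$ and $\pi_{\neg\catnew{c}}$. The plan is to prove both directions by induction on the structure of $t$, mirroring the semantic statement of the generalised Pitts' equivalence (Lemma~\ref{lemma:pitts-eq-generalized}(1)) but now at the level of \emph{derivations}. Recall that by Definition~\ref{def:splitting} the cycles of $\pi$ partition into those mentioning atoms of $\catnew{c}$ and those that do not, and these two sub-permutations are disjoint, hence commute: $\pi = \pi_{\catnew{c}}\circ\pi_{\neg\catnew{c}} = \pi_{\neg\catnew{c}}\circ\pi_{\catnew{c}}$ and $\dom{\pi_{\catnew{c}}}\cap\dom{\pi_{\neg\catnew{c}}}=\emptyset$.

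For the $(\Leftarrow)$ direction: assume $\Upsilon_{\catnew{c}}\vdash \pi_{\catnew{c}}\fix{C} t$ and $\Upsilon_{\catnew{c}}\vdash \pi_{\neg\catnew{c}}\fix{C} t$. Since $\pi_{\catnew{c}}\act t\aeq{C} t$ and $\pi_{\neg\catnew{c}}\act t\aeq{C} t$ are both derivable, I would use Equivariance (Theorem~\ref{thm:miscellaneous}(\ref{thm:object-equivariance})) to get $\Upsilon_{\catnew{c}}\vdash \pi_{\neg\catnew{c}}\act(\pi_{\catnew{c}}\act t)\aeq{C} \pi_{\neg\catnew{c}}\act t$, then combine with $\Upsilon_{\catnew{c}}\vdash \pi_{\neg\catnew{c}}\act t\aeq{C} t$ by transitivity (Theorem~\ref{thm:miscellaneous}(\ref{thm:alpha-equivalence})) to obtain $\Upsilon_{\catnew{c}}\vdash(\pi_{\neg\catnew{c}}\circ\pi_{\catnew{c}})\act t\aeq{C} t$, i.e.\ $\Upsilon_{\catnew{c}}\vdash\pi\fix{C} t$. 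This direction is in fact immediate from Lemma~\ref{alemma:fix-point-composition-inverse}(\ref{alemma:fix-point-composition}), so it costs essentially nothing.

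For the $(\Rightarrow)$ direction, which is the substantive part: assume $\Upsilon_{\catnew{c}}\vdash\pi\fix{C} t$. The natural route is induction on $t$. For atoms $t\equiv a$ the statement is trivial (an atom is fixed only by permutations fixing it pointwise). For $t\equiv\pi'\act X$ a suspension, apply Inversion (Theorem~\ref{thm:miscellaneous}(\ref{thm:inversion})) to get $\pi^{\pi'^{-1}}\in\PN{}{\Upsilon_{\catnew{c}}|_X}$; here I would exploit the group structure of $\PN{}{\Upsilon_{\catnew{c}}|_X} = \Perm{\atm{(\Upsilon_{\catnew{c}}|_X)_{\fresh}}}\alert{\circ}\pair{\perm{}{(\Upsilon_{\catnew{c}}|_X)_{\fix{C}}}}$ together with the fact that conjugation by $\pi'$ (whose domain is disjoint from $\catnew{c}$) commutes with the $\catnew{c}$/$\neg\catnew{c}$ split in the appropriate sense — more precisely, $(\pi^{\pi'^{-1}})_{\pi'^{-1}(\catnew{c})} = \pi_{\catnew{c}}^{\pi'^{-1}}$ since $\pi'$ fixes $\catnew{c}$ — to argue that $\pi_{\catnew{c}}^{\pi'^{-1}}$ lands in the $\Perm{\atm{(\Upsilon_{\catnew{c}}|_X)_{\fresh}}}$ factor and $\pi_{\neg\catnew{c}}^{\pi'^{-1}}$ in the $\pair{\perm{}{(\Upsilon_{\catnew{c}}|_X)_{\fix{C}}}}$ factor, so each individually is in $\PN{}{\Upsilon_{\catnew{c}}|_X}$ and rule $(\frule{\faeq{C}}{var})$ applies. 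The factorisation argument here is essentially the same computation as in Lemma~\ref{alemma:characterization-fix-c-general}. For $t\equiv\tf{f}(t_1,\dots,t_n)$ with $\tf{f}\neq\tf{f^C}$ and for $t\equiv[a]t'$ the cases follow by Inversion and the induction hypothesis componentwise. For $t\equiv\tf{f^C}(t_0,t_1)$, Inversion gives either the ``straight'' branch ($\pi\act t_0\aeq{C} t_0$, $\pi\act t_1\aeq{C} t_1$) in which case the IH applies directly, or the ``swapped'' branch ($\pi\act t_0\aeq{C} t_1$, $\pi\act t_1\aeq{C} t_0$); in the swapped branch one composes to get $\pi^2\act t_i\aeq{C} t_i$ and must carefully recover the information about atoms in $\dom{\pi}\setminus\dom{\pi^2}$ using the $\catnew{c}$-structure of the remaining cycle — this is exactly the delicate argument already carried out in the $(\Rightarrow)$, $\tf{f^C}$ case of Lemma~\ref{alemma:characterization-fix-c-general}, and I would invoke Lemma~\ref{alemma:characterization-fix-commutative} and Lemma~\ref{alemma:characterization-fix-c-general} rather than re-run it.

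\textbf{Main obstacle.} The only genuinely non-routine point is the suspension case: showing that membership of the conjugated $\pi$ in the product group $\PN{}{\Upsilon_{\catnew{c}}|_X}$ descends to membership of each of $\pi_{\catnew{c}}$ and $\pi_{\neg\catnew{c}}$ (conjugated) in the respective factors. This requires knowing that an element of $H\alert{\circ}K$ with $H,K$ disjoint (commuting) permutation groups decomposes \emph{uniquely} according to the support partition — i.e.\ a permutation whose cycles split into an $\atm{(\Upsilon_{\catnew{c}}|_X)_{\fresh}}$-part and a complementary part forces each part separately into the corresponding factor. This is where the careful definition of $\PN{}{\Upsilon_{\catnew{c}}|_X}$ pays off, and it is morally the same fact exploited throughout \S\ref{sec:derivation-rules}; I would state it as a small auxiliary observation (or cite Lemma~\ref{alemma:characterization-fix-c-general}'s proof) and then the rest of the induction is bookkeeping with Inversion, Equivariance, and Equivalence from Theorem~\ref{thm:miscellaneous}.
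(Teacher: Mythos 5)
Your $(\Leftarrow)$ direction is fine and coincides with the paper's (it is exactly Lemma~\ref{alemma:fix-point-composition-inverse}(\ref{alemma:fix-point-composition})). For the hard direction, however, you take a genuinely different route from the paper, and your route has a concrete gap. The paper does not do a syntactic induction on $t$ at all: it applies Soundness (Theorem~\ref{thm:soundness-completeness-fix}(\ref{thm:soundness-fix})) to pass to $\Upsilon_{\catnew{c}} \vDash \pi\fix{C} t$, splits \emph{semantically} using the generalised Pitts' equivalence (Lemma~\ref{lemma:pitts-eq-generalized}(1)), and comes back with Completeness (Theorem~\ref{thm:soundness-completeness-fix}(\ref{thm:completeness-fix})). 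Since this lemma sits after soundness and completeness in the development, the detour is legitimate and costs three lines; your syntactic induction would have to redo, for an \emph{arbitrary} $\pi$, work that the paper only ever carries out for permutations of the special shape $\pi_{\catnew{c}}$.

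The gap is in the $\tf{f^C}$ swapped branch. You propose to pass to $\pi^2\fix{C} t_i$ and then ``invoke Lemma~\ref{alemma:characterization-fix-commutative} and Lemma~\ref{alemma:characterization-fix-c-general} rather than re-run'' the delicate argument. But those lemmas characterise constraints $\pi_{\catnew{c}}\fix{C} t$ (every cycle meets $\catnew{c}$) in terms of swappings $\newswap{a}{c_1}$ with $\atnew{c_1}$ fresh; they say nothing about a general $\pi$ with $\neg\catnew{c}$-cycles. The step that recovers the atoms in $\dom{\pi}\setminus\dom{\pi^2}$ in the proof of Lemma~\ref{alemma:characterization-fix-c-general} rests on the dichotomy ``if $a\notin\dom{\pi_{\catnew{c}}^2}$ then $\pi_{\catnew{c}}(a)\in\catnew{c}$, else $(a\ \pi_{\catnew{c}}(a))$ would be a cycle avoiding $\catnew{c}$, contradicting the shape of $\pi_{\catnew{c}}$''; for a general $\pi$ such a $2$-cycle is perfectly admissible (it just belongs to $\pi_{\neg\catnew{c}}$), so the contradiction evaporates and the information about those atoms is genuinely lost. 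A further wrinkle: $(\pi^2)_{\catnew{c}}\neq(\pi_{\catnew{c}})^2$ in general, since squaring an even cycle through a $\new$-quantified atom produces a cycle avoiding $\catnew{c}$ (e.g.\ $(a\ \atnew{c}\ b\ d)^2=(a\ b)\circ(\atnew{c}\ d)$), so the inductive hypothesis applied to $\pi^2$ does not hand you the split of $\pi$ itself. Your suspension case is essentially sound modulo the (shared with the paper's Lemma~\ref{alemma:characterization-fix-c-general}) assumption that the suspended permutation $\pi'$ does not move atoms of $\catnew{c}$, but the $\tf{f^C}$ case as sketched does not close. The cleanest fix is simply to adopt the paper's semantic argument.
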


\begin{proof}
    The left-to-right case follows by Lemma~\ref{alemma:fix-point-composition-inverse}(\ref{alemma:fix-point-composition}). For the right-to-left case, assume that  $\Upsilon_{\catnew{c}} \vdash \pi \fix{C} t$. By Soundness (Theorem~\ref{thm:soundness-completeness-fix}(\ref{thm:soundness-fix})), we have $\Upsilon_{\catnew{c}} \vDash \pi \fix{C} t$. As a consequence of Pitts' equivalence (Lemma~\ref{lemma:pitts-eq-generalized}), we have $\Upsilon_{\catnew{c}} \vDash \pi_{\catnew{c}} \fix{C} t$ and $\Upsilon_{\catnew{c}} \vDash \pi_{\neg\catnew{c}} \fix{C} t$. Therefore, the result follows by Completeness (Theorem~\ref{thm:soundness-completeness-fix}(\ref{thm:completeness-fix})).
\end{proof}

 \begin{lemma}\label{alemma:perm_dec}
     Let $I$ be a non-empty, finite set of indices. Suppose, for each $i\in I$, that $\Upsilon_{\catnew{c}} \vdash \pi_i\fix{C}t$. Let $\pi$ be a permutation and $\atnew{\pvec{c}'}\subseteq \catnew{c}$ such that
     \begin{equation}
         \pi \in \Perm{\left(\bigcup_{i\in I} \dom{\pnew{\pi_i}{\atnew{\pvec{c}'}}}\right)\cup \atnew{\pvec{c}'}}\alert{\circ}\pair{\{\npnew{\pi_i}{\atnew{\pvec{c}'}} \mid i\in I\}}.
     \end{equation}
     Then $\Upsilon_{\catnew{c}} \vdash \pi\fix{C}t$.
\end{lemma}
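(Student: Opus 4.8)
The plan is to reduce the statement to Lemma~\ref{alemma:fix-point-composition-inverse} (closure of derivable fixed-point constraints under composition and inverse) together with the machinery already developed for the $\catnew{c}$-decomposition of permutations, namely Lemmas~\ref{alemma:fix-point-split}, \ref{alemma:characterization-fix-c-general} and~\ref{alemma:fix-point-formed-by-fresh-names}. The delicate point is that the hypotheses $\Upsilon_{\catnew{c}}\vdash\pi_i\fix{C}t$ are phrased with respect to the full quantified set $\catnew{c}$, whereas the permutation $\pi$ is presented through the decomposition relative to $\atnew{\pvec{c}'}\subseteq\catnew{c}$; bridging this gap is the bulk of the argument.

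First, if $\pi=\id$ the claim is reflexivity (Theorem~\ref{thm:miscellaneous}(\ref{thm:alpha-equivalence})). Otherwise, by definition of $\alert{\circ}$ we may write $\pi=\rho_1\circ\rho_2$ with $\rho_1\in\Perm{B}$ and $\rho_2\in\pair{\{\npnew{\pi_i}{\atnew{\pvec{c}'}}\mid i\in I\}}$, where $B=\bigl(\bigcup_{i\in I}\dom{\pnew{\pi_i}{\atnew{\pvec{c}'}}}\bigr)\cup\atnew{\pvec{c}'}$; by Lemma~\ref{alemma:fix-point-composition-inverse} it then suffices to derive $\Upsilon_{\catnew{c}}\vdash\rho_1\fix{C}t$ and $\Upsilon_{\catnew{c}}\vdash\rho_2\fix{C}t$. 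I would establish these through the auxiliary claim
\[
(\star)\qquad \Upsilon_{\catnew{c}}\vdash\pi'\fix{C}t \text{ for every permutation } \pi' \text{ with } \dom{\pi'}\subseteq B':=\textstyle\bigl(\bigcup_{i\in I}\dom{\pnew{\pi_i}{\catnew{c}}}\bigr)\cup\catnew{c}.
\]
To prove $(\star)$, fix one atom $\atnew{c_1}$ fresh for $\Upsilon_{\catnew{c}}$, $t$ and every $\pi_i$; apply Lemma~\ref{alemma:fix-point-split} to each hypothesis $\Upsilon_{\catnew{c}}\vdash\pi_i\fix{C}t$ to obtain $\Upsilon_{\catnew{c}}\vdash\pnew{\pi_i}{\catnew{c}}\fix{C}t$ and $\Upsilon_{\catnew{c}}\vdash\npnew{\pi_i}{\catnew{c}}\fix{C}t$; then Lemma~\ref{alemma:characterization-fix-c-general} turns the first into $\Upsilon_{\catnew{c},\atnew{c_1}}\vdash\newswap{a}{c_1}\fix{C}t$ for all $a\in\dom{\pnew{\pi_i}{\catnew{c}}}\cup\catnew{c}$; taking the union over $i\in I$ and feeding the resulting family of swapping constraints into Lemma~\ref{alemma:fix-point-formed-by-fresh-names} (applied to the family $\{\pi_i\mid i\in I\}$) yields exactly $(\star)$.

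Granting $(\star)$, the two remaining goals follow quickly. Since $\atnew{\pvec{c}'}\subseteq\catnew{c}$, any cycle of $\pi_i$ mentioning an atom of $\atnew{\pvec{c}'}$ also mentions one of $\catnew{c}$, so $\dom{\pnew{\pi_i}{\atnew{\pvec{c}'}}}\subseteq\dom{\pnew{\pi_i}{\catnew{c}}}$, hence $B\subseteq B'$; thus $\dom{\rho_1}\subseteq B'$ and $(\star)$ gives $\Upsilon_{\catnew{c}}\vdash\rho_1\fix{C}t$. For $\rho_2$, by Lemma~\ref{alemma:fix-point-composition-inverse} it suffices to derive $\Upsilon_{\catnew{c}}\vdash\npnew{\pi_i}{\atnew{\pvec{c}'}}\fix{C}t$ for each $i$. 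Split the cycles of $\pi_i$ that avoid $\atnew{\pvec{c}'}$ into those still touching $\catnew{c}$, whose product I call $\theta_i$, and those disjoint from $\catnew{c}$, whose product is $\npnew{\pi_i}{\catnew{c}}$; these are disjoint sub-products of the cycle decomposition of $\pi_i$, and $\npnew{\pi_i}{\atnew{\pvec{c}'}}=\theta_i\circ\npnew{\pi_i}{\catnew{c}}$. Now $\dom{\theta_i}\subseteq\dom{\pnew{\pi_i}{\catnew{c}}}\subseteq B'$, so $(\star)$ gives $\Upsilon_{\catnew{c}}\vdash\theta_i\fix{C}t$, while $\Upsilon_{\catnew{c}}\vdash\npnew{\pi_i}{\catnew{c}}\fix{C}t$ was already obtained; composing them (Lemma~\ref{alemma:fix-point-composition-inverse}(\ref{alemma:fix-point-composition})) gives $\Upsilon_{\catnew{c}}\vdash\npnew{\pi_i}{\atnew{\pvec{c}'}}\fix{C}t$. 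Finally, writing $\rho_2$ as a composition of these generators and their inverses and applying Lemma~\ref{alemma:fix-point-composition-inverse} repeatedly gives $\Upsilon_{\catnew{c}}\vdash\rho_2\fix{C}t$, and one last application of that lemma to $\pi=\rho_1\circ\rho_2$ concludes.

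The main obstacle is precisely the $\atnew{\pvec{c}'}$-versus-$\catnew{c}$ bookkeeping behind $(\star)$ and behind the identity $\npnew{\pi_i}{\atnew{\pvec{c}'}}=\theta_i\circ\npnew{\pi_i}{\catnew{c}}$: one must verify carefully that $\theta_i$ is a genuine sub-product of the disjoint-cycle decomposition of $\pi_i$ (so that the commutations used are legitimate and the displayed equality is an identity of permutations, not merely an equality of supports), and that a single fresh atom $\atnew{c_1}$ can be chosen uniformly for all $i\in I$, so that Lemmas~\ref{alemma:characterization-fix-c-general} and~\ref{alemma:fix-point-formed-by-fresh-names} are applied in a common extended context $\Upsilon_{\catnew{c},\atnew{c_1}}$. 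No idea beyond the already-available lemmas is required.
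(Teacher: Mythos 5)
Your proof is correct, and its skeleton coincides with the paper's: factor $\pi = \rho_1\circ\rho_2$ according to the displayed product of sets, derive $\rho_1\fix{C}t$ from Lemmas~\ref{alemma:characterization-fix-c-general} and~\ref{alemma:fix-point-formed-by-fresh-names}, derive $\rho_2\fix{C}t$ by closing the generators under composition and inverse via Lemma~\ref{alemma:fix-point-composition-inverse}, and compose. Where you genuinely diverge is in the treatment of the mismatch between $\atnew{\pvec{c}'}$ and $\catnew{c}$: the paper's (very terse) proof invokes Lemma~\ref{alemma:fix-point-split} directly to obtain $\Upsilon_{\catnew{c}}\vdash \pnew{\pi_i}{\atnew{\pvec{c}'}}\fix{C}t$ and $\Upsilon_{\catnew{c}}\vdash \npnew{\pi_i}{\atnew{\pvec{c}'}}\fix{C}t$, even though that lemma is only stated for the decomposition relative to the full quantified set $\catnew{c}$; you instead apply the split literally (relative to $\catnew{c}$), establish the uniform claim $(\star)$ over $B'$, and recover the $\atnew{\pvec{c}'}$-generators through the inclusion $B\subseteq B'$ and the factorisation $\npnew{\pi_i}{\atnew{\pvec{c}'}}=\theta_i\circ\npnew{\pi_i}{\catnew{c}}$ into disjoint sub-products of the cycle decomposition. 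Your route is slightly longer but stays within the lemmas exactly as stated, and in doing so it fills in the one step the paper leaves implicit; both arguments are sound.
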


 \begin{proof}
     By Lemma~\ref{alemma:fix-point-split}, we have: $\Upsilon_{\catnew{c}}\vdash \pnew{\pi_i}{\atnew{\pvec{c}'}}\fix{C} t$ and $\Upsilon_{\catnew{c}}\vdash \npnew{\pi_i}{\atnew{\pvec{c}'}}\fix{C} t$, for all $i\in I$. Write $\pi$ as $\pi = \eta_1\circ \eta_2$ where $\eta_1 \in \Perm{\left(\bigcup_{i\in I} \dom{\pnew{\pi_i}{\atnew{\pvec{c}'}}}\right)\cup \atnew{\pvec{c}'}}$ and $\eta_2 \in\pair{\{\npnew{\pi_i}{\atnew{\pvec{c}'}} \mid i\in I\}}$. Thus, on one hand the former combined with Lemmas~\ref{alemma:characterization-fix-c-general} and~\ref{alemma:fix-point-formed-by-fresh-names} yields $\Upsilon_{\catnew{c}} \vdash \eta_1 \fix{C} t$. On the other hand, the latter combined with Lemma~\ref{alemma:fix-point-composition} yields $\Upsilon_{\catnew{c}}\vdash \eta_2\fix{C} t$. Then the result follows by Lemma~\ref{alemma:fix-point-composition}(\ref{alemma:fix-point-composition}).

 \end{proof}

Given contexts $\Upsilon_{\catnew{c}}$ and $\Upsilon'_{\atnew{\pvec{c}'}}$ and $\sigma_{\atnew{\pvec{c}''}}$ a $\new$-substitution. We introduce some notations:
\begin{itemize}
    \item $\Upsilon_{\catnew{c}}\vdash \Upsilon'_{\atnew{\pvec{c}'}}\sigma_{\atnew{\pvec{c}''}}$ for $\Upsilon_{\catnew{c}\cup\atnew{\pvec{c}'}\cup\atnew{\pvec{c}''}}\vdash \Upsilon'\sigma$.
    \item $\Upsilon_{\catnew{c}}\vdash s\sigma_{\atnew{\pvec{c}''}} \aeq{C} t\sigma_{\atnew{\pvec{c}''}}$ for $\Upsilon_{\catnew{c}\cup\atnew{\pvec{c}''}}\vdash s\sigma \aeq{C} t\sigma$.

    \item $\Upsilon_{\catnew{c}}\vdash \pi\fix{C} t\sigma_{\atnew{\pvec{c}''}}$ for $\Upsilon_{\catnew{c}\cup\atnew{\pvec{c}''}}\vdash \pi \fix{C} t\sigma$.
\end{itemize}

 \begin{lemma}[Preservation by substitution]\label{alemma:preservation-by-subs}
     Suppose $\Upsilon_{\catnew{c}}$ and $\Upsilon'_{\atnew{\pvec{c}'}}$ are contexts and let $\sigma_{\atnew{\pvec{c}''}}$ be a substitution such that $\Upsilon_{\catnew{c}}\vdash \Upsilon'_{\atnew{\pvec{c}'}}\sigma_{\atnew{\pvec{c}''}}$. If $\Upsilon'_{\atnew{\pvec{c}'}}\vdash s\aeq{C}t$ then $\Upsilon_{\catnew{c}\cup\atnew{\pvec{c}'}}\vdash s\sigma_{\atnew{\pvec{c}''}}\aeq{C}t\sigma_{\atnew{\pvec{c}''}}$.
 \end{lemma}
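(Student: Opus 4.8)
The plan is to prove the lemma by induction on the structure of the term $s$ (equivalently, on the derivation of $\Upsilon'_{\atnew{\pvec{c}'}}\vdash s\aeq{C}t$), using Inversion (Theorem~\ref{thm:miscellaneous}(\ref{thm:inversion})) to determine the last rule applied. Throughout, the working hypothesis $\Upsilon_{\catnew{c}}\vdash\Upsilon'_{\atnew{\pvec{c}'}}\sigma_{\atnew{\pvec{c}''}}$ is carried along; note that by the notational conventions the goal $\Upsilon_{\catnew{c}\cup\atnew{\pvec{c}'}}\vdash s\sigma_{\atnew{\pvec{c}''}}\aeq{C}t\sigma_{\atnew{\pvec{c}''}}$ unfolds to $\Upsilon_{\catnew{c}\cup\atnew{\pvec{c}'}\cup\atnew{\pvec{c}''}}\vdash s\sigma\aeq{C}t\sigma$. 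The case $(\frule{\faeq{C}}{a})$ is immediate since $\sigma$ fixes atoms, and the cases $(\frule{\faeq{C}}{\tf{f}})$, $(\frule{\faeq{C}}{\tf{f^C}})$ and $(\frule{\faeq{C}}{[a]})$ follow directly from the induction hypothesis and the corresponding rule, using that $\sigma$ commutes with the term constructors.

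For the abstraction case $(\frule{\faeq{C}}{ab})$, where $s\equiv[a]s'$, $t\equiv[b]t'$ and Inversion gives $\Upsilon'_{\atnew{\pvec{c}'},\atnew{c_1}}\vdash\newswap{a}{c_1}\act s'\aeq{C}\newswap{b}{c_1}\act t'$ for a fresh $\atnew{c_1}$, I would first choose $\atnew{c_1}$ additionally fresh for $\Upsilon_{\catnew{c}}$ and for $\sigma$. Then $\Upsilon_{\catnew{c}}\vdash\Upsilon'_{\atnew{\pvec{c}'},\atnew{c_1}}\sigma_{\atnew{\pvec{c}''}}$ still holds, by Lemma~\ref{alemma:vacuous-quantification} (adding a vacuous $\new$-binder), so the induction hypothesis yields $\Upsilon_{\catnew{c}\cup\atnew{\pvec{c}'}\cup\{\atnew{c_1}\}\cup\atnew{\pvec{c}''}}\vdash(\newswap{a}{c_1}\act s')\sigma\aeq{C}(\newswap{b}{c_1}\act t')\sigma$. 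Since permutations commute with substitution this is $\newswap{a}{c_1}\act(s'\sigma)\aeq{C}\newswap{b}{c_1}\act(t'\sigma)$, and one application of $(\frule{\faeq{C}}{ab})$ gives $\Upsilon_{\catnew{c}\cup\atnew{\pvec{c}'}\cup\atnew{\pvec{c}''}}\vdash[a](s'\sigma)\aeq{C}[b](t'\sigma)$, i.e.\ $([a]s')\sigma\aeq{C}([b]t')\sigma$.

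The crux is the variable case $(\frule{\faeq{C}}{var})$: $s\equiv\pi\act X$, $t\equiv\pi'\act X$, and Inversion gives $\pi'^{-1}\circ\pi\in\PN{}{\Upsilon'_{\atnew{\pvec{c}'}}|_X}$. The goal $\Upsilon_{\catnew{c}\cup\atnew{\pvec{c}'}\cup\atnew{\pvec{c}''}}\vdash\pi\act(X\sigma)\aeq{C}\pi'\act(X\sigma)$ reduces, by Equivariance (Theorem~\ref{thm:miscellaneous}(\ref{thm:object-equivariance})), to establishing the fixed-point constraint $\Upsilon_{\catnew{c}\cup\atnew{\pvec{c}'}\cup\atnew{\pvec{c}''}}\vdash(\pi'^{-1}\circ\pi)\fix{C}X\sigma$. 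I would obtain this from Lemma~\ref{alemma:perm_dec}, taking as the family $\{\pi_i\}_{i\in I}$ the permutations occurring in the primitive constraints $\rho\fix{C}X\in\Upsilon'|_X$: from $\Upsilon_{\catnew{c}}\vdash\Upsilon'_{\atnew{\pvec{c}'}}\sigma_{\atnew{\pvec{c}''}}$ we get $\Upsilon_{\catnew{c}\cup\atnew{\pvec{c}'}\cup\atnew{\pvec{c}''}}\vdash\rho\fix{C}X\sigma$ for each such $\rho$, and, unwinding the definition of the normalised context, $(\Upsilon'_{\atnew{\pvec{c}'}}|_X)_{\fresh}$ and $(\Upsilon'_{\atnew{\pvec{c}'}}|_X)_{\fix{C}}$ consist exactly of the $\pnew{\rho}{\atnew{\pvec{c}'}}$- and $\npnew{\rho}{\atnew{\pvec{c}'}}$-parts, so $\PN{}{\Upsilon'_{\atnew{\pvec{c}'}}|_X}\subseteq\Perm{(\bigcup_i\dom{\pnew{\pi_i}{\atnew{\pvec{c}'}}})\cup\atnew{\pvec{c}'}}\alert{\circ}\pair{\{\npnew{\pi_i}{\atnew{\pvec{c}'}}\mid i\in I\}}$, which is precisely the hypothesis of Lemma~\ref{alemma:perm_dec}. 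When $\Upsilon'|_X$ is empty the permutation $\pi'^{-1}\circ\pi$ has domain contained in $\atnew{\pvec{c}'}$, and the fixed-point constraint follows instead from Lemma~\ref{alemma:two-fresh-names-fix}.

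I expect the only real difficulty to be this last case, and specifically the bookkeeping needed to match $\PN{}{\Upsilon'_{\atnew{\pvec{c}'}}|_X}$ with the shape required by Lemma~\ref{alemma:perm_dec}: one must check that the $\fresh$/$\fix{C}$ split of the normalised context lines up with the domains of the $\new$-cycles and the non-$\new$-cycles of the $\pi_i$, taking into account vacuous $\new$-quantified names, and keep careful track of how the list of $\new$-quantified names grows across the induction (especially in the $(ab)$ case). Everything else is routine given the auxiliary results on composition and inversion of fixed-point constraints (Lemma~\ref{alemma:fix-point-composition-inverse}), on vacuous quantification (Lemma~\ref{alemma:vacuous-quantification}), and the decomposition lemma (Lemma~\ref{alemma:perm_dec}) that does the heavy lifting.
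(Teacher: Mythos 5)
Your proposal is correct and follows essentially the same route as the paper's proof: induction on the derivation, with the only non-trivial case being $(\frule{\faeq{C}}{var})$, handled via Inversion, the hypothesis that each $\rho\fix{C}X\in\Upsilon'|_X$ yields $\Upsilon_{\catnew{c}\cup\atnew{\pvec{c}'}\cup\atnew{\pvec{c}''}}\vdash\rho\fix{C}X\sigma$, then Lemma~\ref{alemma:perm_dec} and Equivariance. Your extra care about the $(ab)$ case bookkeeping and the degenerate case where $\Upsilon'|_X$ is empty (which Lemma~\ref{alemma:perm_dec} does not literally cover, since it requires a non-empty index set) is a welcome refinement of details the paper leaves implicit.
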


 \begin{proof}
    Proof by induction on the last rule applied. The only non-trivial case is the rule $(\frule{\faeq{C}}{var})$. In this case, $s \equiv \pi_1\act X$ and $t \equiv \pi_2\act X$. Hence, $\Upsilon'_{\atnew{\pvec{c}'}}\vdash \pi_1\act X\aeq{C} \pi_2\act X$. By Inversion (Theorem~\ref{thm:miscellaneous}(\ref{thm:inversion})), we have $\pi_2^{-1}\circ\pi_1\in \PN{}{\Upsilon'_{\atnew{\pvec{c}'}}|_X} = \Perm{\atm{(\Upsilon_{\atnew{\pvec{c}'}}|_X)_{\fresh}}}\alert{\circ}\pair{\perm{}{(\Upsilon_{\atnew{\pvec{c}'}}|_X)_{\fix{C}}}}$. By hypothesis, $\Upsilon_{\catnew{c}\cup\atnew{\pvec{c}'}\cup\atnew{\pvec{c}''}}\vdash \pi\fix{C} Y\sigma$ for every $\pi\fix{C} Y\in \Upsilon'$. In particular, $\Upsilon_{\catnew{c}\cup\atnew{\pvec{c}'}\cup\atnew{\pvec{c}''}} \vdash \pi\fix{C} X\sigma$ for every $\pi\fix{C} X\in \Upsilon'|_X$. By Lemma~\ref{alemma:perm_dec}, we obtain $\Upsilon_{\catnew{c}\cup\atnew{\pvec{c}'}\cup\atnew{\pvec{c}''}} \vdash (\pi_2^{-1}\circ\pi_1)\fix{C} X\sigma$, which is the same as  $\Upsilon_{\catnew{c}\cup\atnew{\pvec{c}'}\cup\atnew{\pvec{c}''}} \vdash (\pi_2^{-1}\circ\pi_1)\act X\sigma \aeq{C} X\sigma$. By Equivariance (Theorem~\ref{thm:miscellaneous}(\ref{thm:object-equivariance})), this is equivalent to $\Upsilon_{\catnew{c}\cup\atnew{\pvec{c}'}\cup\atnew{\pvec{c}''}} \vdash \pi_1\act X\sigma \aeq{C} \pi_2\act X\sigma$ Therefore, $\Upsilon_{\catnew{c}\cup\atnew{\pvec{c}'}} \vdash \pi\act X\sigma_{\atnew{\pvec{c}''}} \aeq{C} \pi_2\act X\sigma_{\atnew{\pvec{c}''}}$.
\end{proof}

\section{Proofs of Section~\ref{sec:nominal-c-unification} - Nominal \texorpdfstring{$\C$}{C}-Unification}\label{app:nominal-c-unification}

For the rest of this section, for a solvable problem $\probc = \newc{c}{}.Pr$ and a solution $\npair{\Psi,\sigma}{\atnew{\pvec{c}'}}\in \U{\probc}$, we will write $\Psi_{\atnew{\pvec{c}'}} \vdash (Pr)\sigma$ to abbreviate that $\Psi_{\atnew{\pvec{c}'}} \vdash s\sigma\aeq{C} t\sigma$ for all $s\aeq{C}^? t\in Pr$.

\begin{lemma}\label{alemma:quasiorder}
    The relation $\ins{}$ defines a quasiorder (i.e. reflexive and transitive) in $\mathcal{U}(\probc)$.
\end{lemma}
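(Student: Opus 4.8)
## Proof plan for Lemma \ref{alemma:quasiorder}

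The plan is to verify the two defining properties of a quasiorder — reflexivity and transitivity — directly from Definition \ref{def:ordering}, relying on the structural properties of the calculus established in Theorem \ref{thm:miscellaneous} (in particular Equivalence, i.e.\ reflexivity and transitivity of $\aeq{C}$) and on Preservation by substitution (Lemma \ref{alemma:preservation-by-subs}).

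For \textbf{reflexivity}, given $\npair{\Psi,\sigma}{\atnew{\vec{c}}}\in\U{\probc}$, I would exhibit the witnessing substitution $\delta = Id$. Then $\atnew{\vec{c}}\subseteq\atnew{\vec{c}}$ trivially, $(\Psi)_{\atnew{\vec{c}}}\vdash \Psi\,Id$ holds because each primitive constraint $\pi\fix{C}X\in\Psi$ is derivable from $\Psi_{\atnew{\vec{c}}}$ by rule $(\frule{\faeq{C}}{var})$ (since $\pi^{-1}\circ\pi' = \pi^{-1}\circ\pi \in \PN{}{\Psi_{\atnew{\vec{c}}}|_X}$ reduces to membership of the generators, which is immediate), and $(\Psi)_{\atnew{\vec{c}}}\vdash X\sigma \aeq{C} X\sigma\,Id$ is just reflexivity of $\aeq{C}$ (Theorem \ref{thm:miscellaneous}(\ref{thm:alpha-equivalence})) applied to each $X\in\V$.

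For \textbf{transitivity}, suppose $\npair{\Psi_1,\sigma_1}{\atnew{\vec{c}_1}}\ins{}\npair{\Psi_2,\sigma_2}{\atnew{\vec{c}_2}}$ via $\delta_1$ and $\npair{\Psi_2,\sigma_2}{\atnew{\vec{c}_2}}\ins{}\npair{\Psi_3,\sigma_3}{\atnew{\vec{c}_3}}$ via $\delta_2$. I would take $\delta := \delta_1\delta_2$ (first $\delta_1$, then $\delta_2$) as the composite witness and show it validates $\npair{\Psi_1,\sigma_1}{\atnew{\vec{c}_1}}\ins{}\npair{\Psi_3,\sigma_3}{\atnew{\vec{c}_3}}$. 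Inclusion of $\new$-quantified names is immediate: $\atnew{\vec{c}_1}\subseteq\atnew{\vec{c}_2}\subseteq\atnew{\vec{c}_3}$. For the context condition, from $(\Psi_2)_{\atnew{\vec{c}_2}}\vdash\Psi_1\delta_1$ I apply Preservation by substitution (Lemma \ref{alemma:preservation-by-subs}) with $\sigma := \delta_2$, using the hypothesis $(\Psi_3)_{\atnew{\vec{c}_3}}\vdash (\Psi_2)_{\atnew{\vec{c}_2}}\delta_2$, to obtain $(\Psi_3)_{\atnew{\vec{c}_3}}\vdash \Psi_1\delta_1\delta_2$, i.e.\ $(\Psi_3)_{\atnew{\vec{c}_3}}\vdash\Psi_1\delta$. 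For the substitution condition, fix $X\in\V$. From $(\Psi_2)_{\atnew{\vec{c}_2}}\vdash X\sigma_2\aeq{C}X\sigma_1\delta_1$, applying Lemma \ref{alemma:preservation-by-subs} with $\delta_2$ and the hypothesis $(\Psi_3)_{\atnew{\vec{c}_3}}\vdash(\Psi_2)_{\atnew{\vec{c}_2}}\delta_2$ gives $(\Psi_3)_{\atnew{\vec{c}_3}}\vdash X\sigma_2\delta_2\aeq{C}X\sigma_1\delta_1\delta_2$; combining this with $(\Psi_3)_{\atnew{\vec{c}_3}}\vdash X\sigma_3\aeq{C}X\sigma_2\delta_2$ by transitivity of $\aeq{C}$ (Theorem \ref{thm:miscellaneous}(\ref{thm:alpha-equivalence})) yields $(\Psi_3)_{\atnew{\vec{c}_3}}\vdash X\sigma_3\aeq{C}X\sigma_1\delta$, which is the required statement for $\delta = \sigma_1\mapsto$ read as $\delta_1\delta_2$.

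The main obstacle I anticipate is purely bookkeeping around the $\new$-quantified names attached to substitutions and contexts: one must check that the name sets $\atnew{\vec{c}''}$ hidden inside the $\new$-substitutions behave correctly under composition so that Lemma \ref{alemma:preservation-by-subs} applies with the right index sets, and that the notational conventions $\Upsilon_{\catnew{c}}\vdash\Upsilon'_{\atnew{\pvec{c}'}}\sigma_{\atnew{\pvec{c}''}}$ etc.\ line up. No genuinely new idea is needed beyond Preservation by substitution and the equivalence properties of $\aeq{C}$; the content of the lemma is that these combine cleanly, and the only care required is to keep the chain of substitutions and their associated fresh-name contexts consistent.
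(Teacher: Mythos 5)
Your proposal is correct and follows essentially the same route as the paper's own proof: reflexivity is witnessed by $\delta = Id$, and transitivity uses the composite witness $\delta_1\delta_2$ together with Preservation by substitution (Lemma~\ref{alemma:preservation-by-subs}) and transitivity of $\aeq{C}$ to chain the two instantiation conditions. The only differences are cosmetic (you spell out the reflexivity check that the paper merely asserts, and the final clause of your transitivity argument has a garbled phrase, but the intended conclusion is clear).
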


\begin{proof}
\begin{itemize}
    \item \emph{Reflexivity.}  Just note that for every pair $\npair{\Psi,\sigma}{\catnew{c}}\in \mathcal{U}(\probc)$, we have $\npair{\Psi,\sigma}{\catnew{c}} \ins{\id} \npair{\Psi,\sigma}{\catnew{c}}$.

    \item \emph{Transitivity.} Suppose $ \npair{\Psi_1,\sigma_1}{\atnew{\vec{c}_1}} ,\npair{\Psi_2,\sigma_2}{\atnew{\vec{c}_2}},\npair{\Psi_3,\sigma_3}{\atnew{\vec{c}_3}}\in \mathcal{U}(\probc)$ are such that $\npair{\Psi_1,\sigma_1}{\atnew{\vec{c}_1}} \ins{\delta_1} \npair{\Psi_2,\sigma_2}{\atnew{\vec{c}_2}}$ and $\npair{\Psi_2,\sigma_2}{\atnew{\vec{c}_2}} \ins{\delta_2} \npair{\Psi_3,\sigma_3}{\atnew{\vec{c}_3}}$. Then, by definition, we have $\atnew{\vec{c}_1} \subseteq \atnew{\vec{c}_2} \subseteq \atnew{\vec{c}_3}$ and
          \begin{enumerate}
              \item $(\Psi_2)_{\atnew{\vec{c}_2}}\vdash \Psi_1\delta_1$  and  $(\Psi_2)_{\atnew{\pvec{c}_2}}\vdash X\sigma_2\aeq{C}X\sigma_1\delta_1$ for all $X\in\V$.
              \item  $(\Psi_3)_{\atnew{\vec{c}_3}}\vdash \Psi_2\delta_2$  and  $(\Psi_3)_{\atnew{\vec{c}_3}}\vdash X\sigma_3\aeq{C}X\sigma_2\delta_2$  for all $X\in\V$.
          \end{enumerate}
          Take $\delta = \delta_1\delta_2$. We know $(\Psi_3)_{\atnew{\vec{c}_3}}\vdash \Psi_2\delta_2$. Moreover, we have $(\Psi_2)_{\atnew{\vec{c}_2}}\vdash \Psi_1\delta_1$. Applying Lemma~\ref{alemma:preservation-by-subs} yields $(\Psi_3)_{\atnew{\vec{c}_3}}\vdash \Psi_1\delta_1\delta_2$ because $\atnew{\vec{c}_2}\subseteq\atnew{\vec{c}_3}$.

          For all $X\in \V$, we have that $(\Psi_2)_{\atnew{\vec{c}_2}}\vdash X\sigma_2\aeq{C}X\sigma_1\delta_1$. By $(\Psi_3)_{\atnew{\vec{c}_3}}\vdash \Psi_2\delta_2$, Lemma~\ref{alemma:preservation-by-subs} and $\atnew{\vec{c}_2}\subseteq\atnew{\vec{c}_3}$, we obtain  $(\Psi_3)_{\atnew{\vec{c}_3}}\vdash X\sigma_2\delta_2\aeq{C}X\sigma_1\delta_1\delta_2$ for all  $X\in\V$. Moreover, we know that $(\Psi_3)_{\atnew{\vec{c}_3}}\vdash X\sigma_3\aeq{C}X\sigma_2\delta_2$ for all $X\in\V$. Therefore, by transitivity of $\aeq{C}$, it follows that  $(\Psi_3)_{\atnew{\vec{c}_3}}\vdash X\sigma_3\aeq{C}X\sigma_1\delta_1\delta_2$ for all $X\in \V$.
\end{itemize}

\end{proof}

\begin{lemma}[Closure by Instantiation]\label{alemma:closure-by-instantiation}
    Suppose $\npair{\Psi_1,\sigma_1}{\atnew{\vec{c}_1}}\in \mathcal{U}(\probc)$ and $\npair{\Psi_1,\sigma_1}{\atnew{\vec{c}_1}} \ins{} \npair{\Psi_2,\sigma_2}{\atnew{\vec{c}_2}}$. Then $ \npair{\Psi_2,\sigma_2}{\atnew{\vec{c}_2}} \in \mathcal{U}(\probc)$.
\end{lemma}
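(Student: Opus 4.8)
The plan is to reduce the statement entirely to \emph{Preservation by substitution} (Lemma~\ref{alemma:preservation-by-subs}), plus a transparent ``congruence'' fact saying that substitutions which agree modulo a context act equally on every term. Write $\probc=\newc{c}{}.Pr$. Unpacking the hypotheses: $\npair{\Psi_1,\sigma_1}{\atnew{\vec{c}_1}}\in\mathcal{U}(\probc)$ gives $\catnew{c}\subseteq\atnew{\vec{c}_1}$ and $(\Psi_1)_{\atnew{\vec{c}_1}}\vdash s\sigma_1\aeq{C}t\sigma_1$ for every $s\aeq{C}^?t\in Pr$; and $\npair{\Psi_1,\sigma_1}{\atnew{\vec{c}_1}}\ins{}\npair{\Psi_2,\sigma_2}{\atnew{\vec{c}_2}}$ gives $\atnew{\vec{c}_1}\subseteq\atnew{\vec{c}_2}$ together with a substitution $\delta$ such that $(\Psi_2)_{\atnew{\vec{c}_2}}\vdash\Psi_1\delta$ and $(\Psi_2)_{\atnew{\vec{c}_2}}\vdash X\sigma_2\aeq{C}X\sigma_1\delta$ for all $X\in\V$. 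Since $\catnew{c}\subseteq\atnew{\vec{c}_1}\subseteq\atnew{\vec{c}_2}$, the side condition on $\new$-quantified names needed for $\npair{\Psi_2,\sigma_2}{\atnew{\vec{c}_2}}$ to be a unifier of $\probc$ is already met, so it remains only to derive $(\Psi_2)_{\atnew{\vec{c}_2}}\vdash s\sigma_2\aeq{C}t\sigma_2$ for each constraint.

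First I would prove the auxiliary claim: for every nominal term $r$, $(\Psi_2)_{\atnew{\vec{c}_2}}\vdash r\sigma_2\aeq{C}r\sigma_1\delta$. This goes by structural induction on $r$. The atom case is $(\frule{\faeq{C}}{a})$. For a suspension $r\equiv\pi\act X$, use $(\pi\act X)\sigma\equiv\pi\act(X\sigma)$ together with $(\Psi_2)_{\atnew{\vec{c}_2}}\vdash X\sigma_2\aeq{C}X\sigma_1\delta$ and Equivariance (Theorem~\ref{thm:miscellaneous}(\ref{thm:object-equivariance})) applied with $\pi$. The cases $r\equiv\tf{f}(r_1,\ldots,r_n)$, $r\equiv\tf{f^C}(r_0,r_1)$ and $r\equiv[a]r'$ follow from the induction hypotheses by a single application of $(\frule{\faeq{C}}{\tf{f}})$, of $(\frule{\faeq{C}}{\tf{f^C}})$ with $i=0$, and of $(\frule{\faeq{C}}{[a]})$, respectively; no subtlety with binders arises because $(\frule{\faeq{C}}{[a]})$ keeps the bound atom fixed.

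Now fix $s\aeq{C}^?t\in Pr$. From $(\Psi_1)_{\atnew{\vec{c}_1}}\vdash s\sigma_1\aeq{C}t\sigma_1$ and $(\Psi_2)_{\atnew{\vec{c}_2}}\vdash\Psi_1\delta$, Preservation by substitution (Lemma~\ref{alemma:preservation-by-subs}) yields $(\Psi_2)_{\atnew{\vec{c}_2}\cup\atnew{\vec{c}_1}}\vdash(s\sigma_1)\delta\aeq{C}(t\sigma_1)\delta$, i.e.\ $(\Psi_2)_{\atnew{\vec{c}_2}}\vdash s\sigma_1\delta\aeq{C}t\sigma_1\delta$ since $\atnew{\vec{c}_1}\subseteq\atnew{\vec{c}_2}$ (absorbing the resulting vacuous quantifiers via Lemma~\ref{alemma:vacuous-quantification}). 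Instantiating the auxiliary claim at $s$ and at $t$ gives $(\Psi_2)_{\atnew{\vec{c}_2}}\vdash s\sigma_2\aeq{C}s\sigma_1\delta$ and $(\Psi_2)_{\atnew{\vec{c}_2}}\vdash t\sigma_2\aeq{C}t\sigma_1\delta$. Chaining these three judgements using symmetry and transitivity of $\aeq{C}$ (Equivalence, Theorem~\ref{thm:miscellaneous}(\ref{thm:alpha-equivalence})) produces $(\Psi_2)_{\atnew{\vec{c}_2}}\vdash s\sigma_2\aeq{C}t\sigma_2$; as $s\aeq{C}^?t$ was arbitrary, $\npair{\Psi_2,\sigma_2}{\atnew{\vec{c}_2}}\in\mathcal{U}(\probc)$.

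The only part requiring any real argument is the auxiliary congruence claim, and even that is a routine structural induction; the main risk is purely bookkeeping of the $\new$-quantified name lists, which is kept under control by the inclusions $\catnew{c}\subseteq\atnew{\vec{c}_1}\subseteq\atnew{\vec{c}_2}$ and by Lemma~\ref{alemma:vacuous-quantification}.
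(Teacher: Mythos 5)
Your proposal is correct and follows essentially the same route as the paper's own proof: the same congruence claim $(\Psi_2)_{\atnew{\vec{c}_2}}\vdash r\sigma_2\aeq{C}r\sigma_1\delta$ proved by structural induction on $r$, the same appeal to Preservation by substitution (Lemma~\ref{alemma:preservation-by-subs}) to transport $(\Psi_1)_{\atnew{\vec{c}_1}}\vdash s\sigma_1\aeq{C}t\sigma_1$ along $\delta$, and the same closing step by transitivity. You actually spell out the induction and the $\new$-name bookkeeping in more detail than the paper does.
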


\begin{proof}
     Suppose $\probc = \newc{c}{}.Pr$. Let $\npair{\Psi_1,\sigma_1}{\atnew{\vec{c}_1}}\in \mathcal{U}(\probc)$. Then by definition, we have $(\Psi_1)_{\atnew{\vec{c}_1}} \vdash (Pr)\sigma_1$ and $(\Psi_1)_{\atnew{\pvec{c}_1}} \vdash X\sigma_1 \aeq{C} X\sigma_1\sigma_1$ for all $X\in \V$. From  $\npair{\Psi_1,\sigma_1}{\atnew{\vec{c}_1}} \ins{} \npair{\Psi_2,\sigma_2}{\atnew{\vec{c}_2}}$, we know that there exist a substitution $\delta$ such that:

     \begin{enumerate}
         \item $(\Psi_2)_{\atnew{\vec{c}_2}}\vdash \Psi_1\delta$.

         \item $(\Psi_2)_{\atnew{\vec{c}_2}}\vdash X\sigma_2\aeq{C}X\sigma_1\delta$, for all $X\in \V$;
     \end{enumerate}

     \begin{claim}[1]
          We claim that $(\Psi_2)_{\atnew{\vec{c}_2}}\vdash t\sigma_2\aeq{C}t\sigma_1\delta$ for all nominal term $t$. The proof proceeds by straightforward induction on the structure of the nominal term $t$ utilizing item 2 as a key component of the argument.
     \end{claim}

     \begin{claim}[2]
          We claim that $(\Psi_2)_{\atnew{\vec{c}_2}} \vdash (Pr)\sigma_2$. For each $s\aeq{C}^?  t\in Pr$, we have $(\Psi_1)_{\atnew{\vec{c}_1}} \vdash s\sigma_1 \aeq{C} t\sigma_1$. By Lemma~\ref{alemma:preservation-by-subs}, it follows that $(\Psi_2)_{\atnew{\vec{c}_2}} \vdash s\sigma_1\delta \aeq{C} t\sigma_1\delta$. By Claim 1 and transitivity of $\aeq{C}$, it follows that $(\Psi_2)_{\atnew{\vec{c}_2}} \vdash s\sigma_2 \aeq{C} t\sigma_2$. 
     \end{claim}
\end{proof}

For the  termination proof, we define the \emph{size of a nominal term} $t$, denoted by $|t|$, inductively by: $|a| := 1, |\pi\act X| := 1, |\tf{f}(\tilde{t})_n| := 1+|t_1|+\ldots+|t_n|$ for $\tf{f}\in\Sigma$, and $|[a]t| := 1+|t|$. 
The \emph{size of a constraint} is defined by $|s\aeq{C}^? t| := |s|+|t|$. As a consequence, for all term $t$ and permutation $\pi$, we have that $|\pi\act t| = |t|$.

\termination*

\begin{proof}
 Define a measure for the size of a problem $\probc = \newc{c}{}.Pr$ as $[\probc] = (n,M)$, where $n$ is the number of distinct variables occurring in the constraints of $Pr$ and $M$ is the multiset of the sizes of the equality constraints that are not of the form $\pi\act X\aeq{C}^? X$ that are occurring in $Pr$. To compare $[\probc]$ and $[\probc']$ we use the lexicographic order $>_{lex}$ that is defined upon the product of the usual order on natural numbers, $>$, and its multiset extension $>_{mul}$, that is,  $>_{lex} \; \triangleq \; > \times >_{mul}$. 
\end{proof}

\begin{lemma}\label{alemma:preservation-solution}
     Let $\probc$ be a problem such that $\probc\overset{*}{\Longrightarrow} \exprob$ without using instantiating rules. Assume $\probc \overset{*}{\Longrightarrow} \probc'$ where $\probc'\in \exprob$. Then
     \begin{enumerate}
         \item  $\U{\probc}=\U{\probc'}$. 

         \item If $\probc'$ contains inconsistent constraints, then $\U{\probc} = \emptyset$.
     \end{enumerate}
 \end{lemma}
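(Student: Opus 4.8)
I would prove both parts simultaneously by induction on the length of the reduction $\probc\overset{*}{\Longrightarrow}\probc'$, so that everything reduces to analysing a \emph{single} application of one of the non-instantiating rules $(del),(\tf{f}),(\tf{f^C}),([a]),(ab),(var)$. The engine in every case is the invertibility of the matching derivation rule of Figure~\ref{fig:fixed-rules_new} (Theorem~\ref{thm:miscellaneous}(\ref{thm:inversion})), combined with the fact that substitutions commute with the term constructors and with permutations: $\tf{f}(\widetilde{t})\sigma\equiv\tf{f}(\widetilde{t\sigma})$, $([a]t)\sigma\equiv[a](t\sigma)$, and $(\pi\act X)\sigma\equiv\pi\act(X\sigma)$. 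Concretely: for $(del)$ the deleted constraint $t\aeq{C}^? t$ is validated by every pair by reflexivity (Theorem~\ref{thm:miscellaneous}(\ref{thm:alpha-equivalence})), so $\U{\probc}=\U{\probc'}$; for $(\tf{f})$ and $([a])$ one rewrites the constraint $\Upsilon_{\atnew{\pvec{c}'}}\vdash(\tf{f}(\widetilde t))\sigma\aeq{C}(\tf{f}(\widetilde{t'}))\sigma$ (resp.\ the abstraction case) and inverts $(\frule{\faeq{C}}{\tf{f}})$ (resp.\ $(\frule{\faeq{C}}{[a]})$) to obtain exactly the constraints of $\probc'$; for $(var)$ one additionally applies $\pi'^{-1}$ via Equivariance (Theorem~\ref{thm:miscellaneous}(\ref{thm:object-equivariance})), turning $\pi\act(X\sigma)\aeq{C}\pi'\act(X\sigma)$ into $(\pi'^{-1}\circ\pi)\act(X\sigma)\aeq{C}X\sigma$. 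In each of these cases the correspondence of solutions is the identity.

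\textbf{The delicate cases: $(ab)$ and $(\tf{f^C})$.} For $(ab)$, inverting $(\frule{\faeq{C}}{ab})$ on $\Upsilon_{\atnew{\pvec{c}'}}\vdash[a](t\sigma)\aeq{C}[b](t'\sigma)$ yields a judgement whose context carries one extra fresh name $\atnew{c_1}$ — precisely the name spawned by the simplification rule. Since $\atnew{c_1}$ does not occur in $\probc$, I would use Lemma~\ref{alemma:vacuous-quantification} to move $\atnew{c_1}$ in and out of the $\new$-prefix of both the context and the solution, so that solutions of $\probc$ and of $\probc'$ are in bijection modulo adjoining the vacuous name $\atnew{c_1}$ to $\atnew{\pvec{c}'}$. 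For $(\tf{f^C})$, the only branching rule, inverting $(\frule{\faeq{C}}{\tf{f^C}})$ on $\Upsilon_{\atnew{\pvec{c}'}}\vdash\tf{f^C}(t_0\sigma,t_1\sigma)\aeq{C}\tf{f^C}(s_0\sigma,s_1\sigma)$ shows a solution validates the constraint iff it solves at least one of the two child problems; hence $\U{\probc}=\U{\probc_1}\cup\U{\probc_2}$. Propagating this through the whole reduction gives $\U{\probc}=\bigcup_{\probc''\in\exprob}\U{\probc''}$, so $\U{\probc}\supseteq\U{\probc'}$ for each $\probc'\in\exprob$, and the stated equality $\U{\probc}=\U{\probc'}$ holds on any segment where no branching occurred (in particular when $\exprob=\{\probc'\}$).

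\textbf{Part 2.} It suffices to show that a problem containing an inconsistent reduced constraint has no $\C$-unifier; Part 1 then yields $\U{\probc}=\emptyset$. Each inconsistent form is unsatisfiable for a structural reason readable off the derivation rules and stable under substitution: two distinct atoms can only be equated via $(\frule{\faeq{C}}{a})$, which forces syntactic identity; a constraint whose sides have different head term-formers or different root constructors (an abstraction vs.\ an $\tf{f}$-application vs.\ a suspension) cannot match the conclusion of any rule, and applying $\sigma$ changes none of these heads; and for the occurs-check form $\pi\act X\aeq{C}^? t$ with $X\in\var t$ and $t$ not a suspension, I would argue by size: using $|\pi\act u|=|u|$ and that $t\sigma$ strictly contains $X\sigma$ beneath a constructor, one gets $|(\pi\act X)\sigma|=|X\sigma|<|t\sigma|$, while an easy induction on the derivation rules shows $\Upsilon_{\catnew{c}}\vdash u\aeq{C}v$ implies $|u|=|v|$; hence no $(\Upsilon,\sigma)$ can solve it. Therefore $\U{\probc'}=\emptyset$.

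\textbf{Main obstacle.} The equational content of each case is routine once the right Inversion instance is invoked; I expect the real work to be in the bookkeeping of $\new$-quantified names — making the solution-set correspondence precise across the $(ab)$ step (via Lemma~\ref{alemma:vacuous-quantification}) and reconciling the union decomposition produced by $(\tf{f^C})$ with the per-branch reading of the statement.
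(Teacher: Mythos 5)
Your proof is correct and follows essentially the same route as the paper's: induction on the length of the reduction, with each non-instantiating step handled by inverting the matching derivation rule of Figure~\ref{fig:fixed-rules_new} (plus Equivariance for $(var)$), and Part~2 reduced to the underivability of inconsistent reduced constraints. The paper's own proof only writes out the $(var)$ case and asserts Part~2 in one line, so your treatment is strictly more detailed; two of your additions are worth highlighting. First, you correctly isolate the $(\tf{f^C})$ step as the place where the literal equality $\U{\probc}=\U{\probc'}$ cannot hold per branch: inversion of $(\frule{\faeq{C}}{\tf{f^C}})$ only gives that a unifier of the parent solves \emph{at least one} child, so what propagates is $\U{\probc}=\bigcup_{\probc''\in\exprob}\U{\probc''}$ together with $\U{\probc}\supseteq\U{\probc'}$ for each branch. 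The paper's proof silently skips this case, and its later use of the lemma in Theorem~\ref{thm:correctness-simp-rules} implicitly relies on the per-branch inclusion rather than the stated equality; your union reading is the honest form of the claim. Second, your size argument ($|u|=|v|$ whenever $\Upsilon_{\catnew{c}}\vdash u\aeq{C}v$, combined with $|\pi\act u|=|u|$) supplies the missing justification for the occurs-check case of Part~2, which the paper does not spell out. The only point to tighten is the $(ab)$ case: besides invoking Lemma~\ref{alemma:vacuous-quantification}, you should note that the definition of $\C$-unifier already allows the solution's quantified names $\atnew{\pvec{c}'}$ to extend those of the problem, so adjoining the freshly spawned $\atnew{c_1}$ really does give a bijection of solution sets rather than merely an inclusion.
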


\begin{proof}
    For the first item the proof follows by induction on the length of the derivation $\probc\overset{*}{\Longrightarrow} \probc'$.

    \paragraph{(Base).} The base of the induction corresponds to the case where $\probc\overset{0}{\Longrightarrow} \probc'$, which implies that $\probc = \probc'$ and hence  $\U{\probc} = \U{\probc'}$.

    \paragraph{(Inductive step).} Now, we are going to prove that the result holds for reductions of length $n>0$, that is, $\probc\overset{n}{\Longrightarrow} \probc'$. Our induction hypothesis is
           \begin{equation}\label{eq:induction-hyp-1}
                \text{If ${\probc}_1\overset{n-1}{\Longrightarrow} {\probc}_2$, then $\U{{\probc}_1}=\U{{\probc}_2}$ }\tag{I.H.}
            \end{equation}
            Note that we can rewrite  $\probc\overset{n}{\Longrightarrow} \probc'$ as  $\probc\overset{n-1}{\Longrightarrow} \probc''\Longrightarrow \probc'$. By (\ref{eq:induction-hyp-1}), it follows that $\U{\probc} = \U{\probc''}$. It remains to analyse the last step of the reduction, specifically $\probc''\Longrightarrow \probc'$. We present a illustrative case: The last rule is $(var)$. In this case, $\probc'' = \newc{c}{}.Pr_0\uplus\{\pi\act X \aeq{C}^? \pi'\act X\} \Longrightarrow \newc{c}{}.Pr_0 \cup\{\pi'^{-1}\circ\pi\act X\aeq{C}^? X\} = \probc'$, where $\pi'\neq \id$. If $\npair{\Psi,\sigma}{\atnew{\pvec{c}'}}\in\U{\probc''}$, then
                $\Psi_{\atnew{\pvec{c}'}} \vdash (\pi\act X)\sigma \aeq{C} \pi'\act X\sigma$. Since $(\rho\act X)\sigma \equiv \rho\act X\sigma$, it follows that $\Psi_{\atnew{\pvec{c}'}} \vdash \pi\act (X\sigma) \aeq{C} \pi'\act (X\sigma)$. By Equivariance (Theorem~\ref{thm:miscellaneous}(\ref{thm:object-equivariance})), we have $\Psi_{\atnew{\pvec{c}'}} \vdash (\pi'^{-1}\circ\pi)\act (X\sigma) \aeq{C} X\sigma$. Therefore, $\npair{\Psi,\sigma}{\atnew{\pvec{c}'}}\in\U{\prob'}$ and hence $\U{\probc''} \subseteq \U{\probc'}$. For the other inclusion the argument is the same.

        The second item of the lemma follows directly from the fact that inconsistent constraints are not derivable, therefore have no solutions.
 \end{proof}

\correctness*

\begin{proof}
  By induction on the number of steps in the reduction $\probc \overset{*}{\Longrightarrow} \nf{\probc}$.

       \paragraph{(Base).}  This corresponds to the case where $\probc\overset{0}{\Longrightarrow} \nf{\probc}$. In this case, $\nf{\probc} = \probc = \newc{c}{}.Pr$ and so $Pr$ contains only consistent constraints. Then $Pr = \{\pi_i\act X_i\aeq{C}^? X_i \mid i=1,\ldots,n\}$. By construction, the pair $\sol{\probc} = \{\npair{\Psi,Id}{\catnew{c}}\}$, where $\Psi_{\catnew{c}} = \newc{c}{}. \{\pi_i\fix{C} X_i \mid i=1,\ldots,n\}$, satisfies $\Psi_{\catnew{c}} \vdash \pi_i\act X_iId \aeq{C} X_iId$  for all $i=1,\ldots,n$. Therefore, $\sol{\probc} \subseteq \U{\probc}$.

           For the second item, for any other $\npair{\Phi,\tau}{\atnew{\pvec{c}''}} \in \U{\probc}$, by definition, $\tau$ is such that $\Phi_{\atnew{\pvec{c}''}} \vdash (Pr)\tau$ which is the same as $\Phi_{\atnew{\pvec{c}''}} \vdash \Psi\tau$. Moreover,  $\Phi_{\atnew{\pvec{c}''}} \vdash X\tau \aeq{C} XId\tau$ follows for all $X\in \V$ by reflexivity. Therefore, this proves that $\npair{\Psi,Id}{\catnew{c}} \ins{} \npair{\Phi,\tau}{\atnew{\pvec{c}''}}$ and the second item follows.

        \paragraph*{(Inductive step).}  Now, we are going to prove that the result holds for reductions of length $n>0$, that is, $\probc\overset{n}{\Longrightarrow} \nf{\probc}$. We rewrite it as follows: $\probc \Longrightarrow \probc''\overset{n-1}{\Longrightarrow} \nf{\probc}$.

           \begin{itemize}
                \item Suppose $\probc \Longrightarrow \probc''$ by some non-instantiating simplification. Then using Lemma~\ref{alemma:preservation-solution}, we know that $\U{\probc} = \U{\probc''}$.

                By induction, we have $\sol{\probc''} \subseteq \U{\probc''}$. Consequently, $\sol{\probc''} \subseteq \U{\probc}$. Since no instantiation rule was used, by the construction of $\sol{-}$ we have that $\sol{\probc} = \sol{\probc''}$ and the first item follows.

                For the second item, for any other $\npair{\Phi,\tau}{\atnew{\pvec{c}''}} \in \U{\probc}$, by induction, there is some $\npair{\Psi,\sigma}{\atnew{\pvec{c}'}} \in \sol{\probc''}$ such that $\npair{\Psi,\sigma}{\atnew{\pvec{c}'}} \ins{}\npair{\Phi,\tau}{\atnew{\pvec{c}''}}$ and the result follows because $\sol{\probc} = \sol{\probc''}$.

                \item Suppose $\probc \overset{\theta}{\Longrightarrow} \probc''$  by an instantiating rule, say $(inst_1)$. So $\probc = \newc{c}{}.Pr\uplus\{\pi\act X \aeq{C}^? t\} \overset{\theta}{\Longrightarrow} \newc{c}{}.Pr\theta = \probc''$ where $\theta = [X\mapsto \pi^{-1}\act t]$ and $X\notin \var{t}$. 
                
                Let $\npair{\Psi,\sigma}{\atnew{\pvec{c}'}} \in \sol{\probc}$, so by construction $\sigma = \theta\sigma'$ and $\npair{\Psi,\sigma'}{\atnew{\pvec{c}'}}\in \sol{\probc''}$. By induction, $\sol{\probc''} \subseteq \U{\probc''}$ which implies $\Psi_{\atnew{\pvec{c}'}} \vdash (Pr\theta)\sigma'$, that is the same as  $\Psi_{\atnew{\pvec{c}'}} \vdash (Pr)(\theta\sigma')$ and thus $\npair{\Psi,\sigma}{\atnew{\pvec{c}'}} \in \U{\probc}$ which proves the first item.

                It remains to prove that $\sol{\probc}$ is a complete set of solutions of $\probc$. So, take $\npair{\Phi,\tau}{\atnew{\pvec{c}''}} \in \U{\probc}$. By definition, $\Phi_{\atnew{\pvec{c}''}} \vdash (Pr\uplus\{\pi\act X\aeq{C}^? t\})\tau$. In particular, $\Phi'_{\atnew{\pvec{c}''}} \vdash (\pi\act X)\tau \aeq{C} t\tau$ which, by Equivariance (Theorem~\ref{thm:miscellaneous}(\ref{thm:object-equivariance})) and the fact that permutations and substitutions commute, results in $\Phi_{\atnew{\pvec{c}''}} \vdash X\tau \aeq{C} (\pi^{-1}\act t)\tau$ that is the same as $\Phi_{\atnew{\pvec{c}''}} \vdash X\tau \aeq{C} X\theta\tau$.
        
                Let $\tau'$ be the substitution that acts just like $\tau$, only $X\tau \equiv X$ (remember that $X\tau \not\equiv X$). Then since $X\notin \var{X\theta}$, we have $X\theta\tau \equiv X\theta\tau'$ and hence $\Phi_{\atnew{\pvec{c}''}} \vdash X\tau \aeq{C} X\theta\tau'$. For all other variable $Y$, we have $\Phi_{\atnew{\pvec{c}''}} \vdash Y\tau \aeq{C} Y\tau'$. Since $Y\theta \equiv Y$, we conclude that  $\Phi_{\atnew{\pvec{c}''}} \vdash Z\tau \aeq{C} Z\theta\tau'$ holds for all variable $Z\in \V$. Then $\Phi_{\atnew{\pvec{c}''}} \vdash s\tau \aeq{C} s\theta\tau'$ for all nominal term $s$. As a consequence, because $\Phi_{\atnew{\pvec{c}''}} \vdash (Pr)\tau$ we conclude $\Phi_{\atnew{\pvec{c}''}} \vdash (Pr)(\theta\tau')$. This is the same as $\Phi_{\atnew{\pvec{c}''}} \vdash (Pr\theta)\tau'$. So $\npair{\Phi,\tau'}{\atnew{\pvec{c}''}}\in \U{\probc''}$ and by inductive hypothesis there is some $\npair{\Psi,\sigma}{\atnew{\pvec{c}'}}\in\sol{\probc''}$ such that  $\npair{\Psi,\sigma}{\atnew{\pvec{c}'}}\ins{} \npair{\Phi,\tau'}{\atnew{\pvec{c}''}}$. By construction, $\npair{\Psi,\theta\sigma}{\atnew{\pvec{c}'}}\in\sol{\probc}$.

                \begin{claim}[1]
                    \sloppy{We claim that $\npair{\Psi,\theta\sigma}{\atnew{\pvec{c}'}} \ins{} \npair{\Phi,\theta\tau'}{\atnew{\pvec{c}''}}$. Since $\npair{\Psi,\sigma}{\atnew{\pvec{c}'}}\ins{} \npair{\Phi,\tau'}{\atnew{\pvec{c}''}}$ we know that there is some $\delta$ such that $\Phi_{\atnew{\pvec{c}''}} \vdash \Psi\delta$ and $\Phi_{\atnew{\pvec{c}''}} \vdash Z\tau' \aeq{C} Z\sigma\delta$ for all $Z\in \V$. Then, $\Phi_{\atnew{\pvec{c}''}} \vdash u\tau' \aeq{C} u\sigma\delta$ holds for all terms $u$, in particular, for every $u\equiv X\theta$ where $X\in V$. So $\Phi_{\atnew{\pvec{c}''}} \vdash X\theta\tau' \aeq{C} X\theta\sigma\delta$ for all $X\in \V$, proving the claim.}
                \end{claim}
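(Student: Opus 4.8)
The plan is to recognise Claim~(1) as a special case of a simple monotonicity property of $\ins{}$: \emph{pre-composing both substitution components of a comparable pair with a common substitution $\theta$ preserves the order, and does so with the very same witness}. Concretely, from the premise $\npair{\Psi,\sigma}{\atnew{\pvec{c}'}} \ins{} \npair{\Phi,\tau'}{\atnew{\pvec{c}''}}$ I would first unfold Definition~\ref{def:ordering} to extract the data it guarantees: $\atnew{\pvec{c}'}\subseteq\atnew{\pvec{c}''}$ and a substitution $\delta$ with $\Phi_{\atnew{\pvec{c}''}}\vdash\Psi\delta$ and $\Phi_{\atnew{\pvec{c}''}}\vdash Z\tau'\aeq{C} Z\sigma\delta$ for every $Z\in\V$. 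I would then verify the conditions of Definition~\ref{def:ordering} for the target pair $\npair{\Psi,\theta\sigma}{\atnew{\pvec{c}'}}$ and $\npair{\Phi,\theta\tau'}{\atnew{\pvec{c}''}}$ using this same $\delta$.

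Two of the three conditions transfer with no work. The name-set inclusion $\atnew{\pvec{c}'}\subseteq\atnew{\pvec{c}''}$ is unchanged, since $\theta$ (the instantiation substitution produced by $(inst_1)$) carries no bound-name abstraction and therefore does not enlarge the attached $\new$-quantified name sets. The context condition $\Phi_{\atnew{\pvec{c}''}}\vdash\Psi\delta$ is literally the judgement already supplied by the premise. Hence the only real obligation is the substitution condition, namely $\Phi_{\atnew{\pvec{c}''}}\vdash X(\theta\tau')\aeq{C} X(\theta\sigma)\delta$ for all $X\in\V$. Using the left-to-right composition convention of the paper (as in $(Pr\theta)\sigma'=Pr(\theta\sigma')$), we have $X(\theta\tau')=(X\theta)\tau'$ and $X(\theta\sigma)\delta=(X\theta)\sigma\delta$, so after setting $u:=X\theta$ the obligation reduces to $\Phi_{\atnew{\pvec{c}''}}\vdash u\tau'\aeq{C} u\sigma\delta$.

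The substance of the proof is then the passage from the variable-indexed hypothesis $\Phi_{\atnew{\pvec{c}''}}\vdash Z\tau'\aeq{C} Z\sigma\delta$ to the term-level statement $\Phi_{\atnew{\pvec{c}''}}\vdash u\tau'\aeq{C} u\sigma\delta$ for \emph{every} term $u$; specialising to $u=X\theta$ then closes the argument. This is proved by a routine structural induction on $u$, exploiting that substitution only rewrites variable leaves and never changes the outermost constructor: for $u\equiv\pi\act Y$ one applies the hypothesis for $Y$ and Equivariance (Theorem~\ref{thm:miscellaneous}(\ref{thm:object-equivariance})) to push the suspended $\pi$ through both sides; for $u\equiv[a]u'$ one has $([a]u')\tau'=[a](u'\tau')$ and $([a]u')\sigma\delta=[a](u'\sigma\delta)$, so the inductive hypothesis on $u'$ feeds directly into rule $(\frule{\faeq{C}}{[a]})$ (no renaming, hence no $(ab)$, is needed because the binder is common to both sides); and for $u\equiv\tf{f}(u_1,\dots,u_n)$ or $u\equiv\tf{f^C}(u_0,u_1)$ the inductive hypotheses on the arguments are recombined by $(\frule{\faeq{C}}{\tf{f}})$, respectively $(\frule{\faeq{C}}{\tf{f^C}})$. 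This induction is exactly the congruence-under-substitution established earlier as Claim~(1) in the proof of Lemma~\ref{alemma:closure-by-instantiation}, which I may therefore invoke directly; I emphasise that this is an \emph{independent} prior result and not the present Claim~(1), so no circularity arises.

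I expect the only delicate points to be bookkeeping rather than mathematics: keeping the composition order straight so that $(X\theta)\tau'$ and $(X\theta)\sigma\delta$ align precisely with the two sides of the congruence lemma, and confirming that left-composition with $\theta$ leaves the $\new$-quantified name sets $\atnew{\pvec{c}'}$ and $\atnew{\pvec{c}''}$ intact. With these checked, re-assembling the three conditions of Definition~\ref{def:ordering} under the inherited witness $\delta$ yields $\npair{\Psi,\theta\sigma}{\atnew{\pvec{c}'}}\ins{}\npair{\Phi,\theta\tau'}{\atnew{\pvec{c}''}}$, as required.
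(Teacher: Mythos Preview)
Your proposal is correct and follows essentially the same approach as the paper: extract the witness $\delta$ from the premise, observe that the context condition $\Phi_{\atnew{\pvec{c}''}}\vdash\Psi\delta$ transfers unchanged, lift the variable-level congruence $\Phi_{\atnew{\pvec{c}''}}\vdash Z\tau'\aeq{C} Z\sigma\delta$ to arbitrary terms $u$ by structural induction, and specialise to $u\equiv X\theta$. Your write-up is more explicit than the paper's (which simply asserts the term-level congruence ``holds for all terms $u$''), but the argument is the same.
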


                \begin{claim}[2]
                    We claim that $\npair{\Phi,\theta\tau'}{\atnew{\pvec{c}''}} \ins{} \npair{\Phi,\tau}{\atnew{\pvec{c}''}}$. In fact, take $\delta = Id$. Then $\Phi_{\atnew{\pvec{c}''}} \vdash \Phi Id$ trivially. Furthermore, we proved that $\Phi_{\atnew{\pvec{c}''}} \vdash Z\tau \aeq{C}^? Z\theta\tau'Id$ for all variable $Z\in\V$. 
                \end{claim}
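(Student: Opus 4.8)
The plan is to read off $\npair{\Phi,\theta\tau'}{\atnew{\pvec{c}''}} \ins{} \npair{\Phi,\tau}{\atnew{\pvec{c}''}}$ directly from Definition~\ref{def:ordering}, using the identity substitution as the witness. Instantiating that definition with $\Psi_1 = \Psi_2 = \Phi$, $\sigma_1 = \theta\tau'$, $\sigma_2 = \tau$, and both name-sets equal to $\atnew{\pvec{c}''}$, the inclusion $\atnew{\pvec{c}''}\subseteq\atnew{\pvec{c}''}$ is immediate, so it remains to supply a substitution $\delta$ with (i) $\Phi_{\atnew{\pvec{c}''}}\vdash \Phi\delta$ and (ii) $\Phi_{\atnew{\pvec{c}''}}\vdash Z\tau \aeq{C} Z(\theta\tau')\delta$ for all $Z\in\V$. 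I would take $\delta = Id$.

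For condition (i), with $\delta = Id$ this reduces to $\Phi_{\atnew{\pvec{c}''}}\vdash \Phi$, i.e.\ that a context derives each of its own primitive constraints; this is exactly the self-derivability already used to prove reflexivity of $\ins{}$ in Lemma~\ref{alemma:quasiorder}. Concretely, for each $\rho\fix{C} Y\in\Phi$ I would apply rule $(\frule{\faeq{C}}{var})$ with $\pi=\rho$ and $\pi'=\id$, which requires $\rho\in\PN{}{\Phi_{\atnew{\pvec{c}''}}|_Y}$. Since all contexts are assumed in normal form, $\rho$ is a generator of this group: it lies in the factor $\Perm{\atm{(\Phi_{\atnew{\pvec{c}''}}|_Y)_{\fresh}}}$ when $\rho$ is a freshness permutation and in $\pair{\perm{}{(\Phi_{\atnew{\pvec{c}''}}|_Y)_{\fix{C}}}}$ when it is a pure fixed-point permutation, so in either case $\rho\in\PN{}{\Phi_{\atnew{\pvec{c}''}}|_Y}$ by the construction~(\ref{permutation-group-X}), and $(\frule{\faeq{C}}{var})$ delivers $\Phi_{\atnew{\pvec{c}''}}\vdash\rho\fix{C} Y$.

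For condition (ii), with $\delta = Id$ we have $Z(\theta\tau')Id \equiv Z\theta\tau'$, so the required judgement $\Phi_{\atnew{\pvec{c}''}}\vdash Z\tau \aeq{C} Z(\theta\tau')Id$ is literally the derivation $\Phi_{\atnew{\pvec{c}''}}\vdash Z\tau \aeq{C} Z\theta\tau'$ that was already established for all $Z\in\V$ in the paragraph immediately preceding Claim~(1). Hence nothing new needs to be proved, and (i) together with (ii) yield the asserted instantiation ordering.

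The statement carries no genuine obstacle: once $\delta=Id$ is chosen it is an immediate instance of Definition~\ref{def:ordering}, the only point deserving a line of justification being the self-derivability $\Phi_{\atnew{\pvec{c}''}}\vdash\Phi$, which merely unfolds the construction~(\ref{permutation-group-X}) of $\PN{}{\cdot}$ and is reused verbatim from the reflexivity argument for $\ins{}$.
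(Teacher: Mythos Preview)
Your proposal is correct and follows exactly the paper's own argument: choose $\delta = Id$, observe that $\Phi_{\atnew{\pvec{c}''}}\vdash\Phi$ holds (the paper just calls this ``trivially''), and invoke the previously established $\Phi_{\atnew{\pvec{c}''}}\vdash Z\tau \aeq{C} Z\theta\tau'$ for all $Z\in\V$. Your extra unpacking of the self-derivability step via rule $(\frule{\faeq{C}}{var})$ and the group $\PN{}{\cdot}$ is more detailed than the paper bothers with, but otherwise the arguments coincide.
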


        The result follows by combining Claims (1) and (2) and using Lemma~\ref{alemma:quasiorder}, concluding the proof of Correctness.

        \end{itemize}
\end{proof}

\end{document}